\documentclass{article}

 \usepackage{geometry}
 \usepackage{fullpage}

\def\showauthornotes{1}
\def\showdraftbox{0}



 \usepackage{hyperref}
\usepackage{amstext, amsmath,amsthm, bbm,xspace,nicefrac,graphicx, url,bbold,xcolor}
\usepackage{algorithm} 
\usepackage[noend]{algpseudocode}
\usepackage{thmtools}
\usepackage{thm-restate}


\newcommand{\defeq}{\stackrel{\textup{def}}{=}}

\newtheorem{theorem}{Theorem}[section]

\newtheorem{lemma}[theorem]{Lemma}
\newtheorem{corollary}[theorem]{Corollary}

\newtheorem{definition}[theorem]{Definition}



\newcommand{\nfrac}[2]{\nicefrac{#1}{#2}}
\def\abs#1{\left| #1 \right|}
\newcommand{\norm}[1]{\ensuremath{\left\lVert #1 \right\rVert}}

\newcommand{\ceil}[1]{\left\lceil\, {#1}\,\right\rceil}



\newcommand\rea{\mathbb R}



\newcommand{\marginlabel}[1]%
{\mbox{}\marginpar{\it{\raggedleft\hspace{0pt}#1}}}
\newcommand\poly{\text{poly}}  


\definecolor{Mygray}{gray}{0.8}

 \ifcsname ifcommentflag\endcsname\else
  \expandafter\let\csname ifcommentflag\expandafter\endcsname
                  \csname iffalse\endcsname
\fi

\ifnum\showauthornotes=1
\newcommand{\Authornote}[2]{{\sf\small\color{red}{[#1: #2]}}}
\newcommand{\Authoredit}[2]{{\sf\small\color{red}{[#1]}\color{blue}{#2}}}
\newcommand{\Authorcomment}[2]{{\sf \small\color{gray}{[#1: #2]}}}
\newcommand{\Authorfnote}[2]{\footnote{\color{red}{#1: #2}}}
\newcommand{\Authorfixme}[1]{\Authornote{#1}{\textbf{??}}}
\newcommand{\Authormarginmark}[1]{\marginpar{\textcolor{red}{\fbox{
#1:!}}}}
\else
\newcommand{\Authornote}[2]{}
\newcommand{\Authoredit}[2]{}
\newcommand{\Authorcomment}[2]{}
\newcommand{\Authorfnote}[2]{}
\newcommand{\Authorfixme}[1]{}
\newcommand{\Authormarginmark}[1]{}
\fi




\def\iff{\Leftrightarrow}


\newlength{\pgmtab}  
\setlength{\pgmtab}{1em}  

\let\originalleft\left
\let\originalright\right
\renewcommand{\left}{\mathopen{}\mathclose\bgroup\originalleft}
  \renewcommand{\right}{\aftergroup\egroup\originalright}

\def\defeq{\stackrel{\mathrm{def}}{=}}
\def\setof#1{\left\{#1  \right\}}

\def\ceil#1{\left\lceil #1 \right\rceil}

\def\union{\cup}

\def\aa{\pmb{\mathit{a}}}
\newcommand\bb{\boldsymbol{\mathit{b}}}
\newcommand\cc{\boldsymbol{\mathit{c}}}
\newcommand\dd{\boldsymbol{\mathit{d}}}

\newcommand\ff{\boldsymbol{\mathit{f}}}

\renewcommand\gg{\boldsymbol{\mathit{g}}}

\newcommand\rr{\boldsymbol{\mathit{r}}}
\renewcommand\ss{\boldsymbol{\mathit{s}}}
\def\tt{\boldsymbol{\mathit{t}}}

\newcommand\vvs{\boldsymbol{\mathit{v}}}
\newcommand\ww{\boldsymbol{\mathit{w}}}
\newcommand\yy{\boldsymbol{\mathit{y}}}

\newcommand\xx{\boldsymbol{\mathit{x}}}

\newcommand\xxtil{\widetilde{\boldsymbol{\mathit{x}}}}

\renewcommand\AA{\boldsymbol{\mathit{A}}}
\newcommand\BB{\boldsymbol{\mathit{B}}}
\newcommand\CC{\boldsymbol{\mathit{C}}}

\newcommand\II{\boldsymbol{\mathit{I}}}

\newcommand\NN{\boldsymbol{\mathit{N}}}
\newcommand\MM{\boldsymbol{\mathit{M}}}

\newcommand\RR{\boldsymbol{\mathit{R}}}
\renewcommand\SS{\boldsymbol{\mathit{S}}}
\newcommand\UU{\boldsymbol{\mathit{U}}}
\newcommand\WW{\boldsymbol{\mathit{W}}}
\newcommand\VV{\boldsymbol{\mathit{V}}}
\newcommand\XX{\boldsymbol{\mathit{X}}}

\newcommand\ZZ{\boldsymbol{\mathit{Z}}}

\newcommand\rrhat{\boldsymbol{\widehat{\mathit{r}}}}

\newcommand\Otil{\widetilde{O}}

\newenvironment{tight_enumerate}{
\begin{enumerate}
 \setlength{\itemsep}{2pt}
 \setlength{\parskip}{1pt}
}{\end{enumerate}}
\newenvironment{tight_itemize}{
\begin{itemize}
 \setlength{\itemsep}{2pt}
 \setlength{\parskip}{1pt}
}{\end{itemize}}

\newcommand\Dtil{{\widetilde{{\Delta}}}}
\newcommand\Dbar{{\bar{{\Delta}}}}
\newcommand\Dopt{{{{\Delta^{\star}}}}}
\newcommand{\res}{\boldsymbol{res}}
\newcommand\Mtil{{\widetilde{\boldsymbol{\mathit{M}}}}}

 {
	\begin{enumerate}}{\end{enumerate}}



%



\ifnum\showdraftbox=1

\else

\fi

\usepackage{subfig}
\usepackage{hyperref}
\usepackage{pdfsync}
\usepackage{wrapfig}

\newcommand{\eps}{\varepsilon}

\newcommand\blfootnote[1]{%
  \begingroup
  \renewcommand\thefootnote{}\footnote{#1}%
  \addtocounter{footnote}{-1}%
  \endgroup
}

\usepackage[
  backend=biber,
  backref=true,
  backrefstyle=none,
  date=year,
  doi=false,
  giveninits=true,
  hyperref=true,
  maxbibnames=10,
  sortcites=false,
  style=alphabetic,
  url=false, 
]{biblatex}
\addbibresource{papers.bib}

\title{Fast Algorithms for \texorpdfstring{$\ell_p$}{TEXT}-Regression}

\author{%
 Deeksha Adil\footnote{Supported by a Post Graduate Schlarship (PGSD) by NSERC (Natural Sciences and Engineering Research Council of Canada) and Sushant Sachdeva's NSERC Discovery Grant.}\\
  University of Toronto\\
  \texttt{deeksha@cs.toronto.edu}
   \and
Rasmus Kyng\footnote{The research leading to these results has received funding from the grant “Algorithms and complexity for high-accuracy flows and convex optimization” (no. 200021 204787) of the Swiss National Science Foundation.} \\
ETH Zurich \\
  \texttt{kyng@inf.ethz.ch}\\ 
    \and
  Richard Peng\footnote{This material is based upon work supported by the National Science Foundation under Grant No. 1846218, and by an Natural Sciences and Engineering Research Council of Canada Discovery Grant. Part this work was done while the author was at the Georgia Institute of Technology.} \\
  University of Waterloo\\
  \texttt{y5peng@uwaterloo.ca} 
  \and
  Sushant Sachdeva\footnote{Sushant Sachdeva’s research is supported by an NSERC (Natural Sciences and Engineering Research Council of
Canada) Discovery Grant.} \\
  University of Toronto \\
  \texttt{sachdeva@cs.toronto.edu}
}

\begin{document}

\maketitle

\begin{abstract}
  The $\ell_p$-norm regression
  problem is a classic problem in optimization with wide ranging
  applications in machine learning and theoretical computer
  science. The goal is to compute
  $\xx^{\star} =\arg\min_{\AA\xx=\bb}\|\xx\|_p^p$, where
  $\xx^{\star}\in \mathbb{R}^n,\AA\in \mathbb{R}^{d\times n},\bb \in
  \mathbb{R}^d$ and $d\leq n$. Efficient {\it high-accuracy}
  algorithms for the problem have been challenging both in theory and
  practice and the state of the art algorithms require
  $poly(p)\cdot n^{\frac{1}{2}-\frac{1}{p}}$ linear system solves for
  $p\geq 2$. In this paper, we provide new algorithms for
  $\ell_p$-regression (and a more general formulation of the problem)
  that obtain a {\it high-accuracy} solution in
  $O(p n^{\nfrac{(p-2)}{(3p-2)}})$
  linear system solves. We further propose a new {\it inverse
    maintenance} procedure that speeds-up our algorithm to
  $\Otil(n^{\omega})$ total runtime, where $O(n^{\omega})$ denotes the
  running time for multiplying $n \times n$ matrices. Additionally, we
  give the first {\it Iteratively Reweighted Least Squares (IRLS)}
  algorithm that is guaranteed to converge to an optimum in a few
  iterations. Our IRLS algorithm has shown exceptional practical
  performance, beating the currently available implementations in
  MATLAB/CVX by 10-50x.
\end{abstract}

\section{Introduction}

\blfootnote{Preliminary
 versions of the results in this paper have appeared as conference
 publications~\cite{AdilKPS19, AdilPS19, AdilS20, AdilBKS21}. This
 paper unifies and simplifies results from the preliminary versions. }Linear regression in $\ell_p$-norm seeks to compute a vector
$\xx^{\star}\in \mathbb{R}^n$ such that,
 \[
 \xx^{\star} = \arg\min_{\AA\xx=\bb}\|\xx\|^p_p,
 \]
 where $\AA\in \mathbb{R}^{d\times n}, \bb \in \mathbb{R}^{d}$, $d\leq
 n$.
 This is a classic convex optimization problem that captures several
 well-studied questions including least squares regression ($p =2$)
 which is equivalent to solving a system of linear equations, and
 linear programming ($p=\infty$).
 The $\ell_p$-norm regression problem for $p>1$ has found
 use across a wide range of applications in machine learning and
 theoretical computer science including low rank matrix approximation
 \cite{chierichetti2017algorithms}, sparse recovery
 \cite{candes2005decoding}, graph based semi-supervised learning
 \cite{alamgir2011phase,calder2019consistency,rios2019, KyngRSS15}, data
 clustering and learning problems
 \cite{elmoataz2015p,elmoataz2017game,hafiene2018nonlocal}. 
 In this paper, we focus on solving the $\ell_p$-norm regression
 problem for $p\geq 2.$ The exact solution to the $\ell_p$-norm
 regression problem for $p \neq 1,2,\infty,$ may not even be
 expressible using rationals.
 Thus, the goal is often relaxed to finding an $\eps$-approximate solution to
 the problem, i.e., find $\hat{\xx}$ such that $\AA\hat{\xx} = \bb$
 and,
 \[
   \|\hat{\xx}\|_p^p \leq (1+\epsilon)\|\xx^{\star}\|_p^p,
 \]
 for some small $\eps > 0.$
 Furthermore, several applications such as graph based semi-supervised
 learning require that $\hat{\xx}$ is close to $\xx^{\star}$
 coordinate-wise and not just in objective value -- necessitating a
 \textit{high-accuracy} solution with
 $\epsilon \approx \frac{1}{\poly(n)}$.
 In order to find such high-accuracy solutions efficiently, we require
 an algorithm with runtime dependence on $\epsilon$ being
 $\poly\left( \log \frac{1}{\epsilon} \right)$ rather than
 $\poly\left( \frac{1}{\epsilon} \right).$

Fast, high-accuracy algorithms for $\ell_p$-regression are
challenging both in theory and practice, due to the lack of smoothness
and strong convexity of the objective.
The Interior Point Method framework by \textcite{nesterov1994interior}
can be used to compute a high-accuracy solution for all
$p\in [1,\infty]$ in $\Otil(\sqrt{n})$\footnote{$\Otil$ hides
  constants, $p$ dependencies, $\log \frac{1}{\epsilon},$ and $\log n$
  factors unless explicitly mentioned} iterations, with each iteration
requiring solving an $n \times n$ system of linear equations. This was
the most efficient algorithm for $\ell_p$-regression until 2018.
In 2018, \textcite{bubeck2018homotopy} showed that $\Omega(\sqrt{n})$
iterations are necessary for the interior point framework and proposed
a new homotopy-based approach that could compute a high-accuracy
solution in $\Otil(n^{\left|\frac{1}{2}-\frac{1}{p}\right|})$
linear system solves for all $p \in (1,\infty)$. Their algorithms
improve over the interior point method by $n^{\Omega(1)}$ factors for
values of $p$ bounded away from $1$ and $\infty.$ However, for $p$
approaching $1$ or $\infty$, the number of linear system solves
required by their algorithm approaches $\sqrt{n},$ the same as
required by interior point methods. Finding an algorithm for
$\ell_p$-regression requiring $o(n^{1/2})$ linear system solves has
been a long standing open problem.

Among practical implementations for the $\ell_p$-norm regression
problem, the {\it Iteratively Reweighted Least Squares (IRLS)} methods
stand out due to their simplicity, and have been studied since
1961~\cite{lawson1961}.
For some range of values for $p,$ IRLS converges rapidly.
However, the method is guaranteed to converge only for $p \in (1.5,3)$
and diverges even for small values of $p,$ e.g.  $p = 3.5$
\cite{rios2019}.
Over the years, several empirical modifications
of the algorithm have been used for various applications in practice
(refer to \cite{burrus2012} for a full survey). However, an IRLS
algorithm that is guaranteed to converge to the optimum in a few
iterations for all values of $p,$ has again been a long standing
challenge.

\subsection{Our Contributions}

In this paper, we present the first algorithm for the $\ell_p$-regression problem that finds a high-accuracy solution in at most $O(p n^{1/3})=o(n^{1/2})$ linear system solves, which has been a long sought-after goal in optimization. Our algorithm builds on a new {\it iterative refinement} framework for $\ell_p$-norm objectives that allows us to find a high-accuracy solution using low-accuracy solutions to a subproblem. The iterative refinement framework allows for the subproblems to be solved to an $n^{o(1)}$-approximation and this has been useful in several follow up works on graph optimization (see Section~\ref{sec:RelatedWorks}). We further propose a new {\it inverse maintenance} framework and show how to speed up our algorithm to solve the $\ell_p$-norm problem to a high-accuracy in total time $\Otil(n^{\omega})$. Finally, we give the first IRLS algorithm that provably converges to a high-accuracy solution in a few iterations. 

Preliminary versions of the results presented in this paper have
appeared in previous conference publications by \textcite{AdilKPS19,AdilPS19, AdilS20, AdilBKS21}.
In this paper, we present our results for a more general formulation of the $\ell_p$-regression problem,
\begin{equation}\label{eq:Main}
\min_{\AA\xx = \bb}\quad \ff(\xx) = \dd^{\top}\xx + \|\MM\xx\|_2^2 + \|\NN\xx\|_p^p
\end{equation}
for matrices $\AA\in \mathbb{R}^{d\times n}, \MM \in \mathbb{R}^{m_1\times n}, \NN \in \mathbb{R}^{m_2\times n},$ $m_1,m_2 \geq n, d\leq n$. Let $m = \max\{m_1,m_2\}$ and, $\dd \perp \{ker(\MM) \cap ker(\NN) \cap ker(\AA)\}$, $\bb \in im(\AA)$ so that the above problem has a bounded solution.  Our first result is a fast, high-accuracy algorithm for Problem~\eqref{eq:Main}.
\begin{restatable}{theorem}{MainThm}\label{thm:MainThm}
Let $\epsilon >0 $ and $p \geq 2$. There is an algorithm that starting from $\xx^{(0)}$ satisfying $\AA\xx^{(0)}=\bb$, finds an $\epsilon$-approximate solution to Problem \eqref{eq:Main} in $O\left(p m^{\frac{p-2}{3p-2}}\log \frac{\ff(\xx^{(0)})-\ff(\xx^{\star})}{\epsilon }\right)$ calls to a linear system solver.
\end{restatable}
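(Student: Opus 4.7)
The plan is to combine an \emph{iterative refinement} framework, which reduces high-accuracy optimization of Problem~\eqref{eq:Main} to a logarithmic number of constant-accuracy solves of a structured sub-problem, with a width-reduced accelerated method that provides such a sub-problem solver in few linear system solves.

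For iterative refinement, given a current feasible iterate $\xx$, I would introduce the residual objective
\begin{equation*}
r_{\xx}(\Delta) \;:=\; \nabla\ff(\xx)^\top\Delta \;+\; \|\MM\Delta\|_2^2 \;+\; (2p-1)\sum_i |(\NN\xx)_i|^{p-2}(\NN\Delta)_i^2 \;+\; \|\NN\Delta\|_p^p,
\end{equation*}
to be minimized over $\{\Delta : \AA\Delta = \vzero\}$. Note $r_{\xx}$ has the same structural form as $\ff$ --- a linear piece, a quadratic $\ell_2$ piece, and an $\ell_p$ piece with the same matrix $\NN$ --- so the sub-problem is again an instance of Problem~\eqref{eq:Main}. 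The key technical claim, following from coordinate-wise $\ell_p$ Bregman-divergence estimates comparing $|a+b|^p$ with its second-order Taylor expansion plus a $|b|^p$ correction, is that $r_{\xx}$ is a good local model of $\ff$: up to an $O(p)$ scaling, $\min_{\Delta} r_{\xx}(\Delta)$ tracks $-(\ff(\xx)-\ff(\xx^\star))$, and any constant-factor approximate minimizer $\tilde\Delta$ of $r_{\xx}$ yields an update $\xx \leftarrow \xx + \tilde\Delta/\Theta(p)$ that contracts the optimality gap by a factor $(1-1/O(p))$. This gives $O(p\log((\ff(\xx^{(0)})-\ff(\xx^\star))/\epsilon))$ outer iterations.

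For the inner solver, the residual sub-problem is a mixed $\ell_2+\ell_p$ program under an affine constraint, and I only need a constant-factor solution. I would use a width-reduced accelerated multiplicative-weights scheme: maintain a weight vector $\ww$, repeatedly solve a single weighted least-squares problem of the form $\min_{\AA\Delta=\vzero} \tilde g^\top\Delta + \Delta^\top(\MM^\top\MM + \NN^\top\diag{\ww}\NN)\Delta$, and increase weights on coordinates where $|(\NN\Delta)_i|$ is large relative to its $\ell_p$ budget. A standard potential-function argument pairs an accelerated first-order rate with a width bound $\rho$; optimizing $\rho$ against the acceleration step size gives $O(m^{(p-2)/(3p-2)})$ linear system solves per call. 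Composing outer and inner loops yields the claimed total.

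The main obstacle is proving the sandwich in the iterative-refinement analysis, and specifically that $\min_\Delta r_{\xx}(\Delta) \leq -(\ff(\xx)-\ff(\xx^\star))/O(p)$: taking $\Delta = \xx^\star - \xx$ as a test vector, this reduces to the pointwise inequality $|a+b|^p - |a|^p - p\,\mathrm{sign}(a)|a|^{p-1}b \leq O(1)\bigl(p^2|a|^{p-2}b^2 + |b|^p\bigr)$, which is nontrivial because for $p$ large the second-order Taylor term is uncontrolled unless the $|b|^p$ correction absorbs the excess in the large-step regime $|b| \gg |a|$. A secondary difficulty is adapting the width-reduced sub-solver (originally designed for pure $\ell_p$ regression) to the mixed $\ell_2+\ell_p$ objective with an arbitrary matrix $\MM$; this should follow by folding $\MM^\top\MM$ into the iterative weight matrix as a fixed floor, but the convergence potential must be re-verified with the extra contribution.
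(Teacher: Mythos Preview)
Your high-level architecture --- iterative refinement producing a residual problem of the same structural form, solved to constant accuracy by a width-reduced multiplicative-weights routine --- matches the paper exactly. The sandwich you want is the paper's Lemma~\ref{lem:precondition}, and the contraction by $(1-1/O(p))$ per outer step is Lemma~\ref{lem:RelateResidualOpt} together with Theorem~\ref{thm:IterativeRefinement}. One small correction: the pointwise upper bound is $|a+b|^p - |a|^p - p|a|^{p-2}ab \le 2p^2|a|^{p-2}b^2 + p^p|b|^p$, with a $p^p$ (not $O(1)$) coefficient on $|b|^p$; this is harmless after the $\Delta/p$ rescaling you already anticipate.

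The genuine gap is in your accounting for the inner solver. The width-reduced MWU analysis does \emph{not} give $O(m^{(p-2)/(3p-2)})$ iterations; it gives $O\bigl(p\,m^{(p-2)/(3p-2)}\bigr)$. The extra $p$ is forced by the step-size constraint $p^p\alpha^p\tau \le p\alpha m^{(p-1)/p}$ needed to control the $p^p\|\NN\Delta\|_p^p$ term when bounding the growth of the $\ell_p$-potential (see the paper's Lemma~\ref{lem:ReduceWidthGammaPotential} and the parameter choices in Algorithm~\ref{alg:FasterOracleAlgorithm}). Composing with the $O(p)$ outer loop therefore yields $O(p^2 m^{(p-2)/(3p-2)}\log(\cdot))$, which is the content of Theorem~\ref{thm:CompleteAlgorithm} but \emph{not} of Theorem~\ref{thm:MainThm}.

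To reach the stated $O(p)$ dependence, the paper needs one additional idea you are missing entirely: when $p \ge \log m$, do not solve the $\ell_p$-residual directly, but instead construct and solve an $\ell_q$-residual with $q=\log m$ (Theorem~\ref{thm:p2q} and Algorithm~\ref{alg:ResidualQ}). A constant-factor $\ell_q$-solution then yields an $O(m^{1/(p-1)}) = O(1)$-approximate $\ell_p$-solution, while the inner MWU now pays only $O(q\,m^{(q-2)/(3q-2)}) \le O(\log m \cdot m^{(p-2)/(3p-2)})$ iterations. This norm-switching step is what trades one factor of $p$ for polylogarithmic factors. A secondary point: the paper also does not keep the linear term $\gg^\top\Delta$ inside the MWU oracle as you propose; it eliminates it by a binary search over $\gg^\top\Delta = \zeta/2$ (Lemma~\ref{lem:binary}), reducing to a pure $\ell_2^2+\ell_p^p$ minimization under an augmented affine constraint.
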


As a corollary, for the $\ell_p$-norm regression problem, i.e.,
$\dd = \MM = 0$ and $\NN = \II$, our algorithm converges in
$O(p n^{\frac{p-2}{3p-2}}\log \frac{n}{\epsilon})$ calls to a linear
system solver. This is the first algorithm that converges to a high
accuracy solution at an asymptotic rate of convergence
$\Otil(n^{1/3}) = o(n^{1/2})$ for all $p\in [2,\infty)$, and
  thus faster than all previously known algorithms by at least a
  factor of $n^{\Omega(1)}$. As a result, we answer the long standing
  problem in optimization of whether such a rate of convergence could
  be achieved.

Our next result shows how to speed up our algorithms
and solve Problem \eqref{eq:Main} in time $\Otil(m^{\omega})$ (or
$\Otil(n^{\omega})$ for $\ell_p$-regression), where $\omega \approx 2.37$ and $O(n^{\omega})$ is
the current time required for multiplying two $n \times n$ matrices. This is
almost as fast as solving a system of linear equations. We achieve
this guarantee via a new \textit{inverse maintenance} procedure for
$\ell_p$-regression and prove the following result.
\begin{theorem}\label{thm:InverseIntro}
If $\AA,\MM,\NN$ are explicitly given, matrices with polynomially bounded condition numbers,
and $p \geq 2$, there is an algorithm for Problem \eqref{eq:Main} that can
be implemented to run in total time $\Otil(m^{\omega})$. 
\end{theorem}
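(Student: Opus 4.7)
The plan is to run the iterative refinement algorithm promised by Theorem~\ref{thm:MainThm}, which performs $K = O(p m^{(p-2)/(3p-2)} \log \tfrac{\ff(\xx^{(0)}) - \ff(\xx^{\star})}{\eps})$ inner linear system solves, and to replace the naive $\widetilde O(m^{\omega})$ cost per solve by an amortized scheme whose cost totals $\widetilde O(m^{\omega})$ over all $K$ iterations. Each inner solve reduces to a system of the form $\paren{\AA^{\top}\DDelta_t \AA + \MM^{\top}\MM + \NN^{\top}\RR_t \NN}\yy_t = \cc_t$, where $\DDelta_t, \RR_t$ are nonnegative diagonal weight matrices whose entries are explicit functions of $\xx^{(t)}$ and the step size, while $\AA, \MM, \NN$ themselves never change.

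First I would extract a weight-stability bound from the progress guarantee of Theorem~\ref{thm:MainThm} together with the $\poly(m)$ condition-number hypothesis on $\AA, \MM, \NN$. The target is an estimate of the form $\sum_{t=1}^{K-1} \sum_i \log^2 \paren{\nfrac{w^{(t+1)}_i}{w^{(t)}_i}} \leq \widetilde O(m)$, where $w^{(t)}_i$ denotes a diagonal entry of $\DDelta_t$ or $\RR_t$. A bucketing argument then controls, for every multiplicative tolerance $\gamma$, the aggregate number of coordinate updates of relative magnitude $\gamma$ across all iterations.

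Second, I would maintain an explicit inverse of a proxy matrix $\paren{\AA^{\top}\widetilde \DDelta_t \AA + \MM^{\top}\MM + \NN^{\top}\widetilde \RR_t \NN}^{-1}$ in which $\widetilde \DDelta_t, \widetilde \RR_t$ are diagonal surrogates held within a constant multiplicative factor of the true weights. When a coordinate drifts past this tolerance, refresh the proxy entry; a simultaneous batch of $k$ refreshes is a rank-$k$ update and can be absorbed by the Sherman--Morrison--Woodbury identity using fast rectangular matrix multiplication. Since iterative refinement already tolerates a constant-factor multiplicative error in the subproblem solutions (a property used in proving Theorem~\ref{thm:MainThm}), substituting the proxy inverse is legal. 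A one-time $O(m^{\omega})$ preprocessing step computes the initial inverse, and the weight-stability estimate bounds the total Woodbury-update cost by $\widetilde O(m^{\omega})$.

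The main obstacle is establishing the weight-stability estimate in the precise form required by the amortization. The iterative refinement step is not a Newton step on a self-concordant barrier, so the classical IPM weight-drift analyses do not transfer; one must instead quantify, using the structure of the smoothed $\ell_p$ subproblem and the step-size schedule, how moving from $\xx^{(t)}$ to $\xx^{(t+1)}$ produces only a controlled aggregate change in $|\NN_i \xx^{(t)}|^{p-2}$ and the corresponding residual weights. The polynomial condition-number hypotheses enter precisely here, ruling out coordinates on which a small displacement swings the weight by an unbounded factor, and they are what ultimately validate the bucketing and hence the $\widetilde O(m^{\omega})$ bound.
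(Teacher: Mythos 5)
Your high-level plan — maintain a surrogate inverse good to a constant factor, refresh drifting coordinates in batches via Sherman–Morrison–Woodbury, and control the total Woodbury cost with a bucketed bound on aggregate relative weight change — is indeed the shape of the paper's argument. But there is a genuine gap in where the weight-stability estimate comes from, and it is the heart of the proof.

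You attribute the controlled aggregate weight drift to the polynomial condition-number hypothesis and to "the progress guarantee of Theorem~\ref{thm:MainThm}". That is not what makes it work, and neither of those inputs by itself produces a bound like $\sum_{t,i}\log^2(w_i^{(t+1)}/w_i^{(t)}) = \Otil(m)$. The actual mechanism is internal to the width-reduced MWU solver for the residual subproblem: the resistances $\rr_e = (\ww_e)^{p-2}$ are \emph{monotonically nondecreasing} by construction, and their aggregate relative increase is controlled by a second potential $\Psi(\rr) = \min_{\AA\Delta = c}\{\,m_1^{(p-2)/p}\Delta^\top\Mtil^\top\Mtil\Delta + 3^{-(p-2)}\sum_e\rr_e(\NN\Delta)_e^2\,\}$. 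The energy-monotonicity fact (Lemma~\ref{lem:ckmst:res-increase}) lower-bounds the increase of $\Psi$ in terms of $\sum_e(1 - \rr_e/\rr'_e)\rr_e(\NN\Dtil)_e^2$, and $\Psi$ is globally bounded above because the first potential $\Phi=\|\ww\|_p^p$ stays $O(3^p m)$. Combining these (Lemmas~\ref{lem:CountHighWidth} and~\ref{lem:CountLowWidth}) gives the per-bucket counts $\sum_i k(\eta)^{(i)} \le O(m^{(p+2)/(3p-2)} 2^{2\eta})$ that drive the amortization. None of this invokes condition numbers. The condition-number assumption enters elsewhere (Lemma~\ref{lem:LOLWhatError}): it is what lets you treat the maintained inverse as a \emph{preconditioner} and drive a Richardson-type iteration to $1/\poly(n)$ accuracy in $\Otil(n^2)$ time. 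This also undercuts your justification that "iterative refinement tolerates a constant-factor error, so substituting the proxy inverse is legal": the MWU inner loop is analyzed assuming the weighted least-squares problem is solved to high accuracy, so a constant-factor inverse cannot be used as a drop-in substitute; it must be upgraded to a high-accuracy solve via preconditioning. Finally, the monotonicity of $\rr$ is what makes the lazy counter-based refresh (Algorithm~\ref{alg:InverseMaintenance}) sound — small one-sided drifts can be accumulated and flushed on a predictable schedule — and you do not mention it. The amortization is also done per MWU call (each call reinitializes the weights), not globally across the iterative-refinement outer loop as your proposal suggests; this is immaterial for the final $\Otil(m^\omega)$ bound since $\Otil$ hides the $\poly(p)\log(1/\eps)$ number of MWU calls, but it is where the accounting actually lives.
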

 Our inverse maintenance algorithm is presented in Section \ref{sec:Inv}, where we also give a more fine grained dependence on the parameters $m_1,m_2,n$ and $p$ in the rate of convergence (Theorem \ref{thm:InverseMaint}). Our algorithms and techniques for $\ell_p$-regression have motivated a line of work in graph optimization and the study of accelerated width reduced methods which we describe in detail in Section \ref{sec:RelatedWorks}.

Our next contribution is towards the IRLS approach.
 For the $\ell_p$-regression problem i.e. $\dd=\MM=0$ in
 \eqref{eq:Main}, we give an IRLS algorithm that globally converges to
 the optimum in at most
 $O\left(p^3m^{\frac{p-2}{2(p-1)}}\log\frac{m}{\epsilon}\right)$
 linear system solves for all $p\geq 2$ (Section \ref{sec:IRLS}). This
 is the first IRLS algorithm that is guaranteed to converge to the
 optimum for all values of $p\geq 2$, with a quantitative bound on the
 runtime. Our IRLS algorithm has proven to be very fast
   and robust in practice and is faster than existing implementations
   in MATLAB/CVX by 10-50x. These speed-ups are demonstrated in
   experiments performed in \cite{AdilPS19} and we present these
   results along with our algorithm in Section~\ref{sec:IRLS}.

\begin{restatable}{theorem}{IRLSMain}\label{thm:IRLS-Main}
Let $p \geq 2$. Algorithm \ref{alg:IRLS} returns $\xx$ such that
$\AA\xx = \bb$ and $\|\NN\xx\|_p^p  \leq (1 + \epsilon)
\|\NN\xx^{\star}\|_p^p$, in at most $O\left(p^3
  m^{\frac{(p-2)}{2(p-1)}}\log \left(\frac{m}{\epsilon
    }\right)\right)$ calls to a linear system solver. 
\end{restatable}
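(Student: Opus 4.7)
\emph{Proof plan.} The plan is to combine the $\ell_p$ iterative refinement framework that yields Theorem~\ref{thm:MainThm} with a \emph{weighted least squares} surrogate for the residual problem, so that each IRLS inner step is just a single linear system solve. Iterative refinement reduces $\min_{\AA\xx=\bb}\|\NN\xx\|_p^p$ (the case $\dd=\MM=\mathbf{0}$ of Problem~\eqref{eq:Main}) to solving $O(p\log(m/\epsilon))$ residual problems of the form
\[
  \min_{\AA \Delta = \mathbf{0}} \; p\,\rr^\top \Delta \;+\; \sum_i \gg_i^{p-2}(\NN \Delta)_i^2 \;+\; \sum_i (\NN \Delta)_i^p,
\]
where $\gg = \NN\xx$ at the current iterate and $\rr$ is (proportional to) the gradient of $\|\NN\xx\|_p^p$. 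Each such residual problem need only be solved to a constant multiplicative factor to yield overall geometric progress on $\|\NN\xx\|_p^p - \|\NN\xx^{\star}\|_p^p$, which opens the door to a crude but linear-system-only inner solver.

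The IRLS inner loop replaces this mixed $\ell_2 + \ell_p$ residual by the pure weighted least squares surrogate
\[
  \min_{\AA \Delta = \mathbf{0}} \; p\,\rr^\top \Delta \;+\; \sum_i \ww_i (\NN \Delta)_i^2,
\]
whose optimum is given by a single normal equation solve. The weights $\ww_i$ are initialized at $\gg_i^{p-2}$ and updated multiplicatively between inner iterations. The key approximation lemma I would establish is that, uniformly over $\Delta$ in the relevant sublevel set, the ratio between this surrogate and the true residual value is at most $\kappa = O(m^{(p-2)/(2(p-1))})$. The exponent arises from optimally balancing, across the $m$ coordinates of $\NN\Delta$, the regime where $(\NN\Delta)_i$ is small relative to $\gg_i$ (so the $\ell_2$ part dominates and the surrogate is tight) against the regime where it is large (so the $\ell_p$ term must be majorized by a weighted $\ell_2$, losing a factor that trades against the number of heavy coordinates).

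With this $\kappa$-approximation in hand, a multiplicative weights / width-reduction analysis drives the inner loop: after each weighted LS solve, coordinates on which the step $\Delta$ exceeds the surrogate's predicted $\ell_p$ scale have $\ww_i$ increased by a factor $(1+\Theta(1/p))$, and a potential of the form $\sum_i \ww_i^{1/(p-2)}$ certifies that after $O(p^2\kappa)$ inner iterations the surrogate and the true residual agree up to a constant factor. The step-size $\Theta(1/p)$ dictated by the $(p-1)$-order smoothness of $\|\cdot\|_p^p$ contributes one factor of $p$, and translating width-reduction progress into objective progress contributes another, yielding the $p^2$. Multiplying the outer count $O(p\log(m/\epsilon))$ by the inner count $O(p^2 m^{(p-2)/(2(p-1))})$ gives the claimed total number of linear system solves. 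The main obstacle will be designing a single potential function that simultaneously tracks weight growth, the optimality gap of the surrogate, and the gap to the true residual, so that the width-reduction bound and the constant-factor residual guarantee follow from one unified inner-loop analysis despite the surrogate only $\kappa$-approximating the true residual objective.
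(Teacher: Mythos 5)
Your high-level plan — iterative refinement plus a weighted least squares surrogate for the residual, with $\kappa \approx m^{(p-2)/(2(p-1))}$ arising from balancing the $\ell_2$-dominated and $\ell_p$-dominated regimes — is pointed in the right direction, but the structure of your argument does not match what an IRLS algorithm is, and the accounting is wrong in a way that happens to land on the correct exponent by coincidence. The paper's IRLS algorithm has \emph{no} inner multiplicative-weights or width-reduction loop. By definition, each outer iteration of Algorithm~\ref{alg:IRLS} performs exactly one weighted least squares solve (a single linear system). The bound comes from Lemma~\ref{lem:residualIRLS}: that single solve, with a crucial additive padding $s\II$ where $s = \nu^{(p-2)/p} m^{-(p-2)/p}$, already yields a $\kappa = O\bigl(p^2 m^{(p-2)/(2(p-1))}\bigr)$-approximate solution to the residual problem. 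Feeding that $\kappa$ into Theorem~\ref{thm:IterativeRefinement}, which gives $O(p\kappa\log(\cdot/\epsilon))$ calls to a $\kappa$-approximate residual solver, immediately yields $O\bigl(p^3 m^{(p-2)/(2(p-1))}\log(m/\epsilon)\bigr)$ linear system solves. There is one solve per outer iteration and $O(p\kappa\log)$ outer iterations — not $O(p\log)$ outer iterations times $O(p^2\kappa)$ inner iterations.

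The width-reduced MWU inner loop you propose is the mechanism of Section~\ref{chap:MWU} (the faster $m^{(p-2)/(3p-2)}$ algorithm), not the IRLS algorithm; using it here would both change the algorithm's identity and actually give a \emph{better} exponent than the theorem claims, which should have been a red flag. Relatedly, your claimed approximation ratio $\kappa = O(m^{(p-2)/(2(p-1))})$ is missing the $p^2$ factor, which in the paper's proof comes from bounding the optimal objective of the padded surrogate \eqref{eq:l2-IRLS} between $\nu/(2^{13}p^2)$ and $\nu$ and from the step size $\alpha_0$; you compensate with an extra $p^2$ from a nonexistent inner loop. Finally, you omit the padding $s\II$ entirely. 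This is not cosmetic: without it, the weights $|\NN\xx|^{p-2}$ can be arbitrarily small in some coordinates, the surrogate's optimum $\Dtil$ can have unbounded $\|\NN\Dtil\|_2$, and the bound on $k = \|\NN\Dtil\|_p^p / \Dtil^\top\NN^\top(\RR+s\II)\NN\Dtil$ (which controls $\alpha_0$ and hence $\kappa$) fails. To fix your proposal, drop the inner loop, add the padding $s\II$ to the weighted LS problem, prove that the scaled optimum $\alpha_0\Dtil$ of this single solve is a $\kappa = O(p^2 m^{(p-2)/(2(p-1))})$-approximate residual solution (this is where the regime-balancing intuition you describe correctly enters), and conclude directly from Theorem~\ref{thm:IterativeRefinement}.
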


  The analysis of our IRLS algorithm fits into the overall framework of this paper. Such an algorithm first appeared in the conference paper by \textcite{AdilPS19}, where they also ran some experiments to demonstrate the performance of their IRLS algorithm in practice. We include some of their experimental results to show that the rate of convergence in practice is even better than the theoretical bounds.

\subsection{Technical Overview}

\paragraph{Overall $\log \frac{1}{\epsilon}$ Convergence} Our algorithm follows an overall {\it iterative refinement} approach
for $p\geq 2$, which implies $\ff(\xx+\delta)-\ff(\xx)$ can be upper
bounded by the function
$\res_p = \gg^{\top}\delta + \|\RR\delta\|_2^2 + \|\NN\delta\|_p^p$,
and lower bounded by a similar function. Here, the vector $\gg$ and
matrix $\RR$ depend on $\xx,$ and the matrix $\NN$ is as defined in
Problem \eqref{eq:Main}. We prove that if we can solve
$\min_{\AA\delta=0}\res_p(\delta)$ to a $\kappa$-approximation,
$O(p \kappa \log
  \nfrac{(\ff(\xx^{(0)})-\ff(\xx^{\star}))}{\eps})$
such solves (iterations) suffice to obtain an $\eps$-approximate solution to Problem \eqref{eq:Main} (Theorem
\ref{thm:IterativeRefinement}). We call this problem the {\it Residual
  Problem} and this process {\it Iterative Refinement for
  $\ell_p$-norms}.

\paragraph{Solving the Residual Problem} We next perform a binary search on the linear term of the residual problem and reduce it to solving $O(\log p)$ problems of the form, $\min_{\AA\delta = \cc}  \|\RR\delta\|_2^2 + \|\NN\delta\|_p^p$ (Lemma \ref{lem:binary}). In order to solve these new problems, we use a multiplicative weight update routine that returns a constant approximate solution in $O(p m^{\nfrac{(p-2)}{(3p-2)}})$ calls to a linear system solver (Theorem \ref{cor:ResidualDecision}). We can thus find a constant approximate solution to the residual problem in $O(p m^{\nfrac{(p-2)}{(3p-2)}}\log p)$ calls to a linear system solver (Corollary \ref{cor:ResApprox}). Combined with iterative refinement, we obtain an algorithm that converges in $O\left(p^2 m^{\frac{p-2}{3p-2}}\log p\log \frac{\ff(\xx^{(0)})-\ff(\xx^{\star})}{\epsilon}\right) \leq \Otil\left(p^2 m^{1/3}\log \frac{1}{\epsilon}\right)$ linear system solves. 

\paragraph{Improving $p$ Dependence}Furthermore, we prove that for any $q\neq p$, given a $p$-norm
residual problem, we can construct a corresponding $q$-norm residual
problem such that $\beta$-approximate solution to the $q$-norm
residual problem roughly gives a
$O(\beta^2) m^{\abs{\frac{1}{p}-\frac{1}{q}}}$
 approximate solution to the $p$-norm residual problem (Theorem
\ref{thm:p2q}). As a consequence, if $p$ is large, i.e.
$p \geq \log m$, a constant approximate solution to the corresponding
$\log m$-norm residual problem will give an
$O(m^{\frac{1}{\log m}}) \leq O(1)$-approximate solution to the
$p$-norm residual problem in at most
$O(\log m \cdot m^{\frac{\log m-2}{3\log m-2}})\leq
\Otil(m^{\frac{p-2}{3p-2}})$ calls to a linear system
solver. Combining this with the algorithm described in the previous
paragraph, we obtain our final guarantees as described in Theorem
\ref{thm:MainThm}.

\paragraph{$\ell_p$-Regression in Matrix Multiplication Time}
We next
describe how to obtain the guarantees of Theorem
\ref{thm:InverseIntro}. While solving the residual problem, the
algorithm solves a system of linear equations at every iteration. The
key observation for obtaining improved running times is that the
weights determining these linear systems change slowly. Thus, we can
maintain a spectral approximation to the linear system via a sequence
of \emph{lazy} low-rank updates. The Sherman-Morrison-Woodbury formula
then allows us to update the inverse quickly. We can use the spectral
approximation as a preconditioner for solving the linear system
quickly at each iteration. Thus, we obtain a speed-up since the linear
systems do not need to be solved from scratch at each iteration,
giving Theorem~\ref{thm:InverseIntro}.

\paragraph{Good Starting Solution}For $\ell_p$-norm objectives, i.e., $\min_{\AA\xx=\bb} \|\NN\xx\|_p^p$, we further show how to find a starting solution $\xx^{(0)}$ such that $\|\NN\xx^{(0)}\|_p^p \leq O(m)\|\NN\xx^{\star}\|_p^p$ . The key idea is that for any $k$, a constant approximate solution to the $k$-norm problem is an $O(m)$-approximate solution to the $2k$-norm problem (Lemma \ref{lem:homotopy}). This inspires a homotopy approach, where we first solve an $\ell_2$ norm problem followed by $\ell_{2^2},\ell_{2^3},\cdots, \ell_{2^{\ceil{\log p}}}$-norm problems to constant approximations. We can thus obtain the required starting solution in at most $O\left(p m^{\frac{p-2}{3p-2}} \log m \log^2 p\right) $ calls to a linear system solver.

\paragraph{IRLS Algorithm} For the IRLS algorithm, given the residual problem at an iteration, we show how to construct a weighted least squares problem, the solution of which is an $O\left(p^2 m^{\frac{p-2}{2(p-1)}}\right)$-approximate solution to the residual problem (Lemma \ref{lem:residualIRLS}). This result along with the overall iterative refinement culminates in our IRLS algorithm where we directly solve these weighted least squares problems in every iteration.

\subsection{Related Works}
\label{sec:RelatedWorks}

 \paragraph{$\ell_p$-Regression}  Until 2018, the fastest
 high-accuracy algorithms for $\ell_p$-regression, including the \textcite{nesterov1994interior} Interior Point Method framework and \textcite{bubeck2018homotopy} homotopy method, asymptotically required $\approx
 O(\sqrt{n})$ linear system solves. The first algorithm for $\ell_p$-regression to beat the $\sqrt{n}$ iteration bound was the algorithm by \textcite{AdilKPS19}, which was faster than all known algorithms and asymptotically required at most $\approx O(p^{O(p)}n^{1/3})$ iterations , for all $p >1$. Concurrently \textcite{bullins2018fast} used tools from convex optimization to give an algorithm for $p=4$ which matches the rates of \textcite{AdilKPS19} up to logarithmic factors. Subsequent works have improved the $p$ dependence \cite{AdilS20,AdilBKS21} and proposed alternate methods for obtaining matching rates (upto logarithmic and $p$ factors) \cite{CJJJLST}. A recent work by \textcite{jambulapati2021improved} shows how to solve $\ell_p$-regression in $\approx n + poly(p) \cdot d^{\frac{p-2}{3p-2}}$ iterations where $d$ is the smaller dimension of the constraint matrix $\AA$.

 \paragraph{Width Reduced MWU Algorithms} Width reduction is a technique that has been used repeatedly in multiplicative weight update algorithms to speed up rates of convergence from $m^{1/2}$ to $m^{1/3}$, where $m$ is the size of the input. This technique was first seen in the work of \textcite{CKMST}, in the context of the maximum flow problem where for a graph with $n$ vertices and $m$ edges to improve the iteration complexity from $\Otil(m^{1/2})$ to $\Otil(m^{1/3})$. A similar improvement was further seen in algorithms for $\ell_1, \ell_{\infty}$-regression by \textcite{chin2013runtime,ene2019improved}, $\ell_p$-regression ($p \geq 2$) \textcite{AdilKPS19} and, algorithms for matrix scaling \cite{allen2017much}. In a recent work \textcite{AdilBS21} extend this technique to improve iteration complexities for all {\it quasi-self-concordant} objectives which includes soft-max and logistic regression among others.

 \paragraph{Inverse Maintenance}
 Inverse Maintenance is a technique used to speed up algorithms and
 was first introduced by \textcite{vaidya1990solving} in the context of
 minimum cost and multicommodity flows and has further been used for
 interior point methods \textcite{LeeS14}, \textcite{lee2015faster}. In
 2019, \textcite{AdilKPS19} developed a method for $\ell_p$-regression
 that utilized the idea of reusing inverses due to controllable rates
 of change of underlying variables.

 \paragraph{IRLS Algorithms} Iteratively Reweighted Least Squares Algorithms are simple to implement and have thus been used in a wide range of applications including sparse signal reconstruction \cite{gorodnitsky1997}, compressive sensing \cite{chartrand2008iteratively} and Chebyshev approximation in FIR filter design \cite{barreto1994sub}. Refer to \textcite{burrus2012} for a full survey. The works by \textcite{osborne1985} and \textcite{karlovitz1970} show convergence in the limit and with certain assumptions on the starting solution. For $\ell_1$-regression, \textcite{Straszak2016OnAN,straszak2016natural,straszak2016irls} show quantitative convergence bounds. In 2019, \textcite{AdilPS19} give the first IRLS algorithm with quantitative bounds that is guaranteed to converge with no conditions on the starting point. Their algorithm also works well in practice as suggested by their experiments.

\paragraph{Follow-up Work in Graph Optimization}

The $\ell_p$-norm flow problem, which asks to minimize the $\ell_p$-norm of a flow vector while satisfying certain demand constraints, is modeled via the $\ell_p$-regression problem. The maximum flow problem is the special case of $p=\infty$. For graphs with $n$ vertices and $m$ edges, the $\ell_p$-norm regression algorithm of \textcite{AdilKPS19} when combined with fast laplacian solvers, directly gives an $\approx \Otil(p^{O(p)}m^{4/3})$ time algorithm for the $\ell_p$-norm flow problem. Building on their work, specifically the iterative refinement framework, which allows to solve these problems to a high-accuracy while only requiring an $m^{o(1)}$-aproximate solution to an $\ell_p$-norm subproblem, \textcite{kyngPSW19} give an algorithm for unweighted graphs that runs in time $\exp(p^{3/2})m^{1 + \frac{7}{\sqrt{p-1}}+o(1)}$. We note that their algorithm runs in time $m^{1+o(1)}$ for $p = \sqrt{\log m}$. Further works including \textcite{AdilBKS21} also utilize the iterative refinement guarantees to give an algorithm with runtime $p(m^{1+o(1)} + n^{4/3+o(1)})$ for weighted $\ell_p$-norm flow problems by designing new sparsification algorithms that preserve $\ell_p$-norm objectives of the subproblem to an $m^{o(1)}$-approximation. For the maximum flow problem, \textcite{AdilS20} give an $m^{1+o(1)}\epsilon^{-1}$ time algorithm for the approximate maximum flow problem on unweighted graphs. \textcite{kathuria2020unit} build on these works further and give an algorithm that computes maximum $s$-$t$ flow problem where each edge has integer capacities at most $U$, in time $m^{4/3+o(1)}U^{1/3}$. In a recent breakthrough result by \textcite{chen2022maximum}, the authors give an algorithm for the maximum flow problem and the $\ell_p$-norm flow problem that runs in almost linear time, $m^{1+o(1)}$.

\subsection{Organization of Paper}
Section \ref{sec:IR} describes the overall iterative refinement framework, first for $p \geq 2$, and then for $p \in (1,2)$. In the end, we show how to find good starting solutions for pure $\ell_p$-norm objectives for ${}p \geq 2$. Section \ref{chap:MWU} describes the width reduced multiplicative weight update routine used to solve the residual problem. In Section \ref{sec:p2q} we show how to solve $p$-norm residual problems using $q$-norm residual problems and give our overall algorithm (Algorithm \ref{alg:CompleteLp}). Section \ref{sec:Inv} contains our new inverse maintenance algorithm that allows us to solve $\ell_p$-regression almost as fast as linear regression. Finally in Section \ref{sec:IRLS} we give an IRLS algorithm and present some experimental results from \textcite{AdilPS19}.



\section{Iterative Refinement for \texorpdfstring{$\ell_p$}{TEXT}-norms}
\label{sec:IR}

Recall that we would like to find a high-accuracy solution for the problem, 
\[
\min_{\AA\xx = \bb}\quad \ff(\xx) = \dd^{\top}\xx + \|\MM\xx\|_2^2 + \|\NN\xx\|_p^p
\]
for matrices $\AA\in \mathbb{R}^{d\times n}, \MM \in \mathbb{R}^{m_1\times n}, \NN \in \mathbb{R}^{m_2\times n},$ $m_1,m_2 \geq n, d\leq n$.

A common approach in smooth, convex optimization is upper bounding the function using a first order Taylor expansion plus a quadratic function (smoothness), and minimizing this bound repeatedly to converge to the optimum. Additionally, when the function has a similar quadratic lower bound (strong convexity) it can be shown that minimizing this upper bound $O\left(\log \frac{1}{\epsilon}\right)$\footnote{hiding problem dependent parameters} times is sufficient to converge to an $\epsilon$-approximate solution. The $\ell_p$-norm function satisfies no such quadratic upper bound since it has a very steep growth, or lower bound since it is too flat around $0$. In this section we show that we can instead upper and lower bound the $\ell_p$ function for $p \geq 2$ by a second order Taylor expansion plus an $\ell_p^p$ term. We show that it is sufficient to minimize such a bound to a $\kappa$-approximation $O\left(p \kappa \log \frac{1}{\epsilon}\right)$ times. Such an iterative refinement method was previously only known for $p = 2$, and we thus call this algorithm {\it Iterative Refinement for $\ell_p$-norms}. In further sections, we show different ways to minimize this upper bound approximately to obtain fast algorithms.

For $p\in (1,2)$, we use a smoothed function which is quadratic in a small range around $0$ and grows as $\ell_p^p$ otherwise. We use this function to give upper and lower bounds and a similar iterative refinement scheme.

We further show how to obtain a good starting solution for Problem \eqref{eq:Main} in the special case when the vector $\dd$ and matrix $\MM$ are zero, i.e., the objective function is only the $\ell_p$-norm function.

These sections are based on the results and proofs from \textcite{AdilKPS19,AdilPS19,AdilS20,AdilBKS21}.

\subsection{Iterative Refinement}\label{sec:IterRef}

We will prove that the following algorithm can be used to obtain a high-accuracy solution, i.e., $\log \frac{1}{\epsilon}$ rate of convergence for $\ell_p$-regression.
\begin{algorithm}[H]
\caption{Iterative Refinement}
\label{alg:IterRef}
 \begin{algorithmic}[1]
 \Procedure{\textsc{Main-Solver}}{$\AA, \MM,\NN, \dd,\bb,p,\epsilon$}
\State $\xx \leftarrow \xx^{(0)}$
\State $\nu\leftarrow$ Bound on $ \ff(\xx^{(0)})-\ff(\xx^{\star})$ \Comment{If $\ff(\xx^{\star})\geq 0$, then $\nu \leftarrow \ff(\xx^{(0)})$}
\While{$\nu >\epsilon$}
\State $\Dtil\leftarrow$ {\sc ResidualSolver($\xx,\MM,\NN,\AA,\dd,\bb,\nu,p$)}
\If{$\res_p(\Dtil) \geq \frac{\nu}{32p\kappa}$}
\State $\xx \leftarrow \xx - \frac{\Dtil}{p}$ \label{alg:line:update-x}
\Else
\State $\nu \leftarrow \frac{\nu}{2}$\label{alg:line:update-nu}
\EndIf
\EndWhile
\State \Return $\xx$
 \EndProcedure 
 \end{algorithmic}
\end{algorithm}
Specifically, we will prove,
\begin{restatable}{theorem}{Iterative}\label{thm:IterativeRefinement}
Let $p \geq 2$, and $\kappa \geq 1$. Let the initial solution $\xx^{(0)}$ satisfy $\AA\xx^{(0)} = \bb$. Algorithm \ref{alg:IterRef} returns an $\epsilon$-approximate solution $\xx$ of Problem \eqref{eq:Main} in at most $O\left(p \kappa \log \left(\frac{\ff(\xx^{(0)})-\ff(\xx^{\star}) }{\epsilon }\right)\right)$ calls to a $\kappa$-approximate solver for the residual problem (Definition \ref{def:residual}).
\end{restatable}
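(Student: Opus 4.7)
The plan is to combine two iterative refinement inequalities that sandwich the change in $\ff$ between scaled copies of the residual function $\res_p$, and then run a short potential argument on the pair $(\ff(\xx), \nu)$. I will rely on two inequalities, which should be established earlier in Section \ref{sec:IR} from the scalar convexity/concavity properties of $|t|^p$ for $p \geq 2$: (a) a \emph{descent guarantee} stating that for any $\DDelta$ with $\AA\DDelta = 0$, the step $\xx \leftarrow \xx - \DDelta/p$ reduces $\ff$ by at least $\Omega(\res_p(\DDelta))$ whenever $\res_p(\DDelta) > 0$; and (b) an \emph{optimality-gap bound} stating that $\ff(\xx) - \ff(\xx^\star) \leq O(p) \cdot \res_p(\DDelta^\star)$, where $\DDelta^\star$ is the optimum of the residual problem. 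Inequality (b) is typically proved by using $\DDelta = p(\xx - \xx^\star)$ as an explicit witness, and the $1/p$ scaling in the step in (a) is precisely what balances the $\|\NN\DDelta\|_p^p$ term against the linear gain in $\res_p$.

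Given these two inequalities, the core of the argument is to maintain the invariant that at the start of every iteration of the while-loop, $\ff(\xx) - \ff(\xx^\star) \leq \nu$. At initialization this holds by the choice of $\nu$. For the update branch (when $\res_p(\Dtil) \geq \nu/(32 p \kappa)$), the point $\xx - \Dtil/p$ is feasible because $\AA \Dtil = \vzero$, and (a) yields a decrease in $\ff$ of at least $\Omega(\nu/(p\kappa))$; the invariant trivially persists since $\nu$ is unchanged and $\ff(\xx)$ only drops. For the halving branch (when $\res_p(\Dtil) < \nu/(32 p \kappa)$), since $\Dtil$ is a $\kappa$-approximate solution to the residual problem (Definition \ref{def:residual}), the optimum value satisfies $\res_p(\DDelta^\star) \leq \kappa \cdot \res_p(\Dtil) < \nu/(32 p)$; plugging this into (b) gives $\ff(\xx) - \ff(\xx^\star) \leq \nu/2$, which justifies replacing $\nu$ by $\nu/2$ while preserving the invariant.

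The counting then follows by standard bookkeeping. The halving branch is executed at most $O(\log(\nu_0/\epsilon))$ times, where $\nu_0 \geq \ff(\xx^{(0)}) - \ff(\xx^\star)$, since $\nu$ starts at $\nu_0$, halves at each halving step, and the loop exits once $\nu \leq \epsilon$. Between two consecutive halvings, $\nu$ is fixed and every update iteration decreases $\ff(\xx)$ by $\Omega(\nu/(p\kappa))$; but the total feasible decrease during any such phase is bounded by the invariant as $\nu$, forcing at most $O(p\kappa)$ update iterations per phase. Multiplying yields the claimed total count of $O\left(p \kappa \log \left((\ff(\xx^{(0)}) - \ff(\xx^\star))/\epsilon\right)\right)$ calls to the residual solver.

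The main obstacle I expect is establishing inequality (a) with the precise step size $1/p$: this is where a careful application of the scalar inequality for $|t|^p$ (essentially a second-order Taylor expansion of $|t|^p$ with an explicit $|b|^p$ remainder) must be combined with the matching lower bound to extract a constant-factor descent, and the $1/p$ step is tuned to ensure the $\ell_p^p$ remainder does not dominate the linear gain. A secondary subtlety is that the threshold $\nu/(32 p \kappa)$ in the algorithm must mesh exactly with the constants in (a) and (b); the constant $32$ is chosen to absorb the loss factors from both inequalities, and the rest of the counting argument is insensitive to these constants and falls out immediately.
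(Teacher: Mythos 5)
Your proposal is correct and follows essentially the same approach as the paper: your inequalities (a) and (b) are exactly the two directions of the paper's Lemma~\ref{lem:RelateResidualOpt} (the paper's version uses $\lambda = 16p$ and a witness $\DDelta$ with $\xx^\star = \xx - \lambda\DDelta/p$, which is the same idea as your scaled $\xx-\xx^\star$ witness), and your invariant $\ff(\xx)-\ff(\xx^\star)\leq\nu$ is precisely Lemma~\ref{lem:invariant}. The only cosmetic difference is in the final counting: the paper bounds the total number of primal steps globally via the multiplicative decrease $(1-\tfrac{1}{32p\kappa})$ per step, while you bound each $\nu$-phase additively (at most $O(p\kappa)$ primal steps because each gains $\Omega(\nu/(p\kappa))$ out of a budget of $\nu$) and then multiply by the $O(\log(\nu_0/\epsilon))$ phases; both give the stated bound, and your per-phase argument is arguably a bit cleaner to state.
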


Before we prove the above result, we will define some of the terms used in the above statement.

\subsubsection{Preliminaries}

\begin{definition}[$\epsilon$-Approximate Solution]
Let $\xx^{\star}$ denote the optimizer of Problem \eqref{eq:Main}. We say $\xxtil$ is an $\epsilon$-approximate solution to \eqref{eq:Main} if $\AA\xxtil = \bb$ and 
\[
\ff(\xxtil)\leq \ff(\xx^{\star}) + \epsilon.
\]
\end{definition}
\begin{definition}[Residual Problem]\label{def:residual}
  For any $p \geq 2$, we define the residual problem $res_p(\Delta)$,
   for \eqref{eq:Main} at a feasible $\xx$ as,
   \[
     \max_{\AA\Delta = 0} \quad \res_p(\Delta) \defeq \gg^{\top}\Delta -
     \Delta^{\top}\RR\Delta - \|\NN\Delta\|_p^p, \text{    where,}
   \] 
   \[\gg = \frac{1}{p}\dd + \frac{2}{p}\MM^{\top} \MM\xx +
     \NN^{\top}Diag(|\NN\xx|^{p-2}) \NN\xx \quad \text{and} \quad\RR =\frac{2}{p^2}\MM^{\top}\MM+ 2\NN^{\top} Diag(|\NN\xx|^{p-2})\NN.\]

\end{definition}

\begin{definition}[Approximation to Residual Problem]\label{def:res-approx}
Let $p\geq 2$ and $\Dopt$ be the optimum of the residual problem. $\Dtil$ is a $\kappa$-approximation to the residual problem if $\AA\Dtil = 0$ and,
\[
res_p(\Dtil) \geq \frac{1}{\kappa} res_p(\Dopt).
\]
\end{definition}

\subsubsection{Bounding Change in Objective}
In order to prove our result, we first show that we can upper and lower bound the change in our $\ell_p$-objective by a linear term plus a quadratic term plus an $\ell_p$-norm term.
\begin{lemma}\label{lem:precondition}
For any $\xx,\Delta$ and $p \geq 2$, we have for vectors $\rr,\gg$ defined coordinate wise as $\rr =|\xx|^{p-2}$ and $\gg = p |\xx|^{p-2}\xx$,
\[
\frac{p}{8} \sum_i \rr_i \Delta_i^2 + \frac{1}{2^{p+1}}\norm{\Delta}_p^p \leq \norm{\xx+\Delta}^p_p - \norm{\xx}_p^p - \gg^{\top}\Delta \leq 2 p^2 \sum_i \rr_i \Delta_i^2 + p^p \norm{\Delta}_p^p.
\]
\end{lemma}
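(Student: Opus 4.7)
Both sides of the inequality split cleanly across coordinates: using the definition of $\gg$ componentwise,
\[
\|\xx+\Delta\|_p^p - \|\xx\|_p^p - \gg^{\top}\Delta
= \sum_i \left( |\xx_i + \Delta_i|^p - |\xx_i|^p - p\,|\xx_i|^{p-2}\xx_i\,\Delta_i \right),
\]
while $\sum_i \rr_i \Delta_i^2 = \sum_i |\xx_i|^{p-2}\Delta_i^2$ and $\|\Delta\|_p^p = \sum_i |\Delta_i|^p$. Therefore the whole lemma follows at once if I can establish the one-dimensional bound
\[
\frac{p}{8}|x|^{p-2}\delta^2 + \frac{1}{2^{p+1}}|\delta|^p
\le |x+\delta|^p - |x|^p - p|x|^{p-2}x\,\delta
\le 2p^2\,|x|^{p-2}\delta^2 + p^p\,|\delta|^p
\]
for every $x,\delta\in\mathbb{R}$ and every $p\ge 2$, and then sum over coordinates. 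This is the main reduction.

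\paragraph{Reduction to a universal one-variable inequality.}
The case $x=0$ is immediate: the middle term is $|\delta|^p$, which clearly lies in $[\tfrac{1}{2^{p+1}}|\delta|^p,\ p^p|\delta|^p]$. For $x\ne 0$, I would substitute $x \mapsto -x,\ \delta\mapsto-\delta$ if necessary (both sides are invariant under this symmetry) to assume $x>0$, and then set $t=\delta/x$. Every term in the display above then scales as $x^p$, so after dividing by $x^p$ the inequality becomes the single universal statement
\[
\frac{p}{8}\,t^2 + \frac{1}{2^{p+1}}|t|^p
\;\le\; h(t) \;\le\; 2p^2\,t^2 + p^p\,|t|^p,
\qquad h(t) := |1+t|^p - 1 - p\,t.
\]
Note $h\ge 0$ by convexity of $u\mapsto|u|^p$ and the fact that $p$ is the derivative of $u\mapsto u^p$ at $u=1$. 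This reduction is the conceptual core; everything that remains is a calculus exercise on $h$.

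\paragraph{Proving the one-variable bounds.}
I would split into a ``local'' regime $|t|\le 1/2$ and a ``global'' regime $|t|>1/2$. In the local regime, Taylor's theorem with integral remainder gives $h(t)=\tfrac{p(p-1)}{2}(1+\xi)^{p-2}t^2$ for some $\xi$ between $0$ and $t$, so $(1+\xi)^{p-2}\in [2^{-(p-2)},\,(3/2)^{p-2}]$. This immediately sandwiches $h(t)$ between constant multiples of $t^2$, which is more than enough to absorb the small $|t|^p$ pieces (since $|t|^p\le|t|^2$ for $|t|\le 1/2$). In the global regime $|t|>1/2$, I would use crude but robust estimates: $|1+t|^p \le (1+|t|)^p \le (3|t|)^p = 3^p|t|^p$ for the upper bound, and for the lower bound I would separately handle $t\ge 1/2$ (where $(1+t)^p \ge 1 + pt + \tfrac{p(p-1)}{2}t^2 \ge 1+pt+\tfrac{p}{8}t^2$ and $(1+t)^p\ge 2^{-p}(1+t)^p$ yields an $|t|^p$ contribution), and $t\le -1/2$ (where a direct expansion of $|1+t|^p$ around $t=-1$ gives the required $|t|^p$ growth with room to spare). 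Collecting the worst constants from the four cases gives the stated $p/8,\ 1/2^{p+1},\ 2p^2,\ p^p$.

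\paragraph{Main obstacle.}
The only genuinely delicate point is the lower bound in the range $t\in(-1,-1/2]$, where $|1+t|$ is small and the ``quadratic dominates'' reasoning used for $|t|\le 1/2$ no longer applies, yet $|t|$ is not large enough for $|1+t|^p$ to be negligible compared to $|t|^p$. Here I would need to be careful to check that $-1-pt$ already produces enough quadratic growth in $t$ (since $-1-pt \ge -1-pt$ and $t\in(-1,-1/2]$ makes $-pt\ge p/2$, so $-1-pt\ge p/2-1\ge 0$ for $p\ge 2$), and that combining this with the nonnegative $|1+t|^p$ term suffices to beat $\tfrac{p}{8}t^2+\tfrac{1}{2^{p+1}}|t|^p$ on this compact interval. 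The constants $p/8$ and $1/2^{p+1}$ on the lower bound have been chosen to leave slack precisely for this regime.
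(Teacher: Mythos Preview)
Your reduction to the one-variable inequality
\[
\frac{p}{8}\,t^2 + \frac{1}{2^{p+1}}|t|^p \;\le\; h(t) \;\le\; 2p^2\,t^2 + p^p\,|t|^p,
\qquad h(t) := |1+t|^p - 1 - p\,t,
\]
is exactly the paper's first step, and the WLOG reduction to $x>0$ is fine. The gap is in the local regime. The Lagrange remainder gives $h(t)=\tfrac{p(p-1)}{2}(1+\xi)^{p-2}t^2$ with $\xi$ between $0$ and $t$, so on $|t|\le 1/2$ the factor $(1+\xi)^{p-2}$ ranges over $[2^{-(p-2)},(3/2)^{p-2}]$. Those constants are \emph{exponential} in $p$, not polynomial, so ``immediately sandwiches $h(t)$ between constant multiples of $t^2$'' does not deliver the claimed $p/8$ and $2p^2$. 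Concretely, for $t=-1/2$ the best lower bound your argument yields is $\tfrac{p(p-1)}{2^{p-1}}t^2$, which for $p=10$ is about $0.044$, far below the required $\tfrac{p}{8}t^2=0.3125$. The inequality itself is true there (indeed $h(-1/2)=2^{-p}+p/2-1\gg p/32$), but it is the linear piece $-1-pt$, not the second-order remainder, that carries the weight; your Taylor bound throws that away. The upper side has the mirror issue: the Taylor upper constant $\tfrac{p(p-1)}{2}(3/2)^{p-2}$ swamps $2p^2$, and you would have to argue separately that the $p^p|t|^p$ term absorbs the excess, which requires a further split near $|t|\approx 1/p$ that you do not mention.

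The paper avoids all of this by splitting at $|\alpha|=1$ (not $1/2$) and, crucially, never invoking a Taylor remainder. In each of the three ranges $\alpha\ge 1$, $\alpha\le -1$, $|\alpha|\le 1$ it defines the ``gap'' functions $h(\alpha)=|1+\alpha|^p-1-p\alpha-\tfrac{p}{8}\alpha^2-2^{-(p+1)}|\alpha|^p$ and $s(\alpha)=1+p\alpha+2p^2\alpha^2+p^p|\alpha|^p-(1+\alpha)^p$ directly, and shows each is nonnegative by computing one or two derivatives and checking monotonicity/sign by hand (e.g.\ on $|\alpha|\le 1$ the key step is $(1+\alpha)^{p-2}\ge 1$ for $\alpha\ge 0$ and the subrange split at $\alpha=1/(p-1)$ for $s$). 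This sign-of-derivative approach keeps all constants polynomial in $p$ throughout and is what you would need to replace your Taylor step with.
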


\begin{proof}
To show this, we show that the above holds for all coordinates. For a single coordinate, the above expression is equivalent to proving,
\[
\frac{p}{8} |x|^{p-2} \Delta^2 + \frac{1}{2^{p+1}}\abs{\Delta}^p \leq \abs{\xx+\Delta}^p - \abs{\xx}^p - p\abs{x}^{p-1}sgn(x)\Delta \leq 2 p^2 |x|^{p-2} \Delta^2  + p^p \abs{\Delta}^p.
\]
Let $\Delta = \alpha x$. Since the above clearly holds for $x=0$, it remains to show for all $\alpha$,
\[
\frac{p}{8}\alpha^2 + \frac{1}{2^{p+1}}\abs{\alpha}^p \leq \abs{1+\alpha}^p - 1 - p\alpha \leq 2 p^2 \alpha^2  + p^p \abs{\alpha}^p.
\]
\begin{enumerate}
\item $\alpha \geq 1$:\\
\noindent In this case, $1+\alpha \leq 2 \alpha \leq p\cdot \alpha$. So, $ \abs{1+\alpha}^p \leq p^p \abs{\alpha}^p$ and the right inequality directly holds. To show the other side, let 
\[
h(\alpha) = (1+\alpha)^p - 1 - p\alpha - \frac{p}{8} \alpha^2 - \frac{1}{2^{p+1}}{\alpha}^p.
\]
We have,
\[
h'(\alpha) = p(1+\alpha)^{p-1}  - p- \frac{p}{4} \alpha - \frac{p}{2^{p+1}}{\alpha}^{p-1}
\]
and 
\[
h''(\alpha) = p(p-1)(1+\alpha)^{p-2}  - \frac{p}{4}  - \frac{p(p-1)}{2^{p+1}}{\alpha}^{p-2} \geq 0.
\]
Since $h''(\alpha) \geq 0$, $h'(\alpha) \geq h'(1) \geq 0$. So $h$ is an increasing function in $\alpha$ and $h(\alpha) \geq h(1) \geq 0$.

\item $\alpha \leq -1$:\\
Now, $\abs{1+\alpha} \leq 1+\abs{\alpha} \leq p\cdot \abs{\alpha}$, and $2\alpha^2 p^2 - \abs{\alpha} p \geq 0$. As a result,
\[
\abs{1+\alpha}^p \leq - \abs{\alpha} p +2\alpha^2 p^2  + p^p\cdot \abs{\alpha}^p
\]
which gives the right inequality. Consider,
\[
h(\alpha) = |1+\alpha|^p - 1 - p\alpha - \frac{p}{8} \alpha^2 - \frac{1}{2^{p+1}}|\alpha|^p.
\]
\[
h'(\alpha) = - p|1+\alpha|^{p-1}  - p - \frac{p}{4} \alpha + p\frac{1}{2^{p+1}}|\alpha|^{p-1}.
\]
Let $\beta = -\alpha$. The above expression now becomes,
\[
- p (\beta - 1)^{p-1} - p + \frac{p}{4} \beta + p\frac{1}{2^{p+1}}\beta^{p-1}.
\]
We know that $\beta \geq 1$. When $\beta \geq 2$, $\frac{\beta}{2} \leq \beta - 1$ and $\frac{\beta}{2} \leq \left(\frac{\beta}{2}\right)^{p-1}$. This gives us,
\[
\frac{p}{4} \beta + p\frac{1}{2^{p+1}}\beta^{p-1} \leq \frac{p}{2} \left(\frac{\beta}{2}\right)^{p-1} +  \frac{p}{2} \left(\frac{\beta}{2}\right)^{p-1} \leq p (\beta - 1)^{p-1}
\]
giving us $h'(\alpha) \leq 0$ for $\alpha \leq -2$. When $\beta \leq 2$, $\frac{\beta}{2} \geq \left(\frac{\beta}{2}\right)^{p-1}$ and $\frac{\beta}{2}  \leq 1$.
\[
\frac{p}{4} \beta + p\frac{1}{2^{p+1}}\beta^{p-1} \leq \frac{p}{2}\cdot  \frac{\beta}{2} + \frac{p}{2}\cdot  \frac{\beta}{2} \leq p
\]
giving us $h'(\alpha) \leq 0$ for $ -2 \leq \alpha \leq -1$. Therefore, $h'(\alpha) \leq 0$ giving us, $h(\alpha) \geq h(-1) \geq 0$, thus giving the left inequality.

\item $\abs{\alpha} \leq 1$:\\
Let $s(\alpha) =   1 + p\alpha + 2p^2 \alpha^2 +  p^p \abs{\alpha}^p - (1+\alpha)^p.$ Now,
\[
s'(\alpha)  = p + 4 p^2 \alpha +  p^{p+1} \abs{\alpha}^{p-1} sgn(\alpha) - p (1+\alpha)^{p-1}.
\]
When $\alpha \leq 0$, we have,
\[
s'(\alpha)  = p + 4 p^2 \alpha - p^{p+1} \abs{\alpha}^{p-1} - p (1+\alpha)^{p-1}.
\]
and 
\[
s''(\alpha)  = 4 p^2 +  p^{p+1}(p-1) \abs{\alpha}^{p-2} - p (p-1)(1+\alpha)^{p-1} \geq 2 p^2 +  p^{p+1}(p-1) \abs{\alpha}^{p-2} - p (p-1) \geq 0 .
\]
So $s'$ is an increasing function of $\alpha$ which gives us, $s'(\alpha) \leq s'(0) = 0$. Therefore $s$ is a decreasing function, and the minimum is at $0$ which is $0$. This gives us our required inequality for $\alpha \leq 0$.
When $\alpha \geq \frac{1}{p-1}$, $1 + \alpha \leq p \cdot \alpha$ and $s'(\alpha) \geq 0$. We are left with the range $0 \leq \alpha \leq \frac{1}{p-1}$. Again, we have,
\begin{align*}
s''(\alpha)  & = 4 p^2 +  p^{p+1}(p-1) \abs{\alpha}^{p-2} - p (p-1)(1+\alpha)^{p-1} \\
& \geq 4 p^2 +  p^{p+1}(p-1) \abs{\alpha}^{p-2} - p (p-1) (1+\frac{1}{p-1})^{p-1}\\
&\geq 4 p^2 +  p^{p+1}(p-1) \abs{\alpha}^{p-2} - p (p-1) e, \text{When $p$ gets large the last term approaches $e$}\\
& \geq 0.
\end{align*}
Therefore, $s'$ is an increasing function, $s'(\alpha) \geq s'(0) = 0$. This implies $s$ is an increasing function, giving, $s(\alpha) \geq s(0)=0$ as required. 

To show the other direction,
\[
h(\alpha) = (1+\alpha)^p - 1 - p\alpha - \frac{p}{8} \alpha^2 - \frac{1}{2^{p+1}}\abs{\alpha}^p \geq (1+\alpha)^p - 1 - p\alpha - \frac{p}{8} \alpha^2 - \frac{p}{8}{\alpha}^2 = (1+\alpha)^p - 1 - p\alpha - \frac{p}{4} \alpha^2.
\]
Now, since $p \geq 2$,
\begin{align*}
&\left((1+\alpha)^{p-2} - 1 \right)sgn(\alpha) \geq 0\\
\Rightarrow &\left((1+\alpha)^{p-1} - 1 - \alpha \right)sgn(\alpha) \geq 0\\
\Rightarrow &\left(p(1+\alpha)^{p-1} - p - \frac{p}{2}\alpha \right)sgn(\alpha) \geq 0
\end{align*}
We thus have, $h'(\alpha) \geq 0$ when $\alpha$ is positive and $h'(\alpha) \leq 0$ when $\alpha$ is negative. The minimum of $h$ is at $0$ which is $0$. This concludes the proof of this case.
\end{enumerate}
\end{proof}

\subsubsection{Proof of Iterative Refinement}

In this section we will prove our main result. We start by proving the following lemma which relates the objective of the residual problem defined in the preliminaries to the change in objective value when $\xx$ is updated by $\Delta/p$.

\begin{lemma}
\label{lem:RelateResidualOpt}
For any $\xx,\Delta$ and $p \geq 2$ and $\lambda = 16p$,
\[
\res_p(\Delta) \leq  \ff(\xx)  -\ff\left(\xx-\frac{\Delta}{p}\right),
\]
and 
\[
\ff(\xx)  -\ff\left(\xx-\lambda \frac{\Delta}{p}\right) \leq \lambda \cdot \res_p(\Delta).
\]
\end{lemma}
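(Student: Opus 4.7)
The plan is to expand $\ff(\xx)-\ff(\xx-\alpha\Delta)$ directly for $\alpha\in\{1/p,\,\lambda/p\}$, handling the linear $\dd^\top\cdot$ and quadratic $\|\MM\cdot\|_2^2$ pieces by routine algebra, and handling the $\|\NN\cdot\|_p^p$ piece via Lemma~\ref{lem:precondition} applied with the substitution $\xx\mapsto \NN\xx$, $\Delta\mapsto -\alpha\NN\Delta$. The ``gradient'' term $p|\NN\xx|^{p-2}\NN\xx$ appearing in Lemma~\ref{lem:precondition} pairs against $-\alpha\NN\Delta$ to produce exactly the coordinate $\alpha\Delta^\top\NN^\top\diag{|\NN\xx|^{p-2}}\NN\xx$ inside $\alpha\gg^\top\Delta$, so after taking care of the $\dd$- and $\MM$-parts both sides will share the common linear-in-$\Delta$ term $\alpha\gg^\top\Delta$ and the proof reduces to matching the three remaining ``penalty'' coefficients: the one on $\|\MM\Delta\|_2^2$, the one on $\Delta^\top\NN^\top\diag{|\NN\xx|^{p-2}}\NN\Delta$, and the one on $\|\NN\Delta\|_p^p$.

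For the first inequality $\res_p(\Delta)\le \ff(\xx)-\ff(\xx-\Delta/p)$, I will use the \emph{upper}-bound half of Lemma~\ref{lem:precondition} (which yields a \emph{lower} bound on $\|\NN\xx\|_p^p-\|\NN(\xx-\Delta/p)\|_p^p$). With $\alpha=1/p$, the $2p^2\sum_i r_i\Delta_i^2$ slack scales to $2\Delta^\top\NN^\top\diag{|\NN\xx|^{p-2}}\NN\Delta$ and the $p^p\|\Delta\|_p^p$ slack scales to $\|\NN\Delta\|_p^p$, while the exact $\MM$-expansion contributes the negative quadratic $-\tfrac{1}{p^2}\|\MM\Delta\|_2^2$. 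Since $\RR$ is defined with coefficient $2/p^2$ in front of $\MM^\top\MM$, comparing against $\res_p(\Delta)$ the two sides agree except for a single nonnegative slack $\tfrac{1}{p^2}\|\MM\Delta\|_2^2\ge 0$, which closes this direction.

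For the second inequality, I will apply the \emph{lower}-bound half of Lemma~\ref{lem:precondition} (giving an \emph{upper} bound on $\|\NN\xx\|_p^p-\|\NN(\xx-\lambda\Delta/p)\|_p^p$). With $\alpha=\lambda/p$, the $\tfrac{p}{8}\sum_i r_i\Delta_i^2$ term becomes $\tfrac{\lambda^2}{8p}\Delta^\top\NN^\top\diag{|\NN\xx|^{p-2}}\NN\Delta$ and the $\tfrac{1}{2^{p+1}}\|\Delta\|_p^p$ term becomes $\tfrac{\lambda^p}{2^{p+1}p^p}\|\NN\Delta\|_p^p$; combined with the exact $\MM$-expansion this yields an upper bound on $\ff(\xx)-\ff(\xx-\lambda\Delta/p)$ of the form $\lambda\gg^\top\Delta$ minus these three penalties. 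To conclude this is at most $\lambda\res_p(\Delta)$ it suffices to verify three scalar coefficient inequalities: $\lambda^2/p^2\ge 2\lambda/p^2$, $\lambda^2/(8p)\ge 2\lambda$, and $\lambda^p/(2^{p+1}p^p)\ge \lambda$. The middle one is tight at $\lambda=16p$, which is precisely what dictates the choice; the first is then trivial, and the third reduces to $2^{3p-5}\ge p$, which holds for all $p\ge 2$.

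The main obstacle is calibrating $\lambda$: the factor-of-$p$ gap between the $p/8$ and $2p^2$ coefficients on the weighted-$\ell_2$ pieces of Lemma~\ref{lem:precondition} forces $\lambda=\Theta(p)$, and then the $p^p$-versus-$\tfrac{1}{2^{p+1}}$ gap in the $\ell_p$ pieces must be absorbed by the $\lambda^p$ factor. Verifying that the specific constant $\lambda=16p$ closes all three coefficient comparisons simultaneously is routine but error-prone bookkeeping, and is the one quantitative point where the stated value of $\lambda$ actually matters.
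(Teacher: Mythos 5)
Your proposal is correct and follows essentially the same approach as the paper: both expand $\ff(\xx)-\ff(\xx-\alpha\Delta)$ term by term, handle the $\dd$- and $\MM$-parts exactly, apply Lemma~\ref{lem:precondition} to the $\ell_p$-piece via the substitution $\xx\mapsto\NN\xx$, $\Delta\mapsto -\alpha\NN\Delta$, and then match coefficients against the definition of $\res_p$. Your explicit coefficient-by-coefficient verification (identifying $\lambda^2/(8p)\ge 2\lambda$ as the binding constraint that forces $\lambda=16p$, and then checking $2^{3p-5}\ge p$) is a slightly more transparent rendering of the same bookkeeping the paper performs by inserting the intermediate $\tfrac{\lambda^2}{16p}\Delta^\top\RR\Delta$ bound.
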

\begin{proof}
We note,
\begin{align*}
\ff\left(\xx-\frac{\Delta}{p}\right)  = & \dd^{\top}\left(\xx-\frac{\Delta}{p}\right) + \norm{\MM\left(\xx-\frac{\Delta}{p}\right)}_2^2 + \norm{\NN\left(\xx-\frac{\Delta}{p}\right)}_p^p\\
 = & \dd^{\top}\xx + \|\MM\xx\|_2^2 + \norm{\NN\left(\xx-\frac{\Delta}{p}\right)}_p^p - \frac{1}{p}\dd^{\top}\Delta - \frac{2}{p} \xx^{\top}\MM^{\top}\MM\Delta + \frac{1}{p^2}\|\MM\Delta\|_2^2\\
 \leq & \dd^{\top}\xx + \|\MM\xx\|_2^2 + \|\NN\xx\|_p^p - p|\NN\xx|^{p-2}(\NN\xx)^{\top}\frac{\NN\Delta}{p} + 2p^2 \frac{(\NN\Delta)^{\top}}{p}(\NN\xx)^{p-2}\frac{\NN\Delta}{p}  \\
& + p^p \norm{\frac{\NN\Delta}{p}}_p^p - \frac{1}{p}\dd^{\top}\Delta - \frac{2}{p} \xx^{\top}\MM^{\top}\MM\Delta + \frac{1}{p^2}\|\MM\Delta\|_2^2\\
& \text{(From right inequality of Lemma \ref{lem:precondition})}\\
& = \ff(\xx) -  \left(\frac{1}{p}\dd + \frac{2}{p}\MM^{\top} \MM\xx +
     \NN^{\top}|\NN\xx|^{p-2} \NN\xx\right)^{\top} \Delta \\
& - \Delta^{\top}\left(\frac{2}{p^2}\MM^{\top}\MM+ 2\NN^{\top} Diag(|\NN\xx|^{p-2})\NN\right)\Delta + \|\NN\Delta\|_p^p\\
& = \ff(\xx) - \res_p(\Delta), \text{ From Definition \ref{def:residual}.}
\end{align*}
 Let $\gg$ and $\RR$ be as defined in Definition \ref{def:residual}. We now use a similar calculation and the left inequality of Lemma \ref{lem:precondition} to get,
 \[
 \ff\left(\xx - \lambda \frac{\Delta}{p}\right) \geq \ff(\xx) - \lambda \gg^{\top}\Delta - \frac{\lambda^2}{16p} \Delta^{\top}\RR\Delta - \frac{\lambda^p}{p^p 2^{p+1}}.
 \]
 For $\lambda = 16p$,
 \begin{align*}
 \ff(\xx) - \lambda \gg^{\top}\Delta - \frac{\lambda^2}{16p} \Delta^{\top}\RR\Delta - \frac{\lambda^p}{p^p 2^{p+1}} & \geq \ff(\xx) - \lambda \left(\gg^{\top}\Delta - \frac{\lambda}{16p} \Delta^{\top}\RR\Delta - \frac{\lambda^{p-1}}{p^p 2^{p+1}}\right)\\
 & \geq \ff(\xx) - \lambda \res_p(\Delta),
 \end{align*}
 thus concluding the proof of the lemma.
\end{proof}
We now track the value of $\ff(\xx^{(t)}) - \ff(\xx^{\star})$ with a parameter $\nu$. We will first show that, if we have a $\kappa$ approximate solver for the residual problem, we can either take a step to obtain $\xx^{(t+1)}$ such that 
\begin{equation}\label{eq:progressStep}
\ff(\xx^{(t+1)}) - \ff(\xx^{\star}) \leq \left(1-\frac{1}{32 p\kappa}\right)\left(\ff(\xx^{(t)}) - \ff(\xx^{\star})\right),
\end{equation}
or we need to reduce the value of $\nu$ by a factor of $2$ since $\ff(\xx^{(t)}) - \ff(\xx^{\star})$ is less than $\nu/2$.

\begin{restatable}{lemma}{BinarySearch}\label{lem:invariant}
Consider an iterate $t$. Let $\res_p$ denote the residual problem at $\xx^{(t)}$ and $\nu$ be as defined in Algorithm \ref{alg:IterRef}. Let $\Dtil$ denote the solution returned by a $\kappa$-approximate solver to the residual problem. Then, 
\begin{enumerate}
  \item either $\ff(\xx^{(t)}) - \ff(\xx^{\star}) \leq \nu $ and, $\xx^{(t+1)} = \xx^{(t)} - \frac{\Dtil}{p}$ satisfies \eqref{eq:progressStep},
  \item or, $\ff(\xx^{(t)}) - \ff(\xx^{\star}) \leq \frac{\nu}{2}$ and Line \ref{alg:line:update-nu} in the algorithm is executed.\label{option2}
\end{enumerate}
\end{restatable}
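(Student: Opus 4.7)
The plan is to carry along the invariant $\ff(\xx^{(t)}) - \ff(\xx^{\star}) \leq \nu$ throughout the algorithm (which trivially holds at $t=0$ by the initialization of $\nu$) and split into the two cases determined by the \textsc{if}--\textsc{else} branch in Algorithm \ref{alg:IterRef}. The first case corresponds to a productive step, the second to a halving of~$\nu$; verifying the lemma amounts to checking that the bound on $\ff(\xx^{(t)})-\ff(\xx^{\star})$ and the progress statement each follow from the correct side of Lemma \ref{lem:RelateResidualOpt}, combined with the $\kappa$-approximation guarantee on $\Dtil$.

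First I would handle the branch $\res_p(\Dtil) \geq \nu/(32p\kappa)$, which is exactly when the algorithm updates $\xx^{(t+1)} = \xx^{(t)} - \Dtil/p$. Applying the first inequality of Lemma \ref{lem:RelateResidualOpt} with $\Delta = \Dtil$ gives
\[
\ff(\xx^{(t)}) - \ff(\xx^{(t+1)}) \;\geq\; \res_p(\Dtil) \;\geq\; \frac{\nu}{32p\kappa}.
\]
Using the inductive invariant $\ff(\xx^{(t)}) - \ff(\xx^{\star}) \leq \nu$ then yields $\ff(\xx^{(t+1)}) - \ff(\xx^{\star}) \leq (1-\tfrac{1}{32p\kappa})(\ff(\xx^{(t)}) - \ff(\xx^{\star}))$, which is precisely \eqref{eq:progressStep} and also preserves the invariant for iteration $t+1$.

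The more delicate step, and the one I expect to be the main obstacle, is the branch $\res_p(\Dtil) < \nu/(32p\kappa)$. Here we must deduce that $\ff(\xx^{(t)})-\ff(\xx^{\star}) \leq \nu/2$ even though no step was taken. The natural trial direction is the scaled error $\Delta^{\star} = \tfrac{p}{\lambda}\bigl(\xx^{(t)} - \xx^{\star}\bigr)$ with $\lambda = 16p$. This direction is feasible for the residual problem since $\AA\xx^{(t)} = \AA\xx^{\star} = \bb$, so $\AA\Delta^{\star} = 0$. Plugging $\Delta^{\star}$ into the second inequality of Lemma \ref{lem:RelateResidualOpt} gives
\[
\ff(\xx^{(t)}) - \ff(\xx^{\star}) \;=\; \ff(\xx^{(t)}) - \ff\!\left(\xx^{(t)} - \lambda \tfrac{\Delta^{\star}}{p}\right) \;\leq\; \lambda \cdot \res_p(\Delta^{\star}).
\]
By feasibility of $\Delta^{\star}$ and the definition of $\Dopt$ we have $\res_p(\Delta^{\star}) \leq \res_p(\Dopt)$, and the $\kappa$-approximation guarantee (Definition \ref{def:res-approx}) gives $\res_p(\Dopt) \leq \kappa\cdot \res_p(\Dtil)$. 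Chaining these bounds,
\[
\ff(\xx^{(t)}) - \ff(\xx^{\star}) \;\leq\; \lambda\kappa \cdot \res_p(\Dtil) \;=\; 16p\kappa \cdot \res_p(\Dtil) \;<\; 16p\kappa \cdot \frac{\nu}{32p\kappa} \;=\; \frac{\nu}{2},
\]
which is the desired bound in option~2. Since the algorithm sets $\xx^{(t+1)} = \xx^{(t)}$ and $\nu \leftarrow \nu/2$, the invariant $\ff(\xx^{(t+1)}) - \ff(\xx^{\star}) \leq \nu/2$ is exactly what we just proved, completing the inductive maintenance.

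The one subtlety to watch is that the inequalities of Lemma \ref{lem:RelateResidualOpt} are not homogeneous in $\Delta$, so we must apply them with the precise scaling $\Delta^{\star} = \tfrac{p}{\lambda}(\xx^{(t)}-\xx^{\star})$ and the precise constant $\lambda = 16p$ dictated by the lemma; it is this exact choice that produces the constant $32p\kappa$ appearing in the algorithm's threshold $\nu/(32p\kappa)$. With that calibration fixed, the two cases fit together cleanly and both the invariant and the progress bound \eqref{eq:progressStep} follow.
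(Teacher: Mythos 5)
Your proof is correct and follows essentially the same argument as the paper: the productive branch uses the first inequality of Lemma~\ref{lem:RelateResidualOpt} applied to $\Dtil$ together with the invariant $\ff(\xx^{(t)})-\ff(\xx^{\star})\leq\nu$, and the halving branch applies the second inequality to $\Delta^{\star}=\tfrac{p}{\lambda}(\xx^{(t)}-\xx^{\star})$ (the paper's $\Dbar$) and chains through $\res_p(\Delta^{\star})\leq\res_p(\Dopt)\leq\kappa\,\res_p(\Dtil)<\tfrac{\nu}{32p\kappa}$. The only cosmetic difference is that the paper preserves the invariant in the productive branch simply by noting $\ff(\xx^{(t+1)})\leq\ff(\xx^{(t)})$ (since $\res_p(\Dtil)\geq 0$), whereas you deduce it from the stronger multiplicative decrease \eqref{eq:progressStep}, which of course also suffices.
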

\begin{proof}
We will first prove that $\ff(\xx^{(t)}) - \ff(\xx^{\star}) \leq \nu$ by induction. For $t = 0$, $\ff(\xx^{(0)}) - \ff(\xx^{\star}) \leq \nu $ by definition. Now, let us assume this is true for iteration $t$. Note that, if the algorithm updates $\xx$ in line \ref{alg:line:update-x}, since $\ff(\xx^{(t+1)}) \leq \ff(\xx^{(t)})$ (solution of the residual problem is always non-negative), the relation holds for $t+1$. Otherwise, the algorithm reduces $\nu$ to $\nu/2$ and $\res_p(\Dtil) < \frac{\nu}{32p\kappa}$. For $\Dbar$ such that $\xx^{\star}= \xx^{(t)} - \lambda\frac{\Dbar}{p}$, and from Lemma \ref{lem:RelateResidualOpt},
\[
\ff(\xx^{(t)}) -\ff(\xx^{\star}) = \ff(\xx^{(t)}) - \ff\left(\xx^{(t)} - \lambda\frac{\Dbar}{p}\right) \leq \lambda \res_p(\Dbar) \leq \lambda \res_p(\Dopt).
\]
Since $\Dtil$ is a $\kappa$-approximate solution to the residual problem,
\[
\lambda \res_p(\Dopt) \leq \lambda \kappa \res_p(\Dtil) < 16p \kappa\frac{\nu}{32p\kappa} \leq \frac{\nu}{2}.
\]
We have thus shown that $\ff(\xx^{(t)}) - \ff(\xx^{\star}) \leq \nu$ for all iterates $t$ and whenever Line \ref{alg:line:update-nu} of the algorithm is executed, \ref{option2} from the lemma statement holds. It remains to prove that if $\res_p(\Dtil) \geq \frac{\nu}{32p\kappa}$, then $\xx^{(t+1)} = \xx^{(t)} - \frac{\Dtil}{p}$ satisfies \eqref{eq:progressStep}. Since, $\ff(\xx^{(t)}) - \ff(\xx^{\star}) \leq \nu$,
\[
\res_p(\Dtil) \geq \frac{\nu}{32 p\kappa} \geq  \frac{1}{32p\kappa}\left(\ff(\xx^{(t)}) - \ff(\xx^{\star})\right).
\]
Now, from Lemma \ref{lem:RelateResidualOpt},
\begin{align*}
\ff\left(\xx^{(t+1)} \right) - \ff(\xx^{\star}) &\leq \ff(\xx^{(t)}) - \res_p(\Dtil) - \ff(\xx^{\star})\\
& \leq \left(\ff(\xx^{(t)}) - \ff(\xx^{\star})\right) -  \frac{1}{32p \kappa}\left(\ff(\xx^{(t)}) - \ff(\xx^{\star})\right)\\
& = \left(1 -  \frac{1}{32p \kappa}\right) \left(\ff(\xx^{(t)}) - \ff(\xx^{\star})\right).
\end{align*}
\end{proof}

\begin{restatable}{corollary}{End}\label{lem:end}
The vector $\xx$ returned by Algorithm \ref{alg:IterRef} is an $\epsilon$-approximate solution to Problem \eqref{eq:Main}.
\end{restatable}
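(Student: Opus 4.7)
The plan is to read off the corollary almost immediately from Lemma \ref{lem:invariant} together with the termination condition of the while loop in Algorithm \ref{alg:IterRef}.

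First, I would verify feasibility: we need $\AA\xx=\bb$ at termination. The starting point $\xx^{(0)}$ satisfies $\AA\xx^{(0)}=\bb$ by hypothesis, and the only update to $\xx$ is on Line~\ref{alg:line:update-x}, namely $\xx\leftarrow \xx-\Dtil/p$ where $\Dtil$ is produced by the residual solver. By Definition~\ref{def:residual}, any feasible $\Delta$ for the residual problem (and in particular the $\kappa$-approximate solution $\Dtil$ by Definition~\ref{def:res-approx}) satisfies $\AA\Dtil=0$. Hence feasibility is preserved at every iteration.

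Next, I would establish the accuracy bound. Lemma~\ref{lem:invariant} provides the key invariant that throughout the run of the algorithm $\ff(\xx^{(t)})-\ff(\xx^{\star})\leq \nu$ holds at every iteration $t$, regardless of whether the algorithm takes a progress step (Line~\ref{alg:line:update-x}) or halves $\nu$ (Line~\ref{alg:line:update-nu}). The while loop exits precisely when $\nu\leq \epsilon$. Combining these two facts, when Algorithm~\ref{alg:IterRef} returns $\xx$, we have
\[
\ff(\xx)-\ff(\xx^{\star}) \;\leq\; \nu \;\leq\; \epsilon,
\]
which is exactly the definition of an $\epsilon$-approximate solution. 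Together with $\AA\xx=\bb$, this shows that the returned $\xx$ is an $\epsilon$-approximate solution to Problem~\eqref{eq:Main}.

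There is no real obstacle here; the corollary is essentially a direct consequence of Lemma~\ref{lem:invariant} and the loop-exit condition, with the only extra observation being the invariance of $\AA\xx=\bb$ under the updates. The more substantive content (that $\nu$ actually decreases fast enough to reach $\epsilon$ in the claimed number of iterations) is needed for Theorem~\ref{thm:IterativeRefinement} rather than for this corollary, so the proof of the corollary should be brief.
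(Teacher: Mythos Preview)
Your proposal is correct and matches the paper's own proof essentially line for line: feasibility is preserved because $\AA\xx^{(0)}=\bb$ and every $\Dtil$ satisfies $\AA\Dtil=0$, and the accuracy bound follows from the invariant $\ff(\xx^{(t)})-\ff(\xx^{\star})\leq\nu$ (established in Lemma~\ref{lem:invariant}) combined with the loop-exit condition $\nu\leq\epsilon$.
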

\begin{proof}
Our starting solution $\xx^{(0)}$ satisfies $\AA\xx^{(0)} = \bb$ and the solutions $\Dtil$ of the residual problem added in each iteration satisfy $\AA\Dtil = 0$. Therefore, $\AA\xx = \bb$. For the second part, note that we always have $\ff(\xx^{(t)}) - \ff(\xx^{\star}) \leq \nu$. When we stop, $\nu \leq \epsilon$. Thus,
\[
\ff(\xx^{(t)})- \ff(\xx^{\star}) \leq \epsilon.
\]

\end{proof}
We are now ready to prove our main result.
\Iterative*
\begin{proof}
From Corollary \ref{lem:end}, the solution returned by the algorithm is as required. We next need to bound the runtime. From Lemma \ref{lem:invariant}, the algorithm, either reduces $\nu$ or Equation \eqref{eq:progressStep} holds. The number of times we can reduce $\nu$ is bounded by $\log \frac{\ff(\xx^{(0)})-\ff(\xx^{\star})}{\epsilon }$. The number of times Equation \eqref{eq:progressStep} holds can be bounded as follows,
\[
\frac{\epsilon }{2} \leq \ff\left(\xx^{(t+1)} \right) - \ff(\xx^{\star}) \leq \left(1 -  \frac{1}{32p \kappa}\right)^t \left(\ff(\xx^{(0)}) - \ff(\xx^{\star})\right).
\]
Therefore, the total number of iterations $T$ is bounded as $T \leq 32p \kappa \log \left(\frac{\ff(\xx^{(0)})-\ff(\xx^{\star})}{\epsilon }\right)$.
\end{proof}

\subsection{Starting Solution and Homotopy for pure \texorpdfstring{$\ell_p$}{TEXT} Objectives}
\label{sec:homotopy}

In this section, we consider the case where $\ff(\xx) = \|\NN\xx\|_p^p$, i.e., $\dd = 0$ and $\MM = 0$. 
\begin{equation}\label{eq:lpObj}
\min_{\AA\xx = \bb} \|\NN\xx\|_p^p
\end{equation}
For such cases, we show how to find a good starting solution. We note that we can solve the following problem since it is equivalent to solving a system of linear equations,
\[
\min_{\AA\xx = \bb} \|\NN\xx\|_2^2.
\]
Refer to Appendix \ref{chap:l2solve} for details on how the above is equivalent to solving a system of linear equations.

We next consider a homotopy on $p$. Specifically, we want to find a starting solution for the $\ell_p$-norm problem by first solving an $\ell_2$-norm problem, followed by $\ell_{2^2}, \ell_{2^3}, ..., \ell_{2^{\lfloor \log p - 1\rfloor}}$-norm problems to a constant approximation. The following lemma relates these solutions.

\begin{lemma}\label{lem:homotopy}
Let $\xx_k^{\star}$ denote the optimum of the $k$-norm and $\xx_{2k}^{\star}$ the optimum of the $2k$-norm problem \eqref{eq:lpObj}. Let $\xxtil$ be an $O(1)$-approximate solution to the $k$-norm problem. The following relation holds,
\[
\norm{\xx^{\star}_{2k}}_{2k}^{2k} \leq \norm{\xxtil}_{2k}^{2k} \leq O(m) \norm{\xx^{\star}_{2k}}_{2k}^{2k}.
\]
In other words, $\xxtil$ is a $O(m)$-approximate solution to the $2k$-norm problem.
\end{lemma}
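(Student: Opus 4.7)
The left inequality is immediate from optimality: since $\AA\xxtil=\bb$ (because $\xxtil$ is feasible for the $k$-norm problem, which has the same constraint set), and $\xx^\star_{2k}$ minimizes $\|\NN\xx\|_{2k}^{2k}$ over $\AA\xx = \bb$, we get $\|\NN\xx^\star_{2k}\|_{2k}^{2k} \leq \|\NN\xxtil\|_{2k}^{2k}$. The work is entirely in the right inequality.

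The plan for the upper bound is to chain two standard $\ell_p$-norm comparison inequalities against the $O(1)$-approximation guarantee in the $k$-norm and the optimality of $\xx^\star_k$. For any vector $\vv \in \rea^m$, monotonicity of $p$-norms gives $\|\vv\|_{2k} \leq \|\vv\|_k$, and the power-mean (or Hölder) inequality gives $\|\vv\|_k \leq m^{1/(2k)}\|\vv\|_{2k}$, i.e.\ $\|\vv\|_k^{2k} \leq m\,\|\vv\|_{2k}^{2k}$. I would apply the first inequality to $\vv = \NN\xxtil$ (to convert the target $2k$-norm into a $k$-norm) and the second inequality to $\vv = \NN\xx^\star_{2k}$ (to convert a $k$-norm back into the target $2k$-norm).

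Between these two conversions, I would use two optimality/approximation facts in the $k$-norm. First, since $\xxtil$ is an $O(1)$-approximate solution of the $k$-norm problem, $\|\NN\xxtil\|_k^k \leq O(1)\,\|\NN\xx^\star_k\|_k^k$, and squaring yields $\|\NN\xxtil\|_k^{2k} \leq O(1)\,\|\NN\xx^\star_k\|_k^{2k}$. Second, $\xx^\star_{2k}$ is feasible for the $k$-norm problem, so by optimality of $\xx^\star_k$ we get $\|\NN\xx^\star_k\|_k^k \leq \|\NN\xx^\star_{2k}\|_k^k$, which squares to $\|\NN\xx^\star_k\|_k^{2k} \leq \|\NN\xx^\star_{2k}\|_k^{2k}$. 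Chaining all four steps produces
\[
\|\NN\xxtil\|_{2k}^{2k} \;\leq\; \|\NN\xxtil\|_k^{2k} \;\leq\; O(1)\,\|\NN\xx^\star_k\|_k^{2k} \;\leq\; O(1)\,\|\NN\xx^\star_{2k}\|_k^{2k} \;\leq\; O(m)\,\|\NN\xx^\star_{2k}\|_{2k}^{2k},
\]
which is exactly the claimed bound.

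There is no real obstacle here: the argument is almost purely book-keeping among $\ell_p$-norm inequalities, and the factor of $m$ enters precisely once, at the step $\|\vv\|_k^{2k} \leq m\,\|\vv\|_{2k}^{2k}$ (the power-mean direction, coming from $m$ coordinates and the ratio $1/k - 1/(2k) = 1/(2k)$). The only subtle point worth double-checking is that one must raise the two $k$-norm inequalities to their $2$nd powers (so they become $2k$-th-power inequalities) before chaining, which is where the $O(1)$ constant squares but remains $O(1)$.
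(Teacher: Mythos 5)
Your proof is correct and follows essentially the same chain of inequalities as the paper (monotonicity of $p$-norms, the $O(1)$ approximation guarantee raised to the second power, optimality of $\xx^\star_k$ against the feasible point $\xx^\star_{2k}$, then the power-mean inequality contributing the factor $m^{2k(1/k-1/(2k))}=m$); the only cosmetic difference is that you keep the $\NN$ explicit where the paper suppresses it. Nothing to add.
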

\begin{proof}
The left side follows from optimality of $\xx^{\star}_{2k}$. For the other side, we have the following relation,
\[
 \norm{\xxtil}_{2k}^{2k} \leq \norm{\xxtil}_{k}^{2k} \leq O(1)\norm{\xx^{\star}_{k}}_{k}^{2k} \leq O(1)\norm{\xx^{\star}_{2k}}_{k}^{2k} \leq O(1)m^{2k\left(\frac{1}{k}-\frac{1}{2k}\right)}\norm{\xx^{\star}_{2k}}_{2k}^{2k} = O(m)\norm{\xx^{\star}_{2k}}_{2k}^{2k}.
\]
\end{proof}

Consider the following procedure to obtain a starting point $\xx^{(0)}$ for the $\ell_p$-norm problem.

\begin{algorithm}[H]
\caption{Homotopy on $p$ for Starting Solution}
\label{alg:homotopy}
 \begin{algorithmic}[1]
 \Procedure{\textsc{StartSolution}}{$\AA, \NN, \bb,p$}
\State $\xx^{(0)} \leftarrow 0, k \leftarrow 2$
\While{$k \leq 2^{\lfloor \log p - 1\rfloor}$}
\State $\xx^{(0)} \leftarrow$ {\sc Main-Solver} $(\AA,0,\NN,0,\bb,k,1)$\Comment{2-approximate solution to the $k$-norm Problem}
\State $k \leftarrow 2k$
\EndWhile
\State \Return $\xx^{(0)}$
 \EndProcedure 
 \end{algorithmic}
\end{algorithm}

\begin{lemma}
Let $\xx^{(0)}$ be as returned by Algorithm \ref{alg:homotopy}. Suppose there exists an oracle that solves the residual problem for any norm $\ell_k$, i.e., $\res_k$ to a $\kappa_k$-approximation in time $T(k,\kappa_k)$. We can then compute $\xx^{(0)}$ which is a $O(m)$-approximation to the $\ell_p$-norm problem, in time at most 
\[
O\left(p \log m \right) \sum_{k = 2^i,i = 2}^{i = \lfloor \log p - 1\rfloor} \kappa_k T(k,\kappa_k).
\]
\end{lemma}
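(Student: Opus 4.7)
The plan is to show inductively that each call to \textsc{Main-Solver} inside the loop starts from a point whose objective value is within an $O(m)$ factor of the optimum of the current $k$-norm problem, and then to sum the per-call costs using Theorem \ref{thm:IterativeRefinement}.

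First, I would establish the inductive invariant: after the loop iteration that computes a $2$-approximate solution $\xxtil_k$ to the $k$-norm problem, Lemma \ref{lem:homotopy} immediately implies $\|\NN\xxtil_k\|_{2k}^{2k} \leq O(m)\|\NN\xx^{\star}_{2k}\|_{2k}^{2k}$. Hence when this $\xxtil_k$ is fed as the initial point to the next call of \textsc{Main-Solver} on the $2k$-norm problem, the initial gap satisfies $\ff^{(2k)}(\xxtil_k)-\ff^{(2k)}(\xx^{\star}_{2k}) \leq O(m)\ff^{(2k)}(\xx^{\star}_{2k})$. For the base case, the loop starts from $\xx^{(0)}=0$ and solves the $2$-norm problem, which is a linear system solve (see Appendix \ref{chap:l2solve}), and the trivial bound $\nu = \ff^{(2)}(0)$ satisfies $\ff^{(2)}(0)-\ff^{(2)}(\xx^{\star}_2) \leq \ff^{(2)}(0)$ so the log-ratio is already $O(\log m)$ after the first step.

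Next, I would apply Theorem \ref{thm:IterativeRefinement} to each iteration of the \textsc{while} loop. Because of the invariant, producing a $2$-approximation to the $k$-norm problem (i.e., taking $\epsilon = \ff^{(k)}(\xx^{\star}_k)$) requires
\[
O\!\left(k\,\kappa_k \log \frac{\ff^{(k)}(\xxtil_{k/2})-\ff^{(k)}(\xx^{\star}_k)}{\ff^{(k)}(\xx^{\star}_k)}\right) = O(k\,\kappa_k \log m)
\]
calls to the residual solver for $\res_k$, each costing $T(k,\kappa_k)$. Since $k \leq p$ throughout the loop, each outer iteration contributes at most $O(p\,\kappa_k\,T(k,\kappa_k)\log m)$ time. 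Summing over $k=2,4,8,\ldots,2^{\lfloor \log p -1\rfloor}$ gives the claimed bound $O(p\log m)\sum_k \kappa_k\,T(k,\kappa_k)$.

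Finally, I would argue that the returned $\xx^{(0)}$ is $O(m)$-approximate for the $\ell_p$-norm problem. The last iteration produces a $2$-approximation to the $k^\star$-norm problem with $k^\star = 2^{\lfloor \log p -1\rfloor}$, and $p/k^\star \leq 4$. Applying the same norm-comparison chain used to prove Lemma \ref{lem:homotopy} (namely $\|\NN\xxtil\|_p \leq \|\NN\xxtil\|_{k^\star}$ followed by $\|\NN\xx^{\star}_p\|_{k^\star} \leq m^{1/k^\star - 1/p}\|\NN\xx^{\star}_p\|_p$) shows $\|\NN\xxtil\|_p^p \leq O(m^{p/k^\star - 1})\|\NN\xx^{\star}_p\|_p^p = O(\mathrm{poly}(m))\|\NN\xx^{\star}_p\|_p^p$. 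The mild obstacle here is that when $p$ is not a power of two, one only gets an $m^{c}$-approximation for a small constant $c \leq 3$ rather than a crisp $O(m)$; this is absorbed into the $O(m)$ claim by adjusting the constant inside the big-$O$ (or, alternatively, by letting the loop run one extra step up to $k = 2^{\lceil \log p\rceil}$, which changes none of the other estimates since $\log p$ iterations is still the right count). The hard part of the proof is really only this last bookkeeping; the time bound itself is an immediate consequence of the invariant and Theorem \ref{thm:IterativeRefinement}.
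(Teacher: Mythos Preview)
Your proposal is correct and follows essentially the same approach as the paper: use Lemma~\ref{lem:homotopy} to show each call to \textsc{Main-Solver} starts from an $O(m)$-approximate point, invoke Theorem~\ref{thm:IterativeRefinement} to bound each call by $O(k\,\kappa_k\,T(k,\kappa_k)\log m)$, and sum using $k\le p$. The paper's own proof is terser and does not explicitly justify the final $O(m)$-approximation claim; your observation that the last hop from $k^\star=2^{\lfloor\log p-1\rfloor}$ to a general $p$ can cost $m^{p/k^\star-1}$ rather than $m$ is a genuine (if minor) bookkeeping gap in the statement as written---your fix of running one extra doubling step is the right resolution, whereas ``adjusting the constant inside the big-$O$'' does not actually absorb an $m^c$ factor with $c>1$.
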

\begin{proof}
For any $k$, we have an $O(1)$-approximation solution to the $k/2$-norm solution. From Lemma \ref{lem:homotopy}, this is a $O(m)$-approximate solution to the $k$-norm problem. We now have from Theorem \ref{thm:IterativeRefinement}, that we require $O\left(k\kappa_k T(k,\kappa_k) \log m\right)$ time to solve the $k$-norm problem to a constant approximation. Summing over all $k$, we have total runtime,
\[
T = \sum_{k = 2^i,i = 2}^{i = \lfloor \log p - 1\rfloor} O(k \kappa_k T(k,\kappa_k) \log m ) \leq O\left(p \log m \right) \sum_{k = 2^i,i = 2}^{i = \lfloor \log p - 1\rfloor} \kappa_k T(k,\kappa_k).
\]
\end{proof}
In later sections, we will describe an oracle that will have $\kappa_k = O(1)$ for all values of $k$ and $T(k,\kappa_k)$ depends on $k$ linearly.

\subsection{Iterative Refinement for \texorpdfstring{$p \in (1,2)$}{TEXT}}

We will consider the following pure $\ell_p$ problem here, 
\begin{equation}\label{eq:ProblemSmall}
\min_{\AA\xx = \bb} \|\NN\xx\|_p,
\end{equation}
where $p \in (1,2)$. In the previous sections we saw an iterative refinement framework that worked for $p\geq 2$. In this section, we will show a similar iterative refinement for $p \in (1,2)$. In particular, we will prove the following result from \cite{AdilKPS19}.

\begin{restatable}{theorem}{IterRefSmall}\label{thm:IterRefSmall}
Let $p \in (1,2)$, and $\kappa \geq 1$. Given an initial solution $\xx^{(0)}$ satisfying $\AA\xx^{(0)} = \bb$, we can find $\xxtil$ such that $\AA\xxtil = \bb$ and $\|\NN\xxtil\|_p^p \leq (1+\epsilon) \|\xx^{\star}\|_p^p$ in $O\left(\left(\frac{p}{p-1}\right)^{\frac{1}{p-1}} \kappa \log\frac{m}{\epsilon}\right)$ calls to a $\kappa$-approximate solver to the residual problem (Definition \ref{res:small}).
\end{restatable}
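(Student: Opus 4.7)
The plan is to mirror the three-step structure of Section \ref{sec:IterRef}, namely (i) bound the objective change around a feasible point by a residual objective, (ii) show that an approximate minimizer of the residual makes multiplicative progress on the optimality gap, and (iii) run the potential argument of Algorithm \ref{alg:IterRef}. The essential new ingredient is the residual problem: for $p \in (1,2)$, $|x|^p$ has a cusp at the origin (so $r = |x|^{p-2}$ blows up) and is sub-quadratic away from it, so the bound of Lemma \ref{lem:precondition} in terms of $r_i \Delta_i^2 + |\Delta|^p$ breaks in both directions. The standard remedy is to replace this sum by a single piecewise function
\[
\gamma_p(r,\delta) \;=\; \begin{cases} \tfrac{p}{2}\, r^{p-2}\,\delta^2 & \text{if } |\delta| \leq r, \\ |\delta|^p + \bigl(\tfrac{p}{2}-1\bigr)\, r^p & \text{if } |\delta| > r, \end{cases}
\]
which is $\asymp r^{p-2}\delta^2$ in the "small-$\delta$" regime and $\asymp |\delta|^p$ in the "large-$\delta$" regime. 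The residual problem of Definition \ref{res:small} will then take the form $\max_{\AA\Delta = 0}\, \gg^{\top}\Delta - \sum_i \gamma_p((\NN\xx)_i,(\NN\Delta)_i)$, where $\gg$ is the gradient of $\|\NN\xx\|_p^p$ at $\xx$.

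The main technical step, analogous to Lemma \ref{lem:precondition}, is the coordinatewise two-sided bound
\[
c_p\, \gamma_p(x,\delta) \;\leq\; |x+\delta|^p - |x|^p - p\,|x|^{p-1}\mathrm{sgn}(x)\,\delta \;\leq\; C_p\, \gamma_p(x,\delta).
\]
I would prove this by homogeneity (reducing to $x = \pm 1$) and then a case analysis on whether $|\delta|$ is below or above the threshold, using elementary calculus and the convexity of $|\cdot|^p$, exactly as in the four-case argument inside the proof of Lemma \ref{lem:precondition}. The main obstacle, and what is genuinely different from $p \geq 2$, is controlling the ratio $C_p/c_p$: because $|x|^p$ becomes increasingly linear as $p \to 1$, the coercive lower bound in the sub-quadratic regime weakens, and a careful tracking of constants gives $C_p/c_p = \Theta\bigl((p/(p-1))^{1/(p-1)}\bigr)$. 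This ratio is precisely what propagates into the iteration count.

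With this bound established, the rest is a direct transcription of Section \ref{sec:IterRef}. Summing the upper bound over coordinates and choosing a step size $\lambda_0 = \Theta((p/(p-1))^{1/(p-1)})$ gives the analogue of the first half of Lemma \ref{lem:RelateResidualOpt}: $\ff(\xx) - \ff(\xx - \Delta/\lambda_0) \geq \Omega(\res_p(\Delta))$. Applying the lower bound to $\Delta = \xx - \xx^{\star}$ gives the other half: $\res_p(\Dopt) \geq \Omega(\lambda_0^{-1}(\ff(\xx) - \ff(\xx^{\star})))$. A $\kappa$-approximate residual solve then yields per-iteration multiplicative progress $1 - \Omega(1/(\lambda_0 \kappa))$ on the optimality gap, by the identical potential argument of Lemma \ref{lem:invariant} and Theorem \ref{thm:IterativeRefinement}.

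Finally, to obtain the $\log(m/\epsilon)$ factor rather than a dependence on the initial gap, I would bound the starting gap uniformly in terms of $m$. Taking $\xx^{(0)} = \arg\min_{\AA\xx=\bb}\|\NN\xx\|_2$ (computable by a single linear system solve, as in Appendix \ref{chap:l2solve}) and using the elementary norm inequality $\|\yy\|_p \leq m^{1/p - 1/2}\|\yy\|_2$ for $p \leq 2$ together with the $\ell_2$-optimality of $\xx^{(0)}$, one gets $\|\NN\xx^{(0)}\|_p^p \leq m^{1-p/2}\|\NN\xx^{\star}\|_p^p \leq m\,\|\NN\xx^{\star}\|_p^p$, so the initial multiplicative gap is $O(m)$. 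Multiplying the $O(\log(m/\epsilon))$ halvings of the progress parameter $\nu$ by the per-iteration complexity yields the claimed bound $O\bigl((p/(p-1))^{1/(p-1)}\,\kappa\,\log(m/\epsilon)\bigr)$.
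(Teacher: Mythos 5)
Your proposal follows the paper's own route: define the smoothed function $\gamma_p$ (your definition is exactly the paper's Definition with $-(1-\tfrac{p}{2})t^p = (\tfrac{p}{2}-1)r^p$), prove a coordinatewise two-sided bound on $|x+\delta|^p - |x|^p - p|x|^{p-1}\mathrm{sgn}(x)\delta$ in terms of $\gamma_p$ (this is the paper's Lemma~\ref{lem:precondition-small}), translate it into the analogue of Lemma~\ref{lem:RelateResidualOpt} with an enlarged step size $\lambda$ (the paper's Lemma~\ref{lem:RelateResidualObj-Small}), and rerun the potential argument of Theorem~\ref{thm:IterativeRefinement}. Your observation that one must additionally bound the initial gap via the $\ell_2$-optimal start and $\|\yy\|_p \leq m^{1/p-1/2}\|\yy\|_2$ to turn the generic $\log\!\bigl((\ff(\xx^{(0)})-\ff(\xx^\star))/\epsilon\bigr)$ into $\log(m/\epsilon)$ is correct and is in fact elided in the paper's terse ``follow the proof of Theorem~\ref{thm:IterativeRefinement}.''

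One small imprecision: you assert $C_p/c_p = \Theta\bigl((p/(p-1))^{1/(p-1)}\bigr)$. In the paper the two-sided bound has $C_p = 2^p$ and $c_p = \tfrac{p-1}{p2^p}$, so $C_p/c_p = \tfrac{p\,4^p}{p-1} = \Theta\bigl(\tfrac{p}{p-1}\bigr)$ for $p \in (1,2)$ --- a ratio polynomial in $1/(p-1)$, not exponential. The $(p/(p-1))^{1/(p-1)}$ growth appears only when you set the step size $\lambda$ to satisfy $\lambda^{p-1} \asymp C_p/c_p$ (using the scaling $\gamma_p(t,\lambda\Delta) \geq \min\{\lambda^2,\lambda^p\}\gamma_p(t,\Delta)$), giving $\lambda = (C_p/c_p)^{1/(p-1)}$. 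Your subsequent sentence does pick $\lambda_0 = \Theta\bigl((p/(p-1))^{1/(p-1)}\bigr)$, which is the right quantity, so the iteration count comes out correctly; just be careful to attribute the $1/(p-1)$ exponent to the choice of $\lambda$ and not to the constant ratio itself.
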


The key idea in the algorithm for $p\geq 2$ was an upper and lower bound on the function that was an $\ell_2^2 + \ell_p^p$-norm term (Lemma \ref{lem:precondition}). Such a bound does not hold when $p<2$, however, we will show that a smoothed $\ell_p$-norm function can be used for providing such bounds. Specifically, we use the following smoothed $\ell_p$-norm function defined in \cite{bubeck2018homotopy}.

\begin{definition}(Smoothed $\ell_p$ Function.)
Let $p \in (1,2)$, and $x\in \mathbb{R},t \geq 0$ . We define,
\[
\gamma_p(t,x) = \begin{cases}
\frac{p}{2}t^{p-2}x^2 & \text{ if }|x| \leq t,\\
|x|^p - \left(1 - \frac{p}{2}\right)t^p & \text{ otherwise.}
\end{cases}
\]
For any vector $\xx$ and $\tt \geq 0$, we define $\gamma_p(\tt,\xx) = \sum_i \gamma_p(\tt_i,\xx_i)$.
\end{definition}
We define the following residual problem for this section.
\begin{definition}\label{res:small}
For $p \in (1,2)$, we define the residual problem at any feasible $\xx$ to be,
\[
\max_{\AA\Delta = 0} \res_p(\Delta) \defeq \gg^{\top}\Delta - 2^p \gamma_p(|\NN\xx|,\NN\Delta),
\]
where $\gg = p \NN^{\top}|\NN\xx|^{p-2}\NN\xx$.
\end{definition}

We will follow a similar structure as Section \ref{sec:IterRef}. We begin by proving analogues of Lemma \ref{lem:RelateResidualOpt} and Lemma \ref{lem:precondition}. 
\begin{lemma}\label{lem:precondition-small}
Let $p\in(1,2)$. For any $x$ and $\Delta$,
\[
|x|^p + p|x|^{p-2}x\Delta + \frac{p-1}{p2^p} \gamma_p(|x|,\Delta) \leq|x+\Delta|^p \leq |x|^p + p|x|^{p-2}x\Delta + 2^p \gamma_p(|x|,\Delta)
\]

\begin{proof} We first show the following inequality holds
for $|\alpha| \leq1 $
\begin{equation}\label{smallalpha}
1 + \alpha p + \frac{(p-1)}{4} \alpha^2 \leq (1+\alpha)^p \leq 1 + \alpha p + p2^{p-1} \alpha^2.
\end{equation}

Let us first show the left inequality, i.e. $1 + \alpha p + \frac{p-1}{4} \alpha^2 \leq (1+\alpha)^p $. Define the following function,
\begin{equation*}
h(\alpha) = (1+\alpha)^p - 1 - \alpha p - \frac{p-1}{4} \alpha^2.
\end{equation*}
When $\alpha = 1,-1$, $h(\alpha) \geq 0$. The derivative of $h$ with respect to $\alpha$ is, $h'(\alpha) = p(1+ \alpha)^{p-1} - p- \frac{(p-1)}{2} \alpha $. Next let us see what happens when $\abs{\alpha} <1$. 

\[
h''(\alpha) = p(p-1)(1+\alpha)^{p-2} - \frac{p-1}{2} = (p-1) \left(\frac{p}{(1+\alpha)^{2-p}} - \frac{1}{2}\right) \geq 0
\]
This implies that $h'(\alpha)$ is an increasing function of $\alpha$ and $\alpha_0$ for which $h'(\alpha_0) = 0$ is where $h$ attains its minimum value. The only point where $h'$ is 0 is $\alpha_0 = 0$. This implies $h(\alpha) \geq h(0) = 0$. This concludes the proof of the left inequality. For the right inequality, define:
\begin{equation*}
s(\alpha) = 1 + \alpha p + p2^{p-1} \alpha^2 - (1+\alpha)^p.
\end{equation*}
Note that $s(0) = 0$ and $s(1),s(-1) \geq 0$. We have, 
\[
s'(\alpha) = p + p2^{p} \alpha - p(1+\alpha)^{p-1},
\]
and
\[ 
(1+\alpha)^{p-1}sign(\alpha) \leq (1+\alpha)sign(\alpha).
\]
Using this, we get, $s'(\alpha) sign(\alpha) \geq p|\alpha|  (2^{p} - 1) \geq 0$ which says $s'(\alpha)$ is positive for $\alpha$ positive and negative for $\alpha$ negative. Thus the minima of $s$ is at 0 which is $0$. So $s(\alpha) \geq0$.

Before we prove the lemma, we will prove the following inequality for $\beta\geq 1$,
\begin{equation}\label{beta}
 (\beta -1 )^{p-1} +1 \geq  \frac{1}{2^p} \beta^{p-1}.
\end{equation}

$(\beta -1 ) \geq \frac{\beta}{2}$ for $\beta \geq 2$. So the claim clearly holds for $\beta \geq 2$ since $(\beta -1 )^{p-1} \geq \left(\frac{\beta}{2}\right)^{p-1}$. When $1 \leq \beta \leq 2$, $1 \geq \frac{\beta}{2}$, so the claim holds since, $1 \geq \left(\frac{\beta}{2}\right)^{p-1}$

\noindent We now prove the lemma.

  Let $\Delta = \alpha x$. The term $p |x|^{p-1} sign(x) \cdot \alpha x = \alpha p |x|^{p-1}|x| = \alpha p |x|^p$. Let us first look at the case when $|\alpha| \leq 1$. We want to show, 
\begin{align*}
& |x|^p + \alpha p |x|^p + c\frac{p}{2}|x|^{p-2}|\alpha x|^2  \leq |x+\alpha x|^p \leq |x|^p + \alpha p |x|^p + C\frac{p}{2}|x|^{p-2}|\alpha x|^2 \\
& \iff (1+ \alpha p) + c\frac{p}{2}\alpha^2  \leq (1+\alpha)^p \leq (1 + \alpha p)+ C\frac{p}{2}\alpha^2.
\end{align*}
This follows from Equation~\eqref{smallalpha} and the facts $\frac{cp}{2} \leq \frac{p-1}{4}$ and $\frac{Cp}{2} \geq p2^{p-1}$ . We next look at the case when $|\alpha| \geq 1$.  Now, $\gamma_p (|f|,\Delta) = |\Delta|^p + (\frac{p}{2} - 1)|f|^p$. We need to show

\begin{equation*}
|x|^p (1 + \alpha p)+ \frac{|x|^p(p-1)}{p2^p}( |\alpha|^p + \frac{p}{2} - 1) \leq |x|^p|1 + \alpha|^p \leq  |x|^p (1 + \alpha p)+2^p |x|^p( |\alpha|^p + \frac{p}{2} - 1).
\end{equation*}

When $|x| = 0$ it is trivially true. When $|x| \neq 0$, let 
\begin{equation*}
h(\alpha) = |1+\alpha|^p - (1+\alpha p) - \frac{(p-1)}{p2^p}(|\alpha|^p + \frac{p}{2} - 1).
\end{equation*}
Now, taking the derivative with respect to $\alpha$ we get,
\begin{equation*}
h'(\alpha) = p \left( |1+\alpha|^{p-1}sign(\alpha) - 1 - \frac{(p-1)}{p2^p} |\alpha|^{p-1}sign(\alpha)\right).
\end{equation*}
We use the mean value theorem to get for $|\alpha| \geq 1$,
\begin{align*}
(1+\alpha)^{p-1} - 1 &= (p-1) \alpha (1+z)^{p-2}, z \in (0,\alpha)\\
& \geq (p-1)\alpha (2\alpha)^{p-2} \\
& \geq \frac{p-1}{2} \alpha^{p-1}
\end{align*}
which implies $h'(\alpha) \geq 0$ in this range as well. When $\alpha \leq -1$ it follows from Equation~\eqref{beta} that $h'(\alpha) \leq0$. So the function $h$ is increasing for $\alpha \geq 1$ and decreasing for $\alpha \leq -1$. The minimum value of $h$ is $min \{h(1), h(-1) \} \geq 0$.  It follows that $h(\alpha) \geq 0$ which gives us the left inequality. The other side requires proving,
\begin{equation*}
|1+\alpha|^p \leq 1+\alpha p + 2^p (|\alpha|^p + \frac{p}{2} -1).
\end{equation*}
Define:
\begin{equation*}
s(\alpha) = 1+\alpha p + 2^p (|\alpha|^p + \frac{p}{2} -1) - |1+\alpha|^p.
\end{equation*}
The derivative $s'(\alpha) = p + \left(p 2^p |\alpha|^{p-1} - p|1+\alpha|^{p-1} \right)sign(\alpha)$ is non negative for $\alpha \geq 1$ and non positive for $\alpha \leq -1$.  The minimum value taken by $s$ is $\min\{s(1),s(-1)\} $ which is non negative. This gives us the right inequality.

\end{proof}
\end{lemma}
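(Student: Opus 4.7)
The plan is to reduce the vector inequality to a scalar one by homogenization. The case $x=0$ is trivial (both sides collapse to $|\Delta|^p$ up to the constants in front of $\gamma_p$), so assume $x\neq 0$ and substitute $\Delta=\alpha x$. Every term then scales as $|x|^p$, and after dividing through, the inequality becomes a one-variable statement in $\alpha\in\mathbb{R}$. The definition of $\gamma_p$ forces a natural case split at $|\alpha|=1$, matching exactly the piecewise definition of the smoothing.

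In the regime $|\alpha|\leq 1$, $\gamma_p(|x|,\Delta)=\tfrac{p}{2}|x|^p\alpha^2$, so the target reduces to a scalar inequality of the form
\[
1 + p\alpha + c_1(p)\alpha^2 \;\leq\; (1+\alpha)^p \;\leq\; 1 + p\alpha + c_2(p)\alpha^2
\]
for appropriate constants $c_1(p)\leq \tfrac{p-1}{4}$ and $c_2(p)\geq p2^{p-1}$ obtained by pulling the $\tfrac{p}{2}$ inside $\gamma_p$. I would prove each side by a standard calculus argument: define $h(\alpha)$ as the gap, verify $h(0)=h'(0)=0$, and check that $h''$ has the right sign on $[-1,1]$. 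For the lower bound, $h''(\alpha)=p(p-1)(1+\alpha)^{p-2}-\tfrac{p-1}{2}$ is nonnegative because $(1+\alpha)^{p-2}\geq 2^{p-2}$ when $|\alpha|\leq 1$; for the upper bound one similarly controls the sign using $(1+\alpha)^{p-1}\leq 1+\alpha\cdot\mathrm{sgn}(\alpha)$ type estimates.

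In the harder regime $|\alpha|>1$, $\gamma_p(|x|,\Delta)=|x|^p\bigl(|\alpha|^p+\tfrac{p}{2}-1\bigr)$, and the claim becomes
\[
1 + p\alpha + \tfrac{p-1}{p2^p}\bigl(|\alpha|^p+\tfrac{p}{2}-1\bigr) \;\leq\; |1+\alpha|^p \;\leq\; 1 + p\alpha + 2^p\bigl(|\alpha|^p+\tfrac{p}{2}-1\bigr).
\]
I would again set up difference functions and argue by monotonicity separately on $\alpha\geq 1$ and $\alpha\leq -1$, so that the extrema on $|\alpha|\geq 1$ occur at $\alpha=\pm 1$. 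Both sides and $\gamma_p$ are continuous at $\alpha=\pm 1$, so the boundary values match those from the easy regime and are already known to be nonnegative.

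The main obstacle will be the lower bound when $\alpha\leq -1$: here $|1+\alpha|^{p-1}=(|\alpha|-1)^{p-1}$ is small while $p\alpha$ is very negative, so the derivative analysis is delicate. I expect to need an auxiliary estimate of the form
\[
(\beta-1)^{p-1}+1\;\geq\;\tfrac{1}{2^p}\beta^{p-1}\qquad\text{for all }\beta\geq 1,
\]
proved by splitting on $\beta\in[1,2]$ (where the $+1$ dominates) versus $\beta\geq 2$ (where $\beta-1\geq \beta/2$). This is precisely the technical input that forces the constant $2^p$ in the statement of the lemma, and once it is in hand the monotonicity of the difference function on $(-\infty,-1]$ follows and closes the proof.
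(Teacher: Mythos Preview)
Your proposal is correct and follows essentially the same route as the paper: homogenize via $\Delta=\alpha x$, split at $|\alpha|=1$ according to the piecewise definition of $\gamma_p$, handle $|\alpha|\leq 1$ by a second-derivative argument, handle $|\alpha|>1$ by showing the difference functions are monotone on each ray so their extrema occur at $\alpha=\pm 1$, and use exactly the auxiliary inequality $(\beta-1)^{p-1}+1\geq 2^{-p}\beta^{p-1}$ for the delicate $\alpha\leq -1$ lower-bound case. The paper's proof matches this outline step for step, including the same auxiliary estimate.
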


\begin{lemma}\label{lem:RelateResidualObj-Small}
Let $p \in (1,2)$ and $\lambda^{p-1} = \frac{p4^p}{p-1}$. Then for any $\Delta$,
\[
\res_p(\Delta) \leq \|\NN\xx\|_p^p - \|\NN(\xx-\Delta)\|_p^p,
\]
and
\[
\|\NN\xx\|_p^p -\|\NN(\xx-\lambda\Delta)\|_p^p  \leq \lambda\res_p(\Delta).
\]
\end{lemma}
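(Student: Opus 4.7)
The plan is to apply the coordinate-wise bound of Lemma \ref{lem:precondition-small} entrywise to the rows of $\NN$ and sum, using once the upper bound for the first inequality and once the lower bound for the second, then to reconcile the resulting $\gamma_p$ terms with those appearing in $\res_p$.

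First I would handle the easier direction. Apply the right inequality of Lemma \ref{lem:precondition-small} to $x = (\NN\xx)_i$ and $\Delta \mapsto -(\NN\Delta)_i$ for each coordinate $i$ and sum. Since $\gamma_p(t,\cdot)$ depends on its second argument only through $|\cdot|$, this gives
\[
\|\NN(\xx-\Delta)\|_p^p \;\le\; \|\NN\xx\|_p^p \;-\; p(|\NN\xx|^{p-2}\NN\xx)^\top(\NN\Delta) \;+\; 2^p\gamma_p(|\NN\xx|,\NN\Delta),
\]
which rearranges exactly to $\|\NN\xx\|_p^p-\|\NN(\xx-\Delta)\|_p^p \ge \gg^\top\Delta - 2^p\gamma_p(|\NN\xx|,\NN\Delta)=\res_p(\Delta)$.

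For the second inequality, I would apply the left inequality of Lemma \ref{lem:precondition-small} to $x=(\NN\xx)_i$ and $\Delta\mapsto -\lambda(\NN\Delta)_i$, sum, and rearrange, obtaining
\[
\|\NN\xx\|_p^p - \|\NN(\xx-\lambda\Delta)\|_p^p \;\le\; \lambda\gg^\top\Delta \;-\; \tfrac{p-1}{p 2^p}\gamma_p(|\NN\xx|,\lambda\NN\Delta).
\]
To finish I need $\tfrac{p-1}{p 2^p}\gamma_p(|\NN\xx|,\lambda\NN\Delta) \ge \lambda\cdot 2^p\gamma_p(|\NN\xx|,\NN\Delta)$, because then the right-hand side is at most $\lambda\res_p(\Delta)$ and we are done.

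The main obstacle, and the only nonroutine step, is establishing the scaling inequality
\[
\gamma_p(t,\lambda x) \;\ge\; \lambda^p\,\gamma_p(t,x) \qquad\text{for all }\lambda\ge 1,\; t\ge 0,\; x\in\rea,
\]
from which the required bound follows because $\lambda^{p-1}=\tfrac{p\,4^p}{p-1}$ gives $\tfrac{p-1}{p 2^p}\lambda^p = \lambda\cdot 2^p$. I would prove this coordinate-wise claim by splitting on the definition of $\gamma_p$ into three cases: (i) $|\lambda x|\le t$, where both branches use the quadratic formula and the ratio is $\lambda^2\ge \lambda^p$ since $p\le 2$; (ii) $|x|\le t< |\lambda x|$, the delicate case, where one can verify by examining $\phi(s):=\lambda^p|s|^p - (1-p/2) - \tfrac{p}{2}\lambda^p s^2$ on $s\in[1/\lambda,1]$ (with $s=x/t$) that $\phi$ is monotone in $|s|$ with $\phi(1/\lambda)=\tfrac{p}{2}(1-\lambda^{p-2})\ge 0$ and $\phi(1)=(1-p/2)(\lambda^p-1)\ge 0$, hence $\phi\ge 0$; and (iii) $|x|>t$, where both branches use the $|x|^p-(1-p/2)t^p$ formula and the difference equals $(1-p/2)t^p(\lambda^p-1)\ge 0$. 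Summing this pointwise inequality over coordinates of $\NN\xx$ and $\NN\Delta$ and combining with the displayed bound above completes the proof.
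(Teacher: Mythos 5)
Your proof is correct and follows the same approach as the paper: apply Lemma \ref{lem:precondition-small} coordinate-wise to the entries of $\NN\xx$ and $\NN\Delta$, rearrange (using the right inequality for the first claim, the left inequality with step $\lambda\Delta$ for the second), and invoke the scaling property $\gamma_p(t,\lambda x)\geq\lambda^p\gamma_p(t,x)$ for $\lambda\geq 1$ to absorb the $\lambda$ factors via $\lambda^{p-1}=\frac{p\,4^p}{p-1}$. The only difference is that you verify the scaling property explicitly by the three-case analysis (both quadratic, mixed, both power-law), whereas the paper simply asserts the bound $\gamma_p(t,\lambda\Delta)\geq\min\{\lambda^2,\lambda^p\}\gamma_p(t,\Delta)$ without proof.
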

\begin{proof}
Applying Lemma \ref{lem:precondition-small} to all coordinates,
\[
 - \gg^{\top}\Delta + \frac{p-1}{p2^p} \gamma_p(|\NN\xx|,\NN\Delta) \leq \|\NN(\xx-\Delta)\|_p^p - \|\NN\xx\|_p^p \leq - \gg^{\top}\Delta + 2^p \gamma_p(|\NN\xx|,\NN\Delta).
\]
From the definition of the residual problem and the above equation, the first inequality of our lemma directly follows. To see the other inequality, from the above equation,
\begin{align*}
\|\NN\xx\|_p^p - \|\NN(\xx-\lambda\Delta)\|_p^p  & \leq   \lambda \gg^{\top}\Delta - \frac{p-1}{p2^p} \gamma_p(|\NN\xx|,\lambda \NN\Delta)\\
& \leq \lambda \left(\gg^{\top}\Delta - \lambda^{p-1}\frac{p-1}{p2^p} \gamma_p(|\NN\xx|, \NN\Delta)\right)\\
& = \lambda \cdot \res_p(\Delta).
\end{align*}
Here, we are using the following property of $\gamma_p$,
\[
\gamma_p(t,\lambda\Delta) \geq \min\{\lambda^2,\lambda^p\}\gamma_p (t,\Delta).
\]
\end{proof}

Lemma \ref{lem:RelateResidualObj-Small} is similar to Lemma \ref{lem:RelateResidualOpt}, and we can follow the proof of Theorem \ref{thm:IterativeRefinement} to obtain Theorem \ref{thm:IterRefSmall}.

\renewcommand{\union}{\cup}

\section{Fast Multiplicative Weight Update Algorithm for \texorpdfstring{$\ell_p$}{TEXT}-norms}
\label{chap:MWU}
In this section, we will show how to solve the residual problem for $p\geq 2$ as defined in the previous section (Definition \ref{def:residual}), to a constant approximation. The core of our approach is a multiplicative weight update routine with {\it width reduction} that is used to speed up the algorithm. For problem instances of size $m$, this routine returns a constant approximate solution in at most $O(m^{1/3})$ calls to a linear system solver. Such a width reduced multiplicative weight update algorithm was first seen in the context of the maximum flow problem and $\ell_{\infty}$-regression in works by \textcite{CKMST,chin2013runtime}

The first instance of such a width reduced multiplicative weight update algorithm for $\ell_p$-regression appeared in the work of \textcite{AdilKPS19}. In a further work, the authors improved the dependence on $p$ in the runtime \cite{AdilBKS21}. The following sections are based on the improved algorithm from \textcite{AdilBKS21}.

\subsection{Algorithm for \texorpdfstring{$\ell_p$}{TEXT}-norm Regression}

Recall that our residual problem for $p\geq 2$ is defined as:
\[
     \max_{\AA\Delta = 0} \quad \res_p(\Delta) \defeq \gg^{\top}\Delta -
     \Delta^{\top}\RR\Delta - \|\NN\Delta\|_p^p,
\] 
for some vector $\gg$ and matrices $\RR$ and $\NN$. Also recall that in Algorithm \ref{alg:IterRef}, we used a parameter $\nu$, which was used to track the value of $\ff(\xx^{(t)}) - \ff(\xx^{\star})$ at any iteration $t$. We will now use this parameter $\nu$ to do a binary search on the linear term in $\res_p$ and reduce the residual problem to,
\begin{align}\label{eq:BinaryProblem}
\begin{aligned}
\min_{\Delta} & \quad\Delta^{\top}\RR\Delta + \|\NN\Delta\|_p^p\\
s.t. & \quad \AA\Delta = 0\\
& \quad \gg^{\top}\Delta = c,
\end{aligned}
\end{align}
for some constant $c$. Further, we will use our multiplicative weight update solver to solve problems of this kind to a constant approximation. We start by proving the binary search results.

\subsubsection{Binary Search}
\label{sec:BinarySearch}

We first note that, if $\nu$ at iteration $t$ is such that $\ff(\xx^{(t)})-\ff(\xx^{\star}) \in (\nu/2,\nu] $, then from Lemma \ref{lem:RelateResidualOpt}, the residual at $\xx^{(t)}$ has optimum value, $\res_p(\Dopt) \in (\frac{\nu}{32p},\nu]$. We now consider a parameter $\zeta$ that has value between $\frac{\nu}{16p}$ and $\nu$ such that $\res_p(\Dopt) \in (\frac{\zeta}{2},\zeta]$. We have the following lemma that relates the optimum of problem of the type \eqref{eq:BinaryProblem} with $\zeta$.

\begin{restatable}{lemma}{binary}\label{lem:binary}
Let $\zeta$ be such that the residual problem satisfies $\res_p(\Dopt) \in (\frac{\zeta}{2},\zeta]$. The following problem has optimum at most $2\zeta$.
\begin{align}\label{eq:BinarySearch}
\begin{aligned}
\min_{\Delta} & \quad\Delta^{\top}\RR\Delta + \|\NN\Delta\|_p^p\\
s.t. & \quad \AA\Delta = 0\\
& \quad \gg^{\top}\Delta = \frac{\zeta}{2}.
\end{aligned}
\end{align}
Further, let $\Dtil$ be a solution to the above problem such that $\Dtil^{\top}\RR\Dtil \leq a^2 \zeta$ and $\|\NN\Dtil\|_p^p \leq a^p\zeta$ for some $a>1$. Then $\frac{\Dtil}{5a^2}$ is a $100a^2$-approximation to the residual problem.
\end{restatable}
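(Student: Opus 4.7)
The first claim is a direct rescaling argument. Since $\Dopt^\top \RR \Dopt$ and $\|\NN\Dopt\|_p^p$ are nonnegative, the lower bound $\res_p(\Dopt) > \zeta/2$ immediately forces $\gg^\top \Dopt > \zeta/2$. Setting $\alpha = (\zeta/2)/\gg^\top \Dopt \in (0,1)$ and $\Delta_s = \alpha \Dopt$ therefore produces a feasible point of \eqref{eq:BinarySearch}: it satisfies $\AA \Delta_s = 0$ and $\gg^\top \Delta_s = \zeta/2$ by construction. Since $\alpha \le 1$ and $p \ge 2$, both $\alpha^2 \le \alpha$ and $\alpha^p \le \alpha$, so the objective value at $\Delta_s$ is at most $\alpha(\Dopt^\top \RR \Dopt + \|\NN\Dopt\|_p^p) \le \alpha \gg^\top \Dopt = \zeta/2$, comfortably below the claimed bound of $2\zeta$.

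For the second claim, the plan is to substitute $\Delta' = \Dtil/(5a^2)$ into $\res_p$ and evaluate each term using the three given bounds on $\Dtil$. The feasibility hypothesis $\gg^\top \Dtil = \zeta/2$ scales down to a linear contribution of $\zeta/(10 a^2)$. The quadratic hypothesis $\Dtil^\top \RR \Dtil \le a^2 \zeta$ scales down to at most $\zeta/(25 a^2)$. The $p$-norm hypothesis $\|\NN\Dtil\|_p^p \le a^p \zeta$ scales down to $\zeta/(5^p a^p)$, which, using the elementary inequality $5^p a^p \ge 25 a^2$ for $a \ge 1$, $p \ge 2$, is also at most $\zeta/(25 a^2)$. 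Combining these three estimates gives $\res_p(\Delta') \ge \zeta/(10a^2) - 2\zeta/(25a^2) = \zeta/(50 a^2)$, and the bound $\res_p(\Dopt) \le \zeta$ then yields $\res_p(\Delta') \ge \res_p(\Dopt)/(100 a^2)$ as required by Definition \ref{def:res-approx}.

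The only delicate point in the plan is guaranteeing that $\alpha$ in Part 1 is well-defined and lies in $(0,1)$, which the strict inequality $\res_p(\Dopt) > \zeta/2$ handles immediately by ensuring $\gg^\top \Dopt$ exceeds $\zeta/2$ (and is in particular strictly positive). Everything else is routine arithmetic, with the constants $5$ and $100$ chosen precisely so that the quadratic and $p$-norm contributions from $\Delta'$ each absorb at most half of the linear contribution, leaving enough slack to dominate $\res_p(\Dopt)/(100a^2)$. There is no genuine obstacle; the lemma is a clean rewriting of the residual objective once the scaling factors $\alpha = (\zeta/2)/\gg^\top\Dopt$ and $1/(5a^2)$ are isolated.
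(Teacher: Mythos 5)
Your proof is correct, and the second half (scaling $\Dtil$ by $1/(5a^2)$ and splitting into linear, quadratic, and $p$-norm contributions) is essentially identical to the paper's. The first half differs slightly in a way worth noting: the paper bounds $\Dopt^{\top}\RR\Dopt + \|\NN\Dopt\|_p^p \leq \zeta$ via the first-order optimality condition $\frac{d}{d\lambda}\res_p(\lambda\Dopt)\big|_{\lambda=1}=0$ (which yields $\gg^{\top}\Dopt - \Dopt^{\top}\RR\Dopt - \|\NN\Dopt\|_p^p = \Dopt^{\top}\RR\Dopt + (p-1)\|\NN\Dopt\|_p^p$ and then uses $p\ge 2$), whereas you bypass the stationarity condition entirely and just use positivity, $\res_p(\Dopt) > \zeta/2 > 0$, to conclude $\Dopt^{\top}\RR\Dopt + \|\NN\Dopt\|_p^p < \gg^{\top}\Dopt$, so that after scaling by $\alpha = (\zeta/2)/\gg^{\top}\Dopt \le 1$ the objective is at most $\alpha\gg^{\top}\Dopt = \zeta/2$. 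Your argument is a mild simplification that avoids invoking that $\Dopt$ is a stationary point and gives a tighter constant than the stated $2\zeta$; the paper's version happens to also establish the upper bound $\gg^{\top}\Dopt \le 2\zeta$ along the way, which is not needed for this part. Either route closes the gap correctly.
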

\begin{proof}
We have assumed that,
\[
\res(\Dopt) = \gg^{\top}\Dopt -\Dopt^{\top}\RR\Dopt - \norm{\NN\Dopt}_p^p \in  \left(\frac{\zeta}{2},\zeta\right].
\]
Since the last $2$ terms are strictly non-positive, we must have, $\gg^{\top}\Dopt  \geq \frac{\zeta}{2}.$
Since $\Dopt$ is the optimum and satisfies $\AA\Dopt = 0$, 
\[
\frac{d}{d\lambda}\left(\gg^{\top}\lambda \Dopt -  \lambda^2\Dopt^{\top}\RR\Dopt - \lambda^p \norm{\NN\Dopt}_p^p \right)_{\lambda = 1} =0.
\]
Thus,
\[
\gg^{\top} \Dopt -   \Dopt^{\top}\RR\Dopt  - \norm{\NN\Dopt}_p^p  =  \Dopt^{\top}\RR\Dopt  + (p-1) \norm{\NN\Dopt}_p^p.
\]
Since $p \ge 2,$ we get the following
\[
\Dopt^{\top}\RR\Dopt  +  \norm{\NN\Dopt}_p^p \leq \gg^{\top} \Dopt -   \Dopt^{\top}\RR\Dopt  -  \norm{\NN\Dopt}_p^p \leq \zeta.
\]
Now, we know that, $\gg^{\top}\Dopt  \geq \frac{\zeta}{2}$ and $ \gg^{\top} \Dopt -  \Dopt^{\top}\RR\Dopt  -  \norm{\NN\Dopt}_p^p \leq  \zeta$. This gives, 
\[
\frac{\zeta}{2}\leq \gg^{\top} \Dopt \leq  \Dopt^{\top}\RR\Dopt +  \norm{\NN\Dopt}_p^p + \zeta \leq 2\zeta.
\]
Now, let $\Dtil$ be as described in the lemma. We have,
\begin{align*}
\res_p\left(\frac{\Dtil}{5a^2}\right) & = \frac{1}{5a^2}\gg^{\top}\Dtil - \frac{\zeta}{25a^2} - \frac{\zeta}{5^p a^p}\\
& \geq \frac{\zeta}{10a^2} - \frac{2\zeta}{25a^2}\\
& \geq \frac{\zeta}{50a^2}\geq \frac{1}{100a^2}\res_p(\Dopt)
\end{align*}
\end{proof}

\begin{algorithm}[H]
\caption{Algorithm for Solving the Residual Problem}
\label{alg:ResSol}
 \begin{algorithmic}[1]
 \Procedure{\textsc{ResidualSolver}}{$\xx,\MM,\NN,\AA,\dd,\bb,\nu,p$}
\State $\zeta \leftarrow\nu$
\State $(\gg,\RR,\NN) \leftarrow \res_p$\Comment{Create residual problem at $\xx$}
\While{$\zeta > \frac{\nu}{32p}$}
\State $\Dtil_{\zeta} \leftarrow$ {\sc MWU-Solver}$\left([\AA,\gg^{\top}], \RR^{1/2},\NN, [0,\frac{\zeta}{2}]^{\top},\zeta,p\right)$\Comment{Algorithm \ref{alg:FasterOracleAlgorithm}}
\State $\zeta \leftarrow \frac{\zeta}{2}$
\EndWhile
\State \Return $\arg\min_{\Dtil_{\zeta}} \ff\left(\xx - \frac{\Dtil_{\zeta}}{p}\right)$
 \EndProcedure 
 \end{algorithmic}
\end{algorithm}

\subsubsection{Width-Reduced Approximate Solver}
 \label{sec:AKPSOracle}

We are now finally ready to solve problems of the type \eqref{eq:BinaryProblem}. In this section, we will give an algorithm to solve the following problem,

  \begin{align}\label{eq:Un-ScaledProblem}
 \min_{\Delta}  & \quad \Delta^{\top}\MM^{\top}\MM\Delta + \|\NN\Delta\|_p^p \\
 & \text{s.t.} \quad \AA\Delta = \cc. \nonumber
  \end{align}
Here $\AA \in \mathbb{R}^{d \times n},\NN \in \mathbb{R}^{m_1 \times n},\MM\in \mathbb{R}^{m_2 \times n}$, and vector $\cc \in \rea^{d}$. Our approach involves a multiplicative weight update method with a {\it width reduction} step which allows us to solve these problems faster. 

\subsubsection{Slow Multiplicative Weight Update Solver}
 We first give an informal analysis of the multiplicative
  weight update method without width reduction. We will show that this method converges in $\approx m_1^{\frac{p-2}{2(p-1)}}\leq  m_1^{1/2}$ iterations. For simplicity, we let
  $\MM =0$ in Problem~\eqref{eq:Un-ScaledProblem} and assume without
  loss of generality that the optimum $\Dopt$ satisfies, $\|\NN\Dopt\|_p \leq 1$. Consider the following MWU algorithm for parameter $\alpha$ that we will set later:
\begin{enumerate}
  \item $\ww^{(0)} = 1, \xx^{(0)}=0, T = \alpha^{-1}m^{1/p}$
  \item for $t =1,\cdots, T$: \\
  $ \Delta^{(t)} = \arg\min_{\AA\Delta=\cc}\sum_i(\ww^{(t-1)}_i)^{p-2}(\NN\Delta)_i^2, \quad \ww^{(t)} = \ww^{(t-1)} + \alpha|\NN\Delta^{(t)}|,\quad \xx^{(t)} = \xx^{(t-1)} + \Delta^{(t)}$ 
  \item Return $\xxtil = \xx/T$
  \end{enumerate}
We claim that the above algorithm returns $\xxtil$ such that $\|\NN\xxtil\|_p^p \leq O_p(1)$, i.e., a constant approximate solution to the residual problem, in $ \approx m_1^{1/2}$ iterations. We will bound the value of the returned solution, $\|\NN\xxtil\|_p^p$ by looking at how $\|\ww^{(t)}\|_p^p$ grows with $t$. From Lemma~\ref{lem:precondition},
\begin{multline*}
\|\ww^{(t-1)}+\alpha\NN\Delta^{(t)}\|_p^p \leq \|\ww^{(t-1)}\|_p^p + \alpha p \sum_i (\ww^{(t-1)}_i)^{p-1}(\NN\Delta^{(t)})_i \\+ 2p^2 \alpha^2 \sum_i (\ww_i^{(t-1)})^{p-2}(\NN\Delta^{(t)})^2_i + \alpha^p p^p \|\NN\Delta^{(t)}\|_p^p.
\end{multline*}
Observe that the third term on the right hand side is exactly the
objective of the quadratic problem minimized to obtain
$\Delta^{(t)}$. Using that $\Delta^{(t)}$ must achieve a lower
objective than $\Dopt$, i.e., $\sum_i
(\ww_i^{(t-1)})^{p-2}(\NN\Delta^{(t)})^2_i \leq \sum_i
(\ww_i^{(t-1)})^{p-2}(\NN\Dopt)^2_i $ along with Hölder's inequality
and $\|\NN\Dopt\|_p\leq 1$, we can bound this term by
$\|\ww^{(t-1)}\|_p^{p-2}$. We can further bound the second term in
right hand side of the above inequality by the third term using
Hölder's inequality (refer to Proof of
Lemma~\ref{lem:ReduceWidthGammaPotential} for details). These bounds
give, 
\[
\|\ww^{(t)}\|_p^p \leq \|\ww^{(t-1)}\|_p^p + \alpha p \|\ww^{(t-1)}\|^{p-1}_p + 2\alpha^2p^2 \|\ww^{(t-1)}\|_p^{p-2} + \alpha^p p^p\|\NN\Delta^{(t)}\|_p^p.
\]
Observe that the growth of $\|\ww^{(t)}\|_p^p$ is controlled by $\|\NN\Delta^{(t)}\|^p_p$. We next see how large this quantity can be. Assume that, $\|\ww^{(t)}\|_p \leq 3m_1^{1/p}$ for all $t$ (one may verify in the end that this holds for all $t \leq T$). Since $(\ww^{(t-1)}_i)^{p-2} \geq (\ww^{(0)}_i)^{p-2} =1$,
\[
\|\NN\Delta^{(t)}\|_2^2 \leq \sum_i (\ww_i^{(t-1)})^{p-2}(\NN\Delta^{(t)})^2_i \substack{(a)\\\leq} \|\ww^{(t-1)}\|_p^{p-2} \leq 3^{p-2}m_1^{(p-2)/p},
\]
where we used Hölder's inequality in $(a)$.
This implies, $\|\NN\Delta^{(t)}\|_p^p \leq 3^{(p-2)p/2} m_1^{(p-2)/2} $. Now, for $\alpha \approx m_1^{-\frac{p^2-4p+2}{2p(p-1)}}$, $\alpha^p p^p \|\NN\Delta^{(t)}\|_p^p \leq \alpha p m_1^{\frac{p-1}{p}} \leq \alpha p \|\ww^{(t-1)}\|_p^{p-1}$ and,
\[
\|\ww^{(t)}\|_p^p \leq \|\ww^{(t-1)}\|_p^p + \alpha p \|\ww^{(t-1)}\|^{p-1}_p + 2\alpha^2p^2 + \alpha p \|\ww^{(t-1)}\|_p^{p-1} \leq\left(\|\ww^{(t-1)}\|_p + 2\alpha \right)^p.
\]
We can thus prove that,
\[
\|\NN\xx^{(T)}\|_p^p \leq \frac{1}{m_1}\|\ww^{(T)}\|_p^p \leq \left(\|\ww^{(0)})\|_p + 2\alpha T\right)^p = \frac{1}{m_1}\left(m_1^{1/p} + 2m_1^{1/p}\right)^p = 3^p,
\]
as required. The total number of iterations is $T = \alpha^{-1}m_1^{1/p} \approx m_1^{\frac{p-2}{2(p-1)}}$.

To obtain the improved rates of convergence via width reduction, our
algorithm uses a hard threshold on $\|\NN\Delta^{(t)}\|_p^p$ and performs a {\it width reduction step} whenever $\|\NN\Delta^{(t)}\|_p^p$ is larger than the threshold. The
analysis now requires to additionally track how $\|\ww\|_p$ changes with a width reduction step. Our analysis also tracks the value of an additional potential
$\Psi = \min_{\AA\Delta = \cc}\sum_i \ww_i\Delta_i^2$. The interplay
of these two potentials and balancing out their changes with respect
to primal updates and width reduction steps give the improved rates of
convergence.

\subsubsection{Fast, Width-Reduced MWU Solver} In the previous section, we showed that a multiplicative weight update algorithm without width-reduction obtains a rate of convergence $\approx m_1^{1/2}$. In this section we will show how width-reduction allows for a faster $\approx m_1^{1/3}$ rate of convergence. We now present the faster width-reduced algorithm. We will prove the
following result.
\begin{restatable}{theorem}{AKPSAlgo}
\label{cor:ResidualDecision}
Let $p\geq 2$. Consider an instance of Problem \eqref{eq:Un-ScaledProblem} described by matrices $\AA \in \mathbb{R}^{d \times n},\NN \in \mathbb{R}^{m_1 \times n},\MM\in \mathbb{R}^{m_2 \times n}$, and vector $\cc \in \rea^{d}$. If the optimum of this problem is at most $\zeta$, Procedure {\sc Residual-Solver} (Algorithm~\ref{alg:FasterOracleAlgorithm}) returns an $\xx$ such that
$\AA \xx = \cc,$ and $\xx^{\top}\MM^{\top}\MM\xx \leq O(1)\zeta$ and
$\|\NN\xx\|_p^p \leq O(3^p)\zeta$. The algorithm makes ${O} \left(pm_1^{\frac{p-2}{(3p-2)}}\right)$ calls to a linear system solver.
\end{restatable}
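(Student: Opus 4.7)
The plan is to realize Procedure \textsc{Residual-Solver} (Algorithm~\ref{alg:FasterOracleAlgorithm}) as a multiplicative weight update enhanced with a width-reduction safety valve, building on the informal slow calculation sketched immediately before the theorem statement. Initialize $\ww^{(0)} = \vone$ and $\xx^{(0)} = 0$, and fix parameters $\alpha$, a threshold $\rho$, and a primal-step budget $T$ to be chosen at the end. In each iteration $t$, solve the single weighted least-squares problem
\begin{equation*}
\Delta^{(t)} = \arg\min_{\AA\Delta = \cc/T}\ \Delta^\top\MM^\top\MM\Delta + \sum_i \bigl(\ww^{(t-1)}_i\bigr)^{p-2}(\NN\Delta)_i^2,
\end{equation*}
which is one linear system solve. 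If $\|\NN\Delta^{(t)}\|_p^p \leq \rho\zeta$, call this a \emph{primal step} and update $\xx^{(t)} = \xx^{(t-1)} + \Delta^{(t)}$ along with $\ww^{(t)}_i = \ww^{(t-1)}_i + \alpha|\NN\Delta^{(t)}|_i$; otherwise it is a \emph{width-reduction step} and we leave $\xx$ fixed while multiplying $\ww_i$ by $(1+\alpha)$ on the set $H$ of ``heavy'' coordinates where $|\NN\Delta^{(t)}|_i$ is comparable to $\ww^{(t-1)}_i$. Stop after $T$ primal steps and return $\xx^{(T)}/T$.

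The analysis tracks the potential $\Phi^{(t)} = \|\ww^{(t)}\|_p^p$ together with the auxiliary value $\Psi^{(t)}$ of the quadratic subproblem. Applying Lemma~\ref{lem:precondition} coordinatewise, exactly as in the slow calculation already sketched, a primal step gives
\begin{equation*}
\Phi^{(t)} - \Phi^{(t-1)} \leq \alpha p \|\ww^{(t-1)}\|_p^{p-1} + 2p^2\alpha^2 \Psi^{(t)} + \alpha^p p^p \|\NN\Delta^{(t)}\|_p^p.
\end{equation*}
Feasibility of $\Dopt/T$ for the subproblem together with H\"older yields $\Psi^{(t)} \leq (\zeta/T^2)\|\ww^{(t-1)}\|_p^{p-2}$, and the primal threshold ensures the last term is dominated by the first. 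Telescoping then gives $\|\ww^{(T)}\|_p \leq m_1^{1/p} + O(\alpha \zeta^{1/p})$, and the output $\xx^{(T)}/T$ satisfies $\|\NN\xx\|_p^p \leq \Phi^{(T)}/T^{p-1}$ by Jensen applied over the $T$ primal updates, while the quadratic bound $\xx^\top\MM^\top\MM\xx \leq O(1)\zeta$ comes from averaging the per-iteration values $\Psi^{(t)}$. Feasibility $\AA\xx = \cc$ is immediate from summing $\AA\Delta^{(t)} = \cc/T$ across the $T$ primal steps.

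The main obstacle, and the entire point of the width-reduction rule, is controlling how many width-reduction steps can occur. On such a step $\|\NN\Delta^{(t)}\|_p^p > \rho\zeta$ while the subproblem value is still $\Psi^{(t)} \leq (\zeta/T^2)\|\ww\|_p^{p-2}$, so H\"older forces the heavy set $H$ to be small, $|H| \lesssim m_1 \rho^{-2/p}$ up to $\mathrm{poly}(p)$ slack. Boosting $\ww_i$ by $(1+\alpha)$ on $H$ increases $\Phi$ by a controlled additive amount, but crucially no single coordinate can be boosted more than $O(\alpha^{-1}\log m_1)$ times before its weight would exceed the cap $\Phi^{1/p} = O(m_1^{1/p})$ dictated by the primal analysis. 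Combining the per-step bound on $|H|$ with this per-coordinate cap limits the total number of width-reduction steps to $O\bigl(\alpha^{-1} m_1^{(p-2)/p}\bigr)$ up to logs and $\mathrm{poly}(p)$ factors.

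Balancing the $T$ primal steps against the $\alpha^{-1}m_1^{(p-2)/p}$ width-reduction steps, subject to the constraint $\alpha T = O(1)$ imposed by the $\Phi$-telescoping bound, forces $\alpha \asymp m_1^{-(p-2)/(3p-2)}$ and $T \asymp \alpha^{-1} m_1^{1/p}$, giving total iteration count $O\bigl(p\, m_1^{(p-2)/(3p-2)}\bigr)$. Choosing $\rho = \Theta(1)$ keeps every H\"older application lossless up to the claimed $O(3^p)$ constant in the $\ell_p^p$ bound, and the $p$ factor in the iteration count absorbs the $\mathrm{poly}(p)$ slack coming from Lemma~\ref{lem:precondition}. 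This matches the guarantee claimed in the theorem, with each iteration costing exactly one linear system solve.
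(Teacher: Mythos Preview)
Your overall architecture is right---track $\Phi=\|\ww\|_p^p$ and an energy $\Psi$, distinguish primal steps from width-reduction steps---but the argument you give for bounding the number $K$ of width-reduction steps does not work, and this is exactly the crux of the theorem.

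You write that $|H|\lesssim m_1\rho^{-2/p}$ is small and that each coordinate can be boosted at most $O(\alpha^{-1}\log m_1)$ times, and then ``combine'' these to get $K=O(\alpha^{-1}m_1^{(p-2)/p})$. But the combination goes the wrong way: a per-coordinate boost cap gives $\sum_k|H_k|\le m_1\cdot O(\alpha^{-1}\log m_1)$, so to bound $K$ you would need a \emph{lower} bound on $|H|$ per step, not an upper bound. And no useful lower bound exists: on a width-reduction step the mass $\|\NN\Delta\|_p^p>\rho\zeta$ can be concentrated on a single coordinate (the only a priori cap on $|\NN\Delta|_e$ comes from $\Psi$ and gives $|\NN\Delta|_e\lesssim \zeta^{1/p}m_1^{(p-2)/(2p)}$, which forces $|H|\gtrsim \rho\, m_1^{-(p-2)/2}$, a negative power of $m_1$). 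So the counting argument cannot close. There is also a circularity: the per-coordinate cap presumes $\Phi=O(m_1)$, but the $\Phi$ bound itself depends on $K$ being small.

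What the paper actually uses, and what your sketch is missing, is that a width-reduction step strictly \emph{increases} the energy $\Psi$ by a fixed quantum. Concretely, when you scale up the weights on $H$, the minimum of the weighted quadratic over $\{\AA\Delta=\cc\}$ goes up by at least $\Omega\big(\sum_{e\in H}\rr_e(\NN\Delta)_e^2\big)$ (this is the resistance-monotonicity lemma, Lemma~\ref{lem:ckmst:res-increase}), and the width-reduction trigger together with a careful split of coordinates into small/heavy/high-resistance buckets forces this sum to be $\Omega(\tau^{2/p}\zeta^{2/p})$. Since $\Psi$ is also upper-bounded by $O(m_1^{(p-2)/p}\zeta^{2/p})$ whenever $\Phi=O(m_1)$, you get $K\le O(m_1^{(p-2)/p}/\tau^{2/p})$, and now the three-way parameter balance $T\asymp K\asymp$ (width-reduction contribution to $\Phi$) yields the exponent $\frac{p-2}{3p-2}$. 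Your proposal mentions $\Psi$ only as an upper bound used in the primal analysis; the missing piece is its role as a strictly increasing Lyapunov function under width reduction. A few smaller points: your constraint normalization $\AA\Delta=\cc/T$, the boost factor $(1+\alpha)$ rather than a constant, and the telescoping condition ``$\alpha T=O(1)$'' (the paper needs $\alpha T\asymp m_1^{1/p}$) would all need adjusting once the main mechanism is in place.
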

The algorithm and analyses of this chapter are based on \cite{AdilKPS19} and \cite{AdilBKS21}.

 In every iteration of the algorithm, we solve a weighted linear system. The solution returned is used to update the current iterate if it has a small $\ell_p$ norm. Otherwise, we do not update the solution, but update the weights corresponding to the coordinates with large value by a constant factor. This step is refered to as the ``width reduction step". The analysis is based on a potential function argument for specially defined potentials.

The following is the oracle used in the algorithm, i.e., the linear system we need to solve. We show in the Appendix \ref{chap:l2solve} how to implement the oracle using a linear system solver.
\begin{algorithm}[H]
\caption{Oracle}
\label{alg:oracle}
 \begin{algorithmic}[1]
 \Procedure{\textsc{Oracle}}{$\AA, \MM,\NN, \cc, \ww,\zeta$}
\State $\rr_e \leftarrow \ww_e^{p-2}$
\State $\Mtil \leftarrow \zeta^{-\frac{p-2}{2p}}\MM$
\label{algline:resistance}
\State Compute,
\[
  \Delta = \arg\min_{\AA \Delta' =  \cc} \quad   m_1^{\frac{p-2}{p}}{\Delta'}^{\top}\Mtil^{\top}\Mtil\Delta' + \frac{1}{3^{p-2}} \sum_e \rr_e \left(\NN\Delta^{'}\right)^2_e
\]
 \State\Return $ \Delta$
 \EndProcedure 
 \end{algorithmic}
\end{algorithm}
We now have the following multiplicative weight update algorithm given in Algorithm \ref{alg:FasterOracleAlgorithm}.
\begin{algorithm}
\caption{Width Reduced MWU Algorithm}
\label{alg:FasterOracleAlgorithm}
 \begin{algorithmic}[1]
 \Procedure{MWU-Solver}{$\AA, \MM,\NN, \cc,\zeta,p$}
 \State $\ww^{(0,0)}_e \leftarrow 1$
 \State $\xx \leftarrow 0$
 \State $\rho \leftarrow m_1^{\frac{(p^2-4p+2)}{p(3p-2)}}$\Comment{width parameter}
 \State $\beta \leftarrow 3^{p-1}\cdot m_1^{\frac{p-2}{3p-2}}$\Comment{resistance threshold}
 \State $\alpha \leftarrow 3^{-\frac{p-1}{p}}\cdot p^{-1} m_1^{-\frac{p^2-5p+2}{p(3p-2)}}$\Comment{step size}
 \State $\tau \leftarrow 3^p\cdot m_1^{\frac{(p-1)(p-2)}{(3p-2)}}$\Comment{$\ell_p$ threshold}
\State $T \leftarrow \alpha^{-1} m_1^{1/p} = 3^{\frac{p-1}{p}} \left( p m^{\frac{p-2}{3p-2}}\right)$ 
\State $i \leftarrow 0, k \leftarrow 0$
\While{$i < T$} 
\State $\Delta \leftarrow \textsc{Oracle}(\AA, \MM,\NN, \cc, \ww^{(i,k)},\zeta)$
\State $\rr \leftarrow \left(\ww^{(i,k)}\right)^{p-2}$
\If{$\norm{\NN\Delta}_{p}^p \leq \tau \zeta$}\Comment{primal step}   \label{algline:CheckWidth}
\State $\ww^{(i+1,k)} \leftarrow \ww^{(i,k)} + \alpha \frac{\abs{\NN\Delta}}{\zeta^{1/p}}$ 
\State $\xx \leftarrow \xx +  \Delta$
\State $ i \leftarrow i+1$ 
\Else
\State For all coordinates $e$ with $|\NN\Delta|_e \geq \rho \zeta^{\frac{1}{p}}$ and $\rr_e \leq \beta$\Comment{width reduction step}\label{lin:WidthReduceEdge}
\State $\quad \quad \quad \ww_e^{(i,k+1)} \leftarrow 2^{\frac{1}{p-2}}\ww_e$  
\State $ \quad \quad \quad k \leftarrow k+1$
\EndIf
\EndWhile
 \State\Return $\frac{\xx}{T}$
 \EndProcedure 
\end{algorithmic}
\end{algorithm}

\paragraph*{Notation}
We will use $\Dopt$ to denote the optimum of \eqref{eq:Un-ScaledProblem}. Since we assume that the optimum value of \eqref{eq:Un-ScaledProblem} is at most $\zeta$, 
\begin{equation}
\Dopt^{\top}\MM^{\top}\MM\Dopt \leq \zeta \quad \text{ and } \quad \norm{\NN\Delta^*}_p^p \leq \zeta
\end{equation}

\subsubsection{Analysis of Algorithm \ref{alg:FasterOracleAlgorithm}}

Our analysis is based on tracking the following two potential functions. We will show how these potentials change with a primal step (Line \ref{algline:CheckWidth}) and a width reduction step (\ref{lin:WidthReduceEdge}) in the algorithm. The proofs of these lemmas appear later in the section.

\[
  \Phi\left(\ww^{\left( i \right)} \right) \defeq \norm{\ww}_p^p
\]
\[
\Psi(\rr)\defeq \min_{\Delta: \AA \Delta = \cc} m_1^{\frac{p-2}{p}}{\Delta}^{\top}\Mtil^{\top}\Mtil\Delta + \frac{1}{3^{p-2}}\sum_e \rr_e \left(\NN\Delta\right)^2_e.
\]
Finally, to prove our runtime bound, we will first show that if the total number of width reduction steps $K$ is not too large, then $\Phi$ is bounded. We then prove that the number of width reduction steps cannot be too large by using the relation between $\Phi$ and $\Psi$ and their respective changes throughout the algorithm.

We now begin our analysis. The next two lemmas show how our potentials change with every iteration of the algorithm.

\begin{restatable}{lemma}{ReduceWidthGammaPotential}
  \label{lem:ReduceWidthGammaPotential}
  After $i$ primal steps, and $k$ width-reduction steps,
  provided $p^p \alpha^p \tau \leq p \alpha m_1^{\frac{p-1}{p}}$,
  the potential $\Phi$ is bounded as follows:
  \begin{align*}
 \Phi\left(\ww^{(i,k)}\right) \leq \left(2\alpha i + m_1^{\nfrac{1}{p}} \right)^{p} \left(1+ \frac{2^{\frac{p}{p-2}}}{\rho^2 m_1^{2/p} \beta^{-\frac{2}{p-2}}} \right)^k.
 \end{align*}
\end{restatable}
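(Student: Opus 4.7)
The plan is to prove the bound by induction on the total step count $i+k$, handling a primal step and a width-reduction step as the two inductive cases. For the base case, $\Phi(\ww^{(0,0)}) = m_1$ because $\ww^{(0,0)} = \vone$, which matches the claimed bound at $i=k=0$. Throughout, we will repeatedly use that $\ww^{(i,k)}\ge\vone$ coordinatewise (weights only grow), so $\|\ww\|_p\ge m_1^{1/p}$.

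For a primal step $(i,k)\to(i+1,k)$, I would apply Lemma~\ref{lem:precondition} with the base point $\ww^{(i,k)}$ and the increment $\alpha|\NN\Delta|/\zeta^{1/p}$ to upper-bound $\Phi^{(i+1,k)}$ by $\Phi^{(i,k)}$ plus three correction terms: a linear term $p\alpha\zeta^{-1/p}\sum\ww^{p-1}|\NN\Delta|$, a weighted-quadratic term $2p^2\alpha^2\zeta^{-2/p}\sum\ww^{p-2}(\NN\Delta)^2$, and an $\ell_p^p$ term $p^p\alpha^p\zeta^{-1}\|\NN\Delta\|_p^p$. The $\ell_p^p$ term is handled via the primal-step guard $\|\NN\Delta\|_p^p\le\tau\zeta$ together with the lemma's hypothesis $p^p\alpha^p\tau\le p\alpha m_1^{(p-1)/p}\le p\alpha\|\ww\|_p^{p-1}$. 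For the quadratic term, the key identity is that $\sum \ww_e^{p-2}(\NN\Delta)_e^2$ equals (up to a factor of $3^{p-2}$) the second summand of the Oracle objective, so by optimality of $\Delta$ it is bounded by the Oracle objective at the feasible $\Dopt$; using $\Dopt^\top\MM^\top\MM\Dopt\le\zeta$, $\|\NN\Dopt\|_p^p\le\zeta$, and Hölder's inequality ($\|\rr\|_{p/(p-2)}=\|\ww\|_p^{p-2}$) this gives $\sum\ww^{p-2}(\NN\Delta)^2\lesssim 3^{p-2}\|\ww\|_p^{p-2}\zeta^{2/p}$. The linear term is then controlled by Cauchy–Schwarz, $\sum\ww^{p-1}|\NN\Delta|\le\|\ww\|_p^{p/2}(\sum\ww^{p-2}(\NN\Delta)^2)^{1/2}$, reusing the previous bound. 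Summing the three bounded corrections, we obtain $\Phi^{(i+1,k)}-\Phi^{(i,k)}\le 2p\alpha\|\ww\|_p^{p-1}$; by convexity of $x\mapsto x^p$ this implies $\|\ww^{(i+1,k)}\|_p\le\|\ww^{(i,k)}\|_p+2\alpha$, which closes the induction at this step.

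For a width-reduction step $(i,k)\to(i,k+1)$, only the coordinates $e\in S\defeq\{e:|\NN\Delta|_e\ge\rho\zeta^{1/p}\text{ and }\ww_e^{p-2}\le\beta\}$ change, and each $\ww_e^p$ is multiplied by $2^{p/(p-2)}$. Thus $\Phi^{(i,k+1)}-\Phi^{(i,k)}=(2^{p/(p-2)}-1)\sum_{e\in S}\ww_e^p$. For $e\in S$ the two defining inequalities give $\ww_e^p\le\beta^{2/(p-2)}\ww_e^{p-2}$ and $\ww_e^{p-2}\le(\rho^2\zeta^{2/p})^{-1}\ww_e^{p-2}(\NN\Delta)_e^2$, so summing and reusing the Oracle-optimality bound on $\sum\ww^{p-2}(\NN\Delta)^2$ yields $\sum_{e\in S}\ww_e^p\le O(\beta^{2/(p-2)}/\rho^2)\|\ww\|_p^{p-2}$. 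Dividing by $\Phi^{(i,k)}=\|\ww\|_p^p$ and using $\|\ww\|_p^2\ge m_1^{2/p}$, the multiplicative increase in $\Phi$ is at most $2^{p/(p-2)}/(\rho^2 m_1^{2/p}\beta^{-2/(p-2)})$, so $\Phi^{(i,k+1)}\le(1+C)\Phi^{(i,k)}$ for the claimed $C$, which again closes the induction.

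The main obstacle is bookkeeping the $p$- and $3^{p-2}$-dependent factors carefully enough that the primal-step analysis actually delivers the bound $\Phi^{(i+1,k)}-\Phi^{(i,k)}\le 2p\alpha\|\ww\|_p^{p-1}$ with the constant small enough to survive the convexity step; this is where the specific choices of $\alpha$, $\tau$, and the $3^{-(p-2)}$ weighting in the Oracle objective are critical, since they are precisely calibrated so that the linear and quadratic corrections each absorb into $p\alpha\|\ww\|_p^{p-1}$ while using only $\|\ww\|_p\ge m_1^{1/p}$. A secondary bookkeeping issue is verifying that the constant from the width-reduction bound actually telescopes across $k$ steps into the product $(1+C)^k$ without additional multiplicative slack creeping in between steps.
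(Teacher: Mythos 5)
Your proposal follows essentially the same route as the paper: the same induction over primal and width-reduction steps, the same application of Lemma~\ref{lem:precondition} to split the primal-step increase into linear/quadratic/$\ell_p^p$ corrections, the same use of Oracle optimality plus H\"older for the quadratic term, the same Cauchy--Schwarz bound on the linear term, the same use of the guard $\|\NN\Delta\|_p^p\le\tau\zeta$ together with the hypothesis $p^p\alpha^p\tau\le p\alpha m_1^{(p-1)/p}$, and the same width-reduction bookkeeping via $\rr_e\le\beta$ and $|\NN\Delta|_e\ge\rho\zeta^{1/p}$.

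One small slip worth flagging: after summing the three corrections, you assert $\Phi^{(i+1,k)}-\Phi^{(i,k)}\le 2p\alpha\|\ww\|_p^{p-1}$, silently dropping the quadratic correction $2p^2\alpha^2\|\ww\|_p^{p-2}$. The paper does \emph{not} make that simplification: it keeps all three correction terms (linear, quadratic, $\ell_p^p$, with the $\ell_p^p$ term converted to a second $p\alpha\Phi^{(p-1)/p}$ term) and bounds the full sum by $\left(\Phi^{1/p}+2\alpha\right)^p$. Showing that the quadratic piece is absorbed is exactly the subtle part; convexity of $x\mapsto x^p$ only delivers $(\Phi^{1/p}+2\alpha)^p\ge\Phi+2p\alpha\Phi^{(p-1)/p}$, so if the quadratic term is present you need the higher-order binomial terms (or an explicit numeric bound on $p\alpha/\Phi^{1/p}$ using the parameter settings) to close the gap. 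You do call this out as the main bookkeeping obstacle, so this is a known unfinished piece rather than a conceptual error, but as written your intermediate display is strictly stronger than what the three corrections give you.

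Everything else, including the width-reduction computation $\Phi^{(i,k+1)}-\Phi^{(i,k)}=(2^{p/(p-2)}-1)\sum_{e\in S}\ww_e^p$ and the bound $\sum_{e\in S}\ww_e^p\le\beta^{2/(p-2)}\rho^{-2}\|\ww\|_p^{p-2}$ followed by $\Phi\ge m_1$, matches the paper's argument (the paper uses the looser $\Phi+2^{p/(p-2)}\sum_{e\in S}\ww_e^p$ without the $-1$, which makes no difference).
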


\begin{restatable}{lemma}{ReduceWidthElectricalPotential}
\label{lem:ReduceWidthElectricalPotential}
After $i$ primal steps and $k$ width reduction steps, if,
\begin{enumerate}
\item \label{eq:parametersEnsuringElectricalPotentialGrowth1} $\tau^{2/p}\zeta^{2/p} \geq 4 \cdot 3^{p-2} \frac{\Psi(\rr)}{\beta}$, and
\item \label{eq:parametersEnsuringElectricalPotentialGrowth2} $\tau \zeta^{2/p}\geq 2 \cdot 3^{p-2}\Psi(\rr) \rho^{p-2}$,
\end{enumerate}
then,
\[
 {\Psi\left({\rr^{(i,k+1)}}\right)} \geq {\Psi\left({\rr^{(0,0)}}\right)} + \frac{k}{4} \cdot \tau^{2/p} \zeta^{2/p}.
\]
\end{restatable}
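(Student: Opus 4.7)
}
Since each primal step does not change $\rr$, it suffices to show that a single width reduction step increases $\Psi$ by at least $\tfrac{1}{4}\tau^{2/p}\zeta^{2/p}$; the lemma then follows by induction on $k$. Fix a width reduction step. Let $\Delta$ be the vector returned by \textsc{Oracle} at weights $\ww = \ww^{(i,k)}$ (so $E_{\rr}(\Delta) = \Psi(\rr)$, where $E_\rr(\cdot)$ denotes the objective whose minimum defines $\Psi$), and let $S = \{e : |\NN\Delta|_e \geq \rho\zeta^{1/p},\ \rr_e \leq \beta\}$ be the set of coordinates whose resistance is doubled. So $\rr'_e = 2\rr_e$ for $e \in S$ and $\rr'_e=\rr_e$ otherwise. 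The width reduction step is triggered by $\|\NN\Delta\|_p^p > \tau\zeta$.

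The first step is to show that a significant fraction of the $\ell_p$-mass of $\NN\Delta$ actually lies on coordinates in $S$. Coordinates outside $S$ are of two types. First, if $|\NN\Delta|_e < \rho\zeta^{1/p}$, the total contribution is at most $m_1 \rho^p\zeta$. Second, if $\rr_e > \beta$ then $|\NN\Delta|_e^2 \leq \tfrac{3^{p-2}\Psi(\rr)}{\beta}$ uniformly on those coordinates (since $\rr_e |\NN\Delta|_e^2 \leq 3^{p-2}\Psi(\rr)$), so $|\NN\Delta|_e^{p-2}\leq (3^{p-2}\Psi(\rr)/\beta)^{(p-2)/2}$ and, using $\sum_e |\NN\Delta|_e^2 \leq 3^{p-2}\Psi(\rr)$, their contribution is at most $(3^{p-2}\Psi(\rr)/\beta)^{(p-2)/2}\cdot 3^{p-2}\Psi(\rr)$. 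The two conditions in the lemma statement are tailored exactly so that both of these ``wasted'' masses are bounded by $\tfrac{1}{4}\tau\zeta$, which yields $\sum_{e\in S}|\NN\Delta|_e^p \geq \tfrac{1}{2}\tau\zeta$.

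Next, I use the variational characterization of $\Psi$ to convert this into a growth bound. Writing $\Delta'^{*}$ for the minimizer at the new resistances $\rr'$, we have the identity
\[
\Psi(\rr') = E_{\rr}(\Delta'^{*}) + \tfrac{1}{3^{p-2}}\sum_{e\in S}\rr_e(\NN\Delta'^{*})_e^2 \;\geq\; \Psi(\rr) + \tfrac{1}{3^{p-2}}\sum_{e\in S}\rr_e(\NN\Delta'^{*})_e^2,
\]
so it remains to lower bound $\sum_{e\in S}\rr_e (\NN\Delta'^{*})_e^2$. The natural approach is to relate $(\NN\Delta'^*)_e$ to $(\NN\Delta)_e$ for $e \in S$ (where the latter is already controlled), by splitting into two cases: either $\Psi(\rr')$ is already at least $\Psi(\rr)+\tfrac{1}{4}\tau^{2/p}\zeta^{2/p}$ and we are done, or $\Psi(\rr')\leq 2\Psi(\rr)$, in which case $E_\rr(\Delta'^{*})$ is comparable to $\Psi(\rr)$ and I combine the inequality $(\NN\Delta'^{*})_e^2 \geq \tfrac{1}{2}(\NN\Delta)_e^2 - (\NN(\Delta-\Delta'^{*}))_e^2$ with a Hölder-type conversion of the $\ell_p$-mass bound $\sum_{e\in S}|\NN\Delta|_e^p \geq \tfrac{1}{2}\tau\zeta$ into a lower bound on $\sum_{e\in S}\rr_e (\NN\Delta)_e^2$ using $\rr_e \leq \beta$ and $|\NN\Delta|_e \geq \rho\zeta^{1/p}$.

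The main obstacle is the last step: arguing that the new optimum $\Delta'^{*}$ cannot route all current away from $S$ without paying an energy cost that itself already exceeds $\tfrac{1}{4}\tau^{2/p}\zeta^{2/p}$. This is precisely what forces the two threshold conditions to enter the analysis and is the delicate balance the parameters $\rho, \beta, \tau$ were chosen to satisfy; once this is set up, the claimed per-step increase follows and induction over $k$ completes the proof.
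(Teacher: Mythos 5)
Your outline diverges from the paper's argument in two places, and both introduce real gaps.

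First, the partition. You try to show an $\ell_p$-mass lower bound $\sum_{e\in S}|\NN\Delta|_e^p\ge \tfrac12\tau\zeta$ on the width-reduced set by separately bounding the $\ell_p$-mass on small-flow coordinates and on high-resistance coordinates. The trouble is with the high-resistance part. Using $\rr_e>\beta$ you get a pointwise bound $|\NN\Delta|_e^{p-2}\le(3^{p-2}\Psi/\beta)^{(p-2)/2}$ and then the estimate $\sum_{\rr_e>\beta}|\NN\Delta|_e^p\le(3^{p-2}\Psi/\beta)^{(p-2)/2}\cdot3^{p-2}\Psi$. Plugging in Condition~1 ($3^{p-2}\Psi/\beta\le\tau^{2/p}\zeta^{2/p}/4$) bounds the first factor but leaves a leftover $3^{p-2}\Psi$, for which the only available estimate is $3^{p-2}\Psi\le\beta\,\tau^{2/p}\zeta^{2/p}/4$; the result picks up a factor of $\beta\gg 1$ and is nowhere near $\tfrac14\tau\zeta$. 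Condition~1 was calibrated to control the $\ell_2$-energy on those coordinates, namely $\sum_{\rr_e>\beta}(\NN\Delta)_e^2\le\beta^{-1}3^{p-2}\Psi$, not the $\ell_p$-mass. The paper's route is different: first show the $\ell_p$-mass on the union (large flow, any resistance) is at least $\tfrac12\tau\zeta$ using Condition~2 alone, convert to $\ell_2$ via $\|\cdot\|_2\ge\|\cdot\|_p$ to get $\sum_{H\cup B}(\NN\Delta)_e^2\ge\tfrac12\tau^{2/p}\zeta^{2/p}$, and only then subtract the high-resistance part in $\ell_2$, using Condition~1.

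Second, and more seriously, your final step is stated as an open obstacle rather than resolved, and it is the crux. You lower-bound $\Psi(\rr')-\Psi(\rr)$ by the energy of the \emph{new} minimizer on $S$ and then need to argue the new minimizer cannot shift its flow off $S$; you sketch a perturbation argument and say the parameter choices make it go through, but you never close it, and in fact the parameter conditions do not encode anything about the new minimizer. The paper avoids this entirely via a monotonicity lemma (Lemma~\ref{lem:ckmst:res-increase}, proved in Appendix~\ref{sec:r-inc} via Sherman--Morrison--Woodbury) which lower-bounds $\Psi(\rr')-\Psi(\rr)$ directly by $\sum_e(1-\rr_e/\rr'_e)\,\rr_e(\NN\Delta)_e^2$, i.e.\ by the energy of the \emph{old} minimizer on the updated coordinates. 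Since $\rr'_e=2\rr_e$ on $S$, the per-step gain is at least $\tfrac12\sum_{e\in S}\rr_e(\NN\Delta)_e^2$, and the step-1 lower bound on that quantity finishes the argument without ever touching the new minimizer. Without some substitute for this lemma, your proof does not close.

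One smaller remark: primal steps \emph{do} change $\rr$ (the weights increase, and $\rr=\ww^{p-2}$), so ``a primal step does not change $\rr$'' is incorrect as stated; the valid observation is that $\rr$ only increases, so $\Psi$ is non-decreasing across primal steps, which is what the paper invokes.
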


The next lemma gives a lower bound on the energy in the beginning and an upper bound on the energy at each step.

\begin{restatable}{lemma}{ElectricalPotentialStartFinishBounds}
  \label{lem:ElectricalPotentialStartFinishBounds}
Let $i$ denote the number of primal steps and $k$ the number of width reduction steps. For any $i,k \geq 0$, we have,
\[
\Psi\left({\rr^{(i,k)}}\right) \le   \zeta^{2/p}\left(m_1^{\frac{p-2}{p}} + \frac{1}{3^{p-2}}\Phi(i,k)^{\frac{p-2}{p}}\right).
\]

\end{restatable}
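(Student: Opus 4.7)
The bound is an upper bound on a minimum, so the plan is to produce an explicit feasible test direction and plug it in. The natural choice is $\Delta^{\star}$, the optimum of Problem~\eqref{eq:Un-ScaledProblem}. By hypothesis this satisfies $\AA \Delta^{\star} = \cc$, ${\Delta^{\star}}^{\top}\MM^{\top}\MM \Delta^{\star} \leq \zeta$, and $\|\NN\Delta^{\star}\|_p^p \leq \zeta$, so it is feasible in the definition of $\Psi$ and thus
\[
\Psi(\rr) \;\leq\; m_1^{\frac{p-2}{p}} {\Delta^{\star}}^{\top}\Mtil^{\top}\Mtil \Delta^{\star} \;+\; \tfrac{1}{3^{p-2}} \sum_e \rr_e (\NN \Delta^{\star})_e^2.
\]

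First I would handle the quadratic term. Since $\Mtil = \zeta^{-\frac{p-2}{2p}} \MM$, we have $\Mtil^{\top}\Mtil = \zeta^{-\frac{p-2}{p}} \MM^{\top}\MM$, so plugging in the bound ${\Delta^{\star}}^{\top}\MM^{\top}\MM \Delta^{\star} \leq \zeta$ gives
\[
m_1^{\frac{p-2}{p}} {\Delta^{\star}}^{\top}\Mtil^{\top}\Mtil \Delta^{\star} \;\leq\; m_1^{\frac{p-2}{p}} \, \zeta^{-\frac{p-2}{p}} \cdot \zeta \;=\; m_1^{\frac{p-2}{p}} \, \zeta^{2/p},
\]
which matches the first term on the right-hand side of the claimed bound.

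For the resistance-weighted term I would apply H\"older's inequality with conjugate exponents $\tfrac{p}{p-2}$ and $\tfrac{p}{2}$ (the one spot requiring a little care):
\[
\sum_e \rr_e (\NN \Delta^{\star})_e^2 \;\leq\; \Bigl(\sum_e \rr_e^{\,p/(p-2)}\Bigr)^{(p-2)/p} \Bigl(\sum_e (\NN\Delta^{\star})_e^p\Bigr)^{2/p}.
\]
Since $\rr_e = (\ww_e^{(i,k)})^{p-2}$, we have $\rr_e^{\,p/(p-2)} = (\ww_e^{(i,k)})^p$, so the first factor equals $\Phi(i,k)^{(p-2)/p}$; the second factor is at most $\zeta^{2/p}$ by $\|\NN\Delta^{\star}\|_p^p \leq \zeta$. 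Multiplying by $1/3^{p-2}$ and summing the two contributions yields exactly
\[
\Psi(\rr^{(i,k)}) \;\leq\; \zeta^{2/p}\!\left(m_1^{\frac{p-2}{p}} + \tfrac{1}{3^{p-2}} \Phi(i,k)^{\frac{p-2}{p}}\right),
\]
as required. There is no real obstacle here; the only step that is not a pure substitution is choosing the H\"older exponents so that the $\rr_e$-factor collapses to $\Phi = \|\ww\|_p^p$, and this is forced by $\rr_e = \ww_e^{p-2}$.
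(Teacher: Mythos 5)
Your proof is correct and takes essentially the same approach as the paper: both arguments upper-bound $\Psi$ by evaluating its objective at the feasible test point $\Dopt$, use $\Mtil^{\top}\Mtil = \zeta^{-(p-2)/p}\MM^{\top}\MM$ with ${\Dopt}^{\top}\MM^{\top}\MM\Dopt \le \zeta$ for the first term, and apply H\"older with exponents $\tfrac{p}{p-2},\tfrac{p}{2}$ for the resistance-weighted term. The only cosmetic difference is that the paper routes the H\"older step through Lemma~\ref{lem:Oracle} while you rederive it inline.
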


\subsubsection{Proof of Theorem \ref{cor:ResidualDecision}}\label{sec:ProofMWUFast}
\begin{proof}
Let $\frac{\xx}{T}$ be the solution returned by Algorithm \ref{alg:FasterOracleAlgorithm}. We first note that this satisfies the linear constraint required. We next bound the objective value at $\frac{\xx}{T}$, i.e., $\frac{1}{T^2}\xx^{\top}\MM^{\top}\MM\xx$ and $\frac{1}{T^p}\|\NN\xx\|_p^p$.

Suppose the algorithm terminates in $T = \alpha^{-1} m_1^{1/p}$ primal steps and $K \leq 2^{-\frac{p}{p-2}} \rho^2 m_1^{2/p} \beta^{-\frac{2}{p-2}}$ width reduction steps. We next note that our parameter values $\alpha$ and $\tau$ are such that $p^p \alpha^p \tau \leq p \alpha m_1^{\frac{p-1}{p}}$. We can now apply Lemma \ref{lem:ReduceWidthGammaPotential} to get,
\[
\Phi\left(\ww^{(T,K)}\right) \le 3^p m_1 e^{1} = e \cdot 3^p m_1
\]
We next observe from the weight and $\xx$ update steps in our algorithm that, $ \zeta^{1/p}m_1^{-1/p} \ww^{(T,K)} \geq |\NN\xx|$. Thus,
\begin{align*}
\frac{1}{T}\|\NN\xx\|_p^p \leq \frac{\zeta}{m_1} \norm{\ww^{(T,K)}}_p^p  =\frac{\zeta}{m_1} \Phi\left(\ww^{(T,K)}\right) \leq e\cdot 3^p\zeta.
\end{align*}
We next bound the quadratic term. Let $\Dtil^{(t)}$ denote the solution returned by the oracle in iteration $t$. Since $\Phi \leq e\cdot 3^p m_1$ for all iterations, we always have from Lemma \ref{lem:ElectricalPotentialStartFinishBounds} that, $\Psi(\rr) \leq 4 m_1^{\frac{p-2}{p}}\zeta^{2/p}$. We will first bound $\left(\Dtil^{(t)}\right)^{\top}\MM^{\top}\MM\Dtil^{(t)}$ for every $t$.
\[
\left(\Dtil^{(t)}\right)^{\top}\MM^{\top}\MM\Dtil^{(t)} = \zeta^{\frac{p-2}{p}}\left(\Dtil^{(t)}\right)^{\top}\Mtil^{\top}\Mtil\Dtil^{(t)} \leq \zeta^{\frac{p-2}{p}} m_1^{-\frac{p-2}{p}} \Psi(\rr) \leq 4\zeta.
\]
Now from convexity of $\|\xx\|_2^2$, we get
\begin{align*}
\norm{\MM\frac{\xx}{T}}_2^2 \leq \frac{1}{T^2} \cdot T \sum_t \|\MM\Dtil^{(t)}\|_2^2 \leq  4\zeta.
\end{align*}
We have shown that if the number of width reduction steps is bounded by $K$ then our algorithm returns the required solution. We will next prove that we cannot have more than $K$ width reduction steps.

Suppose to the contrary, the algorithm takes a width reduction step starting from step $(i,k)$ where $i < T$ and  $k = 2^{-\frac{p}{p-2}} \rho^2 m_1^{2/p}\beta^{-\frac{2}{p-2}}$. Since the conditions for Lemma~\ref{lem:ReduceWidthGammaPotential} hold for all preceding steps, we must have $\Phi\left(\ww^{(i,k)}\right) \leq e\cdot 3^p m_1$ which combined with Lemma \ref{lem:ElectricalPotentialStartFinishBounds} implies $\Psi \leq 4 m_1^{\frac{p-2}{p}} \zeta^{2/p}$. Using this bound on $\Psi$, we note that our parameter values satisfy the conditions of Lemma \ref{lem:ReduceWidthElectricalPotential}. From lemma \ref{lem:ReduceWidthElectricalPotential},
\[
 {\Psi\left({\rr^{(i,k+1)}}\right)} \geq {\Psi\left({\rr^{(0,0)}}\right)}  +  \frac{1}{4}\tau^{2/p}\zeta^{2/p}k.
\]
Since our parameter choices ensure $\tau^{2/p}k > \frac{1}{4}m_1$,
\[
 {\Psi\left({\rr^{(i,k+1)}}\right)} - {\Psi\left({\rr^{(0,0)}}\right)}   >  \frac{m_1}{16}\zeta^{2/p}.
\]
Since $\Phi\left(\ww^{(i,k)}\right) \leq O(3^p) m_1$ and $\Psi \geq 0$, from Lemma \ref{lem:ElectricalPotentialStartFinishBounds},
\[
\Psi\left({\rr^{(i,k+1)}}\right) - \Psi\left({\rr^{(0,0)}}\right)
\leq 4 m_1^{\frac{p-2}{p}} \zeta^{2/p},
\]
which is a contradiction. We can thus conclude that we can never have more than $K = 2^{\frac{-p}{p-2}}\rho^2 m_1^{2/p} \beta^{-\frac{2}{p-2}}$ width reduction steps, thus concluding the correctness of the returned solution. We next bound the number of oracle calls required. The total number of iterations is at most,
\[
T + K \leq \alpha^{-1}m_1^{1/p} +  2^{-p/(p-2)}\rho^2 m_1^{2/p} \beta^{-\frac{2}{p-2}} \leq O \left(p m_1^{\frac{p-2}{3p-2}}\right).
\]
\end{proof}


\subsubsection{Proof of Lemma \ref{lem:ReduceWidthGammaPotential}}

We first prove a simple lemma about the solution $\Dtil$ returned by the oracle, that we will use in our proof.

\begin{restatable}{lemma}{Oracle}
\label{lem:Oracle}
Let $p\geq 2$. For any $\ww$, let $\Dtil$ be the solution returned by Algorithm \ref{alg:oracle}. Then,
\[
\sum_e (\NN\Dtil)_e^2 \leq \sum_e \rr_e(\NN\Dtil)_e^2 \leq  \zeta^{\frac{2}{p}} \norm{\ww}^{p-2}
\]
\end{restatable}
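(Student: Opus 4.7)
\textbf{Proof plan for Lemma \ref{lem:Oracle}.}
The first inequality is free: the weights $\ww$ are initialized to $\vone$ in Algorithm \ref{alg:FasterOracleAlgorithm} and only ever increase (either through the primal step which adds a nonnegative quantity, or through the width reduction step which multiplies by $2^{1/(p-2)} \ge 1$). Hence $\rr_e = \ww_e^{p-2} \ge 1$ for every coordinate, so $\sum_e (\NN\Dtil)_e^2 \le \sum_e \rr_e (\NN\Dtil)_e^2$.

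For the second inequality, the plan is to compare $\Dtil$ against the optimum $\Dopt$ of Problem \eqref{eq:Un-ScaledProblem}, which is feasible for the oracle ($\AA\Dopt=\cc$). By optimality of $\Dtil$ for the oracle's quadratic objective,
\[
m_1^{\frac{p-2}{p}} \Dtil^\top \Mtil^\top \Mtil \Dtil + \frac{1}{3^{p-2}} \sum_e \rr_e (\NN\Dtil)_e^2 \;\le\; m_1^{\frac{p-2}{p}} \Dopt{}^\top \Mtil^\top \Mtil \Dopt + \frac{1}{3^{p-2}} \sum_e \rr_e (\NN\Dopt)_e^2.
\]
Dropping the nonnegative quadratic term on the left and multiplying through by $3^{p-2}$ reduces the task to bounding the two terms on the right.

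For the $\Mtil$ term, substitute $\Mtil = \zeta^{-(p-2)/(2p)} \MM$, so $\Mtil^\top \Mtil = \zeta^{-(p-2)/p} \MM^\top \MM$. Using the assumption $\Dopt{}^\top \MM^\top \MM \Dopt \le \zeta$ then gives $\Dopt{}^\top \Mtil^\top \Mtil \Dopt \le \zeta^{2/p}$, whence $m_1^{(p-2)/p} \Dopt{}^\top \Mtil^\top \Mtil \Dopt \le m_1^{(p-2)/p} \zeta^{2/p}$. For the weighted quadratic term, the main step is an application of H\"older's inequality with conjugate exponents $p/(p-2)$ and $p/2$:
\[
\sum_e \rr_e (\NN\Dopt)_e^2 \;\le\; \Bigl(\sum_e \rr_e^{p/(p-2)}\Bigr)^{(p-2)/p} \Bigl(\sum_e (\NN\Dopt)_e^p\Bigr)^{2/p} \;=\; \|\ww\|_p^{p-2}\, \|\NN\Dopt\|_p^2 \;\le\; \|\ww\|_p^{p-2}\, \zeta^{2/p},
\]
using $\rr_e^{p/(p-2)} = \ww_e^p$ and the optimality bound $\|\NN\Dopt\|_p^p \le \zeta$.

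Combining these two bounds yields
\[
\sum_e \rr_e (\NN\Dtil)_e^2 \;\le\; 3^{p-2} m_1^{(p-2)/p} \zeta^{2/p} + \|\ww\|_p^{p-2} \zeta^{2/p}.
\]
The final step is to absorb the first term into the second via the monotonicity observation $\|\ww\|_p^p \ge m_1$ (again from $\ww \ge \vone$), which implies $\|\ww\|_p^{p-2} \ge m_1^{(p-2)/p}$, giving the claimed bound up to a constant factor of $3^{p-2}$. The only subtle point is this last absorption of constants: the statement as written is a clean form, but the proof most naturally produces an extra factor of $O(3^{p-2})$ which is either implicit in the lemma's convention or harmlessly subsumed by the $3^p$-type constants appearing throughout the width-reduced analysis (compare Theorem \ref{cor:ResidualDecision}).
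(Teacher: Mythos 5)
Your proof follows the same overall route as the paper's: compare $\Dtil$ against the feasible point $\Dopt$ by optimality of the oracle, then apply H\"older with exponents $\bigl(\tfrac{p}{p-2},\tfrac{p}{2}\bigr)$ to bound $\sum_e \rr_e (\NN\Dopt)_e^2 \le \|\ww\|_p^{p-2}\,\|\NN\Dopt\|_p^2 \le \zeta^{2/p}\|\ww\|_p^{p-2}$, and use $\rr_e \ge 1$ for the cheap first inequality. All of that matches the paper's argument, and your H\"older computation is correct.

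The one place you diverge is that you explicitly carry the $m_1^{(p-2)/p}\,\Dopt^{\top}\Mtil^{\top}\Mtil\Dopt$ term through the optimality comparison, whereas the paper's own proof asserts outright that $\sum_e \rr_e (\NN\Dtil)_e^2 \le \sum_e \rr_e (\NN\Dopt)_e^2$. Strictly speaking that assertion only follows from the oracle's optimality when $\MM = 0$; the oracle minimizes $m_1^{(p-2)/p}\Delta^{\top}\Mtil^{\top}\Mtil\Delta + \tfrac{1}{3^{p-2}}\sum_e \rr_e (\NN\Delta)_e^2$, not the last term alone, so the comparison of the $\rr$-weighted terms in isolation requires dropping the $\Mtil$ contribution on the left and then paying for it on the right, exactly as you do. Your bookkeeping then yields $\sum_e \rr_e(\NN\Dtil)_e^2 \le \bigl(3^{p-2}+1\bigr)\zeta^{2/p}\|\ww\|_p^{p-2}$ after absorbing $m_1^{(p-2)/p}\le\|\ww\|_p^{p-2}$, which is a factor $O(3^{p-2})$ larger than the clean constant in the lemma statement. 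You are right that this constant is harmless where the lemma is used (the downstream bounds in Lemma~\ref{lem:ReduceWidthGammaPotential} and Theorem~\ref{cor:ResidualDecision} already carry $3^p$-type factors, and the parameters $\alpha,\tau,T$ would shift only by constants of the same order), but it is worth noting that the lemma as literally stated is not quite what your derivation produces. In short, your version is actually the more careful one: the paper's proof silently elides the $\Mtil$ term in the first inequality, and you have surfaced that elision correctly.
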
 

\begin{proof}
Since $\Dtil$ is the solution returned by Algorithm \ref{alg:oracle}, and $\Dopt$ satisfies the constraints of the oracle, we have,
\begin{align*}
  \sum_e \rr_e (\NN\Dtil)_e^2 \leq \sum_e \rr_e (\NN\Delta^*)_e^2 & = \sum_e \ww_e^{p-2} (\NN\Delta^*)_e^2 \leq   \zeta^{2/p }\norm{\ww}^{p-2}_p.
\end{align*}
In the last inequality we use,
\begin{align*}
\sum_e \ww_e (\NN\Dopt)_e^2 & \leq \left(\sum_e (\NN\Dopt)_e^{2\cdot \frac{p}{2}} \right)^{2/p}  \left(\sum_e \abs{\ww_e}^{(p-2)\cdot \frac{p}{p-2}}\right)^{(p-2)/p}\\
& = \|\NN\Dopt\|_p^2 \|\ww\|_p^{(p-2)/p} \\
& \leq \zeta^{2/p}\|\ww\|_p^{(p-2)/p}, \text{since $\norm{\NN\Delta^*}^p_p \leq \zeta$ }.
 \end{align*}
 Finally, using $\rr_e \ge  1,$ we have
$\sum_e (\NN \Delta)_e^{2} \le \sum_e \rr_e (\NN\Delta)_e^{2},$ concluding the proof.
\end{proof}

\ReduceWidthGammaPotential*

\begin{proof}
  We prove this claim by induction. Initially, $i = k = 0,$
  and $\Phi\left(\ww^{(0,0)}\right) = m_1,$ and thus, the claim holds trivially. Assume that the claim holds for some $i,k \ge 0.$
We will use $\Phi$ as an abbreviated notation for $\Phi\left(\ww^{(i,k)}\right)$ below.
\paragraph*{Primal Step.}
For brevity, we use $\ww$ to denote $\ww^{(i,k)}$. If the next step is a \emph{primal} step, 

\begin{align*}
\Phi\left(\ww^{( i+1,k)} \right) = &\norm{ \ww^{(i,k)} + \alpha \frac{\abs{\NN\Dtil}}{\zeta^{1/p}}}_p^p \\
\leq & \norm{\ww}_p^p + \zeta^{-1/p} \alpha \abs{(\NN\Dtil)}^{\top}\abs{\nabla \|\ww\|_p^p}  +  2p^2\alpha^2 \zeta^{-2/p} \sum_e |\ww_e|^{p-2} \abs{\NN\Dtil}_e^2 + \alpha^p p^p \zeta^{-1}\|\NN\Dtil\|_p^p\\
&   \quad \text{by Lemma \ref{lem:precondition}}
\end{align*}
We next bound $\abs{(\NN\Dtil)}^{\top} \abs{\nabla \|\ww\|_p^p}$ by $
\zeta^{1/p} p \norm{\ww}_p^{p-1}.$
Using Cauchy Schwarz's inequality,
\begin{align*}
 \left(\sum_e \abs{\NN\Dtil}_e \abs{\nabla_e \|\ww\|_p^p}\right)^2  = & p^2\left(\sum_e  \abs{\NN\Dtil}_e  |\ww_e|^{p-2}\abs{\ww_e}\right)^2\\
 \leq & p^2 \left(\sum_{e} |\ww_e|^{p-2} \ww_e^2\right) \left(\sum_e |\ww_e|^{p-2}(\NN\Dtil)_e^2\right) \\
  = & p^2 \|\ww\|_p^p\sum_e \rr_e (\NN\Dtil)_e^2\\
  \leq & p^2 \|\ww\|_p^{2p-2}\zeta^{2/p}, \text{ From Lemma \ref{lem:Oracle}.}
\end{align*}
We thus have,
\begin{align*}
 \sum_e \abs{\NN\Dtil}_e\abs{\nabla_e \|\ww\|_p^p} &\leq  p \|\ww\|_p^{p-1}\zeta^{1/p}.
\end{align*}
Using the above bound, we now have,
\begin{align*}
\Phi\left(\ww^{( i+1,k)} \right)\leq &  \norm{\ww}_p^p +  p\alpha \norm{\ww}_p^{p-1} +  2p^2 \alpha^2  \norm{\ww}_p^{p-2} + p^p \alpha^p  \|\NN\Dtil\|_p^p\\
\leq &  \norm{\ww}_p^p +  p\alpha \norm{\ww}_p^{p-1} +  2p^2 \alpha^2  \norm{\ww}_p^{p-2} + p \alpha m_1^{\frac{p-1}{p}},\\
      & \quad \text{(since $p^p \alpha^p \tau \leq p\alpha m_1^{\frac{p-1}{p}}$)}
\end{align*}
Recall $\|\ww\|_p^p = \Phi(\ww).$ Since $\Phi \geq m_1$, we have,
    
\[
\Phi\left(\ww^{( i+1,k)} \right) \leq  \Phi(\ww) +  p \alpha \Phi(\ww)^{\frac{p-1}{p}} +  2 p^2 \alpha^{2}  \Phi(\ww)^{\frac{p-2}{p}} +  p \alpha \Phi(\ww)^{\frac{p-1}{p}} \leq (\Phi(\ww)^{1/p} + 2  \alpha)^{p}.
\]
From the inductive assumption, we have
\begin{align*}
\Phi(\ww) & \le \left({2\alpha  i} + m_1^{\nfrac{1}{p}} \right)^{p}\left(1+ \frac{2^{\frac{p}{p-2}}}{\rho^2 m_1^{2/p} \beta^{-\frac{2}{p-2}}} \right)^k.
\end{align*}
Thus,
\[
\Phi(i+1,k)  \le (\Phi(\ww)^{1/p} + 2 \alpha )^{p} \le \left({2\alpha (i+1)} + m_1^{\nfrac{1}{p}}\right)^{p} \left(1+ \frac{2^{\frac{p}{p-2}}}{\rho^2 m_1^{2/p} \beta^{-\frac{2}{p-2}}} \right)^k
\]
proving the inductive claim.

 \paragraph*{Width Reduction Step.}
 Let $\Dtil$ be the solution returned by the oracle and $H$ denote the set of indices $j$ such that $|\NN\Dtil|_j \geq \rho \zeta^{1/p}$ and $\rr_j \leq \beta$, i.e., the set of indices on which the algorithm performs width reduction.
 We have the following:
\[
\sum_{j \in H} \rr_j \leq \rho^{-2}\zeta^{-2/p} \sum_{j \in H} \rr_e (\NN\Delta)_j^2 \leq\rho^{-2} \zeta^{-2/p} \sum_{j} \rr_j (\NN\Delta)_e^2 \leq \rho^{-2}  \|\ww\|_p^{p-2}\leq \rho^{-2}\Phi^{\frac{p-2}{p}},
\]
where we use Lemma \ref{lem:Oracle} for the second last inequality. Also,
\begin{align*}
\Phi(\ww^{(i,k+1)}) & \le \Phi + \sum_{j \in H} \abs{ \ww_j^{k+1}}^p \leq \Phi + 2^{\frac{p}{p-2}} \sum_{j \in H}  |\ww_j|^{p} \leq \Phi + 2^{\frac{p}{p-2}} \sum_j \rr_j^{\frac{p}{p-2}} \\
& \leq  \Phi + 2^{\frac{p}{p-2}} \left( \sum_{j \in H} \rr_j \right)\left( \max_{j \in H} \rr_j \right) ^{\frac{p}{p-2}-1}   \leq \Phi + 2^{\frac{p}{p-2}}  \rho^{-2}\Phi^{\frac{p-2}{p}}\beta^{\frac{2}{p-2}}.
\end{align*}
Again, since $\Phi(\ww) \geq m_1$,
 \[
 \Phi(\ww^{(i,k+1)})  \le \Phi \left( 1+ 2^{\frac{p}{p-2}} \rho^{-2} m_1^{-\frac{2}{p}}\beta^{\frac{2}{p-2}} \right) \le \left(2\alpha  i + m_1^{\nfrac{1}{p}}\right)^{p} \left(1+ \frac{2^{\frac{p}{p-2}}}{\rho^2 m_1^{2/p} \beta^{-\frac{2}{p-2}}} \right)^k
\]
    proving
    the inductive claim.
\end{proof}

\subsubsection{Proof of Lemma \ref{lem:ReduceWidthElectricalPotential}}

\ReduceWidthElectricalPotential*
\begin{proof}
It will be helpful for our analysis to split the index set into three
disjoint parts:
\begin{itemize}
\item $S =  \setof{e : \abs{\NN\Delta_e} \leq \rho \zeta^{1/p} }$
\item $H = \setof{e : \abs{\NN\Delta_e} > \rho \zeta^{1/p} \text{ and } \rr_e \leq \beta }$
\item $B = \setof{e : \abs{\NN\Delta_e} > \rho \zeta^{1/p} \text{ and } \rr_e > \beta }$.
\end{itemize}
Firstly, we note 
\begin{align*}
\sum_{e \in S} \abs{\NN\Delta}_e^{p} \leq \rho^{p-2} \zeta^{\frac{p-2}{p}} \sum_{e \in S} \abs{\NN\Delta}_e^{2} 
 \leq \rho^{p-2} \zeta^{\frac{p-2}{p}} \sum_{e \in S} \rr_e \abs{\NN\Delta}_e^{2}  \leq \rho^{p-2} \zeta^{\frac{p-2}{p}} 3^{p-2}\Psi(\rr).
\end{align*}
hence, using Assumption~\ref{eq:parametersEnsuringElectricalPotentialGrowth2}
\begin{align*}
  \sum_{e \in H \union B} \abs{\NN \Delta}_e^p
  \geq \sum_{e} \abs{\NN\Delta}_e^p - \sum_{e \in S}
  \abs{\NN\Delta}_e^{p}
  \geq \tau \zeta - \rho^{p-2}\zeta^{\frac{p-2}{p}} 3^{p-2} \Psi(\rr)
  \geq \frac{1}{2} \tau \zeta.
\end{align*}
This means, 
\[
\sum_{e \in H \union B}(\NN \Delta)_e^2 \geq  \left(\sum_{e \in H \union B} |\NN\Delta|_e^p\right)^{2/p} \geq \frac{ \tau^{2/p} \zeta^{2/p}}{2}.
\]
Secondly we note that, 
\[
\sum_{e \in B} (\NN\Delta)_e^2 \leq \beta^{-1} \sum_{e \in B} \rr_e (\NN\Delta)_e^2 \leq \beta^{-1} 3^{p-2} \Psi(\rr).
\] 
So then, using Assumption~\ref{eq:parametersEnsuringElectricalPotentialGrowth1},
\begin{align*}
\sum_{e \in H } (\NN\Delta)_e^2 = \sum_{e \in H \union B } (\NN\Delta)_e^2 - \sum_{e \in B}(\NN \Delta)_e^2  \geq 
\frac{ \tau^{2/p} \zeta^{2/p}}{2} -\beta^{-1} 3^{p-2}\Psi(\rr) \geq  \frac{\tau^{2/p}\zeta^{2/p}}{4}.
\end{align*}
As $\rr_e \geq 1$, this implies $\sum_{e \in H } \rr_e (\NN\Delta)_e^2\geq  \frac{\tau^{2/p} \zeta^{2/p}}{4}$. We note that in a width reduction step, the resistances change by a factor of 2. Thus, combining our last two observations, and applying Lemma~\ref{lem:ckmst:res-increase}, we get
\[
{\Psi\left({\rr^{(i,k+1)}}\right)} \geq {\Psi\left({\rr^{(i,k)}}\right)}+  \frac{1}{4} \tau^{2/p} \zeta^{2/p}.
\]
Finally, for the ``primal step'' case, we use the trivial bound from Lemma~\ref{lem:ckmst:res-increase}, ignoring the second term, 
\[
{\Psi\left({\rr^{(i,k+1)}}\right)} \geq {\Psi\left({\rr^{(i,k)}}\right)}. 
\]
\end{proof}

\subsubsection{Proof of Lemma \ref{lem:ElectricalPotentialStartFinishBounds}}
\ElectricalPotentialStartFinishBounds*
\begin{proof}

Lemma~\ref{lem:Oracle} implies that,
  \begin{align*}
{\Psi\left({\rr^{(i,k)}}\right) } & =\zeta^{-(p-2)/p}m_1^{\frac{p-2}{p}} \Dtil^{\top}\MM^{\top}\MM\Dtil + \frac{1}{3^{p-2}}\sum_e \rr_e (\NN\Dtil)_e^2 \\
& \le \zeta^{-(p-2)/p}m_1^{\frac{p-2}{p}} \Dopt^{\top}\MM^{\top}\MM\Dopt + \frac{1}{3^{p-2}}\sum_e \rr_e (\NN\Dopt)_e^2\\
& \le  \zeta^{2/p} m_1^{\frac{p-2}{p}} + \zeta^{2/p}\frac{1}{3^{p-2}} \norm{\ww}_p^{p-2} \\
& \le \zeta^{2/p}m_1^{\frac{p-2}{p}} + \zeta^{2/p}\frac{1}{3^{p-2}}\Phi(i,k)^{\frac{p-2}{p}}.
  \end{align*}
\end{proof}

\subsection{Complete Algorithm for \texorpdfstring{$\ell_p$}{TEXT}-Regression}

Recall our problem, \eqref{eq:Main},
\[
\min_{\AA\xx = \bb}\quad \ff(\xx) = \dd^{\top}\xx + \|\MM\xx\|_2^2 + \|\NN\xx\|_p^p.
\]
We will now use all the tools and algorithms described so far to give a complete algorithm for the above problem. We will assume we have a starting solution $\xx^{(0)}$ satisfying $\AA\xx^{(0)} =\bb$ and for purely $\ell_p$ objectives, we will use the homotopy analysis from Section \ref{sec:homotopy}.

Our overall algorithm reduces the problem to solving the residual problem (Definition \ref{def:residual}) approximately. In Sections \ref{sec:BinarySearch} and \ref{sec:AKPSOracle}, we give an algorithm to solve the residual problem by first doing a binary search on the linear term and then applying a multiplicative weight update routine to minimize these problems. We have the following result which follows from Lemma \ref{lem:binary} and Theorem \ref{cor:ResidualDecision}.
\begin{corollary}\label{cor:ResApprox}
Consider the residual problem at iteration $t$ of Algorithm \ref{alg:IterRef}. Algorithm \ref{alg:ResSol} using Algorithm \ref{alg:FasterOracleAlgorithm} as a subroutine finds a $O(1)$-approximate solution to the corresponding residual problem in $O\left(pm^{\frac{p-2}{3p-2}} \log p\right)$ calls to a linear system solver.
\end{corollary}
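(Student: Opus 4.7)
The plan is to combine the binary search reduction (Lemma~\ref{lem:binary}) with the width-reduced MWU guarantee (Theorem~\ref{cor:ResidualDecision}), and argue that at least one value of $\zeta$ tried by Algorithm~\ref{alg:ResSol} is ``correctly scaled'' to yield an $O(1)$-approximation after post-processing.

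First I would establish the range over which the binary search must sweep. From the invariant that $\ff(\xx^{(t)}) - \ff(\xx^\star) \in (\nu/2, \nu]$ (as maintained by Algorithm~\ref{alg:IterRef}) together with Lemma~\ref{lem:RelateResidualOpt}, the residual optimum satisfies $\res_p(\Dopt) \in (\nu/(32p), \nu]$. Hence in Algorithm~\ref{alg:ResSol}, which halves $\zeta$ starting from $\nu$ and terminates when $\zeta \leq \nu/(32p)$, there must be some iteration where $\zeta$ satisfies $\res_p(\Dopt) \in (\zeta/2, \zeta]$. The total number of binary-search iterations is $\log_2(32p) = O(\log p)$.

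Next, for that ``good'' $\zeta$, I would invoke Lemma~\ref{lem:binary} to conclude that the linearly-constrained problem \eqref{eq:BinarySearch} has optimum value at most $2\zeta$. Feeding this problem into the MWU solver, Theorem~\ref{cor:ResidualDecision} returns $\Dtil$ satisfying $\AA\Dtil = 0$, $\gg^{\top}\Dtil = \zeta/2$, and
\[
\Dtil^{\top}\RR\Dtil \leq O(1)\cdot 2\zeta \quad \text{and} \quad \|\NN\Dtil\|_p^p \leq O(3^p)\cdot 2\zeta,
\]
in $O(p m^{(p-2)/(3p-2)})$ linear system solves. Taking $a = O(3)$ so that both bounds are of the form $a^2\zeta$ and $a^p\zeta$ respectively (the key point is that $a^p = O(3^p)$ gives $a = O(1)$ independent of $p$, while the quadratic bound is trivially absorbed), the second part of Lemma~\ref{lem:binary} yields that $\Dtil/(5a^2)$ is an $O(a^2) = O(1)$-approximation to the residual problem. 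Since Algorithm~\ref{alg:ResSol} returns the $\Dtil_\zeta$ that maximizes progress in $\ff$ (equivalently, by Lemma~\ref{lem:RelateResidualOpt}, the one with the largest $\res_p$ value up to constants), its returned solution is at least as good as the one for this ``good'' $\zeta$, and is therefore an $O(1)$-approximation.

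Finally I would multiply the per-iteration cost by the number of binary-search steps to get a total of $O(\log p) \cdot O(p m^{(p-2)/(3p-2)}) = O(p m^{(p-2)/(3p-2)} \log p)$ linear system solves. The main subtle point I expect to have to handle carefully is ensuring the approximation factor is genuinely a universal constant and not something that degrades with $p$: this rests on the observation that the $3^p$ factor in the $\ell_p^p$ bound of Theorem~\ref{cor:ResidualDecision} becomes the constant $3$ after taking the $p$-th root in the parameter $a$ of Lemma~\ref{lem:binary}, so $100 a^2$ remains $O(1)$ uniformly in $p$.
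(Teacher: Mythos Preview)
Your proposal is correct and follows essentially the same approach as the paper's proof: identify a ``good'' $\zeta$ in the binary search range via Lemma~\ref{lem:RelateResidualOpt}, apply Theorem~\ref{cor:ResidualDecision} to bound the quadratic and $\ell_p$ terms of the MWU output, and then invoke Lemma~\ref{lem:binary} to get an $O(1)$-approximation, counting $O(\log p)$ binary-search rounds. Your explicit verification that $a=O(3)$ (so that $100a^2=O(1)$ uniformly in $p$) is in fact more careful than the paper, which simply asserts the $O(1)$ conclusion.
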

\begin{proof}
Let $\nu$ be such that $\ff(\xx^{(t)})-\ff(\xx^{\star}) \in (\nu/2,\nu]$. Refer to Lemma \ref{lem:invariant} to see that this is the case in which we use the solution of the residual problem. Now, from Lemma \ref{lem:RelateResidualOpt} we know that the optimum of the residual problem satisfies $\res_p(\Dopt) \in (\nu/32p,\nu]$. Since we vary $\zeta$ to take all such values in the range $(\nu/16p,\nu]$ for one such $\zeta$ we must have $\res_p(\Dopt) \in (\zeta/2,\zeta].$ For such a $\zeta$, consider problem \eqref{eq:BinarySearch}. Using Algorithm \ref{alg:FasterOracleAlgorithm} for this problem, from Theorem \ref{cor:ResidualDecision} we are guaranteed to find a solution $\Dtil$ such that $\Dtil^{\top}\RR\Dtil \leq O(1) \zeta$ and $\|\NN\Dtil\|_p^p \leq O(3^p)\zeta$. Now from Lemma \ref{lem:binary}, we note that $\Dtil$ is an $O(1)$-approximate solution to the residual problem. Since Algorithm \ref{alg:FasterOracleAlgorithm} requires $O\left(pm^{\frac{p-2}{3p-2}} \right)$ calls to a linear system solver, and Algorithm \ref{alg:ResSol} calls this algorithm $\log p$ times, we obtain the required runtime.
\end{proof}
We are now ready to prove our main result.
\begin{restatable}{theorem}{CompleteAlgo}\label{thm:CompleteAlgorithm}
Let $p \geq 2$, and $\kappa \geq 1$. Let the initial solution $\xx^{(0)}$ satisfying $\AA\xx^{(0)} = \bb$. Algorithm \ref{alg:IterRef} using Algorithm \ref{alg:ResSol} as a subroutine returns an $\epsilon$-approximate solution $\xx$ to \eqref{eq:Main} in at most $O\left(p^2 m^{\frac{p-2}{3p-2}}\log p \log \left(\frac{\ff(\xx^{(0)})-\ff(\xx^{\star})}{\epsilon }\right)\right)$ calls to a linear system solver.
\end{restatable}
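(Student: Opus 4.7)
The plan is to combine the iterative refinement guarantee of Theorem~\ref{thm:IterativeRefinement} with the residual-solver guarantee of Corollary~\ref{cor:ResApprox}, treating Algorithm~\ref{alg:ResSol} as the $\kappa$-approximate residual oracle required by Algorithm~\ref{alg:IterRef}.

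First, I would invoke Corollary~\ref{cor:ResApprox} to fix the approximation quality $\kappa$ of the residual solver. That corollary guarantees that, for any residual problem arising at an iterate $\xx^{(t)}$ of Algorithm~\ref{alg:IterRef}, Algorithm~\ref{alg:ResSol} (which binary-searches over $\zeta$ and invokes Algorithm~\ref{alg:FasterOracleAlgorithm} as a subroutine for each $\zeta$) returns a solution $\Dtil$ with $\res_p(\Dtil) \geq \frac{1}{\kappa}\res_p(\Dopt)$ for some absolute constant $\kappa = O(1)$. Thus Algorithm~\ref{alg:ResSol} is exactly the kind of $O(1)$-approximate residual solver required as input to Theorem~\ref{thm:IterativeRefinement}. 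The cost of one such call is $O(p\, m^{(p-2)/(3p-2)} \log p)$ linear system solves: Algorithm~\ref{alg:FasterOracleAlgorithm} costs $O(p\, m^{(p-2)/(3p-2)})$ solves by Theorem~\ref{cor:ResidualDecision}, and Algorithm~\ref{alg:ResSol} invokes it $O(\log p)$ times while sweeping $\zeta$ geometrically through the range $(\nu/(32p),\nu]$.

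Next, I would plug this oracle into Theorem~\ref{thm:IterativeRefinement} with $\kappa = O(1)$. That theorem directly gives that Algorithm~\ref{alg:IterRef} returns an $\epsilon$-approximate solution $\xx$ of Problem~\eqref{eq:Main} after
\[
O\!\left( p \cdot \kappa \cdot \log\!\frac{\ff(\xx^{(0)})-\ff(\xx^{\star})}{\epsilon} \right) \;=\; O\!\left( p \log\!\frac{\ff(\xx^{(0)})-\ff(\xx^{\star})}{\epsilon} \right)
\]
calls to the residual oracle. Multiplying the per-call cost of $O(p\, m^{(p-2)/(3p-2)} \log p)$ linear system solves by this outer iteration count yields the claimed bound of
\[
O\!\left(p^2\, m^{\frac{p-2}{3p-2}} \log p \; \log\!\frac{\ff(\xx^{(0)})-\ff(\xx^{\star})}{\epsilon}\right)
\]
linear system solves in total. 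Correctness of the returned $\xx$ (namely $\AA\xx=\bb$ and $\ff(\xx)\leq \ff(\xx^{\star})+\epsilon$) follows from Corollary~\ref{lem:end}, which is already subsumed by Theorem~\ref{thm:IterativeRefinement}.

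There is essentially no obstacle: the real work was done in Theorem~\ref{thm:IterativeRefinement} (showing $O(p\kappa \log(1/\epsilon))$ outer iterations suffice) and in Theorem~\ref{cor:ResidualDecision} together with Lemma~\ref{lem:binary} (showing each residual problem can be solved to a constant factor with $O(p\, m^{(p-2)/(3p-2)} \log p)$ solves). The only minor point worth stating explicitly in the write-up is that the residual problem targeted by Algorithm~\ref{alg:ResSol}, after the binary search, is precisely of the form~\eqref{eq:Un-ScaledProblem} with $\RR=\MM^{\top}\MM$ augmented by the linear constraint $\gg^{\top}\Delta = \zeta/2$, so that Theorem~\ref{cor:ResidualDecision} applies with $m = m_1$ (or more precisely $m=\max\{m_1,m_2\}$ to cover both the $\ell_2$ and $\ell_p$ parts of the objective), matching the $m$ used in the final bound.
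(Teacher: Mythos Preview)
Your proposal is correct and matches the paper's approach exactly: the paper's proof is the single line ``Follows from Theorem~\ref{thm:IterativeRefinement} and Corollary~\ref{cor:ResApprox},'' and you have simply spelled out that combination in the natural way (constant $\kappa$ from Corollary~\ref{cor:ResApprox}, $O(p\log(\cdot))$ outer iterations from Theorem~\ref{thm:IterativeRefinement}, multiply the costs).
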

\begin{proof}
Follows from Theorem \ref{thm:IterativeRefinement} and Corollary \ref{cor:ResApprox}.
\end{proof}

\subsection{Complete Algorithm for Pure \texorpdfstring{$\ell_p$}{TEXT} Objectives}

Consider the special case when our problem is only the $\ell_p$-norm, i.e., Problem \eqref{eq:lpObj},
\[
\min_{\AA\xx=\bb}\|\NN\xx\|_p^p.
\]
In Section \ref{sec:homotopy} we described how to find a good starting point for such problems. Combining this algorithm with our algorithm for solving the residual problem we can obtain a complete algorithm for finding a good starting point. Specifically, we prove the following result.
\begin{corollary}\label{cor:homotopyFull}
Algorithm \ref{alg:homotopy} using Algorithm \ref{alg:ResSol} returns $\xx^{(0)}$ such that $\AA\xx^{(0)} = \bb$ and $\|\NN\xx^{(0)}\|_p^p\leq O(m)\|\NN\xx^{\star}\|_p^p$ in $O\left(p^2  m^{\frac{p-2}{3p-2}}\log^2 p \log m\right)$ calls to a linear system solver.
\end{corollary}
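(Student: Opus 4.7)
The plan is to simply plug the residual-problem guarantees of Corollary~\ref{cor:ResApprox} into the abstract runtime bound already established for Algorithm~\ref{alg:homotopy} (the lemma stated just after the algorithm). That lemma gives total runtime at most
\[
O(p\log m)\sum_{k = 2^i,\; i=2}^{\lfloor \log p - 1\rfloor} \kappa_k\, T(k,\kappa_k),
\]
where $T(k,\kappa_k)$ is the number of linear system solves needed to produce a $\kappa_k$-approximate solution to the $\ell_k$-norm residual problem. So the only work left is to substitute and bound the sum.

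First, I would invoke Corollary~\ref{cor:ResApprox} with norm parameter $k$ in place of $p$: it yields $\kappa_k = O(1)$ together with
\[
T(k,\kappa_k) \;=\; O\!\left(k\, m^{\frac{k-2}{3k-2}} \log k\right).
\]
Thus $\kappa_k T(k,\kappa_k) = O\!\bigl(k\, m^{(k-2)/(3k-2)}\log k\bigr)$, and it remains to bound the sum of these terms over $k \in \{4, 8, \dots, 2^{\lfloor \log p - 1\rfloor}\}$.

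Next I would observe that the exponent $\frac{k-2}{3k-2} = \frac{1}{3} - \frac{4}{3(3k-2)}$ is monotonically increasing in $k$, and likewise $k\log k$ is monotone, so every summand is dominated by its value at the largest index, which is at most $p$. Hence each summand is at most $O\!\bigl(p\, m^{(p-2)/(3p-2)} \log p\bigr)$. Since the number of summands is $O(\log p)$, the sum is at most $O\!\bigl(p\, m^{(p-2)/(3p-2)} \log^2 p\bigr)$.

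Multiplying by the $O(p\log m)$ prefactor from the homotopy lemma yields a total cost of
\[
O\!\left(p^{2}\, m^{\frac{p-2}{3p-2}}\, \log^{2} p\, \log m\right)
\]
linear system solves, matching the claim. The feasibility $\AA\xx^{(0)} = \bb$ and the $O(m)$-approximation quality are inherited directly from the homotopy lemma, so no extra argument is needed. There is no real obstacle here; the only thing to double-check is the monotonicity of $(k-2)/(3k-2)$, which ensures the sum collapses to $O(\log p)$ times its last term rather than requiring a more delicate estimate.
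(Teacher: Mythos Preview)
Your proposal is correct and follows essentially the same approach as the paper: invoke the homotopy runtime lemma, plug in the $\kappa_k = O(1)$ and $T(k,\kappa_k) = O(k\,m^{(k-2)/(3k-2)}\log k)$ bounds from Corollary~\ref{cor:ResApprox}, bound each summand by the $k=p$ term (the paper does this implicitly, you justify it via monotonicity of $(k-2)/(3k-2)$), and multiply out. The only cosmetic difference is that you spell out the monotonicity step, which the paper skips.
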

\begin{proof}
From Lemma \ref{lem:homotopy} Algorithm \ref{alg:homotopy} finds such a solution in time $O\left(p \log m \right) \sum_{k = 2^i,i = 2}^{i = \lfloor \log p - 1\rfloor} \kappa_k T(k,\kappa_k)$, where $\kappa_k$ and $T(k,\kappa_k)$ denote the approximation and time to solve a $\ell_k$ norm problem. Now consider Algorithm \ref{alg:ResSol} with Algorithm \ref{alg:FasterOracleAlgorithm} as a subroutine. From Corollary \ref{cor:ResApprox}, we can solve any $\ell_k$-norm residual problem to a $O(1)$-approximation in $O\left(k  m^{\frac{k-2}{3k-2}}\log k\right)$ calls to a linear system solver. We thus have $\kappa_k = O(1)$ for all $k$ and $T(k,\kappa_k) = O\left(k  m^{\frac{k-2}{3k-2}}\log k\right) \leq O\left(p  m^{\frac{p-2}{3p-2}}\log p\right)$. Using these values, we obtain a runtime of,
\[
O\left(p \log m \right) \sum_{k = 2^i,i = 2}^{i = \lfloor \log p - 1\rfloor} \kappa_k T(k,\kappa_k) \leq O\left(p \log m \right) \cdot \log p \cdot O\left(p  m^{\frac{p-2}{3p-2}}\log p\right) \leq O\left(p^2  m^{\frac{p-2}{3p-2}}\log^2 p \log m\right).
\]
\end{proof}
The following theorem gives a complete runtime for pure $\ell_p$ objectives.

\begin{corollary}
Let $p \geq 2$, and $\kappa \geq 1$. Let $\xx^{(0)}$ be the solution returned by Algorithm \ref{alg:homotopy}. Algorithm \ref{alg:IterRef} using Algorithm \ref{alg:ResSol} as a subroutine returns $\xx$ such that $\AA\xx = \bb$ and $\|\NN\xx\|_p^p \leq (1 + \epsilon) \|\NN\xx^{\star}\|_p^p$, in at most $O\left(p^2 m^{\frac{p-2}{3p-2}}\log^2 p \log \left(\frac{m}{\epsilon }\right)\right)$ calls to a linear system solver.
\end{corollary}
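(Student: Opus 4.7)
The plan is to compose the starting-solution guarantee from Corollary~\ref{cor:homotopyFull} with the high-accuracy convergence guarantee from Theorem~\ref{thm:CompleteAlgorithm}, specialized to the pure $\ell_p$ objective $\ff(\xx) = \|\NN\xx\|_p^p$ (so $\dd = 0$ and $\MM = 0$ in Problem~\eqref{eq:Main}).

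First, I would invoke Corollary~\ref{cor:homotopyFull} to obtain a feasible starting point $\xx^{(0)}$ with $\AA\xx^{(0)} = \bb$ and
\[
\ff(\xx^{(0)}) = \|\NN\xx^{(0)}\|_p^p \;\leq\; C \cdot m \cdot \|\NN\xx^{\star}\|_p^p = C \cdot m \cdot \ff(\xx^{\star})
\]
for some absolute constant $C$, at a cost of $O(p^2 m^{(p-2)/(3p-2)} \log^2 p \log m)$ linear system solves. In particular, $\ff(\xx^{(0)}) - \ff(\xx^{\star}) \leq (Cm - 1)\,\ff(\xx^{\star}) \leq O(m)\,\ff(\xx^{\star})$.

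Next, I would feed $\xx^{(0)}$ into Algorithm~\ref{alg:IterRef} with Algorithm~\ref{alg:ResSol} as the residual solver, and ask for an additive error of $\epsilon' \defeq \epsilon \cdot \ff(\xx^{\star})$. By Theorem~\ref{thm:CompleteAlgorithm}, this returns $\xx$ with $\AA\xx = \bb$ and $\ff(\xx) - \ff(\xx^{\star}) \leq \epsilon' = \epsilon \|\NN\xx^{\star}\|_p^p$, which rearranges to the desired relative bound $\|\NN\xx\|_p^p \leq (1+\epsilon)\|\NN\xx^{\star}\|_p^p$. The number of linear system solves is
\[
O\!\left(p^2 m^{\frac{p-2}{3p-2}} \log p \, \log\!\frac{\ff(\xx^{(0)}) - \ff(\xx^{\star})}{\epsilon'}\right)
\;\leq\; O\!\left(p^2 m^{\frac{p-2}{3p-2}} \log p \, \log\!\frac{O(m)\,\ff(\xx^{\star})}{\epsilon\,\ff(\xx^{\star})}\right)
\;=\; O\!\left(p^2 m^{\frac{p-2}{3p-2}} \log p \, \log\!\frac{m}{\epsilon}\right).
\]
A subtlety: one must ensure $\ff(\xx^{\star}) > 0$ so that $\epsilon'$ is well-defined; if $\ff(\xx^{\star}) = 0$ then any optimum is a linear-system solution and the result is trivial, and otherwise the ratio above is a genuine logarithm.

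Finally, I would add the two costs. The homotopy phase contributes $O(p^2 m^{(p-2)/(3p-2)} \log^2 p \log m)$ and the refinement phase contributes $O(p^2 m^{(p-2)/(3p-2)} \log p \log(m/\epsilon))$; both are dominated by $O(p^2 m^{(p-2)/(3p-2)} \log^2 p \log(m/\epsilon))$ since $\log(m/\epsilon) \geq \log m$ and $\log^2 p \geq \log p$. This yields the claimed bound. There is no real obstacle here; this corollary is essentially a bookkeeping statement that glues together the starting-solution analysis of Section~\ref{sec:homotopy} with the iterative refinement analysis of Theorem~\ref{thm:CompleteAlgorithm}, and the only care needed is in converting the $O(m)$-relative guarantee on the starting point into the correct logarithmic factor after setting $\epsilon' = \epsilon\,\ff(\xx^{\star})$.
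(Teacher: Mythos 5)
Your proof is correct and follows essentially the same approach as the paper, which simply cites Corollary~\ref{cor:homotopyFull} and Theorem~\ref{thm:CompleteAlgorithm}. You have usefully spelled out the one detail the paper leaves implicit — the conversion of the additive $\epsilon$-guarantee from Theorem~\ref{thm:CompleteAlgorithm} into the stated multiplicative $(1+\epsilon)$-guarantee by choosing $\epsilon' = \epsilon\,\ff(\xx^\star)$ and then absorbing the $O(m)$ factor from the starting solution into the $\log(m/\epsilon)$ term — but the underlying route is identical.
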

\begin{proof}
Follows directly from Corollary \ref{cor:homotopyFull} and Theorem \ref{thm:CompleteAlgorithm}.
\end{proof}



\section{Solving \texorpdfstring{$p$}{TEXT}-norm Problems using \texorpdfstring{$q$}{TEXT}-norm Oracles}
\label{sec:p2q}

In this section, we propose a new technique that allows us to solve $\ell_p$-norm residual problems by instead solving an $\ell_q$-norm residual problem without adding much to the runtime. Such a technique is unknown for pure $\ell_p$ objectives without a large overhead in the runtime. As a consequence we also obtain an algorithm for $\ell_p$-regression with a linear runtime dependence on $p$ instead of the $p^2$ dependence in the algorithms from previous sections. The $p^2$ dependence in algorithms had one $p$ factor resulting from solving the $p$-norm residual problem. At a high level, we show that it is sufficient to solve a $\log m$-norm residual problem when $p$ is large, thus replacing a $p$-factor with $\log m$.  
We prove the following results which are based on the proofs and results of \textcite{AdilS20}.

\begin{restatable}{theorem}{CompleteLp}\label{thm:CompleteLp}
Let $\epsilon >0 $, $2 \leq p\leq poly(m)$ and consider an instance of Problem \eqref{eq:Main}, 
\[
\min_{\AA\xx = \bb} \quad \ff(\xx) = \dd^{\top}\xx + \|\MM\xx\|_2^2 + \|\NN\xx\|_p^p. 
\]
Algorithm \ref{alg:CompleteLp} finds an $\epsilon$-approximate solution to \eqref{eq:Main} in $O\left(p m^{\frac{p-2}{3p-2}}\log p \log m\log \frac{\ff(\xx^{(0)})-\ff(\xx^{\star})}{\epsilon}\right)$ calls to a linear system solver.
\end{restatable}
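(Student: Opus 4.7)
The plan is to build on the iterative refinement scheme of Theorem \ref{thm:IterativeRefinement} together with the reduction of Theorem \ref{thm:p2q} (solving $p$-norm residuals via $q$-norm residuals) to eliminate one factor of $p$ from the overall cost. Recall that Theorem \ref{thm:CompleteAlgorithm} already gives the bound $O(p^2 m^{(p-2)/(3p-2)} \log p \log(\Delta_0 / \eps))$, where $\Delta_0 = \ff(\xx^{(0)}) - \ff(\xx^{\star})$; one factor of $p$ comes from the number of iterative refinement rounds (good), the other from the cost of a constant-approximate residual solver (what we must reduce).

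The main idea is a case split at $p = \log m$. When $p \leq \log m$, I can simply invoke Theorem \ref{thm:CompleteAlgorithm} as-is: the $p^2$ factor is bounded by $p \cdot \log m$, so the stated bound $O(p m^{(p-2)/(3p-2)} \log p \log m \log(\Delta_0/\eps))$ is immediate. When $p > \log m$, Algorithm \ref{alg:CompleteLp} should solve each residual subproblem approximately by reduction to a $q$-norm residual problem for $q = \log m$. By Theorem \ref{thm:p2q}, a constant-factor solution to the $q$-norm residual translates to an $O(m^{\abs{1/p - 1/q}}) = O(m^{1/\log m}) = O(1)$-factor solution to the original $p$-norm residual. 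Solving the auxiliary $q$-norm residual with Algorithm \ref{alg:ResSol} (via Corollary \ref{cor:ResApprox}) costs $O(q m^{(q-2)/(3q-2)} \log q) = O(\log m \cdot m^{1/3} \log\log m)$ linear system solves, and because $p > \log m$ we have $m^{(p-2)/(3p-2)} = \Theta(m^{1/3})$ and $\log\log m \leq \log p$, so this is at most $O(m^{(p-2)/(3p-2)} \log p \log m)$. Plugging $\kappa = O(1)$ into Theorem \ref{thm:IterativeRefinement} multiplies this by $O(p \log(\Delta_0/\eps))$, giving the claim.

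Concretely I would write Algorithm \ref{alg:CompleteLp} as: run \textsc{Main-Solver} of Algorithm \ref{alg:IterRef} on the $p$-norm problem, but replace the call to \textsc{ResidualSolver} by the above conditional procedure --- direct call if $p \leq \log m$, otherwise apply the $p$-to-$q$ reduction of Theorem \ref{thm:p2q} with $q = \log m$ and call the $q$-norm \textsc{ResidualSolver}. Correctness then follows verbatim from Theorem \ref{thm:IterativeRefinement} using $\kappa = O(1)$, and the runtime from the two bounds above.

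The main obstacle is making sure the reduction of Theorem \ref{thm:p2q} is applied correctly inside iterative refinement: the iterative refinement bookkeeping (the parameter $\nu$, the binary search on the linear term from Lemma \ref{lem:binary}, and the check $\res_p(\widetilde{\Delta}) \geq \nu/(32 p \kappa)$) all refer to the original $p$-norm residual, whereas we are actually solving a $q$-norm residual and translating back. I need to verify that the $O(1)$-approximation to the $p$-norm residual obtained from the translation is strong enough to satisfy Lemma \ref{lem:invariant}'s hypothesis with $\kappa = O(1)$, and that the $\log p$ overhead from the binary search in Algorithm \ref{alg:ResSol} (whether applied to the $p$-norm or $q$-norm residual) is already accounted for in the $\log p$ factor of the stated bound; these are routine but need to be stated carefully.
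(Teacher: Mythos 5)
Your proposal takes essentially the same approach as the paper's proof: a case split at $p = \log m$, invoking Theorem~\ref{thm:CompleteAlgorithm} directly for small $p$ (with $p^2 \leq p\log m$), and for large $p$ reducing the $p$-norm residual to a $\log m$-norm residual via Theorem~\ref{thm:p2q} to obtain an $O(1)$-approximation, then combining with Theorem~\ref{thm:IterativeRefinement}. One minor imprecision to note (which you implicitly paper over): the binary-search loop over $\zeta$ in the residual solver still sweeps the range $(\nu/32p,\nu]$ determined by the original $p$ (not $q=\log m$), so its iteration count is $\log p$ rather than $\log q$; your bound $\log\log m \le \log p$ gets you to the same place, but the paper's Lemma~\ref{lem:residualQ} states this directly.
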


\begin{restatable}{theorem}{CompletePureLp}\label{thm:CompletePureLp}
Let $\epsilon >0 $, $2 \leq p\leq poly(m)$ and consider a pure $\ell_p$ instance, 
\[
\min_{\AA\xx = \bb} \quad \|\NN\xx\|_p^p. 
\]
Let $\xx^{(0)}$ be the output of Algorithm \ref{alg:homotopy}. Algorithm \ref{alg:CompleteLp} using $\xx^{(0)}$ as a starting solution finds $\xx$ such that $\AA\xx = \bb$ and $\|\NN\xx\|_p^p \leq (1+\epsilon) \min_{\AA\xx = \bb}  \|\NN\xx\|_p^p$ in $O\left(p  m^{\frac{p-2}{3p-2}} \log^2 p \log m\log \frac{m}{\epsilon }\right)$ calls to a linear system solver.
\end{restatable}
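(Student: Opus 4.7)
The plan is to derive Theorem \ref{thm:CompletePureLp} as a direct consequence of two ingredients already established in the excerpt: the homotopy starting-point analysis of Section \ref{sec:homotopy} (specifically Lemma \ref{lem:homotopy} and Corollary \ref{cor:homotopyFull}), and the complete algorithm for general $\ell_p$ objectives given in Theorem \ref{thm:CompleteLp}. The role of the pure $\ell_p$ assumption is that it allows us to (i) obtain a warm start satisfying $\|\NN\xx^{(0)}\|_p^p \leq O(m)\|\NN\xx^{\star}\|_p^p$, and (ii) convert the additive $\epsilon$-guarantee of Theorem \ref{thm:CompleteLp} into the multiplicative guarantee claimed here.

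First, I would invoke the homotopy procedure (Algorithm \ref{alg:homotopy}), but instantiated with the faster residual solver described in Section \ref{sec:p2q} rather than the one from Corollary \ref{cor:homotopyFull}. Since the per-subproblem cost drops from $O(k^2 m^{\frac{k-2}{3k-2}}\log k)$ to $O(k\, m^{\frac{k-2}{3k-2}}\log k\log m)$, summing over the $O(\log p)$ norms $k = 2, 4, \ldots, 2^{\lfloor \log p - 1\rfloor}$ yields a total cost of $O\bigl(p\, m^{\frac{p-2}{3p-2}}\log^2 p\,\log m\bigr)$ calls to a linear system solver, and produces $\xx^{(0)}$ with $\AA\xx^{(0)}=\bb$ and $\|\NN\xx^{(0)}\|_p^p \leq O(m)\|\NN\xx^{\star}\|_p^p$ by Lemma \ref{lem:homotopy}.

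Next, I would apply Theorem \ref{thm:CompleteLp} to the pure $\ell_p$ instance $\ff(\xx)=\|\NN\xx\|_p^p$ starting from $\xx^{(0)}$, but with target additive accuracy $\epsilon' \defeq \epsilon \cdot \|\NN\xx^{\star}\|_p^p$. The output $\xx$ then satisfies $\|\NN\xx\|_p^p \leq \|\NN\xx^{\star}\|_p^p + \epsilon' = (1+\epsilon)\|\NN\xx^{\star}\|_p^p$, which is the multiplicative bound demanded by the theorem. Plugging the warm-start bound into the runtime of Theorem \ref{thm:CompleteLp}, the logarithmic factor becomes
\[
\log\frac{\ff(\xx^{(0)})-\ff(\xx^{\star})}{\epsilon'} \;\leq\; \log\frac{O(m)\|\NN\xx^{\star}\|_p^p}{\epsilon\,\|\NN\xx^{\star}\|_p^p} \;=\; O\!\left(\log\tfrac{m}{\epsilon}\right),
\]
so the second phase costs $O\bigl(p\, m^{\frac{p-2}{3p-2}}\log p\,\log m\,\log\tfrac{m}{\epsilon}\bigr)$ linear system solves. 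Adding the homotopy cost and absorbing $\log^2 p$ yields the claimed total of $O\bigl(p\, m^{\frac{p-2}{3p-2}}\log^2 p\,\log m\,\log\tfrac{m}{\epsilon}\bigr)$.

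The main obstacle, modest as it is, lies in making sure we use the right runtime bookkeeping: Corollary \ref{cor:homotopyFull} is stated using the Section \ref{chap:MWU} solver and therefore has a $p^2$ prefactor, which would be too weak to give the target bound. One must observe that the Section \ref{sec:p2q} machinery (which drives Theorem \ref{thm:CompleteLp}) applies verbatim inside each homotopy stage, because each stage is itself an instance of Problem \eqref{eq:Main} with $\dd = 0$ and $\MM = 0$ for exponent $k\leq p \leq \mathrm{poly}(m)$. Once this substitution is made, the argument is mechanical; the range $p \leq \mathrm{poly}(m)$ ensures $\log p = O(\log m)$, so nothing is hidden in the $\log$ factors and the additive-to-multiplicative conversion produces exactly the stated dependence on $\epsilon$.
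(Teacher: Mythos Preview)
Your proposal is correct and follows essentially the same route as the paper: run the homotopy of Algorithm~\ref{alg:homotopy} with the Section~\ref{sec:p2q} residual solver at each stage to obtain an $O(m)$-approximate warm start, then invoke Theorem~\ref{thm:CompleteLp}. You are in fact slightly more explicit than the paper in two places---the additive-to-multiplicative conversion via $\epsilon' = \epsilon\|\NN\xx^{\star}\|_p^p$, and the observation that Corollary~\ref{cor:homotopyFull} as stated (with the Section~\ref{chap:MWU} solver) would give a $p^2$ prefactor---both of which the paper leaves implicit.
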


\subsection{Relation between Residual Problems for \texorpdfstring{$\ell_p$}{TEXT} and \texorpdfstring{$\ell_q$}{TEXT} Norms}\label{sec:p2qNorms}
In this section we prove how $q$-norm residual problems can be used to solve $p$-norm residual problems. This idea first appeared in the work of \textcite{AdilS20}, where they also apply the results to the maximum flow problem. In this paper, we provide a much simpler proof for the main techncial content and unify the cases of $p<q$ and $p>q$ that were presented separately in previous works. We also unify the case of relating the decision versions of the residual problems (without the linear term) and the entire objective. The results for the maximum flow problem and $\ell_p$-norm flow problem as described in the original paper still follow and we refer the reader to the original paper for these applications. The main result of the section is as follows.

\begin{restatable}{theorem}{p2q}\label{thm:p2q}
Let $p,q\geq 2$ and $\zeta$ be such that $\res_p(\Dopt)\in(\zeta/2,\zeta]$, where $\Dopt$ is the optimum of the $\ell_p$-norm residual problem (Definition \ref{def:residual}). The following $\ell_q$-norm residual problem has optimum at least $\frac{\zeta}{4}$,
\begin{equation}\label{eq:resq}
\max_{\AA\Delta = 0} \gg^{\top} \Delta - \Delta^{\top}\RR\Delta - \frac{1}{4}\zeta^{1-\frac{q}{p}} m^{\min\left\{\frac{q}{p}-1,0\right\}}\|\NN\Delta\|_q^q.
\end{equation}
Let $\beta\geq 1$ and $\Dtil$ denote a feasible solution to the above $\ell_q$-norm residual problem with objective value at least $\frac{\zeta}{16\beta}$. For $\alpha = \frac{1}{256\beta}m^{-\frac{p}{p-1}\abs{\frac{1}{p}-\frac{1}{q}}}$, $\alpha\Dtil$ gives a $O(\beta^2) m^{\frac{p}{p-1}\abs{\frac{1}{p}-\frac{1}{q}}}$-approximate solution to the $\ell_p$-norm residual problem $\res_p$.
\end{restatable}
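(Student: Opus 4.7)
The plan is to prove the two claims in sequence. For the first claim---that the $\ell_q$-norm residual problem in~\eqref{eq:resq} has optimum at least $\zeta/4$---I would exhibit the $\ell_p$-residual optimizer $\Dopt$ as a feasible candidate for~\eqref{eq:resq} and evaluate its objective there. From the first-order analysis in the proof of Lemma~\ref{lem:binary}, $\Dopt$ satisfies $\gg^{\top}\Dopt \geq \zeta/2$ and $\Dopt^{\top}\RR\Dopt + \|\NN\Dopt\|_p^p \leq \zeta$, so in particular $\|\NN\Dopt\|_p^p \leq \zeta$. The power-mean inequality $\|x\|_q \leq m^{\max(0,\,1/q-1/p)}\|x\|_p$ then gives $\|\NN\Dopt\|_q^q \leq m^{\max(0,\,1-q/p)}\zeta^{q/p}$, and a direct check shows that the $m$-exponents cancel in both regimes $q \geq p$ and $q < p$, yielding $C\|\NN\Dopt\|_q^q \leq \zeta/4$ for the stated coefficient $C \defeq \tfrac{1}{4}\zeta^{1-q/p}m^{\min(q/p-1,\,0)}$. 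Combined with $\gg^{\top}\Dopt - \Dopt^{\top}\RR\Dopt = \res_p(\Dopt) + \|\NN\Dopt\|_p^p \geq \zeta/2$, the $\ell_q$-residual value at $\Dopt$ is at least $\zeta/2 - \zeta/4 = \zeta/4$.

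For the second claim, set $L = \gg^{\top}\Dtil$, $Q = \Dtil^{\top}\RR\Dtil$, $P_p = \|\NN\Dtil\|_p^p$, $P_q = \|\NN\Dtil\|_q^q$, so the hypothesis reads $L - Q - CP_q \geq \zeta/(16\beta)$. I would lower bound $\res_p(\alpha\Dtil) = \alpha L - \alpha^2 Q - \alpha^p P_p$ by observing that, since $\alpha \leq 1$, one has $\alpha L - \alpha^2 Q \geq \alpha(L-Q) \geq \alpha\zeta/(16\beta) + \alpha CP_q$. This reduces the goal to the single key inequality
\[
\alpha^{p-1}P_p \;\leq\; CP_q + \tfrac{\zeta}{32\beta}.
\]
Once this is established, $\res_p(\alpha\Dtil) \geq \alpha\zeta/(32\beta)$, and since $\res_p(\Dopt) \leq \zeta$, the approximation ratio is at most $32\beta/\alpha = O\!\left(\beta^2 m^{\frac{p}{p-1}|1/p-1/q|}\right)$, as claimed.

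To prove the key inequality I would start from $P_p \leq m^{\max(0,\,1-p/q)}P_q^{p/q}$ (again from the power mean). In the regime $q \geq p$ (so $p/q \leq 1$), the scaled Young inequality $P_q^{p/q} \leq \epsilon(p/q)P_q + \epsilon^{-p/(q-p)}(q-p)/q$ is available for any $\epsilon > 0$; choosing $\epsilon$ so that, after multiplying by $\alpha^{p-1}m^{1-p/q}$, the first term equals $CP_q$ exactly, the remaining additive constant simplifies---after substituting $C = \zeta^{1-q/p}/4$ and $\alpha = (256\beta)^{-1}m^{-(q-p)/(q(p-1))}$ and noting that the $m$- and $\zeta$-exponents cancel---to $\zeta \cdot (256\beta)^{-(p-1)q/(q-p)}$ times a bounded numerical factor. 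Elementary exponent bookkeeping then verifies that this is at most $\zeta/(32\beta)$ for all $p \geq 2$, $q \geq p$, $\beta \geq 1$.

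The main obstacle is the regime $q < p$, where $p/q > 1$ and the one-variable inequality $P_q^{p/q} \leq AP_q + B$ fails uniformly over $P_q \geq 0$, so the Young step above does not apply directly. Here the plan is to couple the norm bound $P_p \leq P_q^{p/q}$ with the feasibility-derived estimate $P_q \leq L/C$ to write $\alpha^{p-1}P_p \leq \alpha^{p-1}P_q(L/C)^{(p-q)/q}$, and to split on whether the effective multiplier $\alpha^{p-1}(L/C)^{(p-q)/q}/C$ is at most $1$ (in which case $\alpha^{p-1}P_p \leq CP_q$ follows immediately) or exceeds $1$ (in which case $L$ is so large that $\alpha L$ itself dominates the bound $\alpha^p P_p \leq \alpha^p(L/C)^{p/q}$, yielding $\res_p(\alpha\Dtil) \geq \alpha L/4 \geq \alpha\zeta/(64\beta)$). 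The exponent $\tfrac{p}{p-1}|1/p-1/q|$ in the choice of $\alpha$ is tuned exactly so that this threshold in the $L$-split matches the target approximation level $\zeta/(32\beta)$ in both sub-regimes, recovering the key inequality and hence the claimed approximation factor.
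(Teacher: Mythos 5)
Your proof of the first claim mirrors the paper's: exhibit $\Dopt$ as a feasible point, use $\gg^{\top}\Dopt \geq \zeta/2$ and $\|\NN\Dopt\|_p^p \leq \zeta$ from the first-order stationarity computation, and finish with the power-mean inequality. For the second claim your organization is genuinely different: the paper first assumes $|\gg^{\top}\Dtil| \leq \zeta$ (handling the complementary case by rescaling $\Dtil$ to $z\Dtil$), derives the absolute bounds $\Dtil^{\top}\RR\Dtil \leq \zeta$ and $\|\NN\Dtil\|_p^p \leq 4^{p/q}\zeta m^{|1-p/q|}$, and substitutes them directly; you instead reduce everything to the single inequality $\alpha^{p-1}P_p \leq CP_q + \zeta/(32\beta)$, and in the regime $q \geq p$ prove it uniformly in $P_q$ by concavity/Young. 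That branch is sound, and has the nice feature of making the case-split on $\gg^{\top}\Dtil$ unnecessary when $q\geq p$.

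The gap is in your $q < p$ branch, in the ``exceeds $1$'' sub-case. Observe that your effective multiplier $\alpha^{p-1}(L/C)^{(p-q)/q}/C$ is \emph{identically} the ratio $\alpha^{p}(L/C)^{p/q}/(\alpha L)$: both equal $\alpha^{p-1}L^{(p-q)/q}C^{-p/q}$. So when the multiplier exceeds $1$, one has $\alpha^{p}(L/C)^{p/q} > \alpha L$ --- the exact opposite of the dominance you invoke to conclude $\res_p(\alpha\Dtil) \geq \alpha L/4$. This is not a constant-tuning problem: for $p/q > 1$ the bound $P_p \leq P_q^{p/q}$ is super-linear in $P_q$, and $P_q$ and $L$ are a priori unbounded given only that the $\ell_q$-residual objective is $\geq \zeta/(16\beta)$, so the key inequality $\alpha^{p-1}P_q^{p/q} \leq CP_q + \zeta/(32\beta)$ genuinely fails for $P_q$ large and no threshold on $L$ alone can rescue it. The missing ingredient is precisely the rescaling the paper performs: when $|\gg^{\top}\Dtil| > \zeta$, replace $\Dtil$ by $z\Dtil$ with $z = \zeta/(2|\gg^{\top}\Dtil|) \leq 1/2$, which forces $\gg^{\top}(z\Dtil) = \zeta/2$ while keeping the $\ell_q$-residual objective at least $\zeta/4 \geq \zeta/(16\beta)$. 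The returned solution is then $\alpha z\Dtil$ rather than $\alpha\Dtil$, and since now $\gg^{\top}(z\Dtil) \leq \zeta$, the quantities $Q$ and $P_q$ are bounded absolutely, which gives the absolute estimate on $P_p$ and lets the $q < p$ case go through.
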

\begin{proof}
Consider $\Dopt$, the optimum of the $\ell_p$-norm residual problem. Note that $\lambda\Dopt$ is a feasible solution for all $\lambda$ since $\AA(\lambda\Dopt) = 0.$ We know that the objective is optimum for $\lambda = 1$. Thus,
\[
\left[\frac{d}{d\lambda}\res_p(\lambda \Dopt)\right]_{\lambda = 1} = 0,
\]
which gives us,
\[
\gg^{\top} \Dopt - 2\Dopt^{\top}\RR\Dopt - p\|\NN\Dopt\|_p^p = 0.
\]
Rearranging,
\[{}
\Dopt^{\top}\RR\Dopt + (p-1)\|\NN\Dopt\|_p^p = \gg^{\top} \Dopt - \Dopt^{\top}\RR\Dopt - \|\NN\Dopt\|_p^p \leq \zeta.
\]
Since $p \geq 2$, $\|\NN\Dopt\|_p \leq \zeta^{1/p}$ which implies 
\[
 \|\NN\Dopt\|_q \leq \begin{cases}
 \zeta^{1/p} & \text{if, $p\leq q$}\\
 m^{\frac{1}{q}-\frac{1}{p}}\zeta^{1/p} & \text{otherwise.}
 \end{cases}
 \]
We also note that,
\[
\gg^{\top} \Dopt - \Dopt^{\top}\RR\Dopt > \frac{\zeta}{2} + \|\NN\Dopt\|_p^p > \frac{\zeta}{2}.
\]
Combining these bounds, we obtain the optimum of \eqref{eq:resq} is at least,
\[
\gg^{\top} \Dopt - \Dopt^{\top}\RR\Dopt - \frac{1}{4}\zeta^{1-\frac{q}{p}}m^{\min\left\{\frac{q}{p}-1,0\right\}}\|\NN\Dopt\|_q^q > \frac{\zeta}{2} - \frac{1}{4}\zeta^{1-\frac{q}{p}}\zeta^{q/p} > \frac{\zeta}{4}.
\] 
Since the optimum of \eqref{eq:resq} is at least $\zeta/4$, there exists a feasible $\Dtil$ with objective value at least $\zeta/16\beta$. We now prove the second part, that a scaling of $\Dtil$ gives a good approximation to the $\ell_p$-norm residual problem. First, let us assume $|\gg^{\top}\Dtil| \leq \zeta$. Since $\Dtil$ has objective value at least $\zeta/16\beta$,
\[
\Dtil^{\top}\RR\Dtil + \frac{1}{4}\zeta^{1-\frac{q}{p}}m^{\min\left\{\frac{q}{p}-1,0\right\}}\|\NN\Dtil\|_q^q \leq \gg^{\top}\Dtil - \frac{\zeta}{16\beta} \leq \zeta.
\]
Thus, $m^{\min\left\{\frac{1}{p}-\frac{1}{q},0\right\}}\|\NN\Dtil\|_q \leq 4^{\frac{1}{q}} \zeta^{\frac{1}{p}}$, and $\|\NN\Dtil\|_p^p \leq 4^{\frac{p}{q}} \zeta m^{\abs{1 - \frac{p}{q}}}$. Let $\Dbar = \alpha \Dtil$, where $\alpha = \frac{1}{256\beta}m^{-\frac{p}{p-1}\abs{\frac{1}{p}-\frac{1}{q}}}$. We will show that $\alpha\Dbar$ is a good solution to the $\ell_p$-norm residual proble{}m.
\begin{align*}
\res_p(\alpha\Dbar) & = \alpha \left( \gg^{\top} \Dtil - \alpha \Dtil^{\top}\RR\Dtil - \alpha^{p-1} \|\NN\Dtil\|_p^p\right)\\
& \geq \alpha\left(\frac{\zeta}{16\beta} - \frac{1}{256\beta} \zeta - \alpha^{p-1}4^{\frac{p}{q}} \zeta m^{\abs{1 -\frac{p}{q}}} \right)\\
& \geq \alpha\left(\frac{\zeta}{16\beta} - \frac{\zeta}{256\beta}  -  \frac{\zeta}{64\beta}  \right) \\
& \geq \frac{\alpha}{64\beta} \res_p(\Dopt).
\end{align*}
For the case $|\gg^{\top}\Dtil| \geq \zeta$, consider the vector $z\Dtil$ where $z = \frac{\zeta}{2|\gg^{\top}\Dtil|} \leq \frac{1}{2}$. This vector is still feasible for Problem \eqref{eq:resq} and $\gg^{\top}z\Dtil = \frac{\zeta}{2}$ and,
\[
z\gg^{\top} \Dtil - z^2\Dtil^{\top}\RR\Dtil - z^q\frac{1}{4}\zeta^{1-\frac{q}{p}}m^{\min\left\{\frac{q}{p}-1,0\right\}}\|\NN\Dtil\|_q^q \geq \frac{\zeta}{2} - z^2 \zeta\geq \frac{\zeta}{4}. 
\]
We can now repeat the same argument as above.
\end{proof}

\subsection{Faster Algorithm for \texorpdfstring{$\ell_p$}{TEXT}-Regression}

In this section, we will combine the tools developed in previous chapters and combine it with Section \ref{sec:p2qNorms} to obtain an algorithm for Problem \ref{eq:Main} that requires $O\left(p m^{\frac{p-2}{3p-2}}\log p\log m \log \frac{\ff(\xx^{(0)})-\ff(\xx^{\star})}{\epsilon }\right)$ calls to a linear systems solver. For pure $\ell_p$ objectives we can combine our algorithm with the algorithm in Section \ref{sec:homotopy} to obtain a convergence rate of $O\left(p m^{\frac{p-2}{3p-2}}\log^2 p \log m\log \frac{m}{\epsilon }\right)$ linear systems solves.

\begin{algorithm}[H]
\caption{Complete Algorithm with Linear $p$-dependence}
\label{alg:CompleteLp}
 \begin{algorithmic}[1]
 \Procedure{\textsc{$\ell_p$-Solver}}{$\AA, \MM,\NN, \dd,\bb,p,\epsilon$}
\State $\xx \leftarrow \xx^{(0)}$
\State $\nu \leftarrow$ Upper bound on $ \ff(\xx^{(0)})-\ff(\xx^{\star})$\Comment{If $\ff(\xx^{\star})\geq 0$, then $\nu \leftarrow \ff(\xx^{(0)})$}
\While{$\nu >\epsilon$}
\If{$p \geq \log m$}
\State $\Dtil \leftarrow$ $\log m$-ResidualSolver($\xx,\MM,\NN,\AA,\dd,\bb,\nu,p$)
\Else 
\State $\Dtil \leftarrow$ ResidualSolver($\xx,\MM,\NN,\AA,\dd,\bb,\nu,p$)
\EndIf
\If{$\res_p(\Dtil) \geq \frac{\nu}{32p\kappa}$}
\State $\xx \leftarrow \xx - \frac{\Dtil}{p}$ 
\Else
\State $\nu \leftarrow \frac{\nu}{2}$
\EndIf
\EndWhile
\State \Return $\xx$
 \EndProcedure 
 \end{algorithmic}
\end{algorithm}

\begin{algorithm}[H]
\caption{Residual Solver using $\log m$-norm}
\label{alg:ResidualQ}
 \begin{algorithmic}[1]
 \Procedure{\textsc{$\log m$-ResidualSolver}}{$\xx,\MM,\NN,\AA,\dd,\bb,\nu,p$}
\State $\zeta \leftarrow\nu$
\State $\alpha \leftarrow  m^{-\frac{1}{p-1}}$
\State $(\gg,\RR,\NN) \leftarrow \res_p$\Comment{Create residual problem at $\xx$}
\While{$\zeta > \frac{\nu}{32p}$}
\State $\widetilde{\NN}\leftarrow \frac{1}{4^{1/\log m}}\zeta^{\frac{1}{\log m}-\frac{1}{p}} m^{\min\left\{\frac{1}{p}-\frac{1}{\log m},0\right\}}\NN$
\State $\Dtil_{\zeta} \leftarrow$ {\sc MWU-Solver}$\left([\AA,\gg^{\top}], \RR^{1/2},\widetilde{\NN}, [0,\frac{\zeta}{2}]^{\top},\zeta,\log m\right)$\Comment{Algorithm \ref{alg:FasterOracleAlgorithm}}
\State $\zeta \leftarrow \frac{\zeta}{2}$
\EndWhile
\State \Return $\alpha \Dtil \leftarrow \arg\min_{\Dtil_{\zeta}} \ff\left(\xx - \frac{\alpha \Dtil_{\zeta}}{p}\right)$
 \EndProcedure 
 \end{algorithmic}
\end{algorithm}

\begin{lemma}\label{lem:residualQ}
Let $ poly(m) \geq p \geq \log m$. Algorithm \ref{alg:ResidualQ} returns an $O(m^{\frac{1}{p-1}})$-approximate solution to the $\ell_p$-residual problem $\res_p$ at $\xx$ in at most $O\left(m^{\frac{p-2}{3p-2}} \log m \log p\right)$ calls to a linear system solver.
\end{lemma}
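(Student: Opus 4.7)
The plan is to use Theorem~\ref{thm:p2q} to reduce each $\ell_p$-norm residual subproblem to a $\log m$-norm residual subproblem, solve the latter via the width-reduced MWU-Solver at norm $q=\log m$ (Theorem~\ref{cor:ResidualDecision}), and then scale the output back to the $p$-norm using the algorithm's $\alpha$. The speed-up comes from replacing the $O(p\, m^{(p-2)/(3p-2)})$ iteration count of MWU-Solver at norm $p$ with $O(\log m \cdot m^{(\log m - 2)/(3\log m - 2)})$ iterations at norm $\log m$.

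First, I would identify the ``correct'' iteration of the outer loop. Since $\zeta$ is halved from $\nu$ down to $\nu/32p$ and Lemma~\ref{lem:RelateResidualOpt} gives $\res_p(\Dopt)\in(\nu/32p,\nu]$, there is some iteration with $\res_p(\Dopt)\in(\zeta/2,\zeta]$; fix this $\zeta$. Theorem~\ref{thm:p2q} with $q=\log m$ then guarantees that the $q$-norm residual with the reweighted $\widetilde{\NN}$ in Algorithm~\ref{alg:ResidualQ} has optimum at least $\zeta/4$. To invoke Theorem~\ref{cor:ResidualDecision} I would check that the decision subproblem $\min_{\AA\Delta=0,\,\gg^{\top}\Delta=\zeta/2}\Delta^{\top}\RR\Delta+\|\widetilde{\NN}\Delta\|_q^q$ has objective at most $O(\zeta)$. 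The KKT identity from the proof of Lemma~\ref{lem:binary} (applied to $\res_p$) gives $\gg^{\top}\Dopt\in[\zeta/2,2\zeta]$ and $\Dopt^{\top}\RR\Dopt+\|\NN\Dopt\|_p^p\le\zeta$, and the rescaled vector $t\Dopt$ with $t=\zeta/(2\gg^{\top}\Dopt)\in[1/4,1]$ is feasible for the decision problem; unpacking $\widetilde{\NN}$ and using $\|\NN\Dopt\|_q\le m^{1/q-1/p}\|\NN\Dopt\|_p$ bounds its objective by $\tfrac{5}{4}\zeta$. MWU-Solver then returns $\Dtil_\zeta$ with $\gg^{\top}\Dtil_\zeta=\zeta/2$, $\Dtil_\zeta^{\top}\RR\Dtil_\zeta=O(\zeta)$, and $\|\widetilde{\NN}\Dtil_\zeta\|_q^q=O(3^q)\zeta=O(m^{\log 3})\zeta$.

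The main obstacle will be the final scaling step: the $O(3^q)=O(m^{\log 3})$ blow-up in the $q$-norm bound looks bad, but since $3^{q/(q-1)}=O(1)$, rescaling by a constant $c=O(1)$ gives $\res_q(\Dtil_\zeta/c)\ge\Omega(\zeta)$, so $\Dtil_\zeta/c$ is a feasible ``good'' solution for the second half of Theorem~\ref{thm:p2q}. Applying it with $\beta=O(1)$ and further scaling by $\alpha=m^{-1/(p-1)}$ as in the algorithm yields the approximation; expanding $\res_p(\alpha\Dtil_\zeta/c)$ and using the norm monotonicity $\|\NN\Dtil_\zeta\|_p\le\|\NN\Dtil_\zeta\|_q$ together with $\alpha^{p-1}=m^{-1}$ shows that the quadratic term $\alpha^2\Dtil_\zeta^{\top}\RR\Dtil_\zeta$ and the $p$-norm term $\alpha^p\|\NN\Dtil_\zeta\|_p^p$ are both dominated by the linear term $\alpha\gg^{\top}\Dtil_\zeta/c$, giving the claimed $O(m^{1/(p-1)})$-approximation (which is in fact $O(1)$ in the regime $p\ge\log m$). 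For the runtime, each MWU-Solver call at norm $\log m$ costs $O(\log m\cdot m^{(\log m - 2)/(3\log m - 2)})\le O(\log m\cdot m^{(p-2)/(3p-2)})$ linear system solves, using monotonicity of $(x-2)/(3x-2)$ in $x$; the outer loop over $\zeta$ adds a $\log p$ factor, yielding the stated $O(m^{(p-2)/(3p-2)}\log m\log p)$ bound.
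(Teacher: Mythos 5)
Your proposal is correct and follows the same route as the paper: identify the $\zeta$ for which $\res_p(\Dopt)\in(\zeta/2,\zeta]$, apply the first part of Theorem~\ref{thm:p2q} to show the reweighted $q=\log m$ residual has optimum $\Omega(\zeta)$, run the width-reduced MWU-Solver at norm $\log m$, convert its output into a feasible $q$-norm residual solution of value $\Omega(\zeta)$ via a Lemma~\ref{lem:binary}-style rescaling, and then invoke the second part of Theorem~\ref{thm:p2q} to conclude, with the runtime following from monotonicity of $(x-2)/(3x-2)$. You fill in some details the paper states tersely (the explicit upper bound on the decision subproblem via $t\Dopt$, and the observation that the $O(3^q)$ factor becomes harmless after taking $q$-th roots and rescaling by a constant), but these are exactly the steps the paper is implicitly relying on via its citation of Lemma~\ref{lem:binary} and Theorem~\ref{cor:ResidualDecision}, so there is no substantive difference in approach.
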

\begin{proof}
Let $\nu$ be such that $\ff(\xx^{(t)})-\ff(\xx^{\star}) \in (\nu/2,\nu]$. Refer to Lemma \ref{lem:invariant} to see that this is the case in which we use the solution of the residual problem. Now, from Lemma \ref{lem:RelateResidualOpt} we know that the optimum of the residual problem satisfies $\res_p(\Dopt) \in (\nu/32p,\nu]$. Since we vary $\zeta$ to take all such values in the range $(\nu/16p,\nu]$ for one such $\zeta$ we must have $\res_p(\Dopt) \in (\zeta/2,\zeta].$ For such a $\zeta$, consider the $\log m$-norm residual problem \eqref{eq:resq}.
Using Algorithm \ref{alg:FasterOracleAlgorithm} for this problem, from Theorem \ref{cor:ResidualDecision} we are guaranteed to find a solution $\Dtil$ such that $\Dtil^{\top}\RR\Dtil \leq O(1) \zeta$ and $\|\widetilde{\NN}\Dtil\|_{\log m}^{\log m} \leq O(3^p)\zeta$. Now from Lemma \ref{lem:binary}, we note that $\Dtil$ is an $O(1)$-approximate solution to the $\log m$-residual problem. We now use Theorem \ref{thm:p2q}, which states that $\alpha\Dtil$ is a $O\left(m^{\frac{1}{p-1}}\right)$-approximate solution to the required residual problem $\res_p$.

Since for $p \geq \log m$, Algorithm \ref{alg:FasterOracleAlgorithm} requires $O\left( m^{\frac{\log m-2}{3\log m-2}} \log m\right) \leq O\left( m^{\frac{p-2}{3p-2}} \log m\right)$ calls to a linear system solver, and Algorithm \ref{alg:ResSol} calls this algorithm $\log p$ times, we obtain the required runtime.
\end{proof}

\CompleteLp*
\begin{proof}
We note that Algorithm \ref{alg:CompleteLp} is essentially Algorithm \ref{alg:IterRef} which calls different residual solvers depending on the value of $p$. If $p \leq \log m$, from Theorem \ref{thm:CompleteAlgorithm}, we obtain the required solution in $O\left( m^{\frac{p-2}{3p-2}}\log p \log \frac{\ff(\xx^{(0)})-\ff(\xx^{\star})}{\epsilon }\right)$ calls to a linear system solver. If $p \geq \log m$, from Lemma \ref{lem:residualQ}, we obtain an $O(m^{\frac{1}{p-1}}) \leq O(m^{\frac{1}{\log m}}) \leq O(1)$ approximate solution to the residual problem at any iteration in $O\left(m^{\frac{p-2}{3p-2}} \log m \log p\right)$ calls to a linear system solver. Combining this with Theorem \ref{thm:IterativeRefinement}, we obtain our result.
\end{proof}

\CompletePureLp*
\begin{proof}
From Lemma \ref{lem:homotopy} we can find an $O(m)$-approximation to the above problem in time
\[
O(p\log m)\sum_{k = 2^i,i = 2}^{i = \lfloor \log p - 1\rfloor} \kappa_k T(k,\kappa_k),
\]
where $\kappa$ is the approximation to which we solve the residual problem for the $k$-norm problem and $T(k,\kappa)$ is the time required to do so. If $k \geq \log m$, we use Algorithm \ref{alg:ResidualQ} to solve such residual problems. Thus $\kappa_k = m^{\frac{1}{k-1}} \leq m^{\frac{1}{\log m}} \leq O(1)$ and $T(k,\kappa_k) = O\left( m^{\frac{p-2}{3p-2}}\log p\right)$. If $k \leq \log m$, we can use Algorithm \ref{alg:ResSol} and $\kappa_k = O(1)$, $T(k,\kappa_k) = O\left( m^{\frac{p-2}{3p-2}}\log p\right)$. Thus, the total runtime is $O\left(p m^{\frac{p-2}{3p-2}}\log m \log^2 p\right)$. We now combine this with Theorem \ref{thm:CompleteLp} to obtain the required rates of convergence.
\end{proof}


\section{Speedups for General Matrices via Inverse Maintenance}
\label{sec:Inv}

Inverse maintenance was first introduced by Vaidya in 1990 \cite{vaidya1990solving} for speeding up algorithms for minimum cost and multicommodity flow problems. The key idea is to reuse the inverse of matrices, which is possible due to the controllable rates at which variables are updated in some algorithms. In the work by \textcite{AdilKPS19}, the authors design a new inverse maintenance algorithm for $\ell_p$-regression that can solve $\ell_p$-regression for any $p>2$ almost as fast as linear regression. This section is based on Section 6 of \cite{AdilKPS19} and we give a more fine grained and simplified analysis of the original result. In particular, we simplify the proofs and give the result with explicit dependencies on both matrix dimensions as opposed to just the larger dimension.

  Our inverse maintenance procedure is based on the same
  high-level ideas of combining low-rank updates and matrix
  multiplication as in \textcite{vaidya1990solving} and \textcite{LeeS15}. However,
  recall that the rate of convergence of our algorithm is controlled
  by two potentials which change at different rates based on the two
  different kind of weight update steps in our algorithm. In order to
  handle these updates, our inverse maintenance algorithm uses a new
  fine-grained bucketing scheme, inspired by lazy updates in data
  structures and is different from previous works on inverse
  maintenance which usually update weights based on fixed
  thresholds. Our scheme is also simpler than those used in
  \cite{vaidya1990solving, LeeS15}. We now present our algorithm in
  detail.

Consider the weighted linear system being solved at each iteration of Algorithm \ref{alg:FasterOracleAlgorithm}. Each weighted linear system is of the form,
\[
\min_{\AA\xx = \cc} \xx^{\top}\left(\MM^{\top}\MM + \NN^{\top}\RR\NN\right)\xx 
\]
where $\AA \in \mathbb{R}^{d \times n},\NN \in \mathbb{R}^{m_1 \times n},\MM\in \mathbb{R}^{m_2 \times n}$. From Equation \eqref{eq:optimizer} in Section \ref{chap:MWU}, the solution of the above linear system is given by,
\[
\xx^{\star} = \left(\MM^{\top}\MM+ \NN^{\top}\RR\NN\right)^{-1}\AA^{\top}\left(\AA \left(\MM^{\top}\MM+ \NN^{\top}\RR\NN\right)^{-1} \AA^{\top}\right)^{-1}\cc.
\]
In order to compute the above expression, we require the following products in order. The runtimes are considering the fact $\omega \geq 2$.
\begin{itemize}
  \item $\MM^{\top}\MM$ and $\NN^{\top}\RR\NN$: require time $m_2 n^{\omega-1}$ and $m_1 n^{\omega - 1}$ respectively
  \item $\left(\MM^{\top}\MM+ \NN^{\top}\RR\NN\right)^{-1}$: requires time $n^{\omega}$
  \item $\left(\MM^{\top}\MM+ \NN^{\top}\RR\NN\right)^{-1}\AA^{\top}$ and $\AA \left(\MM^{\top}\MM+ \NN^{\top}\RR\NN\right)^{-1} \AA^{\top}$: require time $n^2d^{\omega - 2}$
  \item $\left(\AA \left(\MM^{\top}\MM+ \NN^{\top}\RR\NN\right)^{-1} \AA^{\top}\right)^{-1}$: requires time $d^{\omega}$
  \item $\left(\MM^{\top}\MM+ \NN^{\top}\RR\NN\right)^{-1}\AA^{\top}\left(\AA \left(\MM^{\top}\MM+ \NN^{\top}\RR\NN\right)^{-1} \AA^{\top}\right)^{-1}$: requires time $nd^{\omega -1 }$
  \end{itemize}

The cost of solving the above problem is dominated by the first step, and we thus require time $O(mn^{\omega-1})$, where $m = \max\{m_1,m_2\}$. This directly gives the runtime of Algorithm \ref{alg:FasterOracleAlgorithm} to be $O\left(pm^{\frac{p-2}{(3p-2)}} mn^{\omega-1}\right)$. In this section, we show that we can implement Algorithm \ref{alg:FasterOracleAlgorithm} in time similar to solving a system of linear equations for all $p \geq 2$. In particular, we prove the following result.

\begin{restatable}{theorem}{InverseMaint}\label{thm:InverseMaint}
If $\AA,\MM,\NN$ are explicitly given, matrices with polynomially bounded condition numbers,
and $p \geq 2$, Algorithm~\ref{alg:FasterOracleAlgorithm} as given in Section~\ref{sec:AKPSOracle} can be implemented to run in total time
\[
O \left(mn^{\omega-1} + p^{3-\omega}n^2 m^{\omega-2} + p^{3-\omega}n^2 m^{\frac{p - \left(10 - 4 \omega \right)}{3p - 2}}\right).
\]
\end{restatable}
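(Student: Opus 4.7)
The plan is to modify how Algorithm \ref{alg:FasterOracleAlgorithm} handles its per-iteration linear systems, without changing the iterate sequence itself. Each iteration requires solving a system whose matrix is $\MM^{\top}\MM + \NN^{\top}\RR\NN$, where $\RR = \mathrm{diag}(\ww^{p-2})$. The key observation is that consecutive $\RR$ matrices differ only slightly: in a primal step every $\ww_e$ grows additively by at most $\alpha \tau^{1/p}\zeta^{1/p}$ (since $|\NN\Delta|_e \leq \|\NN\Delta\|_p \leq \tau^{1/p}\zeta^{1/p}$), a small multiplicative change since $\ww_e \geq 1$; in a width-reduction step only a handful of coordinates change, each by a single factor of $2^{1/(p-2)}$. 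I would maintain a proxy resistance $\widetilde{\RR}$ and the inverse of $\MM^{\top}\MM + \NN^{\top}\widetilde{\RR}\NN$ alongside the algorithm, updating $\widetilde{\RR}$ coordinate-by-coordinate only when needed and refreshing the inverse through the Sherman--Morrison--Woodbury formula. Since $\widetilde{\RR}$ and $\RR$ stay within a constant multiplicative factor, the proxy inverse serves as an $O(1)$-quality preconditioner, so conjugate gradient solves each iteration's system to high accuracy in $O(\log(1/\varepsilon))$ matrix--vector products of cost $O(mn)$.

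Next I would set up an $O(\log m)$-level bucketing scheme in the spirit of \cite{vaidya1990solving,LeeS15} but more fine grained: coordinate $e$ moves to bucket $j$ when $\RR_e/\widetilde{\RR}_e$ crosses the threshold $1+2^{-j}$, and when bucket $j$ accumulates enough mass it is processed via a single SMW update of rank equal to the bucket size. To bound total refresh work I would reuse the two potentials that already drive the analysis of Section \ref{chap:MWU}. The growth of $\Phi = \|\ww\|_p^p$ from $m$ to $O(3^p m)$, controlled by Lemma \ref{lem:ReduceWidthGammaPotential}, directly caps how often each coordinate can cross a given threshold under primal steps. The cap $K = O(\rho^2 m^{2/p}\beta^{-2/(p-2)})$ on width-reduction steps, furnished by Lemma \ref{lem:ReduceWidthElectricalPotential}, limits the total coordinate-disturbances from the other type of update. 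Combining these yields, for each bucket level $j$, an explicit upper bound on the number of updates and on the total rank ever pushed through level $j$.

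The cost accounting then has three pieces. The initial computation of $\MM^{\top}\MM$, $\NN^{\top}\widetilde{\RR}\NN$, their sum, and its inverse costs $O(mn^{\omega-1} + n^{\omega}) = O(mn^{\omega-1})$, yielding the first term. Each SMW refresh of rank $k$ costs $O(n^2 k^{\omega-2})$, dominated by the $n\times n$ correction assembled from $k\times n$ and $k\times k$ products; summing $n^2 k^{\omega-2}$ across all refreshes at all bucket levels, using the per-level bounds derived above, produces the remaining two terms. The $p^{3-\omega}$ factor emerges because the outer iteration count $T+K = O(p\, m^{(p-2)/(3p-2)})$ contributes $p$ linearly while the superlinear SMW cost absorbs a $p^{2-\omega}$ factor from the concentration of updates into batches. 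The preconditioned-solve cost per iteration, $O(mn)$ times polylog, summed over all $T$ iterations is dominated by the SMW terms above; the threshold balance between bucket size and refresh frequency is exactly what yields the exponent $(p-(10-4\omega))/(3p-2)$ in the third term.

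The main obstacle will be the combined analysis of the two potentials inside the bucketing framework. Primal-step changes are small, uniform, and frequent; width-reduction changes are large, rare, and concentrated on coordinates with small resistance; the bucketing must be tight enough for $\widetilde{\RR}$ to remain a constant-factor spectral approximation of $\RR$, yet loose enough that the total rank processed (weighted by $k^{\omega-2}$) does not blow up. Balancing the bucket thresholds against both sources of perturbation, and correctly tracking the superlinear SMW cost level by level, is the technical heart of the proof and what determines the somewhat delicate exponent in the third runtime term.
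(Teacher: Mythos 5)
Your high-level architecture matches the paper's: maintain a lazily-updated resistance proxy $\rrhat$ together with the explicit inverse of $\MM^{\top}\MM + \NN^{\top}Diag(\rrhat)\NN$, refresh it via Sherman--Morrison--Woodbury batches of rank $k$ at cost $O(k^{\omega-2}n^2)$, and use it as a constant-factor preconditioner. The overall cost bookkeeping you describe (initial inverse $O(mn^{\omega-1})$, per-level SMW cost summed with the concave exponent $\omega-2$, factor $p^{3-\omega}$ from distributing a fixed update budget over $T=\Theta(pm^{(p-2)/(3p-2)})$ iterations) is also the right shape.

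The gap is in the central counting argument. You propose bounding the primal-step resistance crossings from the growth of $\Phi=\|\ww\|_p^p$ (it grows only from $m$ to $O(3^p m)$) and bounding width-reduction disturbances from the cap on $K$. The $\Phi$ bound alone is far too weak: it constrains only the \emph{final} magnitudes $\ww_e$, not how the growth is distributed over coordinates. Even in the most favorable allocation it yields $\sum_e \log\ww_e^{\mathrm{final}}\lesssim m$, hence up to $O(pm)$ bucket-zero crossings in total; pushing this through the $k^{\omega-2}$-concavity step gives roughly $p\,n^2\,m^{(p-2)/(3p-2)+\frac{2p(\omega-2)}{3p-2}}$, which for $\omega\approx 2.37$ is around $p\,n^2\,m^{0.58}$ rather than the claimed $p^{3-\omega}n^2 m^{\omega-2}\approx p^{0.63}n^2 m^{0.37}$. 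The missing ingredient is the \emph{energy-increase} argument: Lemma~\ref{lem:ckmst:res-increase} shows that whenever resistances increase,
\[
\Psi(\rr')\;\geq\;\Psi(\rr)+\sum_e\left(1-\frac{\rr_e}{\rr'_e}\right)\rr_e(\NN\Dtil)_e^2,
\]
and since $\Psi$ stays within $O(m^{(p-2)/p}\zeta^{2/p})$ for the whole run, the total of $\left(1-\rr_e/\rr'_e\right)(\NN\Dtil)_e^2$ over all iterations and coordinates is polynomially bounded. Combined with Lemma~\ref{lem:ResistanceChangeToFlowValue}, which lower-bounds $|\NN\Delta|_e$ in terms of the relative resistance increase, this is what yields the sharp counts $\sum_i k(0)^{(i)} = O(m^{(p+2)/(3p-2)})$ and $\sum_i k(\eta)^{(i)}=O(m^{(p+2)/(3p-2)}2^{2\eta})$ (Lemmas~\ref{lem:CountHighWidth} and~\ref{lem:CountLowWidth}). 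Your proposal applies the $\Phi$/$\Psi$ interplay only at the outer level (to bound $T$ and $K$), whereas the proof actually reuses the $\Psi$ potential at the fine-grained level of individual coordinate updates. Without that step, the runtime you would prove is polynomially worse than the theorem statement.

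Two smaller points. The per-solve cost in Lemma~\ref{lem:LOLWhatError} is stated as $\Otil(n^2)$ (the solve operates on the $n\times n$ reduced system whose approximate inverse is held explicitly), not the $O(mn)$ you wrote; this changes a secondary term. And the paper's bucketing is keyed to the \emph{per-iteration} additive change $(\rr^{(i)}_e-\rr^{(i-1)}_e)/\rrhat_e$ with counters and modular scheduling, rather than the cumulative ratio $\rr_e/\rrhat_e$ you describe; the cumulative-ratio version could plausibly also work but would need its own proof that $\rrhat\approx_{\Otil(1)}\rr^{(i)}$ (cf.\ Lemma~\ref{lem:GoodApprox}).
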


\subsection{Inverse Maintenance Algorithm}

We first note that the weights $\ww_e^{(i)}$'s, and thus $\rr_{e}^{(i)}$'s are monotonically increasing. Our algorithm in Section \ref{sec:AKPSOracle} updates both in every iteration. Here, we will instead update these gradually when there is a significant increase in the values. We thus give a lazy update scheme. The update can be done via the following consequence of the Woodbury matrix formula. The main idea is that we initially explicitly compute the inverse of the required matrix, and then when we update the coordinates that have significant increases, but are still within a good factor approximation of the original values, and directly use the current matrix inverse as a preconditioner and solve linear systems faster.

\subsubsection{Low Rank Update}

The following lemma is the same as Lemma 6.2 of \textcite{AdilKPS19}.
\begin{lemma}
\label{lem:LowRankUpdate}
Given matrices $\NN\in \mathbb{R}^{m_1 \times n},\MM\in \mathbb{R}^{m_2 \times n}$, and vectors $\rr$ and $\tilde{\rr}$ that
differ in $k$ entries, as well as the matrix
 $\widehat{\ZZ} = (\MM^{\top}\MM + \NN^{\top} Diag(\rr) \NN)^{-1}$, we can construct $(\MM^{\top}\MM + \NN^{\top} Diag(\tilde{\rr})\NN)^{-1}$
in $O(k^{\omega - 2}  n^2)$ time.
\end{lemma}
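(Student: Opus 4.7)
The plan is to apply the Sherman-Morrison-Woodbury identity to express the new inverse as a rank-$k$ correction of $\widehat{\ZZ}$, and then to carry out the resulting matrix products using rectangular fast matrix multiplication. Let $S \subseteq [m_1]$ be the set of $k$ coordinates on which $\rr$ and $\tilde{\rr}$ differ, write $\NN_S \in \mathbb{R}^{k \times n}$ for the submatrix of $\NN$ whose rows are indexed by $S$, and let $\DD \in \mathbb{R}^{k \times k}$ be the diagonal matrix with entries $(\tilde{\rr}-\rr)_S$ (invertible by our choice of $S$). Then
\[
\MM^{\top}\MM + \NN^{\top}\mathrm{Diag}(\tilde{\rr})\NN = (\MM^{\top}\MM + \NN^{\top}\mathrm{Diag}(\rr)\NN) + \NN_S^{\top}\DD\,\NN_S,
\]
so by Woodbury
\[
(\MM^{\top}\MM + \NN^{\top}\mathrm{Diag}(\tilde{\rr})\NN)^{-1} = \widehat{\ZZ} - \widehat{\ZZ}\NN_S^{\top}\bigl(\DD^{-1} + \NN_S\widehat{\ZZ}\NN_S^{\top}\bigr)^{-1}\NN_S\widehat{\ZZ}.
\]

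The computational claim rests on the standard fact that an $n \times n$ by $n \times k$ multiplication (and likewise an $n \times k$ by $k \times n$ multiplication) can be performed in time $O(n^2 k^{\omega-2})$ for $k \le n$: tile the $n \times n$ factor as $(n/k) \times (n/k)$ blocks of size $k \times k$ and use $k \times k$ fast multiplication on each of the $(n/k)^2 \cdot (n/k)$ block products of cost $k^{\omega}$, giving the stated bound. With this tool in hand I would compute, in order: $\YY := \widehat{\ZZ}\NN_S^{\top}$ (cost $O(n^2 k^{\omega-2})$); $\KK := \NN_S \YY$ (cost $O(n k^{\omega-1})$, absorbed into the previous bound); the $k \times k$ inverse $\HH := (\DD^{-1} + \KK)^{-1}$ (cost $O(k^{\omega})$); the $n \times k$ product $\YY \HH$ (cost $O(n k^{\omega-1})$); and finally $(\YY\HH)\YY^{\top}$, an $n \times k$ by $k \times n$ product, in cost $O(n^2 k^{\omega-2})$. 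Since $\widehat{\ZZ}$ is symmetric, $\NN_S \widehat{\ZZ} = \YY^{\top}$, so no new multiplication is needed to form it. Subtracting the resulting $n \times n$ matrix from $\widehat{\ZZ}$ costs $O(n^2)$. Summing all stages gives total cost $O(n^2 k^{\omega-2})$, as claimed.

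There is no real mathematical obstacle: the argument is a direct application of Woodbury together with a careful accounting of rectangular matrix multiplication costs. The only small subtlety I expect is the invertibility of $\DD$, which is handled simply by restricting $S$ to indices where $\tilde{\rr}_e \neq \rr_e$; equivalently, one can replace Woodbury's inner term by $(I + \DD\,\NN_S\widehat{\ZZ}\NN_S^{\top})^{-1}\DD$, which avoids $\DD^{-1}$ entirely and yields the same asymptotic cost. Under the assumption $k \le n$ (which is the regime of interest, since for $k > n$ recomputing $(\MM^{\top}\MM + \NN^{\top}\mathrm{Diag}(\tilde{\rr})\NN)^{-1}$ from scratch in time $O(n^{\omega}) \le O(k^{\omega-2} n^2)$ is trivially within budget), the bound holds as stated.
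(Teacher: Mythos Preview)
Your proof is correct and takes essentially the same approach as the paper: Woodbury identity followed by block-tiled rectangular matrix multiplication, with the dominant cost being the $n \times n$ by $n \times k$ product $\widehat{\ZZ}\NN_S^{\top}$. One small slip: in your tiling explanation the number of $k\times k$ block products for an $n\times n$ by $n\times k$ multiplication is $(n/k)^2$, not $(n/k)^2\cdot(n/k)$; with the correct count you indeed get $(n/k)^2 \cdot k^{\omega} = n^2 k^{\omega-2}$, matching both your stated bound and the paper's.
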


\begin{proof}
Let $S$ denote the entries that differ in $\rr$ and $\tilde{\rr}$.
Then we have
\[
\MM^{\top}\MM + \NN^{\top} Diag(\tilde{\rr}) \NN =
\MM^{\top}\MM + \NN^{\top} Diag(\rr) \NN + \NN_{:,S}^{\top}
\left( Diag(\tilde{\rr}_{S}) - Diag(\rr_{S}) \right)
\NN_{S, :}.
\]
This is a low rank perturbation, so by Woodbury matrix identity we get:
\[
\left( \MM^{\top}\MM + \NN^{\top} Diag(\tilde{\rr}) \NN
 \right)^{-1} = \widehat{\ZZ} - \widehat{\ZZ} \NN_{:,S}^{\top} \left( \left( Diag({\tilde{\rr}_{S}}) - Diag(\rr_{S}) \right)^{-1}+ \NN_{S, :} \widehat{\ZZ} \NN_{:, S}^{\top} \right)^{-1}\NN_{S, :} \widehat{\ZZ},
\]
where we use $\widehat{\ZZ}^{\top} = \widehat{\ZZ}$ because
$\MM^{\top}\MM + \NN^{\top} Diag(\rr) \NN$ is a symmetric matrix. To explicitly compute this matrix, we need to:
\begin{enumerate}
\item compute the matrix
$\NN_{S,:} \widehat{\ZZ}$,
\item compute $\NN_{:, S} \widehat{\ZZ} \NN_{:, S}^{\top}$
\item invert the middle term.
\end{enumerate}
This cost is dominated by the first term, which can be viewed
as multiplying $\lceil n / k \rceil$ pairs of $k \times n$
and $n \times k$ matrices.
Each such multiplication takes time $k^{\omega - 1}n$,
for a total cost of $O(k^{\omega - 2} n^2)$.
The other terms all involve matrices with dimension at most $k \times n$,
and are thus lower order terms.
\end{proof}

\subsubsection{Approximation and Fast Linear Systems Solver}
We now define the notion of approximation we use and how to solve linear systems fast given a good preconditioner.

\begin{definition}
\label{def:Approx}
We use $a \approx_{c} b$ for positive numbers $a$ and $b$
iff $c^{-1} a \leq b \leq c \cdot b$, and for vectors
and for vectors $\aa$ and $\bb$ we use $\aa \approx_{c} \bb$
to denote $\aa_{i} \approx_{c} \bb_{i}$ entry-wise.
\end{definition}

In our algorithm, we only update $k$ resistances that have increased by a constant factor. We can therefore use a constant factor preconditioner to solve the new linear system. We will use the following result on solving preconditioned systems of linear equations.

\begin{lemma}
\label{lem:LOLWhatError}
If $\rr$ and $\tilde{\rr}$ are vectors such that
$\rr \approx_{\Otil(1)} \tilde{\rr}$, and we're given the matrix $\widehat{\ZZ}^{-1} = (\MM^{\top}\MM + \NN^{\top} Diag(\rr) \NN)^{-1}$ explicitly, then we can solve a system of linear equations involving $\ZZ = \MM^{\top}\MM + \NN^{\top} Diag(\tilde{\rr}) \NN$ to $1/poly(n)$ accuracy in $\Otil(n^2)$ time.
\end{lemma}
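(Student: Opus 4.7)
The plan is to use $\widehat{\ZZ}^{-1}$ as a preconditioner and run a preconditioned iterative method---say preconditioned Chebyshev iteration or conjugate gradient---on the system $\ZZ x = b$. The mathematical content splits cleanly into two steps: a spectral-equivalence bound and a cost-per-iteration accounting.

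First, I would show that $\widehat{\ZZ} \approx_{c^2} \ZZ$ as quadratic forms, for $c = \Otil(1)$. The hypothesis $\rr \approx_c \tilde{\rr}$ is an entrywise multiplicative bound, and since $\vv \mapsto \NN^{\top} Diag(\vv) \NN$ is monotone in $\vv$ in the Loewner order, it immediately gives
\[
c^{-1} \NN^{\top} Diag(\tilde{\rr}) \NN \preceq \NN^{\top} Diag(\rr) \NN \preceq c\, \NN^{\top} Diag(\tilde{\rr}) \NN.
\]
Adding the identical $\MM^{\top}\MM$ term on each side preserves the inequality (using that $c \geq 1$), so $c^{-1} \ZZ \preceq \widehat{\ZZ} \preceq c\, \ZZ$. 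In particular, the preconditioned operator $\widehat{\ZZ}^{-1/2} \ZZ \widehat{\ZZ}^{-1/2}$ has all eigenvalues in $[c^{-1}, c]$, so its condition number is $O(c^2) = \Otil(1)$.

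Second, with this condition-number bound, standard results for preconditioned Chebyshev/CG give that to drive the relative error in the $\ZZ$-norm (or equivalently the Euclidean norm up to the $\text{poly}(n)$-bounded condition number) down to $1/\text{poly}(n)$, it suffices to run $O(\sqrt{c^2}\log(\text{poly}(n))) = \Otil(1)$ iterations. Each iteration does a constant number of: (i) applications of the preconditioner $\widehat{\ZZ}^{-1}$, which is an explicit $n \times n$ matrix and costs $O(n^2)$ per matvec, and (ii) applications of $\ZZ$, which in the context of the surrounding inverse-maintenance algorithm is available in explicit $n \times n$ form (the one-time cost of forming $\MM^{\top}\MM$ and the initial $\NN^{\top} Diag(\rr) \NN$ is absorbed into the $mn^{\omega-1}$ setup term in Theorem~\ref{thm:InverseMaint}), so again $O(n^2)$ per matvec. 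Multiplying together yields the claimed $\Otil(n^2)$ total time.

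The only real subtlety is ensuring the desired accuracy is actually reached in floating point; here the polynomially bounded condition number of $\AA, \MM, \NN$ combined with the $\Otil(1)$ preconditioned condition number means the iteration is numerically stable and $O(\log n)$ bits of precision per entry suffice, so no extra log-factors beyond what is already absorbed into the $\Otil(\cdot)$ appear in the runtime. I do not expect any significant obstacle beyond bookkeeping; the heart of the lemma is the spectral-equivalence observation in the first paragraph.
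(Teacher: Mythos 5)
Your proposal is correct and follows essentially the same strategy as the paper: use the maintained $\widehat{\ZZ}^{-1}$ as a preconditioner and run a preconditioned iterative method, with the $\Otil(1)$ spectral equivalence $\rr \approx_{\Otil(1)} \tilde{\rr} \Rightarrow \widehat{\ZZ} \approx_{\Otil(1)} \ZZ$ bounding the iteration count to $\Otil(1)$ and each iteration costing $O(n^2)$ via explicit matvecs. The only cosmetic difference is that the paper uses preconditioned Richardson iteration, $\xx^{(k+1)} \leftarrow \xx^{(k)} - \widehat{\ZZ}^{-1}(\ZZ\xx^{(k)} - \bb)$, giving $O(c\log(1/\eps))$ iterations, whereas you invoke preconditioned Chebyshev/CG giving $O(\sqrt{c^2}\log(1/\eps))$ iterations; since $c = \Otil(1)$ both land in $\Otil(n^2)$ total, and you additionally spell out the Loewner-monotonicity argument that the paper leaves implicit.
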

\begin{proof}
Suppose we want to solve the system,
\[
\ZZ\xx = \bb.
\]
We know $\widehat{\ZZ}^{-1}$ and that for some constant $c$, $\frac{1}{c}\II \preceq \widehat{\ZZ}^{-1/2}\ZZ\widehat{\ZZ}^{-1/2} \preceq c \II$. The following iterative method (which is essentially gradient descent),
\[
\xx^{(k+1)}\rightarrow \xx^{(k)} - \hat{\ZZ}^{-1}\left(\ZZ\xx-\bb\right)
\]
 converges to an $\epsilon$-approximate solution in $O\left(c \log \frac{1}{\epsilon}\right)$ iterations. Each iteration can be computed via matrix-vector products. Since matrix vector products for $n \times n$ matrices require at most $O(n^2)$ we get the above lemma for $\epsilon = 1/poly(n)$.
\end{proof}

\subsubsection{Algorithm}

The algorithm is the same as that in Section 6 of \textcite{AdilKPS19}. The algorithm has two parts, an initialization routine $\textsc{InverseInit}$ which is called only at the first iteration, and the inverse maintenance procedure, {\textsc{UpdateInverse}} which is called from Algorithm~\ref{alg:oracle},
\textsc{Oracle}. Algorithm \textsc{Oracle} is called every time the resistances are updated in Algorithm~\ref{alg:FasterOracleAlgorithm}. For this section, we will assume access to all variables from these routines, and maintain the following global variables:
\begin{enumerate}
\item $\rrhat$: resistances from the last time
we updated each entry.
\item $counter(\eta)_{e}$: for each entry, track the number of times that it
changed (relative to $\rrhat$) by a factor of about $2^{-\eta}$
since the previous update.
\item $\widehat{\ZZ}$, an inverse of the matrix given by $\MM^{\top}\MM + \NN^{\top} Diag(\rrhat) \NN$.
\end{enumerate}

\begin{algorithm}[H]
\caption{Inverse Maintenance Initialization}
\label{alg:InverseInit}
 \begin{algorithmic}[1]
\Procedure{InverseInit}{$\MM,\NN,\rr^{(0)}$}
\State Set $\rrhat \leftarrow \rr^{(0)}$.
\State Set $counter(\eta)_e \leftarrow 0$ for all $0 \leq \eta \leq \log(m)$ and
$e$. 
\State Set $\widehat{\ZZ} \leftarrow (\MM^{\top}\MM + \NN^{\top} Diag(\rr) \NN)^{-1}$ by explicitly
inverting the matrix. 
\EndProcedure
\end{algorithmic}
\end{algorithm}

\begin{algorithm}[H]
\caption{Inverse Maintenance Procedure}
\label{alg:InverseMaintenance}
 \begin{algorithmic}[1]
\Procedure{UpdateInverse}{}
\For{all entries $e$}
\State Find the least non-negative integer $\eta$ such that
\[
\frac{1}{2^{\eta}} 
\leq
\frac{\rr^{\left(i\right)}_{e} - \rr^{\left(i - 1\right)}_{e}}{\rrhat_{e}}.
\]
\State Increment $counter(\eta)_{e}$.
\EndFor
\State $E_{changed}
\leftarrow
\union_{\eta: i \pmod {2^{\eta}} \equiv 0} \{ e: counter(\eta)_e \geq 2^{\eta}\}$ 
\label{lem:EmptyBucket}
\State $\tilde{\rr} \leftarrow \rrhat$
\For{all $e \in E_{changed}$}
\State $\tilde{\rr}_{e} \leftarrow \rr_{e} ^{\left( i \right)}$. 
\State Set $counter(\eta)_{e} \leftarrow 0$ for all $\eta$. 
\EndFor
\State $\widehat{\ZZ} \leftarrow \textsc{LowRankUpdate}( \widehat{\ZZ}, \rrhat, \tilde{\rr})$.
\State $\rrhat \leftarrow \tilde{\rr}$.
\EndProcedure 
 \end{algorithmic}
\end{algorithm}

\subsubsection{Analysis}
We first verify that the maintained inverse is always a 
good preconditioner to the actual matrix, $\MM^{\top}\MM + \NN^{\top} Diag(\rr^{(i)}) \NN$.

\begin{lemma}[Lemma 6.5, \textcite{AdilKPS19}]
\label{lem:GoodApprox}
After each call to {\em UpdateInverse}, the vector $\rrhat$
satisfies
\[
\rrhat \approx_{\Otil \left( 1 \right)} \rr^{\left( i \right)}.
\]
\end{lemma}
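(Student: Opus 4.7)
The plan is to prove $\hat{\rr} \approx_{\tilde{O}(1)} \rr^{(i)}$ entrywise by an amortized argument on the bucket counters. One direction is immediate: since all weight updates in Algorithm~\ref{alg:FasterOracleAlgorithm} only increase weights (and hence $\rr = \ww^{p-2}$), we have $\hat{\rr}_e \leq \rr_e^{(i)}$, so it suffices to produce an upper bound $\rr_e^{(i)} \leq O(\log m)\, \hat{\rr}_e$.

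Fix an entry $e$ and let $t_{0}$ denote the most recent iteration at which $e$ was placed in $E_{changed}$ (so $\hat{\rr}_e = \rr_e^{(t_0)}$). I will bound $\rr_e^{(i)} - \rr_e^{(t_0)}$ by summing the per-iteration increments according to the bucket $\eta$ into which each one falls. By construction of the bucketing rule, an increment placed in bucket $\eta \geq 1$ is strictly less than $2^{-(\eta-1)} \hat{\rr}_e$, while a bucket-$0$ increment is at least $\hat{\rr}_e$.

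The key observations are the following counter bounds, valid between successive updates of $e$:
\begin{enumerate}
\item \textbf{No $\eta=0$ increments occur between $t_0$ and $i$.} As soon as such an increment arises, $counter(0)_e$ becomes $\geq 1 = 2^0$. Every iteration $i'$ satisfies $i' \equiv 0 \pmod{2^0}$, so $e$ is added to $E_{changed}$ and updated immediately, contradicting the assumption that no update has occurred.
\item \textbf{For $\eta \geq 1$, $counter(\eta)_e < 2^{\eta+1}$ at iteration $i$.} Let $i_\eta$ be the largest multiple of $2^\eta$ with $i_\eta \leq i$. Because $e$ has not been updated since $t_0$, at iteration $i_\eta$ we must have had $counter(\eta)_e < 2^\eta$ (otherwise $e$ would have entered $E_{changed}$ at that iteration). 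Between $i_\eta$ and $i$ the counter can grow by at most $i - i_\eta < 2^\eta$, giving the claimed bound.
\end{enumerate}

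Summing across buckets, the total increase in $\rr_e$ accumulated in bucket $\eta \geq 1$ is at most $counter(\eta)_e \cdot 2^{-(\eta-1)} \hat{\rr}_e < 2^{\eta+1} \cdot 2^{-(\eta-1)} \hat{\rr}_e = 4 \hat{\rr}_e$. There are $O(\log m)$ active buckets, so the total change from buckets $1 \leq \eta \leq \log m$ is $O(\log m) \hat{\rr}_e$. The contribution of increments too small to land in any tracked bucket (ratio $< 1/m$) is at most $T \cdot m^{-1} \hat{\rr}_e$, which is negligible since the total iteration count $T$ is polynomial in $m$ under the assumed condition number bound. Combining everything yields $\rr_e^{(i)} \leq (1 + O(\log m))\, \hat{\rr}_e$, which together with monotonicity gives $\hat{\rr}_e \approx_{\tilde{O}(1)} \rr_e^{(i)}$ entrywise, as desired.

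The main obstacle is correctly handling the two-dimensional timing of the bucket checks: counters grow continuously but are only compared against their thresholds at iterations that are multiples of $2^\eta$, so one must carefully argue both at check times (via the inductive reset) and in between (by bounding the allowable drift). Once observations (1) and (2) are in place, the final summation is routine.
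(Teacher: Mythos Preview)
Your proof is correct and follows essentially the same approach as the paper: both argue that bucket-$0$ increments are reflected immediately, that each bucket $\eta\ge 1$ can accumulate at most $2^{\eta+1}$ increments of size $<2^{-\eta+1}\hat{\rr}_e$ (hence $\le 4\hat{\rr}_e$ per bucket), and then sum over $O(\log m)$ buckets while noting that buckets beyond $\log m$ are negligible because $T=O(pm^{1/3})$. One minor remark: the bound on the tail contribution comes from $T\le O(pm^{1/3})\ll m$, which is intrinsic to the MWU solver and does not actually rely on the condition-number assumption you cite.
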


\begin{proof}
  First, observe that any change in resistance exceeding $1$ is
  reflected immediately. 
  Otherwise,
  every time we update $counter(j)_e$, $\rr_e$ can only increase
  additively by at most
\[
2^{-j + 1} \rrhat_e.
\]
Once $counter(j)_e$ exceeds $2^{j}$, $e$ will be added to 
$E_{changed}$ after at most $2^{j}$ steps. So when we start from $\rrhat_e$,  $e$ is added to 
$E_{changed}$ after  $counter(j)_e \leq 
2^j + 2^j = 2^{j+1}$ iterations. The maximum possible increase in resistance due to the bucket $j$ is,
\[
2^{-j + 1} \rrhat_{e}
\cdot
2^{j + 1}
= 4 \rrhat_{e}.
\] 
Since there are only at most $m^{1/3}$ iterations,
the contributions of buckets with $j > \log{m}$ 
are negligible. Now the change in resistance is influenced by all buckets $j$, 
each contributing at most $4\rrhat_{e}$ increase. The total change is at most $4 \rrhat_{e}  \log m$ since there are at most $\log m$ buckets.
We therefore have
\[
\rrhat_e
\leq
\rr^{\left(i\right)}_e
\leq
5 \rrhat_e \log m .
\]
for every $i$.
\end{proof}

It remains to bound the number and sizes of calls
made to Lemma~\ref{lem:LowRankUpdate}.
For this we define variables $k\left(\eta\right)^{\left(i \right)}$ to denote the number of edges added to $E_{changed}$
at iteration $i$ due to the value of $counter(\eta)_e$.
Note that $k(\eta)^{(i)}$ is non-zero only if
$i \equiv 0 \pmod{2^{\eta}}$, and
\[
\abs{E_{changed}^{\left(i\right)}}
\leq
\sum_{\eta} k\left(\eta\right)^{\left(i \right)}.
\]

We divide our analysis into 2 cases, when the relative change in resistance is at least $1$ and  when the relative change in resistance is at most $1$.  To begin with, let us first look at the following lemma that relates the change in weights to the relative change in resistance. 
\begin{restatable}{lemma}{ResistanceToFlow}
\label{lem:ResistanceChangeToFlowValue}
Consider a primal step from Algorithm~\ref{alg:FasterOracleAlgorithm}. We have
\[
\frac{\rr_e^{\left(i + 1\right)}
  - \rr_e^{\left(i\right)}}{\rr_e^{\left(i \right)}}
\leq  \left( 1 + \alpha \frac{\abs{\NN\Delta}_e}{\zeta^{1/p}} \right)^{p - 2} -1
\]
where $\Delta$ is the solution produced by the oracle Algorithm \ref{alg:oracle}.
\end{restatable}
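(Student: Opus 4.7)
The plan is to unwind the definitions and reduce the statement to a one-line calculation using that the weights are non-decreasing and bounded below by $1$.

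First I would recall that in Algorithm \ref{alg:FasterOracleAlgorithm} the resistances are defined as $\rr_e = \left(\ww_e\right)^{p-2}$, and that on a primal step the weights are updated as
\[
\ww_e^{(i+1)} = \ww_e^{(i)} + \alpha \frac{|\NN\Delta|_e}{\zeta^{1/p}}.
\]
Therefore, taking the ratio,
\[
\frac{\rr_e^{(i+1)}}{\rr_e^{(i)}} = \left(\frac{\ww_e^{(i+1)}}{\ww_e^{(i)}}\right)^{p-2} = \left(1 + \frac{\alpha\, |\NN\Delta|_e/\zeta^{1/p}}{\ww_e^{(i)}}\right)^{p-2}.
\]

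Next I would use that the weights start at $\ww^{(0,0)}_e = 1$ and are only ever updated by additions of non-negative quantities (both in primal steps and in width-reduction steps, where they are multiplied by $2^{1/(p-2)} \ge 1$). In particular $\ww_e^{(i)} \geq 1$ for all $i$, so dividing by $\ww_e^{(i)}$ can only shrink the positive quantity $\alpha |\NN\Delta|_e/\zeta^{1/p}$. Since $p \geq 2$, the function $t \mapsto (1+t)^{p-2}$ is monotonically non-decreasing in $t \geq 0$, so
\[
\frac{\rr_e^{(i+1)}}{\rr_e^{(i)}} \leq \left(1 + \alpha \frac{|\NN\Delta|_e}{\zeta^{1/p}}\right)^{p-2}.
\]
Subtracting $1$ from both sides yields the claimed inequality.

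There is essentially no obstacle here; the only thing one needs to be careful about is to note the lower bound $\ww_e^{(i)} \geq 1$, which is immediate from the initialization $\ww^{(0,0)}_e = 1$ together with the monotonicity of the updates performed by Algorithm \ref{alg:FasterOracleAlgorithm}. This is exactly why the lemma is stated without any $\ww$-dependent factor on the right-hand side, even though the ratio in its exact form does depend on $\ww_e^{(i)}$.
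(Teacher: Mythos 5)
Your proof is correct and matches the paper's argument essentially line for line: both write $\rr_e = \ww_e^{p-2}$, plug in the additive weight update of the primal step, rewrite the relative change as $\bigl(1 + \alpha|\NN\Delta|_e/(\zeta^{1/p}\ww_e^{(i)})\bigr)^{p-2} - 1$, and then drop the denominator using $\ww_e^{(i)} \geq 1$ together with the monotonicity of $t \mapsto (1+t)^{p-2}$ for $p \ge 2$. No substantive difference.
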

\begin{proof}
Recall from Algorithm~\ref{alg:oracle} that
\[
\rr_e^{\left(i \right)}
=\left( \ww_e^{\left(i\right)}\right)^{p - 2}.
\]
For a primal step of Algorithm~\ref{alg:FasterOracleAlgorithm},
we have
\[
\ww^{\left( i + 1 \right)}_{e} - \ww^{\left( i \right)}_{e}
=
\frac{\alpha}{\zeta^{1/p}} \abs{\NN\Delta}_e.
\]
Substituting this in gives
\[
\frac{\rr_e^{\left(i + 1\right)} - \rr_e^{\left(i\right)}}{\rr_e^{\left(i \right)}}
=
\frac{\left( \ww_e^{\left(i\right)} + \frac{\alpha}{\zeta^{1/p}} \abs{\NN\Delta}_e\right)^{p - 2}
- \left( \ww_e^{\left(i\right)}\right)^{p - 2}}
{ \left( \ww_e^{\left(i\right)}\right)^{p - 2}}
\leq
\left( 1 + \frac{\frac{\alpha}{\zeta^{1/p}} \abs{\NN\Delta}_e}{\ww_e^{\left( i \right)}} \right)^{p - 2}
-1
\leq
\left( 1 + \frac{\alpha}{\zeta^{1/p}} \abs{\NN\Delta}_e \right)^{p - 2}
-1,
\]
where the last inequality utilizes $\ww_e^{(i)} \geq 1$.
\end{proof}

We now consider the case when the relative change in resistance is at least $1$.

\begin{lemma}
\label{lem:CountHighWidth}
Throughout the course of a run of Algorithm~\ref{alg:FasterOracleAlgorithm},
the number of edges added to $E_{changed}$ due to  relative resistance increase of at least $1$, 
\[
\sum_{1 \leq i \leq T}
k\left( 0 \right)^{\left( i \right)}
\leq
O\left( m^{\frac{p + 2}{3p - 2}} \right).
\]
\end{lemma}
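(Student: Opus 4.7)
The overall plan is to first characterize exactly when bucket~0 fires and then bound contributions from the two kinds of steps (width-reduction and primal) separately, matching them against ``global budget'' quantities that are already controlled by the convergence analysis of Section~\ref{sec:ProofMWUFast}.

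First, I would unpack the trigger condition. By construction, edge $e$ contributes to $k(0)^{(i)}$ only if $\rr^{(i)}_e - \rr^{(i-1)}_e \geq \rrhat_e$, and by Lemma~\ref{lem:GoodApprox}, $\rrhat_e \approx_{\Otil(1)} \rr^{(i-1)}_e$, so this really means $\rr_e$ (approximately) at least doubles in a single iteration. In particular, $\rr^{(i)}_e \geq \left(1+\tfrac{1}{\Otil(1)}\right) \rr^{(i-1)}_e$, which by $\rr_e = \ww_e^{p-2}$ translates to $\ww^{(i)}_e \geq (1+\tfrac{1}{\Otil(1)})^{1/(p-2)} \ww^{(i-1)}_e$.

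Next I split: either the iteration $i-1{\to} i$ is a width-reduction step or it is a primal step.

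\textbf{Width-reduction contributions.} In each width-reduction step, $\rr_e$ is exactly doubled on the affected edges, which trivially meets the bucket-0 condition. So the total number of bucket-0 additions generated by width reductions is at most the total number of edge-width-reductions $K$ over the course of the algorithm, which is controlled in the proof of Theorem~\ref{cor:ResidualDecision} and is dominated by the overall iteration bound.

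\textbf{Primal contributions.} For a primal step, Lemma~\ref{lem:ResistanceChangeToFlowValue} combined with the characterization above forces
\[
\left(1 + \alpha \frac{\abs{\NN\Delta}_e}{\zeta^{1/p}\,\ww^{(i-1)}_e}\right)^{p-2} \geq 1 + \frac{1}{\Otil(1)},
\]
which gives $|\NN\Delta|_e \gtrsim \zeta^{1/p}\,\ww^{(i-1)}_e/(p\alpha \Otil(1))$. I would then use two budgets in tandem: the $\ell_p$ bound $\|\NN\Delta\|_p^p \leq \tau \zeta$ imposed by the primal check in line~\ref{algline:CheckWidth}, together with the oracle energy bound $\sum_e \rr_e (\NN\Delta)_e^2 \leq \zeta^{2/p}\|\ww\|_p^{p-2}$ from Lemma~\ref{lem:Oracle}. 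The first limits how many ``large-flow'' coordinates can appear in any one iteration via Markov; the second, summed against the trigger lower bound, yields a per-iteration count of order $\|\ww\|_p^{p-2}\cdot (p\alpha)^2$ that telescopes nicely through $T$ iterations.

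\textbf{Amortization and the final bound.} Summing across iterations I would amortize using the potential bound $\Phi(\ww)=\|\ww\|_p^p \leq O(3^p) m$ established in Lemma~\ref{lem:ReduceWidthGammaPotential}: a primal bucket-0 hit on $e$ bumps $\ww_e^p$ by a factor $1+\Omega(1/\Otil(1))$, so the total number of primal bucket-0 hits is tied to the logarithm of the maximum value of $\Phi$ across iterations. Plugging in the explicit choices of $\alpha, \tau, \rho, \beta, T$ from Algorithm~\ref{alg:FasterOracleAlgorithm} and using $T = \alpha^{-1} m^{1/p}$, the exponents collapse to $\frac{p+2}{3p-2}$, matching the width-reduction contribution.

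The main obstacle is the amortization in the primal case: a naive ``per-iteration bound times $T$'' is much too loose, because per iteration one can have many large-flow coordinates while the $p$-power budget $\tau\zeta$ is tight. The key trick is to pair the $\tau$ budget with the energy budget from Lemma~\ref{lem:Oracle} and then charge each bucket-0 firing to a multiplicative gain in $\ww_e^p$, which is globally capped by $\Phi \le O(3^p) m$. Setting up this charging precisely — so that the exponents of $m$ in $\tau, \alpha, \rho, \beta$ telescope to the target $\frac{p+2}{3p-2}$ and the hidden polylog/$3^p$ factors remain controlled — is the most delicate step and is where I expect to have to be most careful.
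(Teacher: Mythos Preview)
Your plan has a genuine gap in the amortization step. The charging argument you propose---``a primal bucket-0 hit on $e$ bumps $\ww_e^p$ by a factor $1+\Omega(1/\Otil(1))$, so the total number of primal bucket-0 hits is tied to the logarithm of the maximum value of $\Phi$''---does not yield the claimed bound. Each edge's $\ww_e^p$ starts at $1$ and is bounded by $\Phi \le O(3^p m)$, so it can undergo at most $O(p+\log m)$ doublings; summed over $m$ edges this gives $\Otil(pm)$ bucket-0 firings, which is far larger than $m^{(p+2)/(3p-2)}$. Likewise, your per-iteration count of order $\|\ww\|_p^{p-2}(p\alpha)^2$ multiplied by $T$ gives an exponent of $\tfrac{2p}{3p-2}$, not $\tfrac{p+2}{3p-2}$; you correctly flag this as too loose, but your proposed fix via $\Phi$ does not close the gap. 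The $\tau$ budget and Lemma~\ref{lem:Oracle} are both per-iteration bounds and do not telescope.

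The missing idea is that the right global potential is the energy $\Psi$, not $\Phi$. By Lemma~\ref{lem:ckmst:res-increase}, whenever resistances increase from $\rr$ to $\rr'$ the energy $\Psi$ increases by at least $\sum_e \bigl(1-\tfrac{\rr_e}{\rr'_e}\bigr)\rr_e(\NN\Delta)_e^2$. On a bucket-0 edge the factor $1-\rr_e/\rr'_e \geq \tfrac12$, and your own trigger analysis gives $|\NN\Delta|_e \geq \Omega(p^{-1}\alpha^{-1}\zeta^{1/p})$ (this lower bound covers both primal and width-reduction steps at once, since $\rho \gtrsim p^{-1}\alpha^{-1}$). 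Hence each bucket-0 firing contributes $\Omega(p^{-2}\alpha^{-2}\zeta^{2/p})$ to the total $\Psi$ increase. Since $\Psi$ is monotone in $\rr$ and globally bounded by $O(m^{(p-2)/p}\zeta^{2/p})$ via Lemma~\ref{lem:ElectricalPotentialStartFinishBounds}, this immediately gives $\sum_i k(0)^{(i)} \le O(p^2\alpha^2 m^{(p-2)/p}) = O(m^{(p+2)/(3p-2)})$. This single energy-charging argument replaces your separate treatment of width-reduction and primal steps and all the $\tau$/$\Phi$ machinery.
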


\begin{proof}
From Lemma \ref{lem:ckmst:res-increase}, we know that the change in energy over one iteration is at least,
\[
\sum_e \left(\NN\Delta \right)_e^2 \left(1-\frac{\rr^{(i)}_e}{\rr^{(i+1)}_e}\right).
\]
Over all iterations, the change in energy is at least, 
\[
\sum_i \sum_e \left(\NN\Delta \right)_e^2 \left(1-\frac{\rr^{(i)}_e}{\rr^{(i+1)}_e}\right) 
\]
 which is upper bounded by $O(m^{\frac{p-2}{p}})\zeta^{2/p}$. When iteration $i$ is a width reduction step, the relative resistance change is always at least $1$. In this case $\abs{\NN\Delta} \geq \rho \zeta^{1/p}$. When we have a primal step, Lemma \ref{lem:ResistanceChangeToFlowValue} implies that when the relative change in resistance is at least $1$ then,
\[
\abs{\NN\Delta}_e \geq \Omega(1)\alpha^{-1}\zeta^{1/p}.
\]
Using the bound $\abs{\NN\Delta}_e \geq \Omega(p^{-1})\alpha^{-1}\zeta^{1/p}$ is sufficient since $\rho >\Omega(p^{-1}\alpha^{-1})$ and both kinds of iterations are accounted for. The total change in energy can now be bounded.
\begin{align*}
& p^{-2}\alpha^{-2} \zeta^{2/p} \sum_{i}  \sum_e \mathbb{1}_{\left[\frac{\rr^{(i+1)}_e - \rr^{(i)}_e}{\rr^{(i)}_e} \ge 1 \right]} \le O(m^{\frac{p-2}{p}})\zeta^{2/p}\\
& \Leftrightarrow  p^{-2}\alpha^{-2} \sum_{i}  k(0)^{(i)}  \le O(m^{\frac{p-2}{p}})\\
& \Leftrightarrow  \sum_{i} k \left( 0 \right)^{\left( i \right)} \le O(p^2 m^{(p-2)/p} \alpha^{2}).
\end{align*}
The Lemma follows by substituting $\alpha=  \Theta\left(p^{-1}m^{-\frac{p^2-5p+2}{p(3p-2)}}\right)$ in the above equation.
\end{proof}

\begin{lemma}
\label{lem:CountLowWidth}
Throughout the course of a run of Algorithm~\ref{alg:FasterOracleAlgorithm}, the number of edges added to $E_{changed}$ due to  relative resistance increase between $2^{-\eta}$ and $2^{-\eta+1}$, 
\[
\sum_{1 \leq i \leq T}
k\left( \eta \right)^{\left( i \right)}
\leq
\begin{cases}
0 & \text{if $2^{\eta} \geq T$},\\
O\left(m^{\frac{p + 2}{3p - 2}} 2^{2\eta}\right) & \text{otherwise}.
\end{cases}
\]
\end{lemma}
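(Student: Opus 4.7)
My plan is to adapt the energy-charging argument of Lemma \ref{lem:CountHighWidth} to the buckets indexed by $\eta \geq 1$, with the extra accounting that edges only enter $E_{changed}$ after $2^{\eta}$ contributions to $counter(\eta)_e$ accumulate. The case $2^{\eta} \geq T$ is essentially trivial: a single iteration can increment $counter(\eta)_e$ by at most one, so after $T$ iterations $counter(\eta)_e \leq T \leq 2^{\eta}$, and (modulo a standard boundary check on the parity condition $i \equiv 0 \pmod{2^{\eta}}$) no edge is ever added via bucket $\eta$. The substance is therefore the bound for $2^{\eta} < T$.

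For the main case, let $S_{\eta}$ denote the set of pairs $(i,e)$ at which $counter(\eta)_e$ is incremented. Since an edge enters $E_{changed}$ via bucket $\eta$ only after $2^{\eta}$ increments (followed by a reset), we have $\sum_{i} k(\eta)^{(i)} \leq |S_{\eta}|/2^{\eta}$, so it suffices to show $|S_{\eta}| \leq O(m^{(p+2)/(3p-2)}\, 2^{3\eta})$. I will then charge each element of $S_{\eta}$ a lower bound on its contribution to the energy change $\Psi(\rr^{(i)}) - \Psi(\rr^{(i-1)})$, which by Lemma \ref{lem:ElectricalPotentialStartFinishBounds} (combined with the global bound $\Phi \leq O(3^p) m_1$ established in the proof of Theorem \ref{cor:ResidualDecision}) is bounded by $O(m^{(p-2)/p})\zeta^{2/p}$ in total.

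To lower bound the per-event energy contribution at $(i,e) \in S_{\eta}$, I will combine two ingredients. First, by Lemma \ref{lem:GoodApprox}, $\rrhat_e \approx_{\tilde{O}(1)} \rr_e^{(i-1)}$, so the bucket condition $(\rr^{(i)}_e - \rr^{(i-1)}_e)/\rrhat_e \geq 2^{-\eta}$ translates to $(\rr^{(i)}_e - \rr^{(i-1)}_e)/\rr^{(i-1)}_e \geq \tilde{\Omega}(2^{-\eta})$, which by Lemma \ref{lem:ckmst:res-increase}–style manipulation gives $1 - \rr^{(i-1)}_e/\rr^{(i)}_e \geq \tilde{\Omega}(2^{-\eta})$. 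Second, by Lemma \ref{lem:ResistanceChangeToFlowValue}, this same relative-change lower bound forces $(1 + \alpha \abs{\NN\Delta}_e/\zeta^{1/p})^{p-2} - 1 \geq \tilde{\Omega}(2^{-\eta})$, from which $|\NN\Delta|_e^2 \geq \tilde{\Omega}(p^{-2}\alpha^{-2}\, 2^{-2\eta})\, \zeta^{2/p}$ (as in Lemma \ref{lem:CountHighWidth}, matched by a coarser bound $\tilde{\Omega}(p^{-1})\alpha^{-1}\zeta^{1/p}$ that subsumes both the primal and width-reduction cases). Multiplying the two factors and summing over $S_{\eta}$, I obtain $|S_\eta|\cdot \tilde{\Omega}(p^{-2}\alpha^{-2}\, 2^{-3\eta})\, \zeta^{2/p} \leq O(m^{(p-2)/p})\zeta^{2/p}$, hence $|S_{\eta}| \leq \tilde{O}(p^2 \alpha^2\, 2^{3\eta}\, m^{(p-2)/p})$. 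Dividing by $2^{\eta}$ and substituting $\alpha = \Theta(p^{-1} m^{-(p^{2}-5p+2)/(p(3p-2))})$ exactly as in Lemma \ref{lem:CountHighWidth}, the exponent of $m$ simplifies to $(p-2)/p - 2(p^2-5p+2)/(p(3p-2)) = (p+2)/(3p-2)$, yielding the claimed $O(m^{(p+2)/(3p-2)}\, 2^{2\eta})$ bound.

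The main technical friction is the chain $\text{(bucket condition)} \Rightarrow \text{(relative resistance change)} \Rightarrow \text{(lower bound on } |\NN\Delta|_e)$, because inverting the inequality $(1+y)^{p-2} - 1 \geq A$ to get $y$ from $A$ is delicate for large $p$: one must use the case split $A \lesssim 1$ (where Taylor expansion yields $y \gtrsim A/(p-2)$ up to constants, absorbed into the looser $\tilde{O}$) versus $A \gtrsim 1$ (where $y \gtrsim 2^{1/(p-2)} - 1 = \Omega(1/p)$, matching the bound already used in Lemma \ref{lem:CountHighWidth}). Once this per-event estimate is in hand, the arithmetic of exponents is identical to the $\eta = 0$ calculation already verified, so the only new content is the $2^{3\eta}$ factor from the two $2^{-\eta}$'s cancelled against one $2^{\eta}$ from the bucket divisor.
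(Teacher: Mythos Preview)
Your proposal is correct and follows essentially the same route as the paper: both arguments charge each bucket-$\eta$ event to the energy increase via Lemma~\ref{lem:ckmst:res-increase}, lower-bound $|\NN\Delta|_e$ by inverting the relative-resistance inequality from Lemma~\ref{lem:ResistanceChangeToFlowValue}, and then combine the resulting $\alpha^{-2}2^{-3\eta}$ per-event contribution with the global energy ceiling $O(m^{(p-2)/p})\zeta^{2/p}$ and the $2^{\eta}$-many-increments-per-addition accounting to obtain the $O(m^{(p+2)/(3p-2)}2^{2\eta})$ bound. Your version is in fact a bit more careful than the paper's in two places---you explicitly invoke Lemma~\ref{lem:GoodApprox} to pass from the $\rrhat_e$-based bucket condition to a relative change in $\rr^{(i)}_e$, and your inversion of $(1+y)^{p-2}-1\geq 2^{-\eta}$ correctly picks up a $p^{-1}$ factor on $y$ (which cancels against $\alpha$), whereas the paper's case split asserting $(1+y)^{p-2}-1\leq y$ for $y\leq 1$ is not valid when $p>3$.
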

\begin{proof}
From Lemma \ref{lem:ckmst:res-increase}, the total change in energy is at least, 
\[
\sum_i \sum_e \left(\NN\Delta \right)_e^2 \left(1-\frac{\rr^{(i)}_e}{\rr^{(i+1)}_e}\right) .
\]
We know that $\frac{\rr^{(i+1)}_e - \rr^{(i)}_e}{\rr^{(i)}_e} \geq 2^{-\eta}$. Using Lemma \ref{lem:ResistanceChangeToFlowValue}, we have,
\[
\left(1+\alpha \frac{\abs{\NN\Delta}_e}{\zeta^{1/p}}\right)^{p-2} - 1 \geq 2^{-\eta}. 
\]
We thus obtain,
\[
\left(1+\alpha \frac{\abs{\NN\Delta}_e}{\zeta^{1/p}}\right)^{p-2} - 1\leq 
\begin{cases}
\alpha \frac{\abs{\NN\Delta}_e}{\zeta^{1/p}} & \text{ when $\alpha \abs{\Delta_e} \leq \zeta^{1/p}$ or $p-2 \leq 1$} \\
 \left(2 \alpha \frac{\abs{\NN\Delta}_e}{\zeta^{1/p}}\right)^{p-2} & \text{ otherwise. }
\end{cases}
\]
Now, in the second case, when $\alpha \abs{\NN\Delta_e} \geq \zeta^{1/p}$ and $p-2>1$, 
\[
\left(2 \alpha \frac{\abs{\NN\Delta}_e}{\zeta^{1/p}}\right)^{p-2} \geq 2^{-\eta} \Rightarrow \alpha \abs{\NN\Delta}_e \geq \left(\frac{1}{2^{\eta}}\right)^{1/(p-2)+1}\zeta^{1/p} \geq  2^{-\eta-1}\zeta^{1/p}
\]
Therefore, for both cases we have, 
\[
\alpha \abs{\NN\Delta}_e \geq  \left(2^{-\eta-1}\right)\zeta^{1/p}.
\]
Using the above bound and the fact that the total change in energy is at most $O(m^{\frac{p-2}{p}})\zeta^{2/p}$, gives,
\begin{align*}
&\sum_i \sum_e \left(\NN\Delta \right)_e^2 \left(1-\frac{\rr^{(i)}_e}{\rr^{(i+1)}_e}\right) \leq O(m^{\frac{p-2}{p}})\zeta^{2/p}\\
\Rightarrow & \frac{1}{4}\sum_i \sum_e \left(\alpha^{-1} 2^{-\eta}\zeta^{1/p}\right)^2 \cdot \left(2^{-\eta}\mathbb{1}_{2^{-\eta+1} \geq \frac{\rr^{(i+1)}_e - \rr^{(i)}_e}{\rr^{(i)}_e} \geq 2^{-\eta}}\right)  \leq O(m^{\frac{p-2}{p}})\zeta^{2/p}\\
\Rightarrow &  \alpha^{-2} 2^{-3\eta} \sum_i  2^{\eta}  k\left( \eta \right)^{\left( i \right)}\leq O(m^{\frac{p-2}{p}})\\
\Rightarrow & \sum_i k\left( \eta \right)^{\left( i \right)} \leq O \left( m^{(p-2)/p} \alpha^{2} 2^{2\eta}  \right)
\end{align*}
The Lemma follows substituting $\alpha=  \Theta\left(p^{-1}m^{-\frac{p^2-5p+2}{p(3p-2)}}\right)$ in the above equation.
\end{proof}
We can now use the concavity of $f(z) = z^{\omega - 2}$ to upper bound
the contribution of these terms.
\begin{corollary}
\label{lem:TotalCostWidth}
Let $k(\eta)^{(i)}$ be as defined. Over all iterations we have,
\[
\sum_{i} \left( k\left( 0 \right)^{\left( i \right)} \right) ^{\omega - 2}
\leq
O \left(p^{3-\omega} m^{\frac{p - \left(10 - 4 \omega\right) }{3p - 2}} \right)
\]
and for every $\eta$,
\[
\sum_i^T \left( k\left( \eta \right)^{\left( i \right)} \right) ^{\omega - 2} \leq 
\begin{cases}
0 & \text{if $2^{\eta} \geq T$},\\
O \left(p^{3-\omega} m^{\frac{p - 2 + 4 \left( \omega - 2 \right)}{3p - 2}} \cdot 2^{\eta \left( 3 \omega - 7 \right) } \right) & \text{otherwise}.
\end{cases}
\]
\end{corollary}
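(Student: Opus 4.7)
The plan is to use the concavity of the function $f(z) = z^{\omega-2}$ on $[0,\infty)$. Since $\omega \in (2,3)$, the exponent $\omega - 2 \in (0,1]$, so $f$ is concave, and Jensen's inequality (equivalently, the power-mean inequality) gives that for any non-negative reals $a_1,\ldots,a_N$,
\[
\sum_{i=1}^N a_i^{\omega-2} \;\le\; N^{3-\omega}\left(\sum_{i=1}^N a_i\right)^{\omega-2}.
\]
This is the sole analytic tool required; the rest is exponent arithmetic applied to the bounds already established in Lemmas~\ref{lem:CountHighWidth} and~\ref{lem:CountLowWidth}.

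For the first inequality, I would take $N$ to be the total number of iterations, $T = O\bigl(p\, m^{(p-2)/(3p-2)}\bigr)$, as given by Algorithm~\ref{alg:FasterOracleAlgorithm}, and combine with Lemma~\ref{lem:CountHighWidth}'s bound $\sum_i k(0)^{(i)} \le O\bigl(m^{(p+2)/(3p-2)}\bigr)$. The $p$-factor becomes $p^{3-\omega}$, and the $m$-exponent is
\[
\frac{(p-2)(3-\omega) + (p+2)(\omega-2)}{3p-2} \;=\; \frac{p - (10-4\omega)}{3p-2},
\]
exactly matching the claimed bound.

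For the second inequality, the key observation is that $k(\eta)^{(i)}$ is non-zero only when $i \equiv 0 \pmod{2^\eta}$, so at most $N = T/2^\eta$ terms contribute (if $2^\eta \geq T$, then the sum is empty and the bound holds trivially). Applying the same inequality with this smaller $N$ and Lemma~\ref{lem:CountLowWidth}'s bound $\sum_i k(\eta)^{(i)} \le O\bigl(m^{(p+2)/(3p-2)} \, 2^{2\eta}\bigr)$, the two sources of $2^\eta$ combine as
\[
2^{-\eta(3-\omega)} \cdot 2^{2\eta(\omega-2)} \;=\; 2^{\eta(3\omega - 7)},
\]
while the $m$- and $p$-exponents are identical to the first case, yielding the stated bound.

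I do not anticipate any genuine obstacle: Jensen gives both parts in one shot once the number of non-zero summands is correctly accounted for, and the exponent simplifications are routine. The only minor care needed is to note explicitly that $\omega < 3$ (so $f$ is concave with non-negative exponent), and to use the iteration-parity argument to get the $T/2^\eta$ count for the second bound, which is what produces the nontrivial $2^{\eta(3\omega-7)}$ dependence.
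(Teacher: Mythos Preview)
Your proposal is correct and takes essentially the same approach as the paper. The paper phrases the key step as ``this total is maximized when it's equally distributed over all iterations,'' which is exactly your Jensen/power-mean inequality $\sum a_i^{\omega-2} \le N^{3-\omega}\bigl(\sum a_i\bigr)^{\omega-2}$; both proofs then use the same iteration counts ($O(p\,m^{(p-2)/(3p-2)})$ terms for $\eta=0$, and $T/2^\eta$ nonzero terms for general $\eta$ via the congruence condition) together with Lemmas~\ref{lem:CountHighWidth} and~\ref{lem:CountLowWidth}, and the exponent arithmetic is identical.
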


\begin{proof}
Due to the concavity of the $\omega - 2 \approx 0.3727 < 1$ power,
this total is maximized when it's equally distributed over all iterations. In the first sum, the number of terms is equal to the number of iterations, i.e.,  
$O(p m^{\frac{p - 2}{3p - 2}})$. In the second sum the number of terms is $O(p m^{\frac{p - 2}{3p - 2}}) 2^{-\eta}$.
Distributing the sum equally over the above numbers give, 
\[
\sum_i^T \left( k\left( 0 \right)^{\left( i \right)} \right) ^{\omega - 2} \leq \left( O \left( p^{-1} m^{\frac{p+2}{3p-2} - \frac{p-2}{3p - 2}} \right) \right)^{\omega - 2}
\cdot O\left( pm^{\frac{p - 2}{3p - 2}} \right)
=
O \left(p^{3-\omega} m^{\frac{p - 2 + 4 \left( \omega - 2 \right)}{3p - 2}} \right)
=
O \left(p^{3-\omega} m^{\frac{p - \left(10 - 4 \omega\right) }{3p - 2}} \right)
\]
and 
\begin{align*}
\sum_i^T \left( k\left( \eta \right)^{\left( i \right)} \right) ^{\omega - 2} &\leq O\left(p m^{\frac{p - 2}{3p - 2}} 2^{-\eta}\right) \cdot \left( p^{-1}\frac{m^{\frac{p + 2}{3p -2}} 2^{2\eta}} {m^{\frac{p - 2}{3p - 2}} 2^{-\eta}} \right)^{\omega - 2}\\
&= O\left(p^{3-\omega}m^{\frac{p - 2 + 4 \left( \omega - 2 \right) }{3p - 2}}2^{-\eta}\cdot 2^{3\eta (\omega - 2 )}\right)\\
&= O \left(p^{3-\omega}m^{\frac{p - 2 + 4 \left( \omega - 2 \right)}{3p - 2}} 2^{\eta (3 \omega - 7)}\right).
\end{align*}
\end{proof}

\subsection{Proof of Theorem \ref{thm:InverseMaint}}

\InverseMaint*
\begin{proof} By Lemma~\ref{lem:GoodApprox}, the $\rrhat$ that the inverse being
  maintained corresponds to always satisfy
  $\rrhat \approx_{\Otil(1)} \rr^{(i)}$.  So by the iterative linear
  systems solver method outlined in Lemma~\ref{lem:LOLWhatError},
  we can implement each call to $\textsc{Oracle}$
  (Algorithm~\ref{alg:oracle})in time $O(n^2)$ in addition to the
  cost of performing inverse maintenance.  This leads to a total cost
  of
\[
\Otil\left(pn^2m^{\frac{p - 2}{3p - 2}}\right).
\]
across the $T = \Theta(pm^{\frac{p - 2}{3p - 2}})$ iterations.

The costs of inverse maintenance is dominated by the calls to
the low-rank update procedure outlined in Lemma~\ref{lem:LowRankUpdate}.
Its total cost is bounded by
\[
O\left( \sum_{i} \abs{E_{changed}^{\left( i \right)}}^{\omega - 2}n^2 \right)
=
O\left( 
n^2
\sum_{i} \left(
   \sum_{\eta} k\left( \eta \right)^{\left( i \right)}
\right) ^{\omega - 2}
\right).
\]
Because there are only $O(\log{m})$ values of $\eta$,
and each $k(\eta)^{(i)}$ is non-negative, we can bound the total cost by:
\[
\Otil\left(n^2\sum_{i} \sum_{\eta}
\left(k\left( \eta \right)^{\left( i \right)}\right)^{\omega - 2}
\right)\leq\Otil\left(p^{3-\omega} n^2\sum_{\eta: 2^{\eta} \leq T} 
m^{\frac{p - 2 + 4 \left( \omega - 2 \right)}{3p - 2}}
\cdot 2^{\eta \left( 3 \omega - 7 \right) }
\right),
\]
where the inequality follows from substituting in the result
of Lemma~\ref{lem:TotalCostWidth}.
Depending on the sign of $3 \omega - 7$, this sum is dominated
either at $\eta = 0$ or $\eta = \log{T}$.
Including both terms then gives
\[
\Otil\left( p^{3-\omega}
n^2\left(m^{\frac{p - 2 + 4 \left( \omega - 2 \right)}{3p - 2}}
+m^{\frac{p - 2 + 4 \left( \omega - 2 \right)
+ \left( p - 2 \right) \left( 3 \omega - 7 \right) }{3p - 2}}\right)
\right),
\]
with the exponent on the trailing term simplifying to $\omega - 2$ to give,
\[
\Otil\left( p^{3-\omega}n^2
\left( m ^{\frac{p - \left(10 - 4 \omega \right)}{3p - 2}}
+m^{\omega-2}\right)\right).
\]
\end{proof}



\section{Iteratively Reweighted Least Squares Algorithm}
\label{sec:IRLS}

Iteratively Reweighted Least Squares (IRLS) Algorithms are a family of algorithms for solving $\ell_p$-regression. These algorithms have been studied extensively for about 60 years \cite{lawson1961,rice1964,gorodnitsky1997} and the classical form solves the following version of $\ell_p$-regression,
\begin{equation}\label{eq:IRLS-true}
\min_{\xx}\|\AA\xx-\bb\|_p,
\end{equation}
where $\AA$ is a tall thin matrix and $\bb$ is a vector. The main idea in IRLS algorithms is to solve a weighted least squares problem in every iteration to obtain the next iterate,
\begin{equation}\label{eq:UpdateIRLS}
\xx^{(t+1)} = \arg\min_{\xx} (\AA\xx^{(t)}-\bb)^{\top}\RR(\AA\xx^{(t)}-\bb)
\end{equation}
starting from $\xx^{(0)}$ which is usually $\arg\min_{\xx} \|\AA\xx-\bb\|_2^2$. Here $\RR$ is picked to be $Diag(|\AA\xx^{(t)}-\bb|^{p-2})$ and note that the above equation now becomes a fixed point iterate for the $\ell_p$-regression problem. It is known that the fixed point is unique for $p \in (1,\infty)$.

The basic version of the above IRLS algorithm is guaranteed to converge for $p\in (1.5,3)$, however, even for small $p\approx 3.5$, the algorithm diverges \cite{rios2019}. Over the years there have been several studies on IRLS algorithms and attempts  to show convergence \cite{karlovitz1970,osborne1985}, but none of them show quantitative bounds or require starting solutions close enough to the optimum. Refer to \textcite{burrus2012} for a complete survey on these methods.

In this section we propose an IRLS algorithm and prove that our algorithm is guaranteed to converge geometrically to the optimum. Our algorithm is based on the algorithm of \textcite{AdilPS19} and present some experimental results from experiments performed in the paper that demonstrate our algorithm works very well in practice. We provide a much simpler analysis and integrate the analysis with the framework we have built so far.

We will focus on the following pure $\ell_p$ setting for better readability,
\[
\min_{\AA\xx=\bb} \|\NN\xx\|_p.
\]
We note that our algorithm also works for the setting described in Equation \eqref{eq:IRLS-true}. We will first describe our algorithm in the next section, and then present some experimental results from experiments that were performed in \textcite{AdilPS19}.

\subsection{IRLS Algorithm}
Our IRLS algorithm is based on our overall iterative refinement framework (Algorithm \ref{alg:IterRef}) where we will directly use a weighted least squares problem to solve the residual problem. Consider Algorithm \ref{alg:IRLS} and compare it with Algorithm \ref{alg:IterRef}. We note that it is same overall, except now we have an extra step {\sc LineSearch} and we update the solution (Line \ref{alg:line:updateSol}) at every iteration. These steps do not affect the overall convergence guarantees of the iterative refinement framework in Algorithm \ref{alg:IterRef}, since these are only ensuring that given a solution from {\sc ResidualSolver-IRLS}, we are taking a step that reduces the objective value the most as opposed the fixed update defined in Algorithm \ref{alg:IterRef}. In other words, we are reducing the objective value in each iteration at least as much as in Algorithm \ref{alg:IterRef}. We thus require to prove the guarantees of {\sc ResidualSolver-IRLS} (Algorithm \ref{alg:Res-IRLS}) and combine it with Theorem \ref{thm:IterativeRefinement} to obtain our final convergence guarantees. We will prove the following result on our IRLS algorithm (Algorithm \ref{alg:IRLS}).

\IRLSMain*
The key connection with IRLS algorithms is that we are able to show that it is sufficient to solve a weighted least squares problem to solve the residual problem. The two main differences are, in every iteration we add a small systematic {\it padding} to $\RR$ and, we perform a line search. These tricks are common empirical modifications used to avoid ill conditioning of matrices and for a faster convergence \cite{karlovitz1970,vargas1999}.
\begin{algorithm}
\caption{Iteratively Reweighted Least Squares}
\label{alg:IRLS}
 \begin{algorithmic}[1]
 \Procedure{\textsc{IRLS}}{$\AA, \NN,\bb,p,\epsilon$}
\State $\xx \leftarrow {\arg\min}_{\AA\xx=\bb}\|\NN\xx\|_2^2$
\State $\nu \leftarrow \|\NN\xx\|_p^p$
\While{$\nu >\frac{\epsilon}{2} \|\NN\xx\|_p^p$}
\State $\Dtil, \kappa\leftarrow$ {\sc ResidualSolver-IRLS($\xx,\NN,\AA,\bb,\nu,p$)}
\State $\alpha \leftarrow$ {\sc LineSearch($\NN,\xx,\Dtil$)} \Comment{$\alpha = \arg\min_{\beta}\|\NN(\xx-\beta\Dtil)\|_p^p$}
\State $\xx \leftarrow \xx - \alpha\frac{\Dtil}{p}$ \label{alg:line:updateSol}
\If{$\res_p(\alpha \Dtil) < \frac{\nu}{32p\kappa}$}
\State $\nu \leftarrow \frac{\nu}{2}$
\EndIf
\EndWhile
\State \Return $\xx$
 \EndProcedure 
 \end{algorithmic}
\end{algorithm}
\begin{algorithm}
\caption{Residual Solver for IRLS}
\label{alg:Res-IRLS}
 \begin{algorithmic}[1]
 \Procedure{\textsc{ResidualSolver-IRLS}}{$\xx,\NN,\AA,\bb,\nu,p$}
\State $\gg \leftarrow Diag(|\NN\xx|^{p-2})\NN\xx$
\State $\RR \leftarrow 2Diag(|\NN\xx|^{p-2})$
\State $\ss \leftarrow \nu^{\frac{p-2}{p}}m^{-\frac{p-2}{p}}$
\State $\Dtil \leftarrow {\arg\max}_{\AA\Delta = 0}\gg^{\top}\NN\Delta - \Delta^{\top}\NN^{\top}(R + \ss\II)\NN\Delta$\Comment{Problem \eqref{eq:l2-IRLS}}
\State $k \leftarrow \frac{\|\NN\Dtil\|_p^p}{\Dtil^{\top}\NN^{\top}(\RR+s\II)\NN\Dtil}$
\State $\alpha_0 \leftarrow \min\left\{\frac{1}{2},\frac{1}{2k^{1/(p-1)}}\right\}$
\State \Return $\Dtil, 2^{13}p^2\alpha_0^{-1}$
 \EndProcedure 
 \end{algorithmic}
\end{algorithm}

We will prove the following result about solving the residual problem.
\begin{lemma}\label{lem:residualIRLS}
Let $\xx$ be the current iterate and $\nu$ be such that $\|\NN\xx\|_p^p - OPT \in(\nu/2, \nu]$. Let $\Dtil$ be the solution of \eqref{eq:l2-IRLS}. Then for $\alpha_0$ and $\alpha$ as defined in Algorithm \ref{alg:Res-IRLS} and Algorithm \ref{alg:IRLS} respectively, $\alpha\Dtil$ is a $O\left(p^2 \alpha_0^{-1}\right) = O\left(p^2m^{\frac{p-2}{2(p-1)}}\right)$-approximate solution to the residual problem.
\end{lemma}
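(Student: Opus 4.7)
The plan is to bound the approximation ratio $\res_p(\Dopt)/\res_p(\alpha\Dtil)$ through three steps: (i) a lower bound $\res_p(\alpha_0\Dtil) \geq \alpha_0 E / 2$, where $E := \Dtil^\top\NN^\top(\RR+s\II)\NN\Dtil$ is the optimal value of the weighted least-squares problem \eqref{eq:l2-IRLS}; (ii) an upper bound $\res_p(\Dopt) \leq O(p)\,E$; and (iii) transferring the lower bound from $\alpha_0$ to the line-search output $\alpha$. Step (i) follows from first-order optimality of $\Dtil$ along the scalar direction at itself, which yields $\gg^\top\NN\Dtil = 2E$ and $\Dtil^\top\NN^\top\RR\NN\Dtil \leq E$. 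Substituting into the residual objective gives $\res_p(\beta\Dtil) \geq (2\beta-\beta^2)E - \beta^p k E$ for any scalar $\beta$, where $k := \|\NN\Dtil\|_p^p/E$. The two-case definition of $\alpha_0$ is engineered so that $\alpha_0^p k \leq \alpha_0/2^{p-1}$ and $\alpha_0^2 \leq \alpha_0$, so a short case split on whether $k\le1$ or $k>1$ yields the claim.

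For step (ii), the variational characterization of $\Dtil$ (as maximizer of the weighted LS objective over all feasible directions, in particular over all scalings of $\Dopt$) gives
\[
E \;\geq\; \frac{(\gg^\top\NN\Dopt)^2}{4\,\Dopt^\top\NN^\top(\RR+s\II)\NN\Dopt}.
\]
Differentiating $\res_p(\lambda\Dopt)$ at $\lambda=1$ produces $\gg^\top\NN\Dopt = 2\Dopt^\top\NN^\top\RR\NN\Dopt + p\|\NN\Dopt\|_p^p$ and $\res_p(\Dopt) = \Dopt^\top\NN^\top\RR\NN\Dopt + (p-1)\|\NN\Dopt\|_p^p$, from which $\gg^\top\NN\Dopt$ is pinched between $\res_p(\Dopt)$ and $4\res_p(\Dopt)$, and the individual pieces satisfy $\Dopt^\top\NN^\top\RR\NN\Dopt \leq \res_p(\Dopt)$ and $\|\NN\Dopt\|_p \leq (\res_p(\Dopt)/(p-1))^{1/p}$. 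The Hölder bound $\|\NN\Dopt\|_2^2 \leq m^{(p-2)/p}\|\NN\Dopt\|_p^2$ combined with the specific choice $s = (\nu/m)^{(p-2)/p}$ yields $s\|\NN\Dopt\|_2^2 \leq \nu^{(p-2)/p}\res_p(\Dopt)^{2/p}/(p-1)^{2/p}$, so the denominator is at most $2\res_p(\Dopt) + \nu^{(p-2)/p}\res_p(\Dopt)^{2/p}$. Plugging in and using $\res_p(\Dopt) \in (\nu/(32p), \nu]$ (which itself follows by pairing both inequalities of Lemma~\ref{lem:RelateResidualOpt} with $\ff(\xx)-\ff(\xx^\star) \in (\nu/2,\nu]$), the arithmetic collapses to $E \geq \res_p(\Dopt)/O(p)$.

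Step (iii) exploits that $\alpha$ minimizes $\ff$ on the ray $\{\xx - \beta\Dtil/p\}$, so $\ff(\xx) - \ff(\xx - \alpha\Dtil/p) \geq \ff(\xx) - \ff(\xx - \alpha_0\Dtil/p) \geq \res_p(\alpha_0\Dtil) \geq \alpha_0 E/2$ by the first inequality of Lemma~\ref{lem:RelateResidualOpt}. The reverse inequality of the same lemma, applied with $\lambda=16p$, converts this $\ff$-progress back into a lower bound on $\res_p$ evaluated at a rescaled version of $\alpha\Dtil$, losing at most one extra $O(p)$ factor. Combining (i)--(iii) gives the $O(p^2\alpha_0^{-1})$ approximation ratio. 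Finally, the explicit $\alpha_0^{-1} \leq O(m^{(p-2)/(2(p-1))})$ bound comes from the chain $\|\NN\Dtil\|_p^p \leq \|\NN\Dtil\|_2^p \leq (E/s)^{p/2}$ (using $\|\cdot\|_p \leq \|\cdot\|_2$ for $p\geq 2$ and $s\|\NN\Dtil\|_2^2 \leq E$), which yields $k \leq (Em/\nu)^{(p-2)/2}$ and hence $\alpha_0^{-1} \leq 2(Em/\nu)^{(p-2)/(2(p-1))}$. I expect the main technical obstacle to be closing this coupling: the estimate $E \leq 2\nu/\alpha_0$ (from step (i) combined with $\res_p(\alpha_0\Dtil)\leq\nu$) feeds back into the bound on $\alpha_0^{-1}$, and one needs to solve this self-referential system cleanly to land on the claimed $m^{(p-2)/(2(p-1))}$ exponent rather than a weaker $m^{(p-2)/p}$ bound, while also keeping the $p$-factor bookkeeping in the line-search transfer tight enough to match the stated $O(p^2\alpha_0^{-1})$ ratio.
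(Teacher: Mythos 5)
Your proposal follows the paper's proof essentially step for step: establish $\res_p(\Dopt)\in(\nu/32p,\nu]$ from Lemma~\ref{lem:RelateResidualOpt}, lower-bound $\res_p(\alpha_0\Dtil)$ against the least-squares optimum $E$, bound $E$ from below by evaluating the quadratic objective at a scaling of $\Dopt$, transfer to the line-search output $\alpha$, and finally bound $k$ (hence $\alpha_0^{-1}$). The only cosmetic differences are that you optimize over all scalings of $\Dopt$ where the paper simply plugs in the fixed scaling $a = 2^{-7}p^{-1}$, and you state explicitly that first-order optimality of $\Dtil$ along its own ray gives $\gg^\top\NN\Dtil = 2E$ (the paper uses the same fact, just written as ``the optimal objective of \eqref{eq:l2-IRLS}''). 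Your distribution of $p$-factors differs slightly --- you get $E\ge\res_p(\Dopt)/O(p)$ and pay $O(p)$ in the line-search transfer, while the paper's fixed-scaling argument puts $O(p^2)$ into the first step and treats the transfer as free --- but both land on $O(p^2\alpha_0^{-1})$.

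The ``self-referential system'' you flag at the end is real, and it is worth noting that the paper does not actually close it. Your chain $k\le (Em/\nu)^{(p-2)/2}$ needs an a priori upper bound on $E$; the argument only gives a lower bound, and $E$ could in principle be much larger than $\res_p(\Dopt)$ because the weighted least-squares objective drops the $\|\NN\Delta\|_p^p$ penalty that keeps the residual problem bounded. The paper resolves this by writing ``we start by assuming $|\gg^\top\NN\Dtil|\le\nu$'' (equivalently $E\le\nu/2$ via the identity $\gg^\top\NN\Dtil=2E$) and never returns to justify it --- the earlier promise to prove ``the objective of \eqref{eq:l2-IRLS} at the optimum is at most $\nu$'' is not discharged. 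So you have correctly located the load-bearing step, and if that bound is granted your steps (i)--(iii) produce the stated $O\bigl(p^2 m^{(p-2)/(2(p-1))}\bigr)$ factor in the same way the paper does. One further small point: the paper's chain opens with $\res_p(\alpha\Dtil)\ge 16p\cdot\res_p(\alpha\Dtil/16p)$, which is actually the reverse of the true inequality; what the line-search argument legitimately delivers is a lower bound on $\res_p(\alpha\Dtil/16p)$ (or on the $\ff$-decrease), which is exactly the ``rescaled version of $\alpha\Dtil$'' you describe --- your phrasing of step (iii) is the more careful one.
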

We note that Theorem \ref{thm:IRLS-Main} directly follows from Lemma \ref{lem:residualIRLS}, Lemma \ref{lem:invariant} and Theorem \ref{thm:IterativeRefinement}. Therefore, in the next section, we will prove Lemma \ref{lem:residualIRLS}.

\subsubsection{Solving the Residual Problem}

 Recall the residual problem (Definition \ref{def:residual}),
\[
\max_{\AA\Delta = 0} \gg^{\top}\NN\Delta - \Delta^{\top}\NN^{\top}\RR\NN\Delta - \|\NN\Delta\|_p^p,
\]
with $\gg = Diag(|\NN\xx|^{p-2})\NN\xx$ and $\RR = 2Diag(|\NN\xx|^{p-2})$. Let $\nu$ be as in Algorithm \ref{alg:IterRef}, then we will show that the solution of the following weighted least squares problem is a good approximation to the residual problem,
\begin{equation}\label{eq:l2-IRLS}
\max_{\AA\Delta = 0}\gg^{\top}\NN\Delta - \Delta^{\top}\NN^{\top}(\RR + \nu^{\frac{p-2}{p}}m^{-\frac{p-2}{p}}\II)\NN\Delta.
\end{equation}

\subsubsection{Proof of Lemma \ref{lem:residualIRLS}}
\begin{proof}
Since $\|\NN\xx\|_p^p - OPT \in(\nu/2, \nu]$, from Lemma \ref{lem:RelateResidualOpt}, we have the optimum of the residual problem satisfies, $res_p(\Dopt) \in (\nu/32p,\nu]$. 
We will next prove, that the objective of \eqref{eq:l2-IRLS} at the optimum is at most $\nu$ and at least $\frac{\nu}{2^{13} p^2}$.
Before proving the above bound, we will prove how $\alpha\Dtil$ gives the required approximation to the residual problem. We have $\alpha_0 = \min\left\{\frac{1}{2}, \frac{1}{(2k)^{1/p-1}}\right\}$.
\begin{align*}
res_p(\alpha\Dtil)& \geq 16 p \cdot  res_p(\alpha \Dtil/16p)\\
& \geq \|\NN\xx\|_p^p - \|\NN(\xx- \alpha\Dtil)\|_p^p\\
& \geq \|\NN\xx\|_p^p - \|\NN(\xx- \alpha_0\Dtil)\|_p^p\\
& \geq res_p(\alpha_0 \Dtil)\\
& = \alpha_0\left(\gg^{\top}\NN\Dtil - \alpha_0 \Dtil^{\top}\NN^{\top}\RR\NN\Dtil - \alpha_0^{p-1} \|\NN\Dtil\|_p^p \right)\\
& \geq \alpha_0\left(\gg^{\top}\NN\Dtil - \alpha_0 \Dtil^{\top}\NN^{\top}(\RR+s\II)\NN\Dtil - \alpha_0^{p-1} k \Dtil^{\top}\NN^{\top}(\RR+s\II)\NN\Dtil \right)\\
& \geq \alpha_0\left(\gg^{\top}\NN\Dtil - \frac{1}{2} \Dtil^{\top}\NN^{\top}(\RR+s\II)\NN\Dtil - \frac{1}{2} \Dtil^{\top}\NN^{\top}(\RR+s\II)\NN\Dtil \right)\\
& = \alpha_0\left(\gg^{\top}\NN\Dtil - \Dtil^{\top}\NN^{\top}(\RR+s\II)\NN\Dtil \right)\\
& \geq \frac{\alpha_0 \nu}{2^{13} p^2} \geq \frac{\alpha_0}{2^{13} p^2}OPT.
\end{align*}
It remains to prove the bound on the optimal objective of \eqref{eq:l2-IRLS} and bound $\alpha_0$ for which it is sufficient to find an upper bound on $k$,
\[
k = \frac{\|\NN\Dtil\|_p^p}{\Dtil^{\top}\NN^{\top}(\RR+s\II)\NN\Dtil}.
\]
We will first bound $k$. Since, $s\II\preceq \RR + s\II$,
\[
\|\NN\Dtil\|_2^2 \leq \frac{1}{s} \Dtil^{\top}\NN^{\top}(\RR+s\II)\NN\Dtil,
\]
and
\[
\|\NN\Dtil\|_p^p \leq \|\NN\Dtil\|_2^p \leq \frac{1}{s} \|\NN\Dtil\|_2^{p-2}\Dtil^{\top}\NN^{\top}(\RR+s\II)\NN\Dtil.
\]
Therefore it is sufficient to bound $\|\NN\Dtil\|_2$, as
\[
k = \frac{\|\NN\Dtil\|_p^p}{\Dtil^{\top}\NN^{\top}(\RR+s\II)\NN\Dtil} \leq \frac{1}{s} \|\NN\Dtil\|_2^{p-2}.
\]
To bound $\|\NN\Dtil\|_2$, we start by assuming $|\gg^{\top}\NN\Dtil| \leq \nu$. Now, since optimal objective of \eqref{eq:l2-IRLS} is lower bounded by $\frac{\nu}{2^{13} p^2}$,
\[
 \Dtil^{\top}\NN^{\top}(R + \nu^{\frac{p-2}{p}}m^{-\frac{p-2}{p}}\II)\NN\Dtil \leq \gg^{\top}\NN\Dtil - \frac{\nu}{2^{13} p^2} \leq \nu.
\]
We thus have,
\[
\nu^{\frac{p-2}{p}}m^{-\frac{p-2}{p}}\|\NN\Dtil\|_2^2 \leq \nu.
\]
Using this we get,
\[
k \leq \frac{1}{\nu^{\frac{p-2}{p}}m^{-\frac{p-2}{p}}} \frac{\nu^{\frac{p-2}{2}}}{\nu^{\frac{(p-2)^2}{2p}}m^{-\frac{(p-2)^2}{2p}}} =m^{\frac{p-2}{2}}.  
\]
We thus have $\alpha_0$ lower bounded by $m^{-\frac{p-2}{2(p-1)}}$, which gives us our result. It remains to give a lower bound to the optimal objective of \eqref{eq:l2-IRLS}. 

Let $\Dopt$ denote the optimum of the residual problem. We know that $\|\NN\Dopt\|_p^p \leq \nu, \Dopt^{\top}\NN^{\top}\RR\NN\Dopt \leq \nu$ and $\gg^{\top}\NN\Dopt > \nu/32p$. Since $\|\NN\Dopt\|_p^p \leq \nu$ we have $\|\NN\Dopt\|_2^2 \leq m^{(p-2)/p}\nu^{2/p}$. For $a = 1/2^7 p,$ $a\Dopt$ is a feasible solution for \eqref{eq:l2-IRLS}.
\begin{align*}
\gg^{\top}\NN\Dtil - \Dtil^{\top}\NN^{\top}(R + \nu^{\frac{p-2}{p}}m^{-\frac{p-2}{p}}\II)\NN\Dtil &\geq
a \gg^{\top}\NN\Dopt - a^2\Dopt^{\top}\NN^{\top}(R + \nu^{\frac{p-2}{p}}m^{-\frac{p-2}{p}}\II)\NN\Dopt\\
& \geq a \left(\frac{\nu}{32p} - a \Dopt^{\top}\NN^{\top}\RR\NN\Dopt - a \nu^{\frac{p-2}{p}}m^{-\frac{p-2}{p}} \|\NN\Dopt\|_2^2\right)\\
& \geq  a \left(\frac{\nu}{32p} - a \nu - a \nu^{\frac{p-2}{p}}m^{-\frac{p-2}{p}} m^{(p-2)/p}\nu^{2/p}\right)\\
& = a \left(\frac{\nu}{32p} - a \nu - a \nu\right)\\
& = a \frac{\nu}{2^6 p} = \frac{\nu}{2^{13} p^2}
\end{align*}
Thus, the optimal objective of \eqref{eq:l2-IRLS} is lower bounded by $\frac{\nu}{2^{13} p^2}$.
\end{proof}

\subsection{Experiments}
 In this section, we include the experimental results from \textcite{AdilPS19} which are based on Algorithm $p$-IRLS described in the paper. We would like to mention that $p$-IRLS is similar in spirit to Algorithm \ref{alg:IRLS} and thus we expect a similar performance by an implementation of Algorithm \ref{alg:IRLS}. Algorithm $p$-IRLS is described for setting \eqref{eq:IRLS-true} and is available at https://github.com/fast-algos/pIRLS \cite{AdilPS2019Code}. We now give a brief summary of the experiments.

 \subsubsection{Experiments on p-IRLS}

\begin{figure}
\subfloat[Size of $\AA$ fixed to $1000 \times 850$.]{  \includegraphics[width =0.3\textwidth]{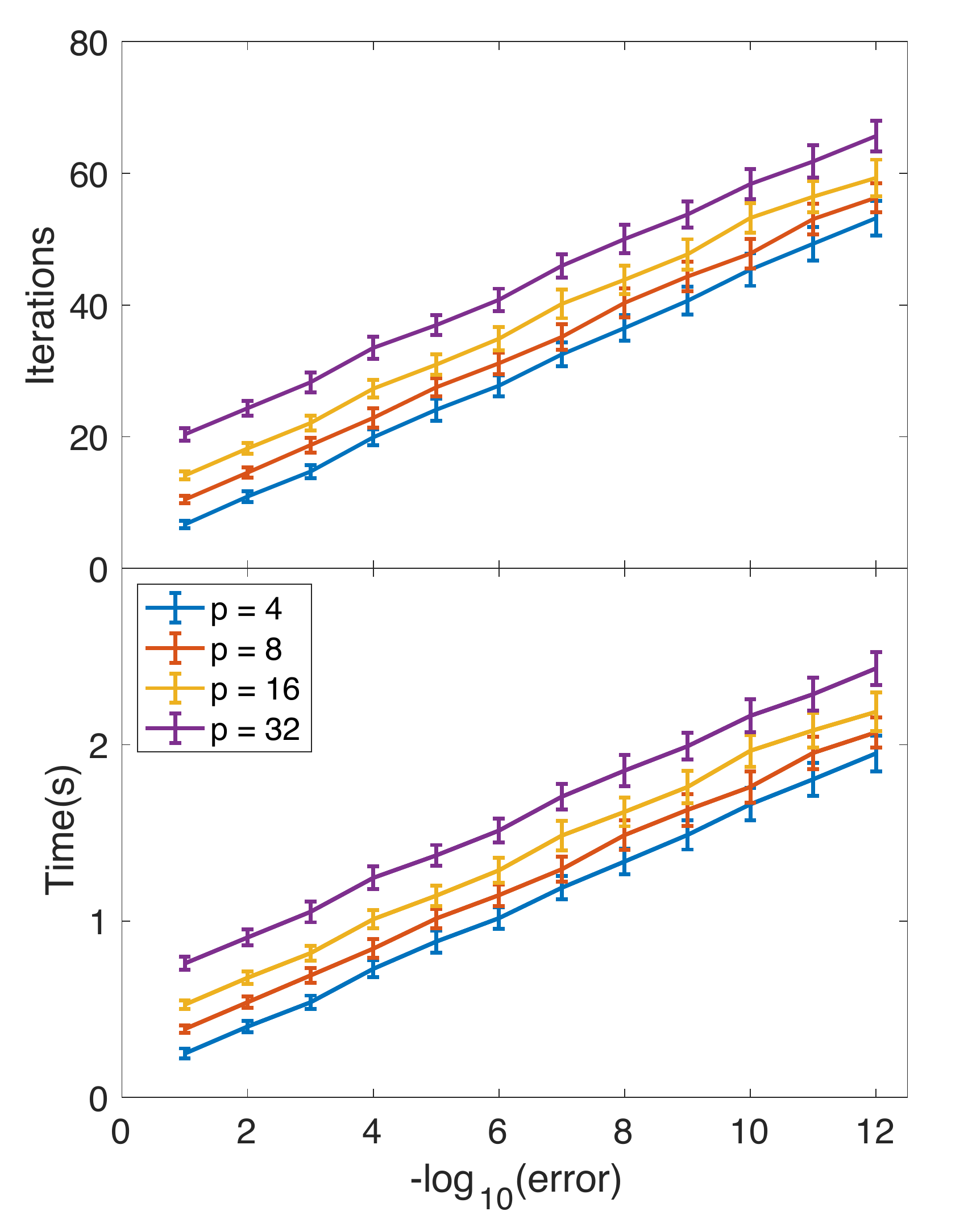}}
\hfill
\subfloat[Sizes of $\AA$: $(50+100(k-1))\times 100k$. Error $\epsilon = 10^{-8}.$]{  \includegraphics[width =0.3\textwidth]{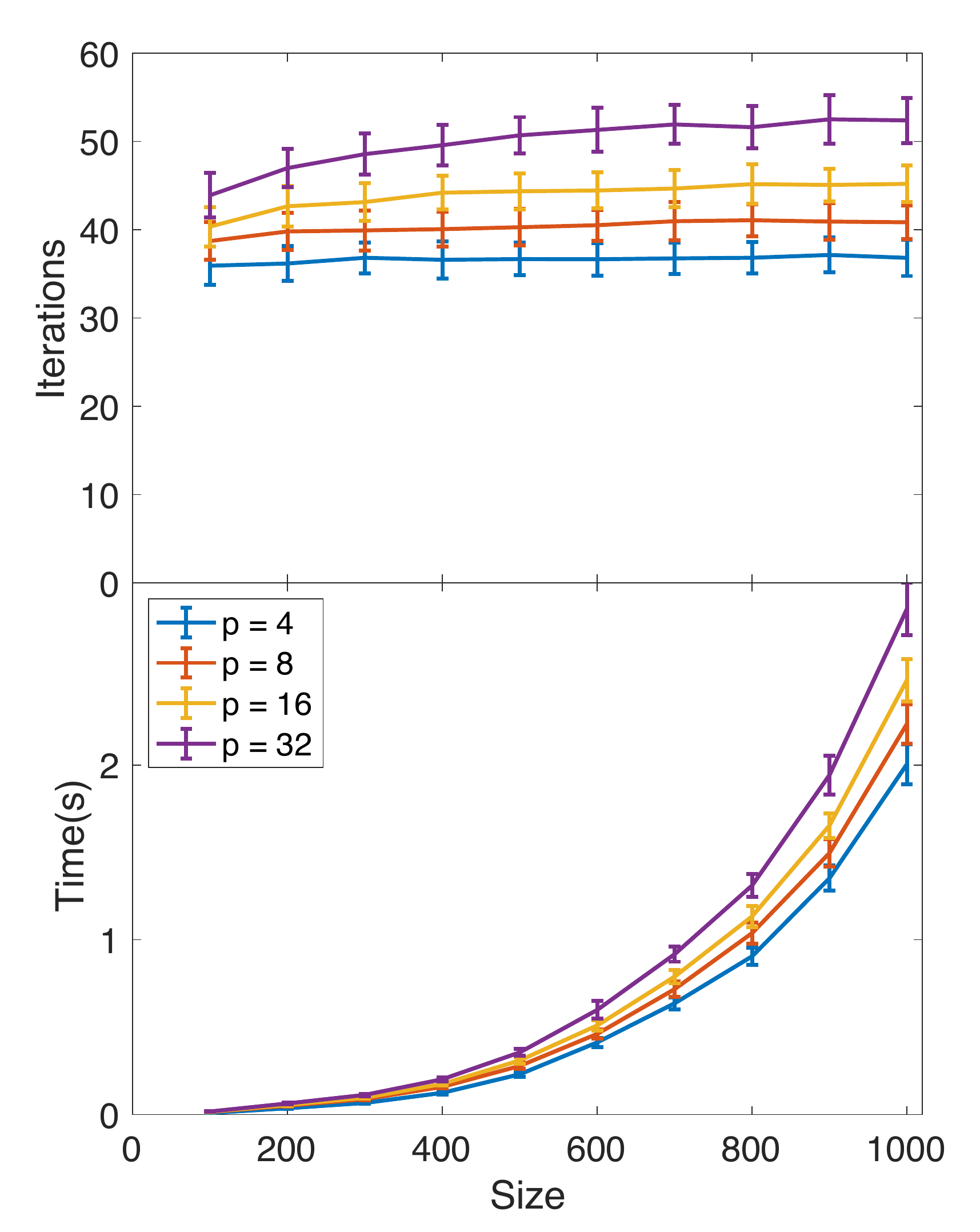}}
\hfill
 \subfloat[Size of $\AA$ is fixed to $1000 \times 850$. Error $\epsilon = 10^{-8}.$]{  \includegraphics[width =0.3\textwidth]{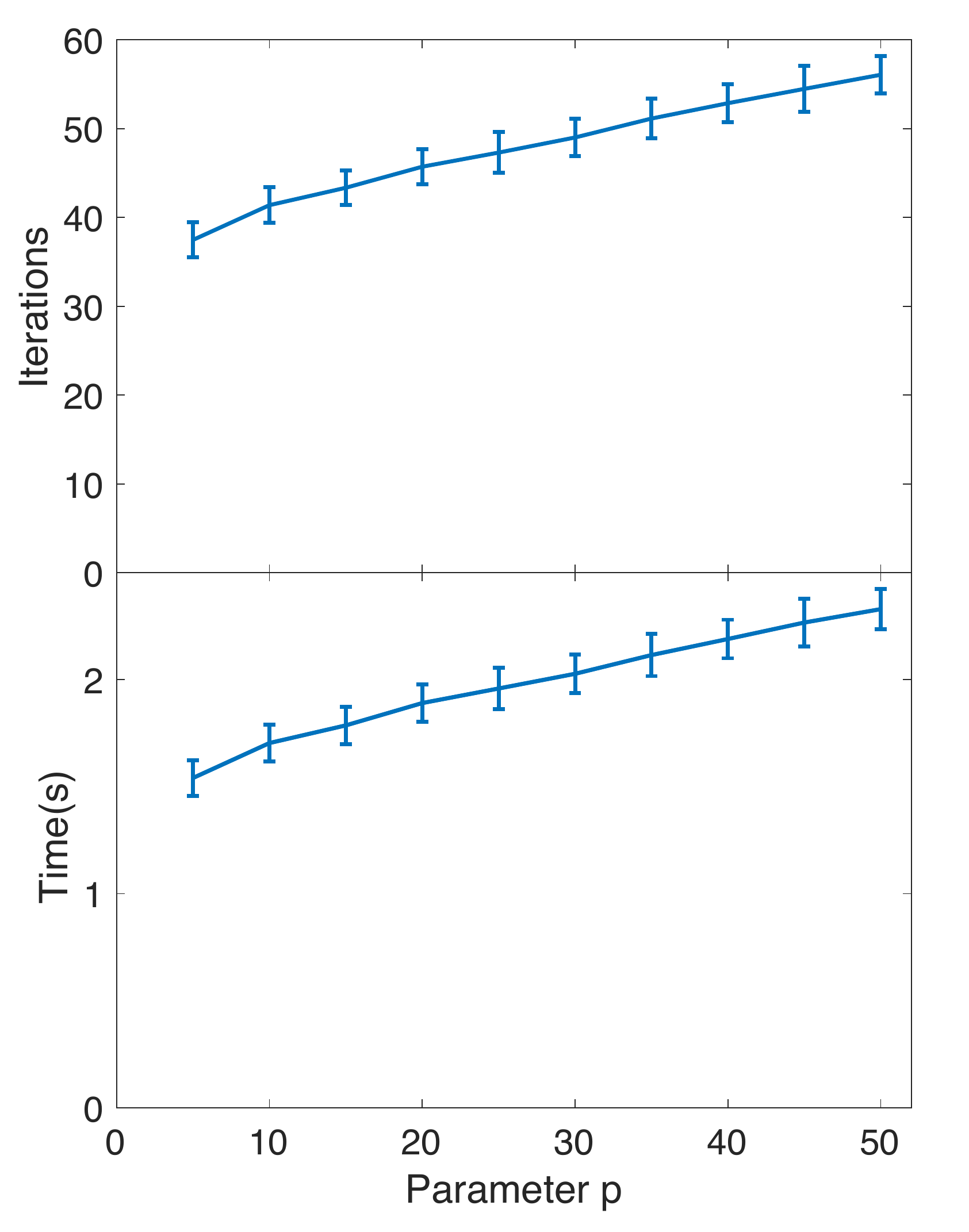}}
   \caption{Random Matrix instances. Comparing the number of iterations and time taken by our algorithm with the parameters. Averaged over 100 random samples for $\AA$ and $\bb$. Linear solver used : backslash.}
   \label{fig:Matrices}
\end{figure}

\begin{figure}
  \centering
    \includegraphics[width=0.3\textwidth]{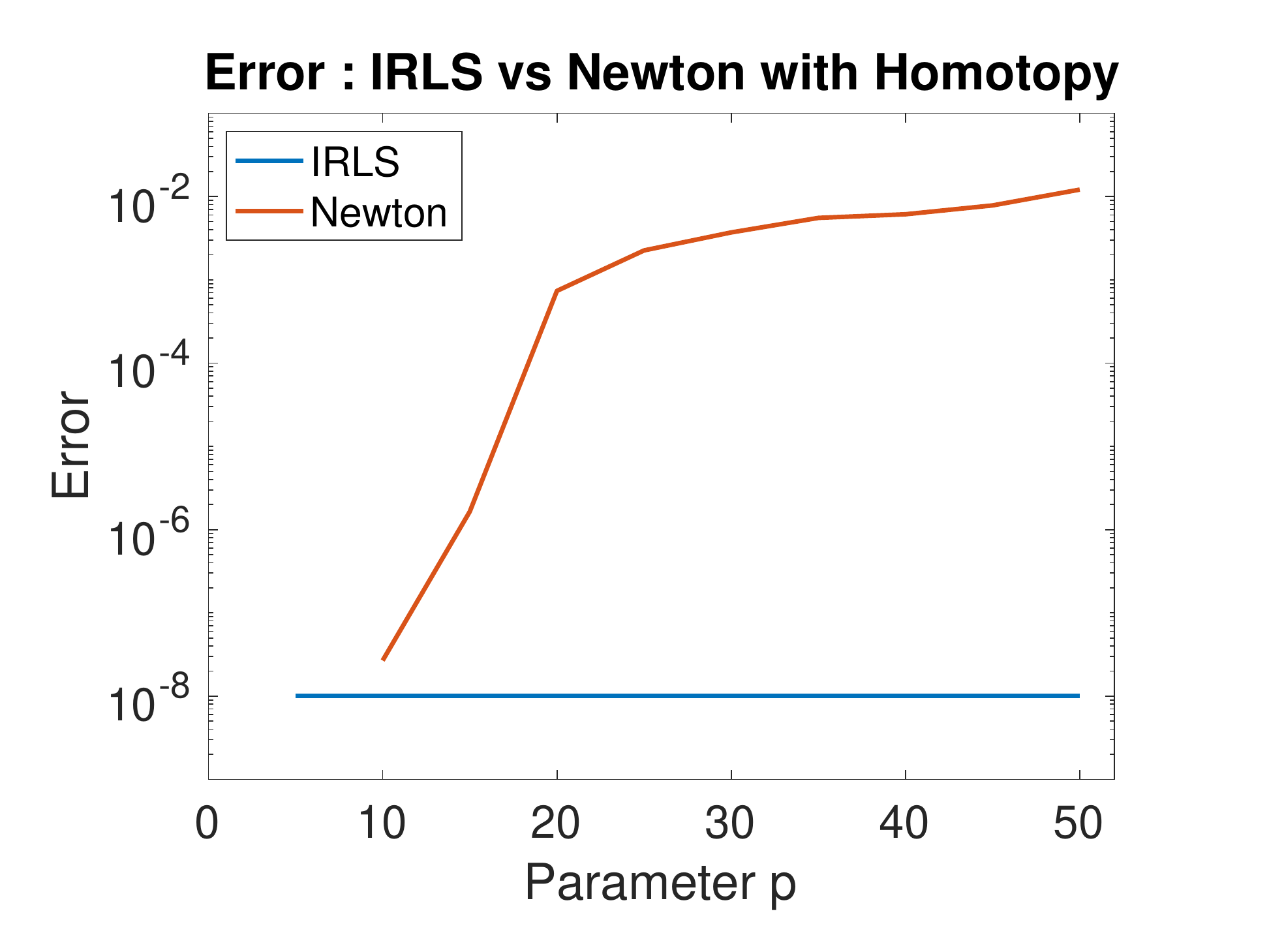}
\caption{Averaged over 100 random samples. Graph: $1000$ nodes ($5000$-$6000$ edges). Solver: PCG with Cholesky preconditioner.}
\label{fig:IRLSvsNt}
\end{figure}

\begin{figure}
\subfloat[Size of graph fixed to $1000$ nodes (around $5000$-$6000$ edges).]{  \includegraphics[width =0.3\textwidth]{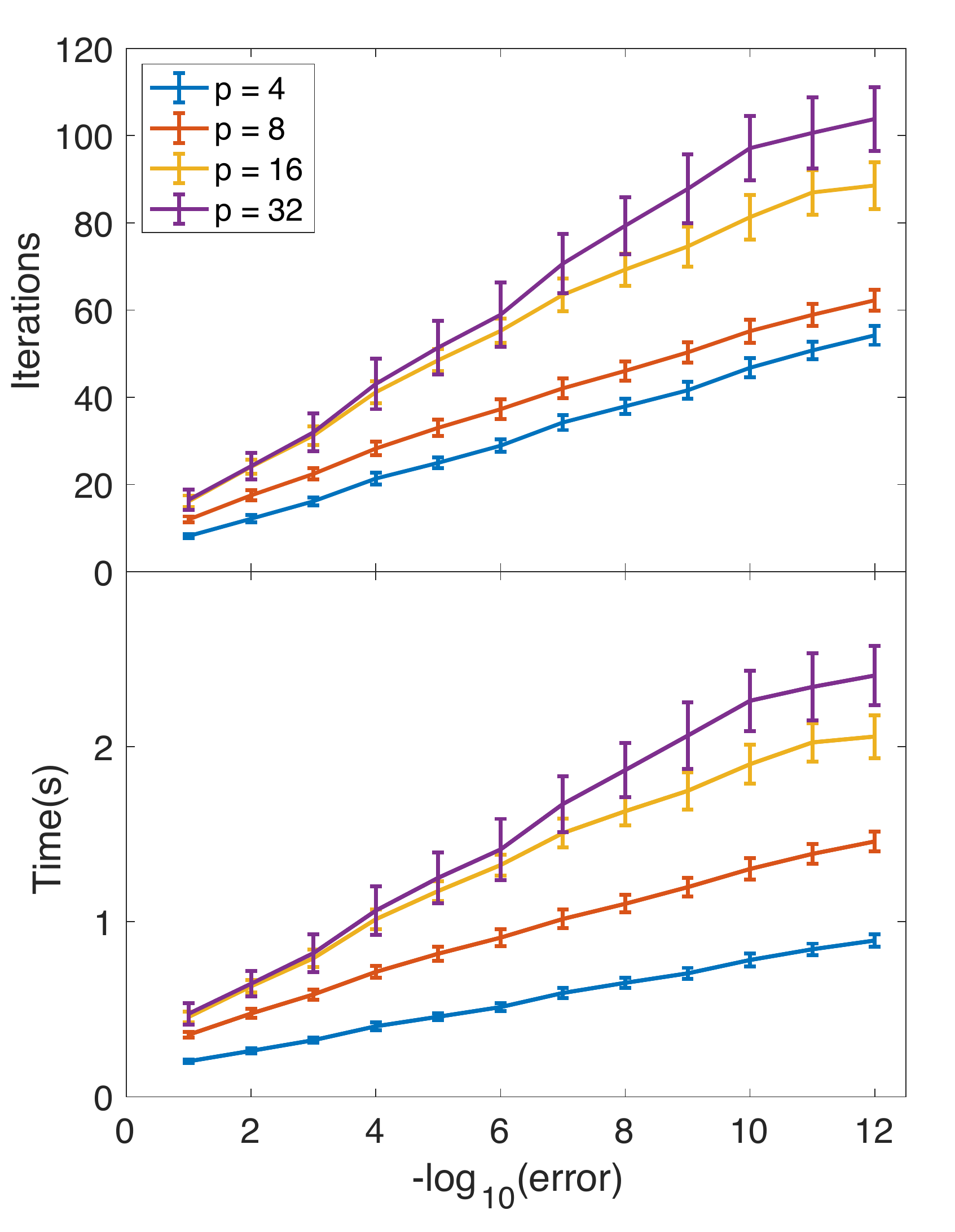}}
\hfill
\subfloat[Number of nodes: $100k$. Error $\epsilon = 10^{-8}.$]{  \includegraphics[width =0.3\textwidth]{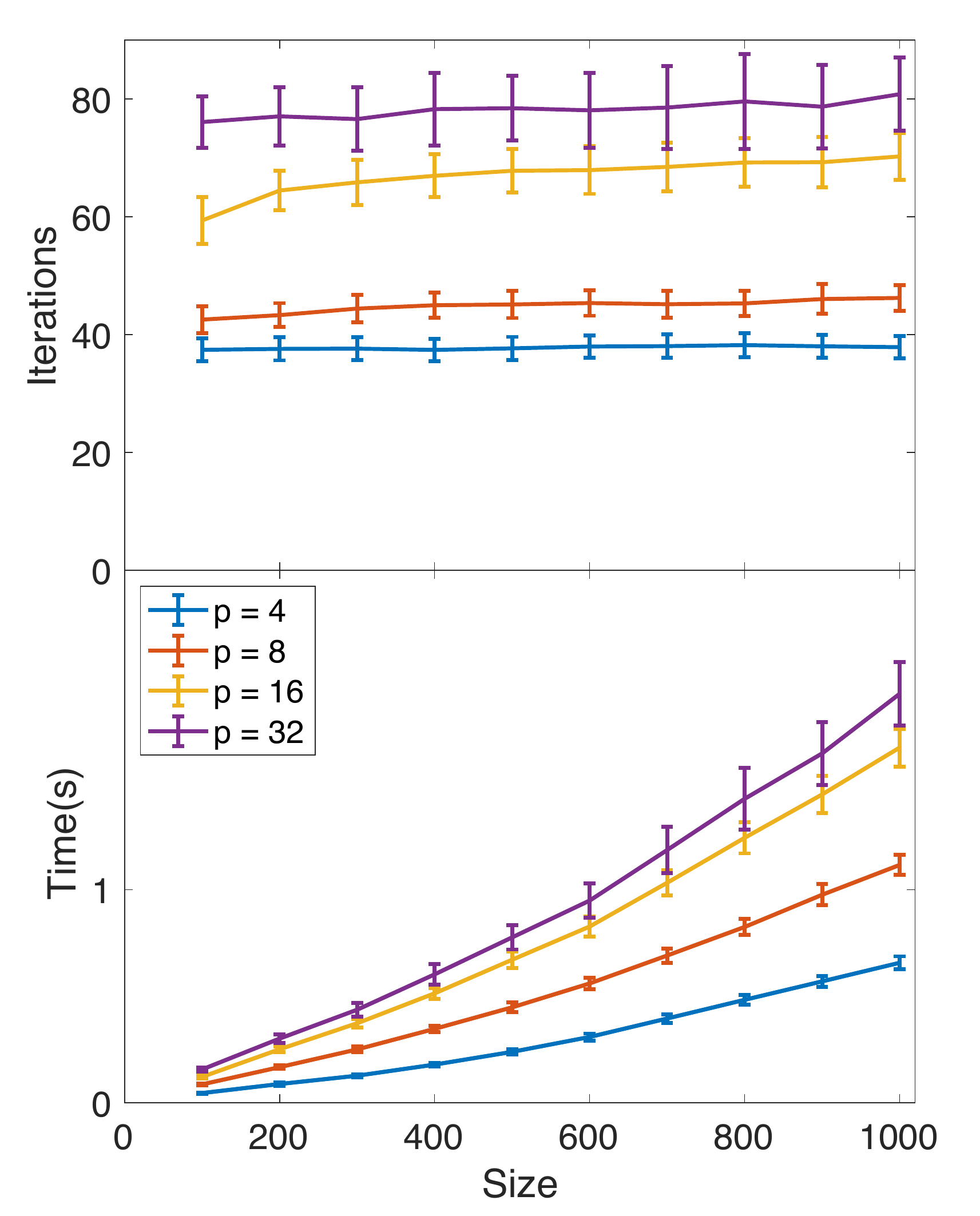}}
\hfill
 \subfloat[Size of graph fixed to $1000$ nodes (around $5000$-$6000$ edges). Error $\epsilon = 10^{-8}.$]{  \includegraphics[width =0.3\textwidth]{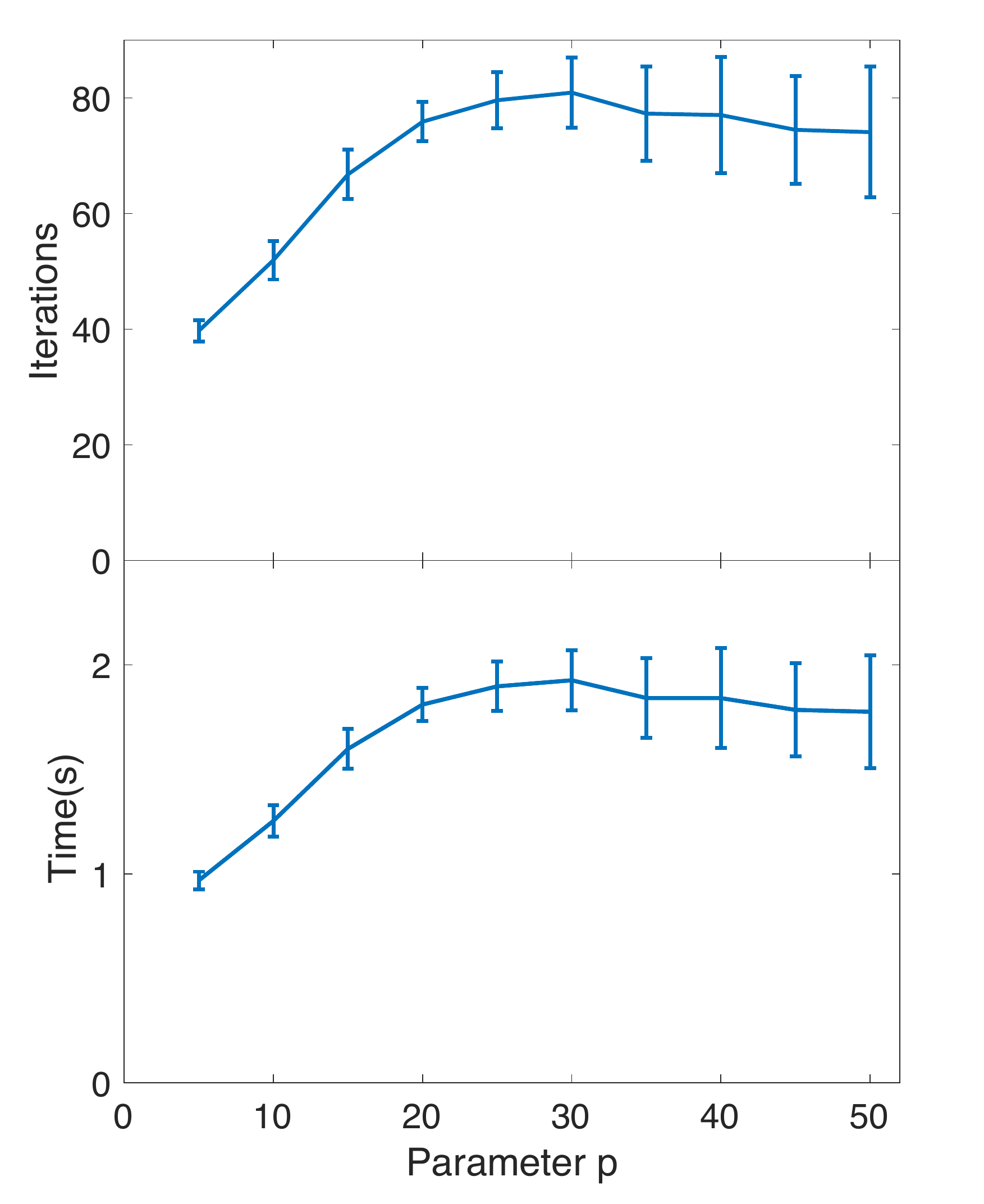}}
   \caption{Graph Instances. Comparing the number of iterations and time taken by our algorithm with the parameters. Averaged over 100 graph samples. Linear solver used : backslash.}
   \label{fig:Graphs}
\end{figure}

\begin{figure}
\centering
   \subfloat[Fixed $p = 8$. Size of matrices: $100k \times (50+100(k-1))$.] {\includegraphics[width=0.23\textwidth]{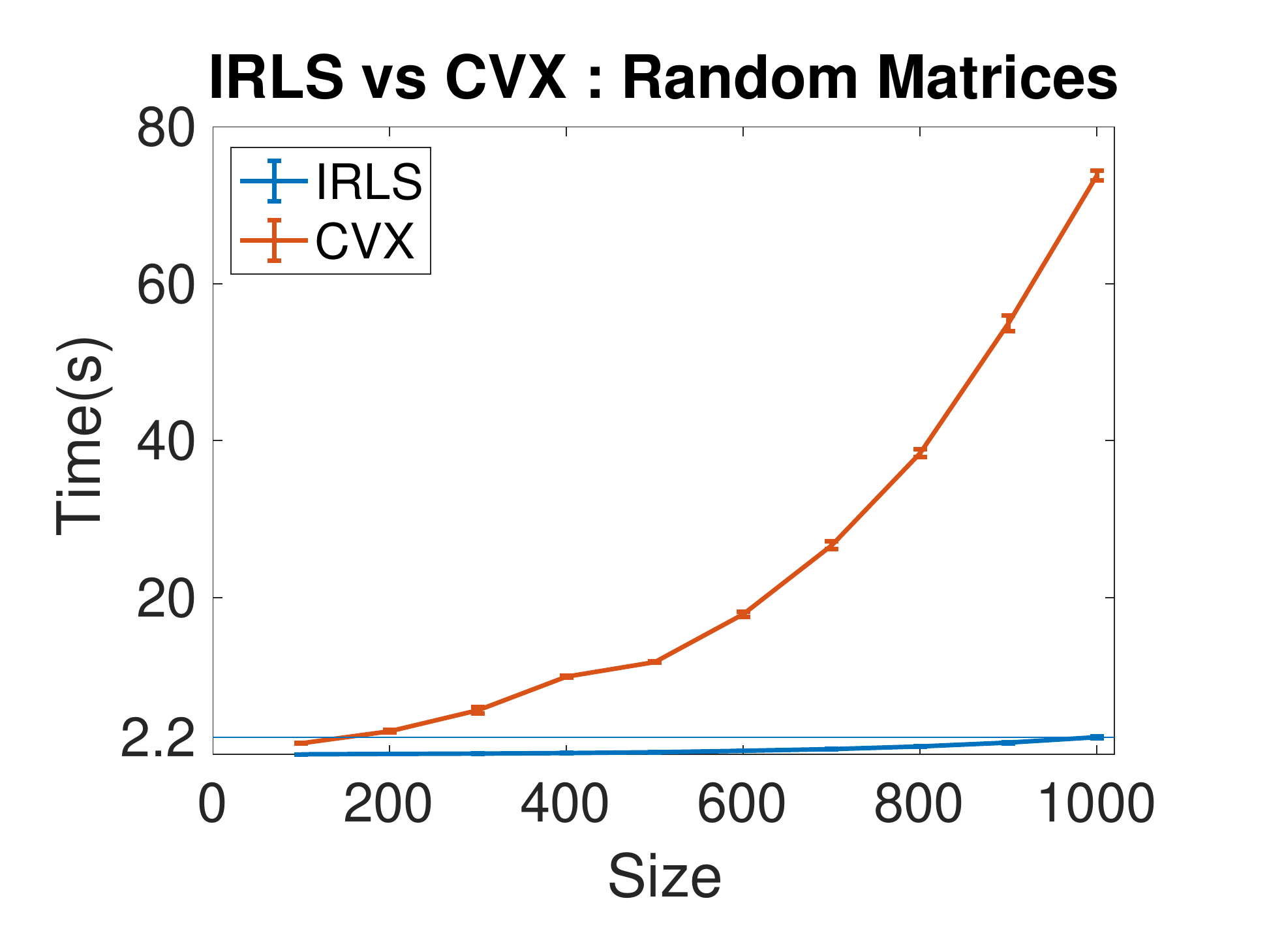}}
  \hfill
  \subfloat[Size of matrices fixed to $500 \times 450$.]{\includegraphics[width=0.23\textwidth]{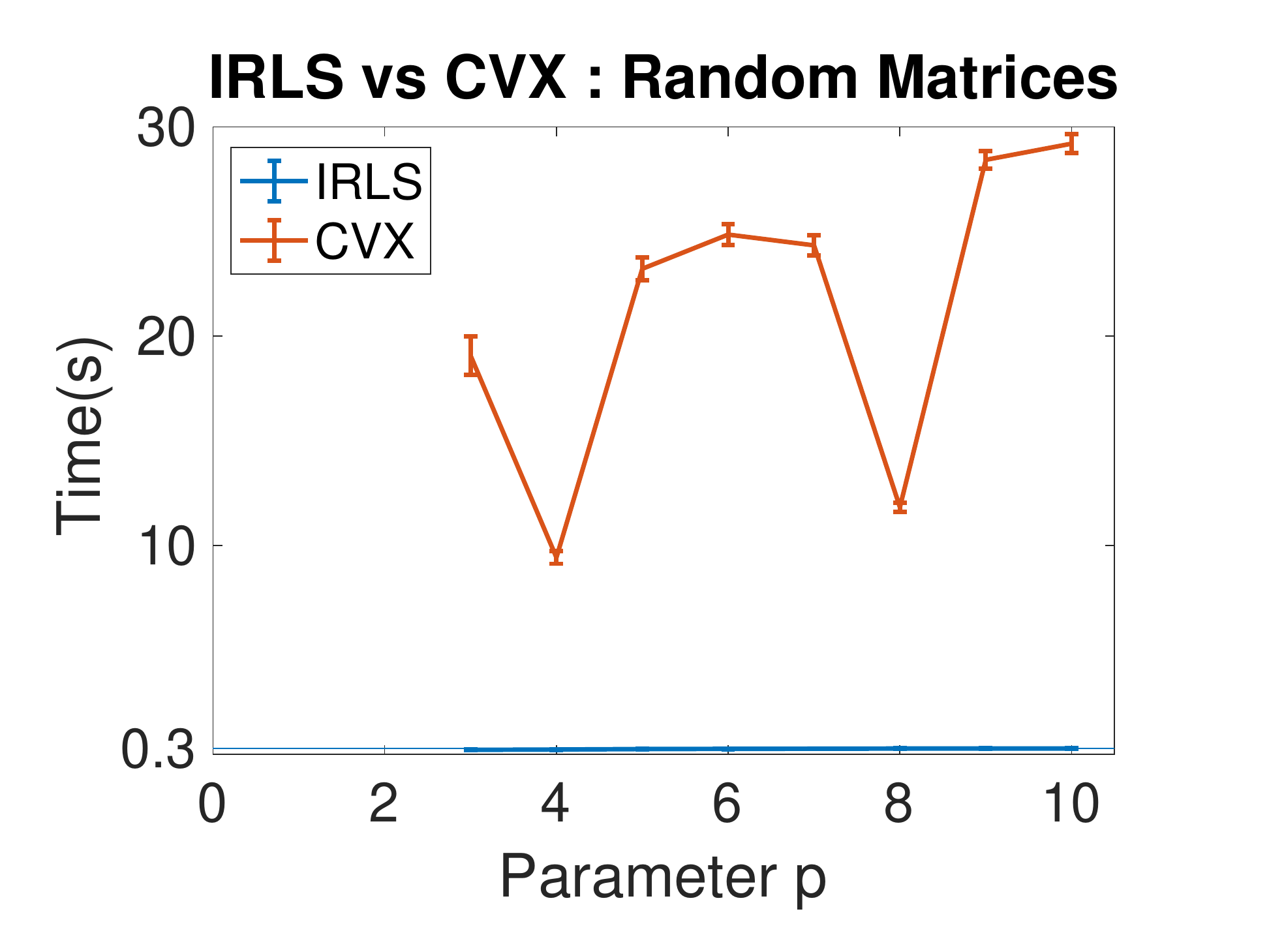}}
  \hfill
  \centering
    \subfloat[Fixed $p = 8$. The number of nodes : $50k, k = 1,2,...,10$.]{\includegraphics[width=0.23\textwidth]{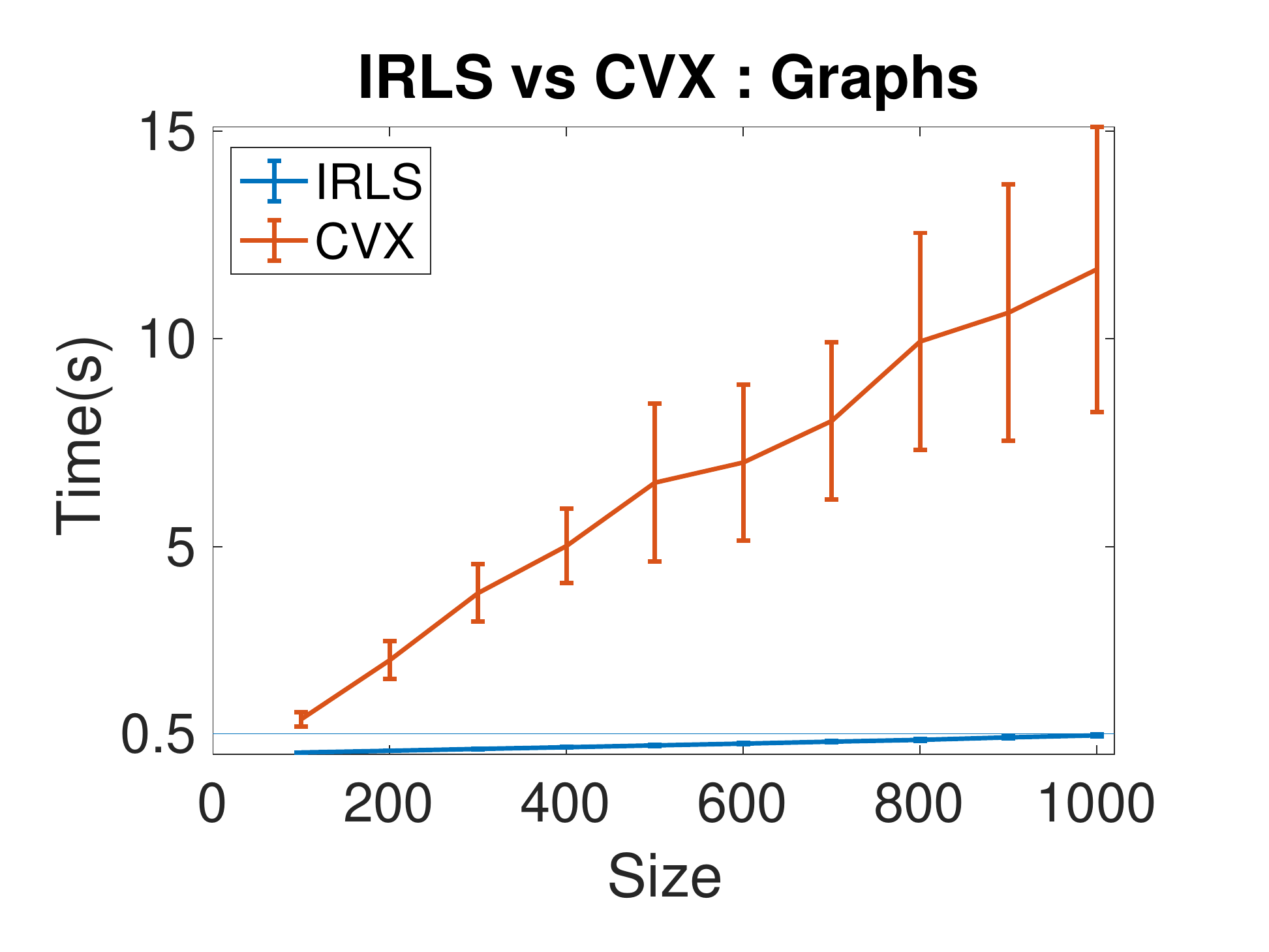}{}}
  \hfill
    \subfloat[Size of graphs fixed to $400$ nodes ( around $2000$ edges). ]{\includegraphics[width=0.23\textwidth]{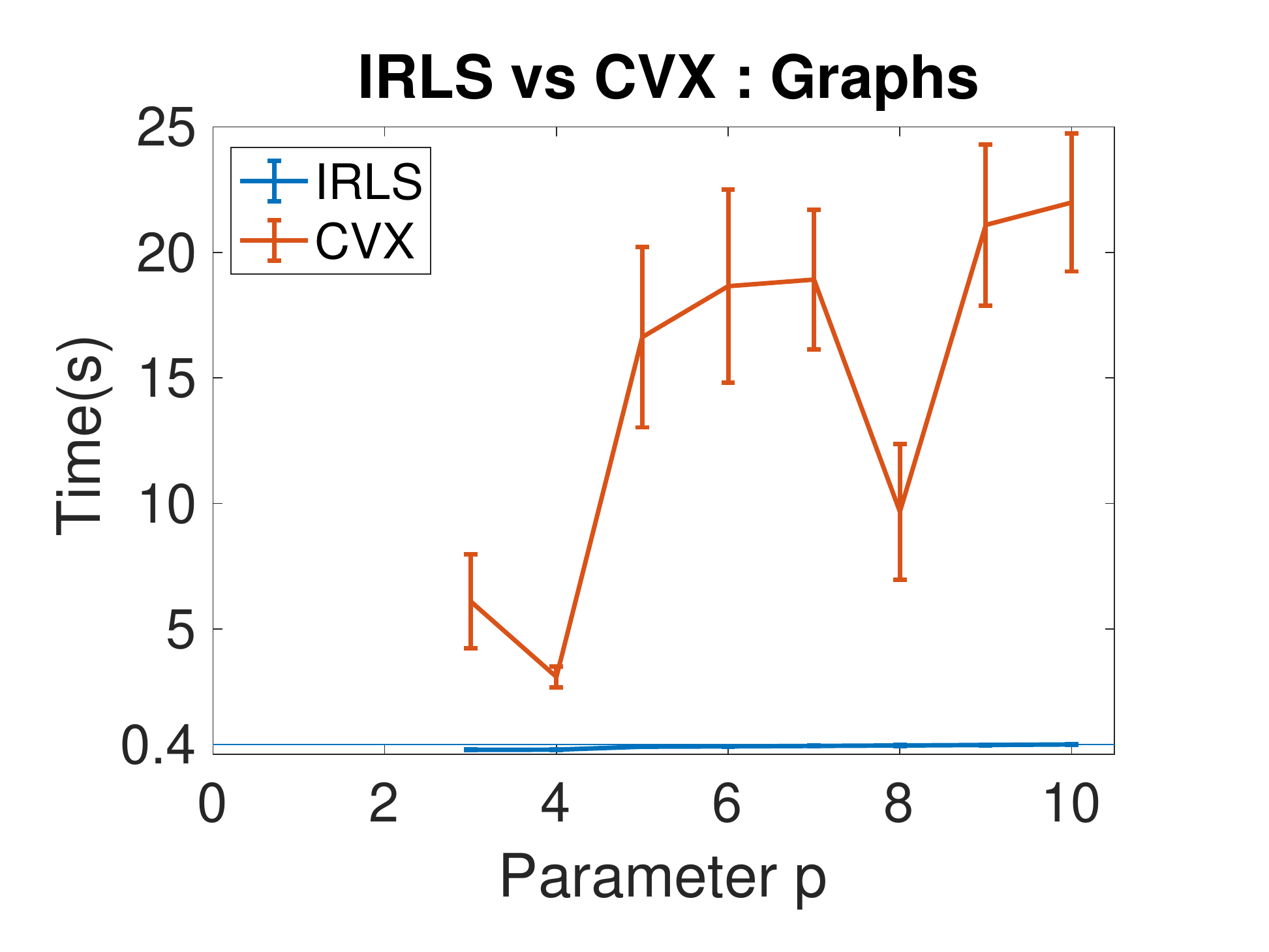}}
 \caption{Averaged over 100 samples. Precision set to $\epsilon = 10^{-8}$.CVX solver used : SDPT3 for Matrices and Sedumi for Graphs.}
 \label{fig:CVXvsIRLS}
\end{figure}

All implementations were done on on MATLAB 2018b on a Desktop ubuntu machine with an Intel Core $i5$-$4570$ CPU @ $3.20GHz \times 4$ processor and 4GB RAM. The two kinds of instances considered are {\it Random Matrices} and {\it Graph instances} for the problem $\min_{x} \norm{\AA\xx-\bb}_p$.

\begin{tight_enumerate}
\item {\bf Random Matrices:} Matrices $\AA$ and $\bb$ are generated randomly i.e., every entry of the matrix is chosen uniformly at random between $0$ and $1$.
\item {\bf Graphs:} Instances are generated as in \textcite{rios2019}.  Vertices are uniform random vectors in $[0,1]^{10}$ and edges are created by connecting the $10$ nearest neighbors. The weight of every edge is determined by a Gaussian function (Eq 3.1,\cite{rios2019}). Around 10 vertices have labels chosen uniformly at random between $0$ and $1$. The problem is to minimize the $\ell_p$ laplacian. Appendix \ref{chap:laplacian} contains details on how to formulate this problem into our standard form. These instances were generated using the code by \textcite{Flores19}.
\end{tight_enumerate}

The performance of $p$-IRLS is compared against  Matlab/CVX solver \cite{cvx,gb08} and the IRLS/homotopy based
implementation from~\textcite{rios2019}. More details on the experiments are in \textcite{AdilPS19} and the plots and specific details of the implementation are included in Figures \ref{fig:Matrices},\ref{fig:IRLSvsNt},\ref{fig:Graphs} and, \ref{fig:CVXvsIRLS}.


\printbibliography

\appendix

\section{Solving \texorpdfstring{$\ell_2$}{TEXT} Problems under Subspace Constraints}\label{chap:l2solve}
We will show how to solve general problems of the following form using a linear system solver.
\begin{align*}
\min_{\xx} &\quad \norm{\AA\xx-\bb}_2^2\\
&\CC\xx = \dd.
\end{align*}
We first write the Lagrangian of the problem,
\[
L(\xx,\vvs) =  \min_{\xx} \max_{\vvs} \quad(\AA\xx-\bb)^{\top}(\AA\xx-\bb) + \vvs^{\top}(\dd-\CC\xx)
\]
Using Lagrangian duality and noting that strong duality holds, we can write the above as,
\begin{align*}
L(\xx,\vvs) = & \min_{\xx} \max_{\vvs} \quad(\AA\xx-\bb)^{\top}(\AA\xx-\bb) + \vvs^{\top}(\dd-\CC\xx)\\
= & \max_{\vvs}\min_{\xx} \quad(\AA\xx-\bb)^{\top}(\AA\xx-\bb) + \vvs^{\top}(\dd-\CC\xx).
\end{align*}
We first find $\xx^{\star}$ that minimizes the above objective by setting the gradient with respect to $\xx$ to $0$. We thus have,
\[
\xx^{\star} = (\AA^{\top}\AA)^{-1}\left(\frac{2\AA^{\top}\bb+\CC^{\top}\vvs}{2}\right).
\]
Using this value of $\xx$ we arrive at the following dual program.
\[
L(\vvs) = \max_{\vvs} \quad -\frac{1}{4}\vvs^{\top}\CC(\AA^{\top}\AA)^{-1}\CC^{\top}\vvs -\bb^{\top}\AA(\AA^{\top}\AA)^{-1}\AA^{\top}\bb - \vvs^{\top}\CC(\AA^{\top}\AA)^{-1}\AA^{\top}\bb +\bb^{\top}\bb+\vvs^{\top}\dd,
\]
which is optimized at,
\[
\vvs^{\star}= 2\left(\CC(\AA^{\top}\AA)^{-1}\CC^{\top}\right)^{-1} \left(\dd - \CC(\AA^{\top}\AA)^{-1}\AA^{\top}\bb\right).
\] 
Strong duality also implies that $L(\xx,\vvs^{\star})$ is optimized at $\xx^{\star}$, which gives us,
\[
\xx^{\star} =  (\AA^{\top}\AA)^{-1}\left(\AA^{\top}\bb+\CC^{\top}\left(\CC(\AA^{\top}\AA)^{-1}\CC^{\top}\right)^{-1} \left(\dd - \CC(\AA^{\top}\AA)^{-1}\AA^{\top}\bb\right) \right).
\]

We now note that we can compute $\xx^{\star}$ by solving the following linear systems in order:

\begin{enumerate}
	\item Find inverse of $\AA^{\top}\AA$
	\item $\left(\CC(\AA^{\top}\AA)^{-1}\CC^{\top}\right) x = \left(\dd - \CC(\AA^{\top}\AA)^{-1}\AA^{\top}\bb\right)$
\end{enumerate}

\section{Converting \texorpdfstring{$\ell_p$}{TEXT}-Laplacian Minimization to Regression Form}
\label{chap:laplacian}
Define the following terms:
\begin{tight_itemize}
\item $n$ denote the number of vertices.
\item $l$ denote the number of labels.
\item $\BB$ denote the edge-vertex adjacency matrix.
\item $\gg$ denote the vector of labels for the $l$ labelled vertices.
\item $\WW$ denote the diagonal matrix with weights of the edges.
\end{tight_itemize}
Set $\AA = \WW^{1/p}\BB$ and $\bb = - \BB[:,n:n+l]\gg$. Now $\norm{\AA\xx-\bb}_p^p$ is equal to the $\ell_p$ laplacian and we can use our IRLS algorithm from Chapter 7 to find the $\xx$ that minimizes this.


\section{Increasing Resistances}
\label{sec:r-inc}

We first prove the following lemma that shows how much $\Psi$ changes with a change in resistance.

\begin{restatable}{lemma}{ckmstResIncrease}
  \label{lem:ckmst:res-increase}
Let $\Dtil = \arg\min_{\AA\Delta = c} \Delta^{\top}\MM^{\top}\MM\Delta + \sum_e \rr_e (\NN\Delta)_e^2$. Then one has for any $\rr'$ and $\rr$ such that $\rr' \geq \rr$,
\[
{\Psi({\rr'})} \geq {\Psi\left({\rr}\right)} + \sum_{e}\left(1-\frac{\rr_e}{\rr'_e}\right) \rr_e (\NN\Dtil)_e^2.
\]
\end{restatable}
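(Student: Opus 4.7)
The plan is to prove this by a Lagrangian-duality argument that mirrors the ``voltage dual'' for electrical flows, extended to accommodate the resistance-independent quadratic $\Delta^{\top}\MM^{\top}\MM\Delta$. The key observation is that if we can rewrite $\Psi(\rr)$ as the maximum of a dual objective in which $\rr$ appears only through separable $1/\rr_e$ terms, and in which the dual feasibility constraint is \emph{independent} of $\rr$, then feeding the optimal $\rr$-dual solution into the $\rr'$-dual via weak duality yields the desired additive lower bound on $\Psi(\rr')-\Psi(\rr)$.

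Concretely, I would first absorb the $\MM^{\top}\MM$ term into the edge-sum by stacking matrices: let $\tilde{\NN}$ be $\MM$ stacked above $\NN$, and let $\tilde{\rr}$ be the vector whose first $m_2$ entries are $1$ and whose remaining entries equal $\rr$; define $\tilde{\rr}'$ analogously, so that $\tilde{\rr}$ and $\tilde{\rr}'$ agree on the first $m_2$ coordinates. Then
\[
\Psi(\rr) = \min_{\AA\Delta = \cc} \sum_i \tilde{\rr}_i\,(\tilde{\NN}\Delta)_i^2,
\]
and likewise for $\Psi(\rr')$. Introducing an auxiliary primal $\tilde{f} = \tilde{\NN}\Delta$ with multiplier $\mu$, and $\phi$ for the constraint $\AA\Delta = \cc$, I would form the Lagrangian, which separates across the entries of $\tilde{f}$. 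Eliminating $\tilde{f}$ via $\tilde{f}_i = \mu_i/(2\tilde{\rr}_i)$ and $\Delta$ via the stationarity condition $\tilde{\NN}^{\top}\mu = \AA^{\top}\phi$ produces the dual
\[
\Psi(\rr) = \max_{\phi,\,\mu\,:\,\tilde{\NN}^{\top}\mu\,=\,\AA^{\top}\phi}\ \phi^{\top}\cc - \sum_i \frac{\mu_i^2}{4\tilde{\rr}_i}.
\]
Crucially, the feasibility constraint does not involve the resistances, and by primal-dual optimality the optimal $(\phi^{\star},\mu^{\star})$ at $\rr$ satisfies $\mu^{\star}_i = 2\tilde{\rr}_i\,(\tilde{\NN}\Dtil)_i$.

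Finally, since $(\phi^{\star},\mu^{\star})$ is still feasible for the $\rr'$-dual, weak duality gives
\[
\Psi(\rr') \ge \phi^{\star\top}\cc - \sum_i \frac{(\mu^{\star}_i)^2}{4\tilde{\rr}'_i} = \Psi(\rr) + \sum_i \frac{(\mu^{\star}_i)^2}{4}\left(\frac{1}{\tilde{\rr}_i} - \frac{1}{\tilde{\rr}'_i}\right).
\]
For the first $m_2$ coordinates $\tilde{\rr}_i = \tilde{\rr}'_i = 1$, so those summands vanish; for the remaining coordinates $\mu^{\star}_e = 2\rr_e(\NN\Dtil)_e$ and $\rr_e^2(1/\rr_e - 1/\rr'_e) = \rr_e(1-\rr_e/\rr'_e)$, which recovers the claimed inequality. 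I do not anticipate a significant obstacle: this is essentially a quantitative Rayleigh monotonicity statement, and the only subtle point is handling the fixed quadratic term, which the stacking trick reduces to unit ``resistances'' that contribute zero to the bound.
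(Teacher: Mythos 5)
Your proof is correct, and it takes a genuinely different route from the paper's. The paper also starts from Lagrangian duality, but only to obtain the closed form $\Psi(\rr) = \cc^{\top}\bigl(\AA(\MM^{\top}\MM + \NN^{\top}\RR\NN)^{-1}\AA^{\top}\bigr)^{-1}\cc$; it then applies the Sherman--Morrison--Woodbury identity twice, with an intermediate Loewner bound $\II + \SS^{1/2}\NN(\MM^{\top}\MM + \NN^{\top}\RR\NN)^{-1}\NN^{\top}\SS^{1/2} \preceq \II + \SS^{1/2}\RR^{-1}\SS^{1/2}$ (where $\SS = \RR' - \RR$), to extract the per-edge increment. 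Your approach stays in the dual the whole way: after stacking $\MM$ onto $\NN$ so that the fixed quadratic becomes unit ``resistances'' that cancel, you observe that the dual feasible region $\{(\phi,\mu) : \tilde{\NN}^{\top}\mu = \AA^{\top}\phi\}$ does not depend on $\tilde{\rr}$, so the $\rr$-optimal dual pair remains feasible for the $\rr'$-dual and weak duality directly delivers the additive lower bound. This is the classical ``voltage-dual'' proof of quantitative Rayleigh monotonicity, and it is both shorter and more transparent than the paper's matrix-identity computation; it avoids the Woodbury bookkeeping and the auxiliary Loewner claim entirely. What the paper's route buys in exchange is the explicit formula for the updated inverse, which it in any case needs elsewhere (Lemma~\ref{lem:LowRankUpdate}) for the low-rank maintenance, so some of that machinery is not wasted. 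As a standalone proof of the lemma, yours is the cleaner one; the only small wording nit is that ``feeding the $\rr$-optimal dual into the $\rr'$-dual'' is an application of weak duality for the $\rr'$-primal (any dual feasible value lower-bounds the primal minimum) together with strong duality at $\rr$ (so that the $\rr$-dual value equals $\Psi(\rr)$), and you should also note that $\tilde{\rr}_i > 0$ (which holds since $\rr_e \geq 1$ throughout the algorithm) so the elimination $\tilde f_i = \mu_i/(2\tilde{\rr}_i)$ is legitimate.
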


\begin{proof} 
For this proof, we use $\RR = Diag(\rr)$.
\[
\Psi(\rr) = \min_{\AA\xx = \cc} \xx^{\top}\MM^{\top}\MM\xx + \xx^{\top}\NN^{\top}\RR\NN\xx.
\]
Constructing the Lagrangian and noting that strong duality holds,
\begin{align*}
\Psi(\rr) &= \min_{\xx}\max_{\yy} \quad \xx^{\top}\MM^{\top}\MM\xx + \xx^{\top}\NN^{\top}\RR\NN\xx + 2\yy^{\top}(\cc-\AA\xx)\\
& =\max_{\yy} \min_{\xx} \quad \xx^{\top}\MM^{\top}\MM\xx + \xx^{\top}\NN^{\top}\RR\NN\xx + 2\yy^{\top}(\cc-\AA\xx).
\end{align*}
Optimality conditions with respect to $\xx$ give us,
\[
2\MM^{\top}\MM\xx^{\star} + 2\NN^{\top}\RR\NN\xx^{\star} = 2\AA^{\top}\yy.
\]
Substituting this in $\Psi$ gives us,
\[
\Psi(\rr) = \max_{\yy}\quad 2\yy^{\top}\cc  - \yy^{\top}\AA \left(\MM^{\top}\MM+ \NN^{\top}\RR\NN\right)^{-1} \AA^{\top}\yy.
\]
Optimality conditions with respect to $\yy$ now give us,
\[
2\cc  =  2 \AA \left(\MM^{\top}\MM+ \NN^{\top}\RR\NN\right)^{-1} \AA^{\top} \yy^{\star},
\]
which upon re-substitution gives,
\[
\Psi(\rr) =  \cc^{\top}\left(\AA \left(\MM^{\top}\MM+ \NN^{\top}\RR\NN\right)^{-1} \AA^{\top}\right)^{-1} \cc.
\]
We also note that 
\begin{equation}\label{eq:optimizer}
\xx^{\star} = \left(\MM^{\top}\MM+ \NN^{\top}\RR\NN\right)^{-1}\AA^{\top}\left(\AA \left(\MM^{\top}\MM+ \NN^{\top}\RR\NN\right)^{-1} \AA^{\top}\right)^{-1}\cc.
\end{equation}
We now want to see what happens when we change $\rr$. Let $\RR$ denote the diagonal matrix with entries $\rr$ and let $\RR' = \RR+\SS$, where $\SS$ is the diagonal matrix with the changes in the resistances. We will use the following version of the Sherman-Morrison-Woodbury formula multiple times,
\[
(\XX + \UU\CC\VV)^{-1} = \XX^{-1} - \XX^{-1}\UU(\CC^{-1} + \VV\XX^{-1}\UU)^{-1}\VV\XX^{-1}.
\]
We begin by applying the above formula for $\XX = \MM^{\top}\MM+ \NN^{\top}\RR\NN$, $\CC = \II$, $\UU = \NN^{\top}\SS^{1/2}$ and $\VV = \SS^{1/2}\NN$. We thus get,
\begin{multline}
\left(\MM^{\top}\MM+ \NN^{\top}\RR'\NN\right)^{-1} = \left(\MM^{\top}\MM+ \NN^{\top}\RR\NN\right)^{-1} -  \left(\MM^{\top}\MM+ \NN^{\top}\RR\NN\right)^{-1}\NN^{\top}\SS^{1/2} \\
\left(\II + \SS^{1/2}\NN \left(\MM^{\top}\MM+ \NN^{\top}\RR\NN\right)^{-1}\NN^{\top}\SS^{1/2}\right)^{-1}\SS^{1/2}\NN \left(\MM^{\top}\MM+ \NN^{\top}\RR\NN\right)^{-1}.
\end{multline}
We next claim that, 
\[
\II + \SS^{1/2}\NN \left(\MM^{\top}\MM+ \NN^{\top}\RR\NN\right)^{-1}\NN^{\top}\SS^{1/2} \preceq \II + \SS^{1/2}\RR^{-1}\SS^{1/2} ,
\]
which gives us,
\begin{multline}
\left(\MM^{\top}\MM+ \NN^{\top}\RR'\NN\right)^{-1} \preceq \left(\MM^{\top}\MM+ \NN^{\top}\RR\NN\right)^{-1} -  \\ \left(\MM^{\top}\MM+ \NN^{\top}\RR\NN\right)^{-1}\NN^{\top}\SS^{1/2}(\II + \SS^{1/2}\RR^{-1}\SS^{1/2})^{-1}\SS^{1/2}\NN \left(\MM^{\top}\MM+ \NN^{\top}\RR\NN\right)^{-1}.
\end{multline}
This further implies,
\begin{multline}
\AA\left(\MM^{\top}\MM+ \NN^{\top}\RR'\NN\right)^{-1}\AA^{\top} \preceq \AA\left(\MM^{\top}\MM+ \NN^{\top}\RR\NN\right)^{-1} \AA^{\top} -  \\ \AA\left(\MM^{\top}\MM+ \NN^{\top}\RR\NN\right)^{-1}\NN^{\top}\SS^{1/2}(\II + \SS^{1/2}\RR^{-1}\SS^{1/2})^{-1}\SS^{1/2}\NN \left(\MM^{\top}\MM+ \NN^{\top}\RR\NN\right)^{-1}\AA^{\top}.
\end{multline}
We apply the Sherman-Morrison formula again for, $\XX =\AA\left( \MM^{\top}\MM+ \NN^{\top}\RR\NN\right)^{-1}\AA^{\top}$, $\CC = -(\II + \SS^{1/2}\RR^{-1}\SS^{1/2})^{-1}$, $\UU = \AA\left(\MM^{\top}\MM+ \NN^{\top}\RR\NN\right)^{-1}\NN^{\top}\SS^{1/2}$ and $\VV = \SS^{1/2}\NN \left(\MM^{\top}\MM+ \NN^{\top}\RR\NN\right)^{-1}\AA^{\top}$. Let us look at the term $\CC^{-1} + \VV\XX^{-1}\UU$.
\[
-\left(\CC^{-1} + \VV\XX^{-1}\UU\right)^{-1} =    \left(\II + \SS^{1/2}\RR^{-1}\SS^{1/2} -  \VV\XX^{-1}\UU\right)^{-1}  \succeq   (\II + \SS^{1/2}\RR^{-1}\SS^{1/2})^{-1}.
\]
Using this, we get,
\[
\left(\AA\left(\MM^{\top}\MM+ \NN^{\top}\RR'\NN\right)^{-1}\AA^{\top}\right)^{-1} \succeq \XX^{-1} + \XX^{-1}\UU(\II + \SS^{1/2}\RR^{-1}\SS^{1/2})^{-1}\VV\XX^{-1},
\]
which on multiplying by $\cc^{\top}$ and $\cc$ gives,
\[
\Psi(\rr') \geq \Psi(\rr) +  \cc^{\top} \XX^{-1}\UU(\II + \SS^{1/2}\RR^{-1}\SS^{1/2})^{-1}\VV\XX^{-1} \cc.
\]
We note from Equation \eqref{eq:optimizer} that $\xx^{\star} = \left(\MM^{\top}\MM+ \NN^{\top}\RR\NN\right)^{-1}\AA^{\top} \XX^{-1}\cc$.
We thus have,
\begin{align*}
\Psi(\rr') &\geq \Psi(\rr) +  \left(\xx^{\star}\right)^{\top} \NN^{\top}\SS^{1/2} (\II + \SS^{1/2}\RR^{-1}\SS^{1/2})^{-1}\SS^{1/2}\NN\xx^{\star}\\
& = \Psi(\rr) + \sum_e \left(\frac{\rr'_e -\rr_e}{\rr'_e}\right)\rr_e (\NN\xx^{\star})^2_e.
\end{align*}
\end{proof}

\end{document}